\newcommand{\ketbra}[2]{\ket{#1}\!\bra{#2}}
\definecolor{mycolor1}{rgb}{0.00000,0.44700,0.74100}%
\DeclarePairedDelimiter\bra{\langle}{\rvert}
\DeclarePairedDelimiter\ket{\lvert}{\rangle}
\DeclarePairedDelimiterX\braket[2]{\langle}{\rangle}{#1 \delimsize\vert #2}
\newcommand{\gline}{\arrayrulecolor{gray}\hline\arrayrulecolor{black}}
\newcommand{\ceil}[1]{\lceil{#1}\rceil}
\newtheorem{theorem}{Theorem}
\newtheorem{lemma}[theorem]{Lemma}
\newcommand{\eq}[1]{(\ref{eq:#1})}
\newcommand{\thm}[1]{\hyperref[thm:#1]{Theorem~\ref*{thm:#1}}}
\newcommand{\defn}[1]{\hyperref[defn:#1]{Definition~\ref*{defn:#1}}}
\newcommand{\lem}[1]{\hyperref[lem:#1]{Lemma~\ref*{lem:#1}}}
\newcommand{\prop}[1]{\hyperref[prop:#1]{Proposition~\ref*{prop:#1}}}
\newcommand{\fig}[1]{\hyperref[fig:#1]{Figure~\ref*{fig:#1}}}
\newcommand{\tab}[1]{\hyperref[tab:#1]{Table~\ref*{tab:#1}}}
\renewcommand{\sec}[1]{\hyperref[sec:#1]{Section~\ref*{sec:#1}}}
\newcommand{\app}[1]{\hyperref[app:#1]{Appendix~\ref*{app:#1}}}
\newcommand{\cor}[1]{\hyperref[cor:#1]{Corollary~\ref*{cor:#1}}}
\newcommand{\obs}[1]{\hyperref[obs:#1]{Observation~\ref*{obs:#1}}}
\newcommand{\nn}{\nonumber \\}
\newcommand{\append}[1]{\hyperref[append:#1]{Appendix~\ref*{append:#1}}}
\newcommand{\amam}{a} 
\newcommand{\erase}[1]{{\rm Er}\left(#1\right)} 
\newcommand{\srt}[1]{{\rm St}\left( #1 \right)} 
\newcommand{\reps}{\mathcal{N}} 
\newcommand{\psuc}{p_\nu} 
\newcommand{\eqprep}[1]{{\rm Ps}\left(#1\right)} 
\newcommand{\mi}{\mathrm{i}} 
\newcommand{\bgr}{b_{\rm grad}}
\newcommand{\norm}[1]{\left\lVert#1\right\rVert}
\newcommand{\Toffoli}{\mathrm{Toffoli}}
\newcommand{\PREP}{\mathrm{PREP}}
\newcommand{\SEL}{\mathrm{SEL}}
\newcommand{\CNOT}{\mathrm{CNOT}}
\newcommand{\sig}{\Sigma}
\newcommand{\alp}{\sigma}
\renewcommand*\env@matrix[1][\arraystretch]{%
	\edef\arraystretch{#1}%
	\hskip -\arraycolsep
	\let\@ifnextchar\new@ifnextchar
	\array{*\c@MaxMatrixCols c}}
\newcommand{\MQ}{\affiliation{%
Department of Physics and Astronomy,
Macquarie University, Sydney, NSW 2109, Australia}}
\newcommand{\Toronto}{\affiliation{%
Department of Computer Science, University of Toronto, ON M5S 2E4, Canada}}
\newcommand{\PNNL}{\affiliation{%
Pacific Northwest National Laboratory, Richland, WA 99354, United States}}
\newcommand{\Google}{\affiliation{%
Google Quantum AI, Venice, CA 90291, United States}}
\newcommand{\Caltech}{\affiliation{%
Institute for Quantum Information and Matter, Caltech, Pasadena, CA 91125, United States}}
\begin{document}
	
	\title{Fault-Tolerant Quantum Simulations of Chemistry in First Quantization}
	
	\date{\today}
	
	\author{Yuan Su}
	\email{corresponding author: buptsuyuan@gmail.com}
	\Google
	\Caltech
	
	\author{Dominic W. Berry}
	\email{corresponding author: dominic.berry@mq.edu.au}
	\MQ
	
	\author{Nathan Wiebe}
	\email{corresponding author: nathanwiebe@gmail.com}
	\Toronto \PNNL \Google
	
	\author{Nicholas Rubin}
	\Google
	
	\author{Ryan Babbush}
	\email{corresponding author: ryanbabbush@gmail.com}
	\Google
	
	\begin{abstract}
	Quantum simulations of chemistry in first quantization offer some important advantages over approaches in second quantization including faster convergence to the continuum limit and the opportunity for practical simulations outside of the Born-Oppenheimer approximation. However, since all prior work on quantum simulation of chemistry in first quantization has been limited to asymptotic analysis, it has been impossible to directly compare the resources required for these approaches to those required for the more commonly studied algorithms in second quantization. Here, we compile, optimize and analyze the finite resources required to implement two first quantized quantum algorithms for chemistry from Babbush \emph{et al.}~\cite{BabbushContinuum_b} that realize block encodings for the qubitization and interaction picture frameworks of Low \emph{et al.}~\cite{Low2016,Low2018}. The two algorithms we study enable simulation with gate complexities of $\widetilde{\cal O}(\eta^{8/3} N^{1/3} t + \eta^{4/3} N^{2/3} t)$ and
	$\widetilde{\cal O}(\eta^{8/3} N^{1/3} t)$ where $\eta$ is the number of electrons, $N$ is the number of plane wave basis functions, and $t$ is the duration of time-evolution ($t$ is linearly inverse to target precision when the goal is to estimate energies). In addition to providing the first explicit circuits and constant factors for any first quantized simulation, and then introducing improvements which reduce circuit complexity by about a thousandfold over naive implementations for modest sized systems, we also describe new algorithms that asymptotically achieve the same scaling in a real space representation. Finally, we assess the resources required to simulate various molecules and materials and conclude that the qubitized algorithm will often be more practical than the interaction picture algorithm. We demonstrate that our qubitization algorithm often requires much less surface code spacetime volume for simulating millions of plane waves than the best second quantized algorithms require for simulating hundreds of Gaussian orbitals.
	\end{abstract}
	
	\maketitle

	\tableofcontents
	

	\section{Introduction}
		\label{sec:intro}
		
	The last several years have seen monumental developments in quantum algorithms, resulting in significant progress towards the goal of showing that a modest sized quantum computer can provide a decisive advantage for scientifically relevant problems.  The expectation that quantum simulation might enable an exponential speedup for physical problems stretches back to Feynman's original proposal for quantum computers~\cite{Feynman1982,Lloyd1996} and was first concretely proposed for solving the the central problem of quantum chemistry by Aspuru-Guzik \emph{et al}.~\cite{Aspuru-Guzik2005}. Since then, progress has accelerated on multiple fronts: improvements in general purpose simulation frameworks~\cite{Berry2015,Wiebe2015b,Low2016,Low2018,Poulin2017,Haah2018b,CampbellPRL2018,Childs2021,motta2020determining,SS20,Tran2021}, more efficient algorithms for quantum chemistry \cite{Whitfield2010,Wecker2014_b,Hastings2015,Poulin2014,BabbushSparse1,BGBWMPFN18,Berry2018,Berry2019b,Kivlichan2017,Kivlichan2020improvedfault,mcardle2020quantum,vonBurg2020Catalysis,Lee2020,Su2020_b}, more advantageous-to-quantum-simulate representations of molecular Hamiltonians~\cite{BabbushLow,Seeley2012,Bravyi2017,Setia2017,Motta2018,bauman2019downfolding,Takeshita2019b,McClean2020Galerkin_b,Motta2020_b} and an improved understanding of the problems that genuinely require a quantum computer versus those that can be classically computed~\cite{Reiher2017,Li2019_b}.
	
	However, a problem lurks within this discussion that has not received as much attention as it deserves.  Specifically, most existing methods for simulating quantum chemistry (especially those leveraging simple basis functions like plane waves) become impractical as we scale to the continuum limit.  This is because the second quantized simulations that are deployed in the vast majority of quantum computing algorithms for chemistry require a number of qubits that scales linearly with the number of spin-orbital basis functions. Dating back to early work by Kassal \emph{et al.}~\cite{Kassal2008}, first quantized simulations of chemistry have been proposed as a method to ameliorate this problem. The central idea behind such simulations is to track the basis state that each particle is in rather than storing the occupancy of each basis state.  This has two significant benefits: the number of qubits needed to represent the state scales logarithmically with the number of basis functions (although linearly with the number of particles) and also, such simulations are easy to adapt to cases where the entanglement between the electronic and nuclear subsystems is non-negligible.  Despite the promise of these methods, no prior work investigating first quantized approaches for chemistry has managed to rigorously estimate all of the constant factors associated with the realization of these algorithms\footnote{In the first paper to discuss error-correcting quantum algorithms for chemistry, Jones \emph{et al.}~\cite{Jones2012} estimate the number of Toffoli gates required to implement subroutines (Trotter steps) of the Kassal \emph{et al.}~\cite{Kassal2008} algorithm. However, their work stops short of the error analysis required to predict the constant factors associated with performing fixed precision quantum simulation. For example, they do not estimate how many Trotter steps would be required to achieve target precision}.
	
	Our aim in this paper is to address this shortcoming by providing a full costing of the two leading quantum simulation algorithms for chemistry in a first quantized representation, both first described by Babbush \emph{et al.}~\cite{BabbushContinuum_b}. These algorithms comprise asymptotically efficient schemes for realizing the oracles required by the qubitization and interaction picture frameworks of Low \emph{et al.}~\cite{Low2016,Low2018}. Here, we compile and optimize these algorithms within an error-correctable gate set. We develop new techniques to reduce the complexity, resulting in about three orders of magnitude improvement over a naive implementation for even modest sized systems. Our detailed cost analysis reveals that despite the improved scaling of the interaction picture based method, the qubitization based method often proves to be more practical. We conduct numerical experiments to characterize the overhead of both approaches and find that they are highly competitive with second quantized approaches, often requiring significantly fewer resources to reach comparable accuracy.
	
	The layout of this paper is as follows.  In the remainder of this section, we provide the formalism and context needed to understand the two most efficient schemes for simulating the molecular Hamiltonian in first quantization. \sec{qubitization} develops and analyzes our optimized implementation of the qubitization based algorithm in first quantization. \sec{interaction_picture} contains a similar analysis for the interaction picture based algorithm. Finally, \sec{results} numerically compares these algorithms (to each other and to other methods in the literature), estimating the resources required to simulate real systems in a first quantized basis.
	
	\subsection{Background on representing molecular Hamiltonians in a first quantized plane wave basis}
\label{sec:problem_definition}
	
	The accurate simulation of quantum systems was Feynman's original vision for quantum computing \cite{Feynman1982}. There is perhaps no more natural application of this idea than the simulation of atoms and molecules, governed by the interactions of electrons and nuclei. Such systems give rise to the properties of most matter, determining everything from the rates of chemical reactions to the conductivity of materials. In the non-relativistic case, the dynamics of these systems are governed by the Coulomb Hamiltonian:
\begin{align}
H =  \underbrace{-\sum_{i=1}^{\eta} \frac{\nabla^2_i}{2}}_{T} 
\underbrace{- \sum_{i=1}^\eta\sum_{\ell =1}^{L}\frac{\zeta_\ell}{\left\|R_\ell - r_i\right\|}}_{U} 
\underbrace{+ \sum_{i\neq j=1}^{\eta} \frac{1}{2\left\|r_i - r_j\right\|}}_{V} \underbrace{-\sum_{\ell=1}^{L} \frac{\nabla^2_\ell}{2 m_\ell}}_{T_{\rm nuclei}}
\underbrace{+ \sum_{\ell \neq \kappa=1}^L\frac{\zeta_\ell \zeta_\kappa}{2\left\|R_\ell - R_\kappa\right\|}}_{V_\textrm{nuclei}} \, .
\label{eq:non-bo}
\end{align}
Here we have used atomic units where $\hbar$ as well as the mass and charge of the electron are unity, and $\|\cdot\|$ denotes the two-norm. In the above expression $i,j$ represent electronic degrees of freedom and $\ell,\kappa$ represent nuclear degrees of freedom; thus, $r_i$ represent the positions of electrons whereas $R_\ell$ represent the positions of nuclei, $m_\ell$ the atomic masses of nuclei, and $\zeta_\ell$ the atomic numbers of nuclei. Throughout this work we will use $\eta$ to denote the number of electrons in our simulation and $L$ to denote the number of nuclei in our simulation. We keep a catalogue of the different symbols used throughout this paper in \app{names}.

The methods we analyze and develop in this work are useful for simulations of the dynamics of \eq{non-bo} (treating both nuclear and electronic degrees of freedom as fully quantum) with only minor modifications. But in order to simplify and reduce our analysis, we will primarily focus on simulating \eq{non-bo} under the Born-Oppenheimer approximation: i.e., $H( R) = T + U( R) + V + C( R)$ where $C( R)$ is a constant given by the Coulomb repulsion of the point charges of ``classical'' nuclei with locations $R_\ell$. Thus, the Born-Oppenheimer approximation assumes that one can decouple nuclear and electronic degrees of freedom, treating the former as essentially ``classical''\cite{polanyi1972bconcepts, born1927quantentheorie}. This is appropriate for many systems near room temperature because the electronic degrees of freedom are thousands of times faster than the nuclear degrees of freedom (due to the disparity in mass between electrons and nuclei) and hence, electrons often relax nearly instantaneously to their quantum ground states whereas the nuclei are often heavy enough to be modelled as classical objects responding to forces arising from the electronic interactions. However, the Born-Oppenheimer approximation is known to break down in some circumstances, including at low temperatures or when chemical bonds involve low atomic mass nuclei (e.g., due to the tunneling of hydrogen atoms); hence, methods for performing simulations of \eq{non-bo}, referred to as ``non-Born-Oppenheimer dynamics'' are also of great interest~\cite{yarkony2012nonadiabatic, butler1998chemical, curchod2018ab, zhu2016non}.

The frequently studied ``electronic structure problem'' is to solve for the ground state energies of electrons interacting in the external potential of nuclei under the Born-Oppenheimer approximation (i.e., to solve for the ground state energies of $H( R)$). As a function of the nuclear coordinates $ R$, these energies define an energy surface and those energy surfaces can be extremely helpful for understanding the mechanisms and dynamics of chemical reactions as well as material properties. For example, energy surfaces are commonly sought after so that one may simulate the ``classical'' dynamics of nuclei moving on these ``quantum'' energy surfaces by integrating Newton's equations of motion for the nuclei. When the trajectories of these classical nuclei responding to quantum electronic energy surfaces are coupled to a finite temperature bath so that the system may sample the canonical or grand canonical ensemble (rather than just Hamiltonian dynamics), these simulations are referred to as ``\emph{ab initio} molecular dynamics''~\cite{Car1985, marx2009, hutter2012car}. While the electronic energy surfaces (referred to in the molecular dynamics community as ``force-fields'') are often determined empirically or by crude calculations that introduce large errors into the end result, molecular dynamics simulations are among the most common calculations in scientific computing and can provide invaluable insights into the thermodynamic properties of real materials and chemical reactions. 

In order to simulate molecular systems on a computer one must discretize them in some fashion. For reasons discussed in more depth in \app{background}, most molecular modeling employs a Galerkin representation which involves projecting the system onto some well behaved set of basis functions $\{\phi_p(r)\}$. The matrix elements of the Hamiltonian operators are then given by the following integrals over these basis functions:
\begin{align}
\label{eq:integrals}
T_{pq}^{(m)} & = \int \! dr \, \phi_p^* \left(r\right) \left(-\frac{\nabla^2}{2 m} \right)  \phi_q \left(r\right) \\
	U_{pq} & = \sum_{\ell=1}^{L} \int \! dr \, \phi_p^* \left(r\right) \left(\frac{\zeta_\ell}{\left\|r - R_\ell\right\|}\right)  \phi_q \left(r\right) \\
	V_{pqrs}^{(\alpha,\beta)} &= \int \! dr_1 \,  dr_2 \, \phi_p^*\left(r_1\right) \phi_q^* \left(r_2\right) \left(\frac{\alpha \beta}{\left\|r_1-r_2\right\|}\right) \phi_r(r_2) \phi_s \left(r_1\right) \, .
	\end{align}
	We note that simple grid representations (as opposed to discrete-variable-representation-like grids, e.g., those discussed for quantum simulation in \cite{BabbushLow}) are incompatible with Galerkin discretizations. For example, if one associates a grid with compact basis functions having disjoint support (e.g., delta functions or step functions) then evaluation of the above integrals would give both potential operators and kinetic operators that are simultaneously diagonal. For more discussion on real space representations see \app{real_space}.

A key consideration for Galerkin discretizations is the compactness of the basis set for the states of interest (often the ground state); i.e., how many basis functions are required to converge the discretization of states of interest to within some threshold $\epsilon$ of the continuum limit. Throughout this paper we will use $N$ to refer to the number of basis functions used to discretize the system. For most reasonable choices of basis sets, the asymptotic behavior is that $\epsilon = {\cal O}(1/N)$ \cite{gruneis2013explicitly,hattig2011explicitly,Harl2008,shepherd2012convergence,helgaker1997basis_b,klopper1995ab2,Halkier1998b}. However, the constant factors in this scaling can differ considerably depending on the context of the simulation and the choice of basis. The primary factor limiting the convergence of these discretizations is the resolution of ``cusps'' that are known to be features of eigenstates of molecular Hamiltonians \cite{Kato1957}. Wavefunction cusps appear at all points in space where particles overlap. Away from these cusps, the wavefunction is generally smooth and easier to converge. When working within the Born-Oppenheimer approximation, the nuclei have fixed position and there is a clear cusp in the electronic charge density at the location of those nuclei. The electronic density tends to be sharply peaked at these nuclei and then falls off exponentially away from the nuclei. To capture this structure, one of the most common types of basis sets used for electronic structure simulations are nuclei centered Gaussian orbitals. Often, one uses these ``primitive'' orbitals to construct even more compact numerical orbitals; e.g., by using the primitive orbitals to discretize the diagonalization of an approximate single-particle description of the physics. When the single-particle description is a mean-field model of molecule (usually Hartree-Fock) these numerically optimized orbitals are referred to as ``molecular orbitals''.

Aside from Gaussian orbitals, the other most commonly used class of basis functions are plane waves. Plane waves are the type of basis function that we will focus on in this paper. Plane wave basis functions are eigenstates of the linear momentum operator (and thus, also the operator $T$), expressed as
\begin{equation}
\phi_p\left(r\right) = \sqrt{\frac{1}{\Omega}} e^{-\mi \, k_{p} \cdot r} \, ,
\end{equation}
where $r$ is a position vector in real space, $\Omega$ is the computational cell volume and $k_p$ is a reciprocal lattice vector in three dimensions. In this paper we will focus on methods to perform simulations with plane waves defined over a cubic reciprocal lattice so that
\begin{equation}
\label{eq:G}
k_p = \frac{2 \pi p}{\Omega^{1/3}} \qquad \qquad
p \in G \qquad \qquad G = \left[-\frac{N^{1/3}-1}{2},\frac{N^{1/3}-1}{2}\right]^3 \subset \mathbb{Z}^3 \, .
\end{equation}
We note that while this reciprocal lattice is appropriate for simulating non-periodic systems or those with cubic periodicity, to simulate systems with other crystal symmetries one may need to generalize the above equation to the case of non-orthogonal Bravais vectors (we leave this to future work) \cite{MartinES2004}.

Plane waves are a proper basis (unlike a grid) but, like a grid, provide an unbiased discretization of space that is especially well suited to representing dynamics. Due to their regularity, plane waves also give very systematic convergence to the continuum limit, allowing one to accurately estimate properties in the continuum by extrapolating from a series of calculations with progressively more plane waves (the extrapolation results in higher accuracy predictions than even the largest simulation used in the extrapolation). But from a quantum algorithms perspective, perhaps the most useful property of plane waves is that the integrals defining their Galerkin representation have a convenient closed form (arising from the Fourier transform of the Coulomb kernel) \cite{MartinES2004}:
\begin{equation}
\label{eq:plane_wave_integrals}
    T_{pq}^{(m)} = \delta_{pq} \frac{\left\|k_p \right\|^2}{2 m} \qquad \qquad
    U_{pq} = \frac{4\pi}{\Omega}\sum_{\ell=1}^{L}\zeta_\ell \frac{e^{i k_{q-p}\cdot R_\ell}}{\norm{k_{p-q}}^2}
    \qquad \qquad
    V_{pqrs}^{(\alpha,\beta)} = \delta_{p-s,r-q} \frac{4\pi\,\alpha \beta}{\Omega \left\| k_{\nu}\right\|^2} \, ,
\end{equation}
where $\nu = p-s = r-q \neq 0$ in the last expression. As we will see, this closed form enables some especially efficient quantum algorithms.

The structure of the plane wave basis can be used to give quantum algorithms with lower asymptotic gate complexity compared to algorithms with Gaussian orbitals. One reason for this is that most methods for simulating second quantized Hamiltonians tend to have some cost that scales with the number of terms in the Hamiltonian (or specifically, the number of terms in the two-body operator since that is the difficult part). A second quantized plane wave Hamiltonian has ${\cal O}(N^3)$ terms in the two-body operator (less than the ${\cal O}(N^4)$ terms of a molecular orbital representation by a factor of $N$ due to conservation of linear momentum) but only ${\cal O}(N^2)$ terms in the dual basis obtained by Fourier transforming the basis. This argument was first made in the context of second quantized quantum simulations in \cite{BabbushLow}. Today, the most efficient second quantized algorithm for simulating plane wave electronic structure Hamiltonians with $N$ plane waves and $\eta$ electrons has Toffoli complexity $(N^{5/3} / \eta^{2/3} + N^{4/3} \eta^{2/3})N^{o(1)}$ with space complexity ${\cal O}(N \log N)$ \cite{Su2020_b}, where this complexity assumes that particle density is held fixed as the system size grows. We compare the scaling of all prior quantum algorithms for plane wave electronic structure in more detail in \tab{plane_wave_comparison}. By contrast, the best second quantized algorithms for simulating arbitrary basis (e.g., molecular orbital) electronic structure Hamiltonians with $N$ orbitals have Toffoli complexity that is roughly $\widetilde{\cal O}(N^3)$ with space complexity ${\cal O}(N \log N)$ \cite{Lee2020}\footnote{Here and throughout the paper we use $\widetilde{\cal O}(\cdot)$ to indicate an asymptotic upper bound suppressing polylogarithmic factors and $o(1)$ to represent a positive number that approaches zero as some parameter grows.}. Thus, the best plane wave algorithms have better asymptotic scaling than the best arbitrary basis algorithms. However, while the basis set discretization error of both plane waves and Gaussians is asymptotically refined as $\epsilon = {\cal O}(1/N)$, in practice one often requires far fewer Gaussian orbitals than plane waves in order to achieve ``chemically accurate'' electronic structure simulations\footnote{Conventionally, ``chemical accuracy'' is defined as having precision in the energy that is better than 1 kcal/mol, or equivalently, 0.0016 Hartree. An error in the energy on the order of this quantity corresponds to an error in the predicted rates of chemical reactions by roughly an order of magnitude at room temperature \cite{Helgaker2000}.}. For some systems, one could easily require thousands of times more plane waves to reach chemical accuracy.

Plane waves are especially well suited to treat regular solid-state systems (like crystals) since the basis is naturally periodic. Furthermore, there are some special systems, such as the uniform electron gas, for which plane waves are essentially the most compact basis. Thus, several papers \cite{BGBWMPFN18,Kivlichan2020improvedfault,MCS21} suggest that using a plane wave basis in second quantization one can encounter interesting and challenging instances of electronic structure that require fewer resources to solve than any second quantized Gaussian orbital representation. However, for most chemical calculations on a quantum computer, a second quantized plane wave representation is not realistic to use and this is why the vast majority of work on quantum computing quantum chemistry focuses on using Gaussian orbitals. Even though second quantized plane wave representations have asymptotically lower gate complexity than second quantized Gaussian basis algorithms, the space complexity required for accurate calculations using second quantized plane waves is daunting. For example, if a calculation requires one-hundred Gaussian orbitals but one-hundred thousand plane waves that means in second quantization the former would require hundreds of logical system qubits (reasonable) whereas the latter would require hundreds of thousands of logical system qubits (an extravagant cost). Thus, for chemistry calculations in second quantization, Gaussian orbitals are more practical than plane waves.

However, in first quantization the story is rather different. Because the space complexity of first quantized representations is only ${\cal O}(\eta \log N)$ one can easily simulate systems with a very large number of plane waves without the number of qubits becoming too extravagant. As we will later discuss, the gate complexity of first quantized plane wave algorithms can also have sublinear scaling in $N$; for example, the algorithm of \cite{BabbushContinuum_b} has gate complexity $\widetilde{\cal O}(\eta^{8/3} N^{1/3})$. Sublinear scaling in $N$ would be impossible in second quantization as each of $N$ qubits must be acted upon at least once. In principle one can also perform simulations of Gaussian orbitals in first quantization but due to the lack of structure in the integrals this would likely give a relatively inefficient gate complexity. The work of \cite{BabbushSparse2} does something that is very close to this; in that work, the authors simulate a fixed particle number manifold of the second quantized molecular orbital Hamiltonian (this is known in chemistry as the ``configuration interaction'' representation). Their approach has space complexity ${\cal O}(\eta \log N)$ but it is technically a second quantized approach in a fixed particle manifold rather than a first quantized approach because the symmetries are still encoded by how the Hamiltonian operator acts on the state rather than the algorithm requiring anti-symmetrized initial states. However, the gate complexity of that approach is $\widetilde{\cal O}(\eta^2 N^3)$ which is still relatively high in $N$, meaning that one cannot refine the basis to extremely large sizes the way that one can with first quantized plane wave approaches.

\begin{table*}[t]
\begin{tabular}{|c|c|c|c|c|}
\hline
Year
& Reference
& Primary innovation
& Logical qubits
& Toffoli/T complexity\\
\hline\hline
2017
& Babbush \emph{et al}.~\cite{BabbushLow}
& Using plane waves with Trotter
& $N + {\cal O}(\log N)$
& $\widetilde{\cal O}(\eta^2 N^{17/6} \sqrt{1 + \eta \Omega^{1/3}/N^{1/3}} /(\Omega^{5/6} \epsilon^{3/2}))$\\
2017
& Babbush \emph{et al}.~\cite{BabbushLow}
& Using plane waves with LCU
& $N + {\cal O}(\log N)$
& $\widetilde{\cal O}((N^4/\Omega^{1/3} + N^{11/3}/\Omega^{2/3} )/\epsilon)$\\
2018
& Babbush \emph{et al}.~\cite{BGBWMPFN18}
& Linear scaling quantum walks
& $N + {\cal O}(\log N)$
& $\widetilde{\cal O}((N^{10/3}/\Omega^{1/3} + N^{8/3}/\Omega^{2/3})/\epsilon)$\\
2018
& Low \emph{et al}.~\cite{Low2018}
& Using the interaction pic.
& ${\cal O}(N \log N)$
& $\widetilde{\cal O}(N^{8/3} / (\Omega^{2/3} \epsilon))$\\
2018
& Babbush \emph{et al}.~\cite{BabbushContinuum_b}
& First quantized qubitization
& ${\cal O}(\eta \log N)$
& $\widetilde{\cal O}((\eta^{3} N^{1/3} / \Omega^{1/3} + \eta^{2} N^{2/3} / \Omega^{2/3}) / \epsilon)$\\
2018
& Babbush \emph{et al}.~\cite{BabbushContinuum_b}
& First quantized interaction pic.
& ${\cal O}(\eta \log N)$
& $\widetilde{\cal O}(\eta^{3} N^{1/3} / ( \Omega^{1/3} \epsilon) )$\\
2019
& Kivlichan \emph{et al}.~\cite{Kivlichan2020improvedfault}
& Better Trotter steps
& $N + {\cal O}(\log N)$
& $\widetilde{\cal O}(N^{3} / (\Omega^{2/3} \epsilon^{3/2}))$\\
2019
& Childs \emph{et al}.~\cite{Childs2021}
& Tighter Trotter bounds
& ${\cal O}(N \log N)$
& $N^{7/3 + o(1)} / (\Omega^{1/3} \epsilon^{1 + o(1)})$\\
2020
& Su \emph{et al}.~\cite{Su2020_b}
& Yet tighter Trotter bounds
& ${\cal O}(N \log N)$
& $(\eta / \Omega^{1/3}+ N^{1/3} / \Omega^{2/3})N^{4/3 + o(1)} /\epsilon^{1 + o(1)}$\\
\hline
\end{tabular}
\caption{\label{tab:plane_wave_comparison} Best quantum algorithms for phase estimating chemistry in a plane wave basis. $N$ is number of basis functions, $\eta < N$ is number of electrons, $\Omega$ is the computational cell volume, and $\epsilon$ is target precision. In some of these papers it is assumed that $\Omega \propto N$ or that $\Omega \propto \eta$, but here we report the complexities without any such assumptions. Unlike other entries here, the scaling of the Kivlichan \emph{et al.}~\cite{Kivlichan2020improvedfault} algorithm is empirically observed for specific systems (with some systems scaling somewhat better and some systems scaling somewhat worse) as opposed to rigorous upper-bounds. The ${\cal O}(N \log N)$ space complexity of \cite{Low2018,Childs2021,Su2020_b} results from a Fourier transform based method of computing the potential operator discussed in \cite{Low2018}. If those papers were instead to use the Trotter steps introduced in \cite{Kivlichan2017} or \cite{Kivlichan2020improvedfault}, the algorithms would have $N + {\cal O}(\log N)$ space complexity but gate complexity that is worse by approximately a factor of $N$. Algorithms with ${\cal O}(N)$ space complexity are likely impractical for any real problems in chemistry when using plane waves due to the large number of plane waves required to reach chemical accuracy; thus, ${\cal O}(N \log N)$ space complexity is even less viable. We note that \app{real_space} introduces new algorithms defined in real space that match the scaling of those from Babbush \emph{et al.}~\cite{BabbushContinuum_b}.}
\end{table*}

We have thus argued that plane waves are especially well suited for use within first quantized quantum simulations whereas Gaussian orbitals are especially well suited for use within second quantized quantum simulations. The relative merits of these two representations is a fairly subtle issue because they have different properties. One requires more plane waves than Gaussian orbitals to converge a calculation but the first quantized plane wave algorithm asymptotically scales better for essentially all values of $\eta$ and $N$.
That is because $\eta < N$, so the $\widetilde{\cal O}(\eta^{8/3}N^{1/3})$ Toffoli complexity of the best first quantized plane algorithm \cite{BabbushContinuum_b} reduces to the roughly $\widetilde{\cal O}(N^3)$ Toffoli complexity of the best second quantized arbitrary basis algorithm \cite{Lee2020} in the worst case that $\eta = \Theta(N)$. There is also a substantial difference in how the algorithms approach the continuum limit. For example, the Toffoli complexity to achieve basis error $\epsilon$ scales as $\widetilde{\cal O}(1/\epsilon^3)$ with $\widetilde{\cal O}(1/\epsilon)$ space complexity for the arbitrary basis algorithm but the Toffoli complexity is $\widetilde{\cal O}(1/\epsilon^{1/3})$ with space complexity ${\cal O}(\log (1/\epsilon))$ for the plane wave algorithm. This enormous difference -- a polynomial scaling difference by a power of nine -- suggests a strong preference for the first quantized plane wave algorithm when targeting high precision\footnote{The first quantized plane wave algorithm of \cite{BabbushContinuum_b} has a Clifford complexity matching the Toffoli complexity whereas the second quantized arbitrary basis algorithm of \cite{Lee2020} has Clifford complexity that is worse than the Toffoli complexity by a factor of $\widetilde{\cal O}(N)$. Hence, in terms of overall gate complexity the scaling to the continuum limit is actually $\widetilde{\cal O}(1/\epsilon^4)$ versus $\widetilde{\cal O}(1/\epsilon^{1/3})$ -- a difference by a power of twelve!}. Furthermore, the plane wave representation is the clear preference when performing simulations beyond the Born-Oppenheimer approximation. The reason for this is because the advantage of Gaussian basis sets lies in their ability to resolve cusps in the wavefunction centered on the nuclei. But outside of the Born-Oppenheimer approximation the nuclei are treated explicitly and thus one cannot center Gaussians on the nuclei because the nuclei do not have a fixed position that is known in advance. Thus, for simulations of nuclear quantum dynamics plane waves are going to be the better choice.

By plugging the plane wave integrals given in \eq{plane_wave_integrals} into the definitions of the Born-Oppenheimer and non-Born-Oppenheimer first quantized Galerkin discretizations discussed in \app{background}, we obtain
\begin{align}
\label{eq:first_quant_ham}
    H_{\rm BO} & = T + U + V + \frac{1}{2}\sum_{\ell \neq \kappa=1}^{L} \frac{\zeta_\ell \zeta_\kappa}{\left\|R_\ell - R_\kappa\right\|} \\
        \label{eq:non_bo_ham}
    H_{\rm non-BO} & = T + T_{\rm nuc} + U_{\rm non-BO} + V + V_{\rm nuc}\\
    T &=  \sum_{i=1}^{\eta}  \sum_{p\in G} \frac{\left \| k_p\right\|^2}{2} \ket{p}\!\bra{p}_{i} 
    \qquad \qquad
    T_{\rm nuc}  =\sum_{\ell=\eta+1}^{L + \eta}  \sum_{p\in G} \frac{\left \| k_p\right\|^2}{2 \, m_\ell} \ket{p}\!\bra{p}_\ell\\
	U & =-\frac{4\pi}{\Omega}\sum_{\ell=1}^{L}\sum_{i=1}^{\eta}\sum_{\substack{p,q\in G\\ p\neq q}}\bigg(\zeta_\ell\frac{e^{ik_{q-p}\cdot R_\ell}}{\norm{k_{p-q}}^2}\bigg)\ket{p}\!\bra{q}_i,\\
		U_{\rm non-BO} &=-\frac{4 \pi}{\Omega} \sum_{i=1}^{\eta}\sum_{\ell=\eta+1}^{L + \eta}\sum_{p,q\in G} \sum_{\substack{\nu\in G_0\\(p+\nu)\in G\\(q-\nu)\in G}}\frac{1}{\left\| k_{\nu}\right\|^2} \ket{p + \nu}\!\bra{p}_i \ket{q-\nu}\!\bra{q}_\ell\\
	V &=\frac{2 \pi}{\Omega} \sum_{i\neq j=1}^{\eta}\sum_{p,q\in G} \sum_{\substack{\nu\in G_0\\(p+\nu)\in G\\(q-\nu)\in G}}\frac{1}{\left\| k_{\nu}\right\|^2} \ket{p + \nu}\!\bra{p}_i \ket{q-\nu}\!\bra{q}_j \\
	V_{\rm nuc} &= \frac{2 \pi}{\Omega} \sum_{\ell\neq\kappa=\eta+1}^{L+\eta}\sum_{p,q\in G} \sum_{\substack{\nu\in G_0\\(p+\nu)\in G\\(q-\nu)\in G}}\frac{\zeta_\ell \zeta_\kappa}{\left\| k_{\nu}\right\|^2} \ket{p + \nu}\!\bra{p}_\ell \ket{q-\nu}\!\bra{q}_\kappa \, ,
\end{align}
where $G_0 = [-N^{1/3}, N^{1/3}]^3 \subset \mathbb{Z}^3 \backslash\{(0,0,0)\}$ is the set of the difference of frequencies from $G$ excluding the singular zero mode, and $\ket{p}\!\bra{q}_j$ is a shorthand for $I_1\otimes\cdots\otimes\ket{p}\!\bra{q}_j\otimes\cdots\otimes I_\eta$ where $I_i$ is the identity operator on the $\log N$ bit register associated with electron $i$. We will store our wavefunction by having a computational basis that encodes configurations of the electrons in $N$ basis functions such that a configuration is specified as $\ket{\phi_1 \phi_2 \cdots \phi_{\eta}}$ where each $\phi$ encodes the index of an occupied basis function.
	
	The representation of the Hamiltonian we show discretized above is in ``first quantization''. The distinction between ``first'' and ``second'' quantization is fundamentally about how one deals with the symmetries of electrons and nuclei. Electrons (and some nuclei) are fermions. Some nuclei are bosons. Whereas identical fermions must be antisymmetric with respect to particle label exchange, identical bosons must be symmetric with respect to particle label exchange. For example, if our state corresponds to a valid first quantized wavefunction for identical particles then the system must have the property that
	\begin{equation}
	\ket{\psi} = \sum_{\phi_p \in \{\phi\}} a_{\phi_1 \cdots \phi_\eta} \ket{\phi_1 \cdots \phi_i \cdots \phi_j \cdots \phi_\eta} = \left(-1\right)^\pi \sum_{\phi_p \in \{\phi\}} a_{\phi_1 \cdots \phi_\eta} \ket{\phi_1 \cdots \phi_j \cdots \phi_i \cdots \phi_\eta}
	\end{equation}
	where $\pi=1$ if fermions and $\pi=0$ if bosons and $a_{\phi_1 \cdots \phi_\eta}$ are the amplitudes of the associated computational basis state. No special symmetries need to exist between distinguishable particles. The work of \cite{Berry2018} shows that one can symmetrize (or antisymmetrize) any set of $\eta$ registers which each index $N$ basis functions with a quantum algorithm that has gate complexity ${\cal O}(\eta \log \eta \log N)$. Once symmetrized (or antisymmetrized) any simulation under the Hamiltonian will maintain that property as a consequence of \eq{non-bo} commuting with the particle permutation operator between any identical particles. Since gate complexity of ${\cal O}(\eta \log \eta \log N)$ is significantly less than the cost of time-evolving these systems, it can be regarded as a negligible additive cost to the overall simulation. 
	
	Finally, one might notice that the spin degree of freedom does not appear in the Hamiltonians written above, nor does it appear in \eq{non-bo}. A spin is associated with each of the $\eta$ registers but because there are not magnetic field terms in \eq{non-bo} that would explicitly interact with these spins, it is not necessary to include the spin label and so we avoid doing so for simplicity. However, because particles of different spins are distinguishable, this should be accounted for during the antisymmetrization of the initial wavefunction.
	
	\subsection{Overview of the Hamiltonian simulation frameworks deployed in this work}
	\label{sec:alg_overview}
	
	In the prior section we described the full non-Born-Oppenheimer Hamiltonian in a plane wave basis and suggested that plane waves would be an ideal representation in which to perform non-adiabatic dynamics. However, the algorithms we analyze in detail in this paper focus on the Born-Oppenheimer Hamiltonian. Still, we emphasize that the non-Born-Oppenheimer application helps to motivate our work since the algorithms used to simulate the Born-Oppenheimer Hamiltonian apply to the non-Born-Oppenheimer Hamiltonian with very minor modifications; the only additional terms in the non-Born-Oppenheimer Hamiltonian are $U_{\rm non-BO}$ (identical to $V$ up to different charges) and $T_{\rm nuc}$ (identical to $T$ up to different masses).
	
	Because we will be simulating the Born-Oppenheimer Hamiltonian, which is typically of interest in electronic structure, our focus will be preparing eigenstates rather than effecting dynamics. Specifically, we will use the phase estimation algorithm to sample in the eigenbasis of the Hamiltonian \cite{KitaevARX1995,Abrams1999}. Among several possible uses of this technique is the refinement of molecular ground states from initial guess states with non-vanishing support on the exact ground state \cite{,Aspuru-Guzik2005}. The requirement for such algorithms is that we are able to synthesize a unitary circuit encoding eigenvalues of the Born-Oppenheimer Hamiltonian as a simple function of their eigenphases. In this paper, we will study two different algorithms for this task based on the qubitization \cite{Low2016} and interaction picture simulation \cite{Low2018} algorithms of Low \emph{et al.}, with subroutines adapted for simulating first quantized chemistry by Babbush \emph{et al.}~\cite{BabbushContinuum_b}. We will now outline these approaches at a high level and establish conventions useful for later (more technical) sections.
	
	Both of the algorithms we will explore are based (at least in part) on block encoding the Hamiltonian using qubitization operators \cite{Low2016}. This method begins from the observation that we express our target Hamiltonian as a linear combination of unitaries (often referred to here as ``LCU'') \cite{Childs2012}, i.e.
	\begin{equation}
	    H=\sum_{\ell}\alpha_{\ell} H_\ell,
	\end{equation}
	where $H_\ell$ are unitary operators and $\alpha_\ell$ are positive coefficients (by convention we choose to incorporate any phases into the $H_\ell$). Using this notation we then define the quantum operators $\PREP_H$ and $\SEL_H$ as
	\begin{equation}
	\label{eq:oracles}
	    \PREP_{H}\ket{0}=\sum_{\ell}\sqrt{\frac{\alpha_\ell}{\lambda}}\ket{\ell},\qquad
	    \SEL_{H}\ket{\ell}=\ket{\ell}\otimes H_\ell,\qquad
	    \lambda=\sum_{\ell} \alpha_{\ell} \, 
	\end{equation}
	and we notice that $\bra{0}\PREP_{H}^\dagger\cdot\SEL_{H}\cdot\PREP_{H}\ket{0}=H/\lambda$ \cite{Low2018}. Because of this property, if we can realize $\PREP_H$ and $\SEL_H$ as quantum circuits then by calling them in a certain sequence it is possible to ``block encode'' the Hamiltonian in a particular subspace of a qubitization operator that we define as 
	\begin{equation}
	    Q=\left(2\ket{0}\!\bra{0}-I\right)\cdot\PREP_{H}^\dagger\cdot\SEL_{H}\cdot\PREP_{H} \, .
	\end{equation}
	What this means is that $Q$ acts as a quantum walk \cite{Szegedy2004} with eigenvalues $e^{\pm i\arccos(E_k/\lambda)}$~\cite{Low2016}, related to the energy $E_k$ of the target Hamiltonian $H$. The work of \cite{Berry2018,Poulin2017} showed that we can use these quantum walks to project to an eigenstate of the Hamiltonian and estimate its eigenvalue by performing phase estimation on $Q$ and applying the cosine function on the resulting phase. To estimate with accuracy $\epsilon$, we need $\mathcal{O}(\lambda/\epsilon)$ uses of controlled-$Q$ operations. Alternatively, the resource requirements of performing phase estimation can be reduced by using the multiplexed qubitization operation \cite{BGBWMPFN18}:
	\begin{equation}
	    \ket{0}\!\bra{0}\otimes Q^\dagger+\ket{1}\!\bra{1}\otimes Q \, ,
	\end{equation}
	instead of the controlled-$Q$ operation. 
	This amplifies the difference between the eigenvalues by a factor of 2, reducing the complexity by a factor of 2.
	Provided $\SEL_{H}$ is self-inverse, this control can be achieved simply by making the reflection controlled by a qubit from the control register for phase estimation, as per Figure 2 of \cite{Lee2020}.
	This control only adds a cost of one Toffoli per step.
	
	We note that in addition to being useful for phase estimation, one can also realize time-evolution under the Born-Oppenheimer Hamiltonian for duration $t$ by invoking these quantum walks $Q$ within the quantum signal processing framework of Low \emph{et al.}~\cite{Low2017} a number of times scaling as
	\begin{equation}
	    \label{eq:signals}
	    {\cal O}\left(\lambda t + \frac{\log(1/\epsilon)}{\log \log(1/\epsilon)}\right) \, .
	\end{equation}
	Thus, while our explicit focus will be on phase estimation of the Born-Oppenheimer Hamiltonian, the same algorithms can be used to perform time-evolution under the non-Born-Oppenheimer Hamiltonian, with trivial modifications and marginal additional cost. The similarity of the algorithms enabling these disparate applications results from our choice to work in a first quantized representation with basis functions that are independent of nuclear coordinates; if we were instead to work in second quantization or use Gaussian orbitals, there would be a substantial gap between the algorithms required for these two applications. 
	Getting back to the implementation of the qubitization algorithm, in order to express the Hamiltonian of \eq{first_quant_ham} as a linear combination of unitaries we rewrite the $T$, $U$ and $V$ operators as follows:
	\begin{equation}
	\begin{aligned}
	T&=
	\frac {2\pi^2}{\Omega^{2/3}} \sum_{j=1}^{\eta}\sum_{p\in G}\sum_{w\in\{x,y,z\}} \sum_{r=0}^{n_p-2}\sum_{s=0}^{n_p-2} 2^{r+s} p_{w,r} p_{w,s} \ket{p}\!\bra{p}_j, \\
	U&=\sum_{\nu\in G_0}\sum_{\ell=1}^{L}\frac{2\pi\zeta_\ell}{\Omega\norm{k_\nu}^2}\sum_{j=1}^{\eta}\sum_{b\in\{0,1\}}\left(-e^{-\mi k_\nu\cdot R_\ell}\sum_{q\in G}(-1)^{b[(q-\nu)\notin G]}\ket{q-\nu}\!\bra{q}_j\right),\label{eq:opdef}\\
	V&=\sum_{\nu\in G_0}\frac{\pi}{\Omega\norm{k_\nu}^2}\sum_{i\neq j=1}^{\eta}\sum_{b\in\{0,1\}}\left(\sum_{p,q\in G}(-1)^{b([(p+\nu)\notin G]\lor[(q-\nu)\notin G])}\ket{p+\nu}\bra{p}_i\cdot\ket{q-\nu}\!\bra{q}_j\right)\, ,\\
	\end{aligned}
	\end{equation}
	where
	\begin{equation}
	\label{eq:n_p}
	    n_p=\left\lceil\log \left(N^{1/3}+1\right)\right\rceil \,
	\end{equation}
	represents the number of qubits required to store a signed binary representation of one component of the momentum of a single electron.
	We can see that the potential operators $U$ and $V$ are now expressed as linear combinations of unitaries. To express the $T$ operator as a linear combination of unitaries we note that the quantity $p_{w,r}$ is bit $r$ of component $w$ of $p$, and the sum of these products of bits over $r$ and $s$ gives $\|p\|^2$; we apply the identity
	\begin{equation}
	    p_{w,r} p_{w,s} = \frac{1- (-1)^{p_{w,r} p_{w,s}}}{2},
	\end{equation}
	which gives
	\begin{equation}
	T=\frac {\pi^2}{\Omega^{2/3}} \sum_{j=1}^{\eta}\sum_{w\in\{x,y,z\}} \sum_{r=0}^{n_p-2}\sum_{s=0}^{n_p-2} 2^{r+s} \sum_{b\in\{0,1\}}\left(\sum_{p\in G} (-1)^{b(p_{w,r} p_{w,s}\oplus 1)} \ket{p}\!\bra{p}_j\right).\label{eq:Tlcu}
	\end{equation}
	
	With the Hamiltonian components expressed as a linear combination of unitaries we can now discuss the normalizing parameter $\lambda$ appearing in \eq{oracles}, which will have significant ramifications for the complexity of both algorithms studied in this work. The parameter $\lambda$ is an induced one-norm of the Hamiltonian expressed as a linear combination of unitaries. For the entire Hamiltonian we have $\lambda = \lambda_T + \lambda_U + \lambda_V$ where
		\begin{align}
	\label{eq:lambdas}
	    \lambda_T &= \frac{6\eta\pi^2}{\Omega^{2/3}} \left( 2^{n_p-1}\!-1 \right)^2 = {\cal O}\left(\frac{\eta N^{2/3}}{\Omega^{2/3}}\right),\qquad 
	    \qquad
	    \lambda_U = \frac{\eta \sum_\ell \zeta_\ell}{\pi \Omega^{1/3}}  \lambda_\nu = {\cal O}\left(\frac{\eta^2 N^{1/3}}{\Omega^{1/3}}\right)\\
	    \lambda_V &= \frac {\eta(\eta-1)}{2\pi \Omega^{1/3}} \lambda_\nu = {\cal O}\left(\frac{\eta^2 N^{1/3}}{\Omega^{1/3}}\right),\qquad\lambda_\nu = \sum_{\nu\in G_0} \frac 1{\norm{\nu}^2} \leq \int_0^{2\pi}\!\! \!\! \int_{0}^\pi\! \! \!\int_0^{N^{1/3}} \!\!\!\!\!\! {\rm d}r \, {\rm d}\phi \,{\rm d}\theta \, \left(\frac{1}{r^2}\right)r^2 \sin \theta = {\cal O}\left(N^{1/3}\right) \, , \nonumber
	\end{align}
     where in the asymptotic scaling for $\lambda_U$ we have used the fact that when including all electrons explicitly in a charge neutral simulation, $\sum_\ell \zeta_\ell = \eta$. Further, we note that it is often appropriate to take $\Omega \propto \eta$ (e.g., whenever growing condensed phase systems towards their thermodynamic limit), which simplifies the asymptotic complexities to $\lambda_T = {\cal O}(\eta^{1/3} N^{2/3})$ and $\lambda_U \propto \lambda_V = {\cal O}(\eta^{5/3}N^{1/3})$.
     
    The main challenge of realizing an algorithm like qubitization with low complexity is coming up with efficient circuits for the operators defined in \eq{oracles}. Describing such algorithms asymptotically for first quantized chemistry is one of the main contributions of the work of Babbush \emph{et al.}~\cite{BabbushContinuum_b} and this work will loosely follow the methods outlined therein. However, the fact that this paper is an order of magnitude longer than the former is a testament to the fact that bounding asymptotic complexity is significantly less work than coming up with, improving, and compiling, a practical implementation of that algorithm with low constant factors in a fault-tolerant cost model. Thus, the main technical contribution of \sec{qubitization} is to describe a detailed algorithm for realizing the qubitization operators corresponding to the first quantized plane wave electronic structure Hamiltonian. We introduce various new techniques, including a strategy for controlled swapping the momentum registers (\sec{block_encode_t} and \sec{SelectUV}) and a proposal for performing the kinetic term when the state preparation for the potential fails (\app{selTUV}), which lead to a significant reduction in the algorithmic complexity. Ultimately, we show that the operator $Q$ can be realized with a gate complexity of $\widetilde{\cal O}(\eta)$ and that one can implement phase estimation to estimate eigenvalues of \eq{first_quant_ham} to within precision $\epsilon$ with a leading order Toffoli complexity given by \thm{qubitization}. 
 
    By combining the $\widetilde{\cal O}(\eta)$ gate complexity of $Q$ with the ${\cal O}(\lambda / \epsilon)$ times that $Q$ must be invoked within phase estimation, we see that the asymptotic complexity of the qubitized algorithm for phase estimation is
    \begin{equation}
    \label{eq:asymptotic_qubitization}
        \widetilde{\cal O} \left(\frac{\left(\lambda_T + \lambda_U + \lambda_V\right) \eta}{\epsilon}\right) 
        = \widetilde{\cal O}\left(\frac{\eta^2 N^{2/3}}{\epsilon \, \Omega^{2/3}} + \frac{ \eta^{3}N^{1/3}}{\epsilon \, \Omega^{1/3}}\right)
        = \widetilde{\cal O}\left(\frac{\eta^{4/3} N^{2/3} + \eta^{8/3}N^{1/3}}{\epsilon}\right)\,
    \end{equation}
	as reported for the qubitized algorithm complexity in \cite{BabbushContinuum_b}. While the scaling of this algorithm in terms of $\eta$ is worse than the scaling in terms of $N$, for practical applications we may be more concerned about the scaling with $N$. The reason is because in order to accurately solve the electronic structure problem in a plane wave basis it will be necessary to use a very large number of plane waves. For example, some enticing applications may require roughly $\eta = 100$ and $N = 10^6$. 	While such simulations would be impossible within second quantization because that would (in this case) require at least $N = 10^6$ logical qubits, within first quantization the space complexity required for the system register is only $\eta \log N \approx 2 \times 10^3$ qubits.
	Here and throughout the manuscript we use the convention that $\log$ indicates a logarithm base 2.
	
	The vast majority of this space complexity arises from representing the simulated system, which has basis states of the form
	\begin{equation}
	\ket{p_1}\cdots\ket{p_j}\cdots\ket{p_\eta}.
	\end{equation}
	Representing coordinates of the orbitals explicitly, we have
	\begin{equation}
	\ket{p_{1x},p_{1y},p_{1z}}\cdots\ket{p_{jx},p_{jy},p_{jz}}\cdots\ket{p_{\eta x},p_{\eta y},p_{\eta z}}.
	\end{equation}
	As in \eq{G}, these registers hold values in the range $[-(N^{1/3}-1)/2,(N^{1/3}-1)/2]$. Thus, we will use $n_p$ qubits with $N^{1/3}=2^{n_p}-1$ to encode each register. Furthermore, for reasons of algorithmic efficiency described later, we will use signed binary numbers to represent each register of length $n_p$, so
	\begin{equation}
	\ket{q}=\ket{q_{\text{sign}},q_{n_p-2},\ldots,q_0}.
	\end{equation}
	The total number of qubits in the simulated system is thus
	\begin{equation}
	\label{eq:n_qubits}
	n_s=3 \, \eta \, n_p.
	\end{equation}
	
	So while the space complexity is likely not a problem for performing simulations at large $N$, the gate complexity scaling as $\widetilde{\cal O}(N^{2/3})$ is somewhat concerning and we might wonder if it is possible to reduce this scaling. This is the motivation for the second algorithm we will explore (also introduced in \cite{BabbushContinuum_b}), which provides an algorithm that scales linearly in $\lambda_U + \lambda_V$ but avoids scaling polynomially in $\lambda_T$ (which is where the $N^{2/3}$ scaling was entering previously). The approach is based on the interaction picture simulation technique introduced by Low \emph{et al.}~\cite{Low2018} which begins by supposing that we want to simulate a target Hamiltonian containing two terms $A$ and $B$, where $\norm{A}$ is significantly larger than $\norm{B}$ but time-evolution of $A$ can be fast-forwarded. Then, we can represent the evolution in the interaction picture of the rotating frame of $A$,
\begin{equation}
    e^{-\mi\tau (A+B)}=e^{-\mi\tau A}\mathcal{T}\exp\left(-\mi\int_{0}^{\tau}ds\ e^{\mi sA}Be^{-\mi sA}\right),
\end{equation}
where $\mathcal{T}$ is the time ordering operator. The potential benefit of using this representation is that the larger term $A$ only appears in the exponent of matrix exponentials and its implementation can be efficient even when $\norm{A}$ is large. For the first quantized plane wave Hamiltonian in the regime that $N \gg \eta$, we have the norm relation $\norm{T}\geq\norm{U+V}$ so we choose $A=T$ and $B=U+V$ \cite{BabbushContinuum_b}.

To perform quantum simulation in the interaction picture, we truncate and discretize the Dyson series of the time-ordered exponential, obtaining
\begin{align}
e^{-\mi(A+B)\tau} &= e^{-\mi\tau A} \mathcal{T} \exp \left( -\mi \int_0^\tau ds \, e^{\mi sA} B e^{-\mi sA}\right) \\
&= \sum_{k=0}^\infty \frac{(-\mi)^k}{k!} \int_{0}^\tau d\tau_1 \int_{0}^{\tau} d\tau_2 \cdots \int_{0}^\tau d\tau_k \,  e^{-\mi(\tau-\tau'_k)A} B e^{-\mi(\tau'_k-\tau'_{k-1})A} B \ldots B e^{-\mi(\tau'_2-\tau'_1)A} B e^{-\mi\tau_1 A} \nn
&\approx \sum_{k=0}^K \frac{(-\mi\tau)^k}{M^k k!} \sum_{m_1=0}^{M-1} \sum_{m_2=0}^{M-1} \cdots \sum_{m_k=0}^{M-1} e^{-\mi\tau(M-1/2-m'_k)A/M} B e^{-\mi\tau(m'_k-m'_{k-1})A/M} B \ldots \nn& \quad \ldots B e^{-\mi\tau(m'_2-m'_1)A/M} B e^{-\mi\tau(m'_1+1/2) A/M} \nonumber
\end{align}
where $0\leq\tau'_1\leq\ldots\leq\tau'_k\leq\tau$ are sorted times from $\tau_1,\ldots,\tau_k$, and $0\leq m'_1\leq\ldots\leq m'_k\leq M-1$ are sorted integers from $m_1,\ldots,m_k$. The resulting expression is only an approximation of the ideal evolution, but the truncation and discretization error can be made arbitrarily small by choosing $K$ and $M$ sufficiently large. These parameters give a contribution to the complexity which scales as a polynomial in $\log(1/\epsilon)$, which is omitted when giving complexity using $\widetilde O$.

As discussed in \sec{interaction_picture}, the bottleneck requirements for realizing this Dyson series are that one implements $e^{-\mi T}$ and also block encodes $U + V$ a number of times scaling as $(\lambda_U + \lambda_V) / \epsilon$. For the most part, one can reuse parts of the circuits in \sec{qubitization} to perform the block encoding of $U + V$ with complexity $\widetilde{\cal O}(\eta)$. Furthermore, it is relatively straightforward to provide circuits that fast-forward the operator $e^{-\mi T}$ with gate complexity $\widetilde{\cal O}(\eta)$. Thus, this suggests that the overall asymptotic complexity will scale as
	\begin{equation}
	\label{eq:asymptotic_interaction}
	  \widetilde{\cal O}\left(\frac{\left(\lambda_U + \lambda_V\right)\eta}{\epsilon}\right) 
	  =\widetilde{\cal O}\left(\frac{\eta^{3}N^{1/3}}{\epsilon \, \Omega^{1/3}}\right) 
	  =  \widetilde{\cal O}\left(\frac{\eta^{8/3}N^{1/3}}{\epsilon}\right) \, .
	\end{equation}
However, the most challenging part of the interaction picture algorithm is to efficiently tie together these primitives within the Dyson series simulation framework and propagate errors from that simulation through the phase estimation circuits. While past work has discussed quantum simulation of the Dyson series \cite{Kieferova18,Low2018} in asymptotic terms, no prior work has provided concrete circuits for the task and worked out the constant factors in the scaling. One of the primary contributions of \sec{interaction_picture} is to work out those circuits in the context of chemistry, optimize their implementations, and bound the overall Toffoli cost, as reported in \thm{interaction_cost}. In doing so, we make several technical contributions, including a novel qubitization of the pre-amplified Dyson series (\sec{choice}), an efficient strategy to update the kinetic energy (\sec{kinetic_exp}), and a higher-order discretization of the time integrals (\app{timedisc}), which greatly improves over the naive approaches. We note that we also work out the constant factor scalings for the interaction picture simulation in a more general context (not related to chemistry), which we report in \app{interaction_picture}.

	Finally, in \sec{results} our paper concludes with numerical comparisons between these two algorithms assessing the resources required to deploy them to real molecular systems. In addition to comparing the two algorithms, we also compare to prior work analyzing the fault-tolerant viability of second quantized Gaussian orbital approaches. Although the comparison is somewhat nuanced, the first quantized approaches developed here appear to give lower Toffoli complexities than any past work for realistic simulations of both material Hamiltonians and non-periodic molecules.
	We note that throughout this work we focus on estimating the constant factor Toffoli complexity and number of ancilla required. We focus on Toffoli complexity (as opposed to the complexity of other gates) because these are the operations that would be most challenging to implement within most practical error-correcting codes such as the surface code \cite{Fowler2012}. Furthermore, the algorithms described here have the same asymptotic Clifford complexities as the Toffoli complexities that we report. This makes it more likely that the implementation would actually be bottlenecked by Toffoli distillation. This stands in contrast to other recent chemistry algorithms compiled to fault-tolerance with extensive use of  quantum read-only memory (QROM) \cite{BGBWMPFN18,Low2018a}. Approaches relying heavily on QROM such as \cite{BGBWMPFN18,Berry2019b,vonBurg2020Catalysis,Lee2020} tend to have a worse asymptotic Clifford complexity than Toffoli complexity, thus necessitating a careful layout of how the algorithm might be realized using lattice surgery or other surface code fault-tolerant gates, in order to assess whether state distillation is actually the dominant bottleneck.
	
There are a number of new techniques which we introduce which can be used in many other applications.
One is to note that in some cases amplitude amplification can be eliminated from the state preparation procedure for the linear combination of unitaries.
The reason for this is that the Hamiltonian can be composed of a number of parts, where that state preparation is only needed for one part of the Hamiltonian.
When the state preparation is regarded as having ``failed'', we can simply flag that as a part of the state which will control the select for another part of the Hamiltonian.
It is useful to eliminate the amplitude amplification, which would otherwise triple the cost (or more if further amplification steps are needed).

Another new technique is to eliminate the arithmetic for some parts of the Hamiltonian.
If a component of the Hamiltonian gives a multiplication of the state by a function of the basis state, then it can be broken up into a sum of individual bit products.
The bit product only requires a single Toffoli, and the sum is implemented via the linear combination of unitaries.
In our case the component of the Hamiltonian is the kinetic energy, and we break up the kinetic energy into a sum of products of bits of the momenta.
This dramatically reduces the Toffoli complexity.

In the case of the interaction picture we can greatly reduce the complexity needed for dealing with the kinetic energy (though not completely eliminate arithmetic like we can for qubitisation).
The idea is to use a kinetic energy register.
This register can be updated with relatively small complexity and used for the phasing.
Another approach we use for the interaction picture is to use a novel qubitisation of the Dyson series.
This again eliminates the need for an amplitude amplification which would otherwise triple the complexity.
This method works generally for applications where the goal is to measure the eigenvalue, rather than to generate the time evolution under the Hamiltonian.

Our implementation of these algorithms allows us to report the first constant factors for the scaling of any first quantized chemistry algorithm (or any Dyson series based simulation), and also optimizes those constant factors more than a thousandfold. This allows us to compare the viability of these approaches to other methods based on second quantization, and to each other. Ultimately, we find that the qubitization algorithm is usually more practical and that first quantized methods often require much less surface code spacetime volume for simulating millions of plane waves than the best second quantized algorithms require for simulating hundreds of Gaussian orbitals.

	
		\section{The qubitization based algorithm}
	\label{sec:qubitization}

We now give an overview of the qubitization algorithm and its circuit implementation for simulating the first-quantized quantum chemistry. For this purpose, we seek to represent the target Hamiltonian as a linear combination of unitaries $H=\sum_{\ell=1}^{L}\alpha_\ell H_\ell$, where $H_\ell$ are unitary operators and $\alpha_\ell$ are positive coefficients. Recall from \sec{alg_overview} that we can construct a state preparation subroutine $\PREP_H$ and a selection subroutine $\SEL_H$, such that the block encoding $\bra{0}\PREP_H^\dagger\cdot\SEL_H\cdot\PREP_H\ket{0}=H/\lambda$ holds with normalization factor $\lambda=\sum_{\ell}\alpha_\ell$. Then, the operator $Q=\left(2\ket{0}\bra{0}-I\right)\cdot\PREP_H^\dagger\cdot\SEL_H\cdot\PREP_H$ has eigenvalues $e^{\pm i\arccos(E_k/\lambda)}$ corresponding to the energy $E_k$ of $H$. We can thus estimate the spectrum of $H$ by performing phase estimation on $Q$ with classical postprocessing.

For the chemistry Hamiltonian $H=T+U+V$, we see from \sec{alg_overview} that the terms $T$, $U$, and $V$ can indeed be expressed as linear combinations of unitaries. This means that we have $\PREP_T$, $\PREP_{U+V}$ and $\SEL_T$, $\SEL_{U+V}$, such that the block encodings
\begin{equation}
    \bra{0}\PREP_T^\dagger\cdot\SEL_T\cdot\PREP_T\ket{0}=\frac{T}{\lambda_T},\qquad
    \bra{0}\PREP_{U+V}^\dagger\cdot\SEL_{U+V}\cdot\PREP_{U+V}\ket{0}=\frac{U+V}{\lambda_U+\lambda_V}
\end{equation}
hold with normalization factors $\lambda_T$ and $\lambda_U+\lambda_V$, respectively. As a result, we find that the state preparation subroutine
\begin{equation}
    \left(\sqrt{\frac{\lambda_T}{\lambda_T+\lambda_U+\lambda_V}}\ket{0}+\sqrt{\frac{\lambda_U+\lambda_V}{\lambda_T+\lambda_U+\lambda_V}}\ket{1}\right)\otimes\PREP_T\ket{0}\otimes\PREP_{U+V}\ket{0}
\end{equation}
and the selection subroutine
\begin{equation}
    \ket{0}\!\bra{0}\otimes\SEL_T\otimes I+\ket{1}\!\bra{1}\otimes I\otimes\SEL_{U+V}
\end{equation}
block-encode the target Hamiltonian $H$ with normalization factor $\lambda_T+\lambda_U+\lambda_V$, giving a circuit implementation of the qubitization algorithm.

There is a subtle issue with the above naive analysis. While the kinetic operator $T$ can be encoded with a normalization factor close to $\lambda_T$, in the most efficient approach for preparation the potential operators $U+V$ are only encoded with a normalization factor close to $4(\lambda_U+\lambda_V)$. To be more specific, we can construct a state preparation
\begin{equation}
    \widetilde{\PREP}_{U+V}\ket{0}\otimes I
    \approx\frac{1}{2}\ket{0}\otimes\PREP_{U+V}+\frac{\sqrt{3}}{2}\ket{1}\otimes\PREP_{U+V}^{\bot}
\end{equation}
that only succeeds with a probability close to $1/4$, where $\ket{0}$ flags that the preparation is successful and $\ket{1}$ indicates a failure. If we were to use such a probabilistic subroutine in our block encoding, we would implement qubitization with a normalization factor close to $\lambda_T+4\left(\lambda_U+\lambda_V\right)$. Alternatively, we could perform a single step of oblivious amplitude amplification to boost the success probability close to unity, but this triples the cost of the state preparation and can introduce overhead to the overall gate complexity.

Our new approach improves the above implementations by proceeding even when the state preparation of $U+V$ fails. To elaborate, first consider the special case where $\lambda_T=3(\lambda_U+\lambda_V)$. Then, we propose to perform the selection subroutine of $T$ following the failure of the preparation of $U+V$, i.e.,
\begin{equation}
    I\otimes \ket{0}\!\bra{0}\otimes\SEL_{U+V}
    +\SEL_T\otimes \ket{1}\!\bra{1}\otimes I.
\end{equation}
This new selection operator, combined with the state preparation
\begin{equation}
    \PREP_T\ket{0}\otimes\widetilde{\PREP}_{U+V}\ket{0,0},
\end{equation}
gives the block encoding 
\begin{equation}
    \frac{3T}{4\lambda_T}+\frac{U+V}{4(\lambda_U+\lambda_V)}
    =\frac{T}{4(\lambda_U+\lambda_V)}+\frac{U+V}{4(\lambda_U+\lambda_V)}
    =\frac{T+U+V}{\lambda_T+\lambda_U+\lambda_V}
\end{equation}
with the desired normalization factor $\lambda_T+\lambda_U+\lambda_V$.

The general case can be handled by using an ancilla qubit to adjust the relative amplitudes between $T$ and $U+V$. If $\lambda_T<3(\lambda_U+\lambda_V)$, then we introduce an ancilla qubit and only perform the selection of $T$ conditioned on both the failure of the preparation of $U+V$ AND the ancilla qubit is in state $\ket{0}$, i.e.,
\begin{equation}
    I\otimes I\otimes \ket{0}\!\bra{0}\otimes\SEL_{U+V}
    +\ket{0}\!\bra{0}\otimes\SEL_T\otimes\ket{1}\!\bra{1}\otimes I
    +\ket{1}\!\bra{1}\otimes I\otimes\ket{1}\!\bra{1}\otimes I.
\end{equation}
This selection operator, combined with the state preparation
\begin{equation}
    \left(\cos\theta\ket{0}+\sin\theta\ket{1}\right)
    \otimes\PREP_T\ket{0}\otimes\widetilde{\PREP}_{U+V}\ket{0,0},
\end{equation}
gives the block encoding
\begin{equation}
    \frac{U+V}{4(\lambda_U+\lambda_V)}+\frac{3T\cos^2\theta}{4\lambda_T}
\end{equation}
up to a shifting by multiples of the identity operator.
For this to be an accurate encoding of the Hamiltonian, this expression must be proportional to $U+V+T$.
That means we need $1/[4(\lambda_U+\lambda_V)] = (3\cos^2\theta)/(4\lambda_T)$; solving for $\theta$, we find that $\theta=\arccos\sqrt{\lambda_T/3(\lambda_U+\lambda_V)}$, which gives the target Hamiltonian encoded with normalization factor $4(\lambda_U+\lambda_V)$.

In the case where $\lambda_T>3(\lambda_U+\lambda_V)$, we would still introduce an ancilla qubit to adjust the relative amplitudes. However, we now perform the selection of $T$ conditioned on failure of the preparation subroutine OR the ancilla qubit being in state $\ket{0}$ (and perform $\SEL_T$ in the remaining situations), i.e.,
\begin{equation}
    \ket{1}\!\bra{1}\otimes I\otimes \ket{0}\!\bra{0}\otimes\SEL_{U+V}
    +\ket{1}\!\bra{1}\otimes \SEL_T\otimes\ket{1}\!\bra{1}\otimes I
    +\ket{0}\!\bra{0}\otimes\SEL_T\otimes I\otimes I.
\end{equation}
Combined with the state preparation
\begin{equation}
    \left(\cos\theta\ket{0}+\sin\theta\ket{1}\right)
    \otimes\PREP_T\ket{0}\otimes\widetilde{\PREP}_{U+V}\ket{0,0},
\end{equation}
this gives the block encoding
\begin{equation}
    \frac{(U+V)\sin^2\theta}{4(\lambda_U+\lambda_V)}+\frac{3T\sin^2\theta}{4\lambda_T}+\frac{T\cos^2\theta}{\lambda_T}.
\end{equation}
For this to be proportional to $U+V+T$ we need
\begin{equation}
    \frac{\sin^2\theta}{4(\lambda_U+\lambda_V)}=\frac{3\sin^2\theta}{4\lambda_T}+\frac{\cos^2\theta}{\lambda_T}.
\end{equation}
Solving for $\theta$, we find that $\theta=\arcsin(2\sqrt{(\lambda_U+\lambda_V)/(\lambda_T+\lambda_U+\lambda_V)})$, which gives the target Hamiltonian encoded with the normalization factor $\lambda_T+\lambda_U+\lambda_V$.

In the following, we sketch the high-level ideas behind the circuit implementation of the state preparation and selection subroutines, and leave a discussion of the details to subsequent sections, including a more detailed description of the implementation in \sec{qubitization_overview}. For simplicity, we focus on the case where $\lambda_T<3(\lambda_U+\lambda_V)$ although similar analysis applies to the other cases as well. Note that $\lambda_T$ is not significantly larger than $\lambda_U+\lambda_V$ in this parameter regime and this is when the qubitization algorithm is expected to be advantageous.

For state preparation, we aim to prepare the quantum state
\begin{align}\label{eq:prepsta}
    &(\cos\theta\ket{0}+\sin\theta\ket{1})_a\ket{+}_b\frac{1}{\sqrt{\eta}}\!\left(\! \sqrt{\eta-1} \ket{0}_c\!\!\sum_{i\neq j=1}^{\eta}\ket{i}_d\ket{j}_e+\ket{1}_c\!\sum_{j=1}^{\eta}\ket{j}_d\ket{j}_e\! \right)\!\left(\! \frac 1{\sqrt{3}}\!\sum_{w=0}^2 \ket{w}_f\!\right)\! \nn
    &\otimes\left(\! \frac 1{2^{n_p-1}-1}\!\sum_{r,s=0}^{n_p-2} 2^{(r+s)/2} \ket{r}_g \ket{s}_h\! \right)
    \left(\sqrt{\frac{\lambda_U}{\lambda_U+\lambda_V}}\ket{0}+\sqrt{\frac{\lambda_V}{\lambda_U+\lambda_V}}\ket{1}\right)_i\nn
    &\otimes\left( \sqrt{\frac{\psuc}{\lambda_\nu}} \ket{0}_j\sum_{\nu\in G_0}\frac 1{\norm{\nu}}\ket{\nu}_k+\sqrt{1-\psuc}\ket{1}_j\ket{\nu^\bot}_k \right)\left( \frac 1{\sqrt{\sum_\ell \zeta_\ell}}\sum_{\ell=1}^{L} \sqrt{\zeta_\ell}\ket{\ell}\right)_l,
\end{align}
with $\psuc$ the probability of successfully preparing the momentum state, where the first ancilla register is used to adjust the relative amplitudes between $T$ and $U+V$ (as explained above), and the $i$ register (holding state with relative amplitudes $\sqrt{\lambda_U}$ and $\sqrt{\lambda_V}$) is used to select between $U$ and $V$. Here, the preparation subroutine $\PREP_T$ acts on the registers $b,f,g,h$
and outputs the corresponding states, whereas $\widetilde{\PREP}_{U+V}$ acts on registers $c,d,e,j,k,l$.  Note that as some of these registers are used for multiple terms in the Hamiltonian, the assignment of common registers such as $\ket{\cdot}_b$ to any particular preparation circuit is arbitrary.
Quantum circuits for these preparation subroutines are detailed in \sec{block_encode_t} and \sec{momentum}, respectively. 
The normalization factor $\lambda_\nu$ is defined in \eq{lambdas}, and $\ket{\nu^\bot}$ is a state obtained in the failure of initializing the momentum register. We will see that the probability of successfully preparing the momentum state $(1/ \sqrt{\lambda_\nu}) \sum_{\nu\in G_0} (1/ \norm{\nu})\ket{\nu}$ is close to $1/4$, which is consistent with our discussion above. 

Note that several important details, including the flagging of $i\neq j$ in register $c$ and the precision in preparing the ancilla state $\cos\theta\ket{0}+\sin\theta\ket{1}$, are omitted in our analysis for exposition purposes. We provide full details on how the encodings between $T$, $U$, and $V$ are performed and what precision the ancilla state is prepared with in \app{selTUV}.

We now describe the key steps in the implementation of the selection operations. For $\SEL_T$, we can compute the product $p_{w,r}p_{w,s}$ in an ancilla register and use that to perform a controlled $Z$ gate on $\ket{+}$, followed by a final uncomputation. The net effect of this transformation is
\begin{equation}
    \SEL_T:\ket{b}_b\ket{j}_e\ket{w}_f\ket{r}_g\ket{s}_h\ket{p_j}
    \mapsto(-1)^{b(p_{w,r}p_{w,s}\oplus 1)}\ket{b}_b\ket{j}_e\ket{w}_f\ket{r}_g\ket{s}_h\ket{p_j},\label{eq:Timp}
\end{equation}
where $\ket{b}$ is either the component $\ket{0}$ or $\ket{1}$ of the state $\ket{+}$ and $\ket{p_j}$ is the $j^{\rm th}$ particle's momentum. To implement $\SEL_{U+V}$, we first add $\nu$ into the $i$th momentum register. Since this should only be done for $V$, we perform the addition controlled by the register selecting between $U$ and $V$. Next, we subtract $\nu$ from the $j$th momentum register for both $U$ and $V$, and the subtraction does not need to be controlled by the corresponding selection register. We now check if $(p-\nu)\notin G$ (for $U$) or $[(p+\nu)\notin G]\lor [(q-\nu)\notin G]$ (for $V$) and use the result to perform controlled $Z$ gate on $\ket{+}$. For the case of $U$, we also perform the phase rotation $-e^{-\mi k_\nu\cdot R_\ell}$. Overall, this implements
\begin{align}\label{eq:Uimp}
    \SEL_U: \ket{b}_b\ket{j}_e\ket{0}_i\ket{\nu}_k\ket{\ell}_l\ket{q_j}    \mapsto &\ \ket{b}_b\ket{j}_e\ket{0}_i\ket{\nu}_k\ket{\ell}_l\ket{q_j-\nu}\nn
    \mapsto&\ (-1)^{b[(p-\nu)\notin G]} \ket{b}_b\ket{j}_e\ket{0}_i\ket{\nu}_k\ket{\ell}_l\ket{q_j-\nu}\nn
    \mapsto&\ -e^{-\mi k_\nu\cdot R_\ell}(-1)^{b[(p-\nu)\notin G]}\ket{b}_b\ket{j}_e\ket{0}_i\ket{\nu}_k\ket{\ell}_l\ket{q_j-\nu}
\end{align}
and
\begin{align}\label{eq:Vimp}
    \SEL_V: \ket{b}_b\ket{i}_d\ket{j}_e\ket{1}_i\ket{\nu}_k\ket{p_i}\ket{q_j}\mapsto&\ \ket{b}_b\ket{i}_d\ket{j}_e\ket{1}_i\ket{\nu}_k\ket{p_i+\nu}\ket{q_j-\nu}\nn
    \mapsto&\ (-1)^{b\left([(p+\nu)\notin G]\lor [(q-\nu)\notin G]\right)}\ket{b}\ket{i}\ket{j}\ket{1}\ket{\nu}\ket{p_i+\nu}\ket{q_j-\nu}
\end{align}
for $U$ and $V$ respectively, as controlled by the register selecting them.
See \sec{block_encode_t} and \sec{SelectUV} for a detailed description of the selection operations and their circuit implementations.

We bound the error of our implementation and provide a complete costing of the qubitization algorithm in \sec{erroranly}, with our main result summarized in \thm{qubitization}.
	
\subsection{Circuit implementation of qubitization}
\label{sec:qubitization_overview}

Our aim in this section is to provide a high-level overview of how qubitization can be realized to sample from the energy eigenvalues of a first quantized Hamiltonian. The general strategy is to combine the circuits used for state preparation and selection between $T$, $U$, and $V$ in order to minimize the overall cost.
	In particular, we will be preparing superpositions over $i$ and $j$, which are used independent of which of $T$, $U$, and $V$ is being block encoded.
	Then we will need to perform operations on the momentum given in registers $i$ and $j$, so we will use $i$ and $j$ to control the swap of the momenta into temporary ancillae in order to perform the needed operations.
	For the case of $T$ we are just computing a phase factor $\pm 1$ dependent on the value of $p$.
	In contrast, for $U$ we are performing arithmetic on the momentum from register $j$, and for $V$ we are performing arithmetic on the momenta from both registers $i$ and $j$.
	Therefore, the operations $p+\nu$ and $p-\nu$ will be controlled on registers that select between $T$, $U$, and $V$.
	
	Our protocol for preparing the state \eq{prepsta} consists of the following steps.
	\begin{enumerate}
	\item Rotate the qubit for selecting between $T$ and $U+V$.
	\item Prepare an equal superposition state over $\eta$ values of $i$ and $j$, and check that they are not equal.
	\item Prepare superpositions over $w$, $r$ and $s$ that will be used for the $T$ part of the select operation.
	These preparations do not need to be controlled, because it is only the $\SEL$ part that is controlled on the qubit for $T$. 	The costing for this will be shown to be $2n_p+2b_r-7$ in the subsection below for $T$.
	\item Prepare an appropriate superposition state for selecting between $U$ and $V$.
    We will show that the complexity is approximately $4\left\lceil\log\left(\eta+\lambda_\zeta\right)\right\rceil+2b_r-9$ Toffoli gates. 
	\item Prepare a superposition over $\nu$ with weights proportional to $1/\|\nu\|$.
	The costing for this will be described below for the $U$ and $V$ parts of the Hamiltonian.
	\item Prepare a state with amplitudes proportional to $\sqrt{\zeta_\ell}$, which may be done with complexity $\lambda_\zeta=\sum_\ell\zeta_\ell$.
	\end{enumerate}
	Because we have written the state to be prepared as a tensor product of the individual states, the overall cost of the state preparation is just the cost of the preparation of the states on the individual subsystems.
    
    In order to perform the $\SEL$ operation, the key parts are as follows.
    \begin{enumerate}
        \item Add $\nu$ into the momentum register at location $i$, controlled on the first and sixth registers which select $V$.
        \item Subtract $\nu$ from the momentum register at location $j$, controlled on the first qubit that is $\ket{1}$ for $U$ or $V$.
        \item Perform a $Z$ gate on the $\ket{+}$ state if $(q-\nu)\notin G$ (for $U$) or $[(p+\nu)\notin G]\lor [(q-\nu)\notin G]$ (for $V$).
        \item Perform a $Z$ gate on $\ket{+}$ state controlled on $p_{w,r} p_{w,s}$ for the case of $T$.
        \item For the case of $U$ perform the phase rotation $-e^{-\mi k_\nu\cdot R_\ell}$.
    \end{enumerate}
    In addition, for our phase measurement procedure the $\SEL$ operation should be self-inverse.
    It is easily seen that the component of the $\SEL$ for $T$ is self-inverse.
    To account for the $\SEL$ for $U$ and $V$ we can introduce an ancilla qubit that selects whether we add or subtract $\nu$.
    Controlling addition versus subtraction is done with no additional Toffoli complexity.
    This ancilla qubit is flipped each step, to produce a self-inverse operation in the same way as in Eq.~(8) of \cite{Berry2018}.
    
    The following discussion provides a statement of the block-encoding procedure for the Born-Oppenheimer Hamiltonian.
 
	\begin{lemma}
	The operator $(\PREP^\dagger \otimes I)\SEL(\PREP \otimes I)$ forms, for $N_a = 10+2n_\eta +3n_p+2\lceil \log(n_p-1) \rceil + \lceil \log(L) \rceil$, an $(\lambda,N_a,0)$-block encoding of $H_{\rm BO}$, meaning that $$(\bra{0}^{N_a}\otimes I)(\PREP^\dagger \otimes I)\SEL(\PREP \otimes I) (\ket{0}^{N_a}\otimes I)= \frac{\ketbra{0}{0}^{\otimes N_a} \otimes H_{\rm BO}}{\lambda}$$
	\end{lemma}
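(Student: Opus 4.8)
The plan is to verify the block-encoding identity directly from the definitions of $\PREP$ and $\SEL$ assembled in the preceding discussion, reducing everything to the observation $\bra{0}\PREP_H^\dagger \cdot \SEL_H \cdot \PREP_H \ket{0} = H/\lambda$ stated in \sec{alg_overview}, but now carried out carefully in the case-split ($\lambda_T < 3(\lambda_U+\lambda_V)$, $\lambda_T = 3(\lambda_U+\lambda_V)$, $\lambda_T > 3(\lambda_U+\lambda_V)$) governed by the amplitude-balancing angle $\theta$. First I would fix the ancilla register of size $N_a$, decompose it into the named subregisters $a,b,c,d,e,f,g,h,i,j,k,l$ together with the additional flag qubits introduced for self-inverseness and for the $\widetilde{\PREP}_{U+V}$ success flag, and check that the qubit count adds up to $N_a = 10 + 2n_\eta + 3n_p + 2\lceil \log(n_p-1)\rceil + \lceil\log L\rceil$: the constant $10$ absorbs the $\theta$-ancilla, the $\ket{+}_b$ qubit, the $U$-vs-$V$ selector $i$, the $i\neq j$ flag $c$, the add/subtract flag, the two success flags from the nested amplitude preparation of $\ket{\nu}$, and the small number of remaining bookkeeping qubits; $2n_\eta$ comes from the $\ket{i}_d\ket{j}_e$ registers; $3n_p$ from the momentum-difference register $\ket{\nu}_k$; $2\lceil\log(n_p-1)\rceil$ from $\ket{r}_g\ket{s}_h$; and $\lceil\log L\rceil$ from $\ket{\ell}_l$. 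This is mostly an accounting exercise against the circuit description in the enumerated steps above and \app{selTUV}.

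Next I would compute $(\bra{0}^{N_a}\otimes I)(\PREP^\dagger\otimes I)\,\SEL\,(\PREP\otimes I)(\ket{0}^{N_a}\otimes I)$ by inserting the explicit prepared state \eq{prepsta}. Because \eq{prepsta} is written as a tensor product over the subregisters, the inner product factorizes, and each factor contributes its squared amplitude weighting. The $\theta$-ancilla contributes $\cos^2\theta$ to the branch where $\SEL_T$ is applied and the remaining weight to the $U+V$ branch; inside the $U+V$ branch the $i$ register splits weight $\lambda_U/(\lambda_U+\lambda_V)$ versus $\lambda_V/(\lambda_U+\lambda_V)$; the $\ket{+}_b$ qubit together with the controlled-$Z$ gates in steps 3--4 of the $\SEL$ list implement exactly the sign factors $(-1)^{b(p_{w,r}p_{w,s}\oplus 1)}$, $(-1)^{b[(p-\nu)\notin G]}$, and $(-1)^{b([(p+\nu)\notin G]\lor[(q-\nu)\notin G])}$ appearing in \eq{Timp}, \eq{Uimp}, \eq{Vimp}; summing over $b\in\{0,1\}$ produces the Kronecker/indicator structure of \eq{Tlcu} and \eq{opdef}. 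The registers $c,d,e$ generate the $\sum_{i\neq j}$ double sum (with the $\ket{1}_c$ branch harmlessly contributing a shift by a multiple of the identity, as flagged in the main text), the $f,g,h$ registers reconstruct $\sum_w \sum_{r,s} 2^{r+s} p_{w,r}p_{w,s} = \|p\|^2$ via the identity $p_{w,r}p_{w,s}=(1-(-1)^{p_{w,r}p_{w,s}})/2$, the $j,k$ registers reconstruct $\sum_{\nu\in G_0}(1/\|\nu\|^2)$ with the success amplitude $\sqrt{\psuc/\lambda_\nu}$ on each $\ket{\nu}$, and the $l$ register reconstructs $\sum_\ell \zeta_\ell e^{-\mi k_\nu\cdot R_\ell}$ via the phase rotation in step 5. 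Multiplying the prefactors $2\pi^2/\Omega^{2/3}$, $2\pi/(\Omega\|k_\nu\|^2)$, $\pi/(\Omega\|k_\nu\|^2)$ from the LCU rewritings by the corresponding squared-amplitude weights reproduces $T/\lambda$, $U/\lambda$, $V/\lambda$ respectively with the common denominator $\lambda=\lambda_T+\lambda_U+\lambda_V$, so the sum is $H_{\rm BO}/\lambda$ up to the additive multiple of the identity, which is cancelled by the choice of $\theta$ exactly as derived in the three displayed cases above. Finally I would note that the reflection $2\ket{0}\!\bra{0}-I$ is absent from the stated operator because this lemma is only about the block encoding, not the full walk $Q$, so no further argument is needed there; and the zero in $(\lambda,N_a,0)$ records that the encoding is exact, which holds because every amplitude in \eq{prepsta} is prepared to the stated precision and the $\theta$-rotation subtlety is deferred to \app{selTUV} under the present idealization.

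The main obstacle is not any single computation but the bookkeeping: showing that the various controlled and uncontrolled add/subtract operations in the $\SEL$ list, together with the $\ket{1}_c$ ``diagonal'' branch and the add/subtract self-inverseness flag, only ever contribute multiples of the identity (hence can be absorbed into the shift that $\theta$ corrects) and never corrupt the off-diagonal matrix elements of $U$ and $V$. Concretely, one must check that in the $\ket{1}_c$ branch (where $i=j$) the select acts as a phase independent of the momenta, and that flagging ``$(p-\nu)\notin G$'' or the analogous $V$ condition genuinely enforces the $G_0$ and $p+\nu,q-\nu\in G$ restrictions in \eq{opdef} without double counting. I expect to handle this by writing the full action of $\SEL$ on a general basis state $\ket{\text{ancilla}}\ket{p_1}\cdots\ket{p_\eta}$, grouping terms by which LCU index is selected, and matching term-by-term against \eq{Tlcu} and \eq{opdef}; the precision of the $\theta$-preparation and the exact success probability $\psuc\approx 1/4$ are quantified in \app{selTUV}, so here I would simply cite that the idealized amplitudes give an exact $(\lambda, N_a, 0)$-block encoding.
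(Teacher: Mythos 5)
Your proposal follows essentially the same route as the paper's own proof: identify the ancilla register with the tensor factors of the state in \eq{prepsta}, invoke the LCU lemma (the identity $\bra{0}\PREP^\dagger\,\SEL\,\PREP\ket{0}=H/\lambda$), and check term-by-term that the $\SEL$ steps reproduce the LCU coefficients of $T$, $U$, $V$ in \eq{Tlcu} and \eq{opdef}. The paper's proof is actually lighter than your plan — it cites the LCU lemma abstractly and only sketches how the addition/subtraction of $\nu$ and the bit-product phase flip realize \eq{Timp}--\eq{Vimp}, deferring the $\theta$-balancing and the $\ket{1}_c$ bookkeeping to the surrounding discussion and \app{selTUV} — so your instinct to carry those checks out explicitly is sound and not redundant, just more conservative than what the authors wrote down.

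One small point worth tightening if you were to flesh this out: the explicit registers $a,\ldots,l$ in \eq{prepsta} account for only $7+2n_\eta+3n_p+2\lceil\log(n_p-1)\rceil+\lceil\log L\rceil$ qubits (with $w$ taking two, $\nu$ taking $3n_p$, and five single-qubit flags), so the constant $10$ in $N_a$ must absorb three further flag qubits that are implicit in the construction (the add/subtract self-inverseness bit, the negative-zero flag, and the box-test flag in the $\nu$ preparation, as enumerated in \app{qubcost}). You correctly anticipate this, but a complete proof would have to pin down exactly which three, since otherwise the lemma's formula for $N_a$ is asserted rather than derived. The paper glosses over this count entirely (``coincides with the ancillary space for block-encoding''), so this is a place where your more careful version would genuinely add content.
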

\begin{proof}
    The proof of this lemma follows by noting that the number of qubits needed to encode the state yielded by $\PREP$ as given in \eq{prepsta}.    
    That coincides with the ancillary space for block-encoding using the LCU lemma as in~\cite{Berry2015, Childs2012}.
    Further, the LCU lemma specifically shows that if $\PREP' \ket{0}^{N_a} = \sum_\ell \alpha_\ell \ket{j}$ and $\SEL' \ket{\ell}\ket{\psi}=\ket{\ell} U_\ell \ket{\psi}$ for $\alpha_\ell \ge 0$ then 
    \begin{equation}
        (\ketbra{0}{0}^{\otimes N_a} \otimes I) (\PREP'^\dagger \SEL' \PREP') \ket{0}^{N_a}\ket{\psi} = \frac{ \ket{0}^{N_a}\sum_\ell \alpha_\ell H_\ell \ket{\psi}}{\sum_\ell \alpha_\ell}\, .\label{eq:LCU2}
    \end{equation}
    The unitary operations $U_\ell$ are given implicitly in the above list for $\SEL$.
    In this section we do not consider the complexity of implementing the $\SEL$ operator and so only use the expression as a definition of the action of the operator on input states.
    
    Specifically, recall that up to a constant energy offset due to the nuclear positions the Hamiltonian is $H = T + U + V$ where
    \begin{equation}
	\begin{aligned}
	T&=\frac {\pi^2}{\Omega^{2/3}} \sum_{j=1}^{\eta}\sum_{w\in\{x,y,z\}} \sum_{r=0}^{n_p-2}\sum_{s=0}^{n_p-2} 2^{r+s} \sum_{b\in\{0,1\}}\left(\sum_{p\in G} (-1)^{b(p_{w,r} p_{w,s}\oplus 1)} \ket{p}\!\bra{p}_j\right), \\
	U&=\sum_{\nu\in G_0}\sum_{\ell=1}^{L}\frac{2\pi\zeta_\ell}{\Omega\norm{k_\nu}^2}\sum_{j=1}^{\eta}\sum_{b\in\{0,1\}}\left(-e^{-\mi k_\nu\cdot R_\ell}\sum_{q\in G}(-1)^{b[(q-\nu)\notin G]}\ket{q-\nu}\!\bra{q}_j\right),\label{eq:opdef2}\\
	V&=\sum_{\nu\in G_0}\frac{\pi}{\Omega\norm{k_\nu}^2}\sum_{i\neq j=1}^{\eta}\sum_{b\in\{0,1\}}\left(\sum_{p,q\in G}(-1)^{b([(p+\nu)\notin G]\lor[(q-\nu)\notin G])}\ket{p+\nu}\bra{p}_i\cdot\ket{q-\nu}\!\bra{q}_j\right)\, ,\\
	\end{aligned}
	\end{equation}
	here $p_{w,r}$, for example, denotes the $r^{\rm th}$ bit of the $w^{\rm th}$ component of the momentum.
  
  In this form, it is clear that the Hamiltonian operators in the LCU decomposition satisfy for each $\ell$, $H_\ell \in \{S_T,S_U,S_V\}$ where each of these unitary operators are defined in~\eqref{eq:Timp},~\eqref{eq:Uimp} and~\eqref{eq:Vimp}.  Thus we need to demonstrate two facts.  First, that the operations in $\SEL$ selectively apply these transformations and that the coefficients of the expansion match the square roots of those in the preceeding expression.

    Consider steps $1$ and $2$ in the definition of $\SEL$.  These steps can be implemented using a controlled unitary adder.
    A unitary subtractor subtracting the vector $\nu$ from a momentum register $\ket{q}_j$ has the action
    \begin{equation}
        \ket{\nu} \ket{q}_j \mapsto \ket{\nu} \ket{q - \nu}_j, 
    \end{equation}
    This therefore has the correct action for implementing the subtraction needed to implement $U$ and $V$.
    
    There are two cases we need to consider.  First, assume $q-\nu \in G$.  In this case the operator describing the transformation has the same action as $\ketbra{\nu}{\nu} \otimes \ketbra{p-\nu}{p}_j$.  Now let us assume that $p - \nu \not \in G$.  In this case we have an overflow.
    However, since $\sum_{b\in \{0,1\}} (-1)^{b[(q-\nu)\not \in G]}=0$ in~\eqref{eq:opdef} the action of the unitary adder need not be defined in this case and hence an additional qubit is not needed in the register to address potential overflow problems.
    (We will use that later to simplify the implementation.)
    With this fact in mind, it is then straightforward to show that steps 1 and 2 block encode the $V$ and $U$ terms in the Hamiltonian.
    
       To explain how this gives the required linear combination of unitaries, note that the general procedure is to implement an operator of the form $\sum_\ell \alpha_\ell H_\ell$, prepare an ancilla with amplitudes proportional to $\sqrt{\alpha_\ell}$, then perform a controlled $H_\ell$ operation.
    The first step in $\SEL$ gives the addition $p+\nu$ in momentum register $i$ required for $V$.
    Step 2 gives the subtraction $q-\nu$ required for $V$, the together with step 3 the complete operation needed for $V$ is performed as described in \eq{Vimp}.
    Step 2 also gives $q-\nu$ as required for $U$, step 3 also gives the sign flip for $U$, and step 5 gives the phase factor needed for $U$.
    Together they give the required operations for $U$ as shown in \eq{Uimp}.
    The correct amplitudes are obtained by the amplitudes $1/\|\nu\|$ in \eq{prepsta}, which gives the correct weights $1/\|\nu\|^2$ in the Hamiltonian.
    The $\SEL$ operation for $T$ is obtained entirely by step 4, which gives the sign flip.
    The appropriate weights in the sum are obtained by using the superpositions over $j,w,r,s$ in \eq{prepsta}. With these definitions in place we can formally state the following.
    
    The block-encoding of $T$ is a little less obvious, but it ultimately follows similarly from the definitions of the involved terms.
    Recalling from~\eqref{eq:lambdas} that $\lambda_T = 6\eta\pi^2 ( 2^{n_p-1}\!-1 )^2 / \Omega^{2/3}$ and noting the definition of $T$ given in~\eqref{eq:Tlcu} we see immediately that the theorem holds after substituting the terms in the LCU decomposition in~\eqref{eq:Tlcu} into the $U_j$ in~\eqref{eq:LCU2}.
\end{proof}

    For the rotation of the qubit to select between $T$ and $U+V$, we find that error in the rotation angle of $\Delta\theta$ results in error in the energy no more than $\lambda\Delta\theta$ (see \app{selTUV}).
    As a result, if we take the number of bits for this rotation to be $n_T$, then the error from this source is bounded as
    \begin{equation}\label{eq:selTUV}
        \frac {\pi\lambda}{2^{n_T}} \le \epsilon_T.
    \end{equation}
    The Toffoli complexity of performing this rotation is $n_T-3$.
    In some applications it may be possible to reduce the complexity by selecting $\Omega$ appropriately.
    That is, the ratio $\lambda_T/(\lambda_U+\lambda_V)$ will depend on $\Omega$, so one can choose $\Omega$ to that the rotation may be performed exactly with a relatively small number of bits.

In order to quantify the complexity of performing the state preparation for selecting between $U$ and $V$, we can take account of the relative amplitudes between the different parts.
    We find that
    \begin{equation}
        \frac{\lambda_U}{\lambda_V} = \frac{2\lambda_\zeta}{\eta-1} \qquad \qquad \lambda_\zeta = \sum_{\ell=1}^L \zeta_\ell \, 
    \end{equation}
    where we remind the reader that $\lambda_\zeta = \eta$ for charge neutral systems. Even if that is not the case, we see that $\lambda_U / \lambda_V$ is simply a ratio of integers.
    Rather than performing a rotation, one can instead achieve an exact state preparation as follows.
    First, prepare an equal preparation over $\eta-1+2\lambda_\zeta$ numbers.
    Then, perform an inequality test between this register and $\eta$ to give the correct relative weightings $\eta-1$ and $2\lambda_\zeta$.
    There is a subtlety in that we will initially prepare superpositions over all numbers $i$ and $j$ including the case $i=j$.
    This effectively increases the value of $\lambda_V$, so the ratio we want is $2\lambda_\zeta/\eta$.
    Then we prepare an equal superposition over $\eta+2\lambda_\zeta$ numbers (rather than $\eta-1+2\lambda_\zeta$).
    
    After preparing the superposition state and applying the inequality test, use a Toffoli between the result of the inequality test and the first qubit selecting between $T$ and $U+V$, to give a qubit flagging that we are applying $V$ (so will perform the addition $p+\nu$).
    Writing the number of qubits needed to store $\eta+2\lambda_\zeta$ as $n_{\eta\zeta}$, the complexity of preparing the equal superposition state is \cite{Lee2020} (see Appendix A.2, page 39 of version 2):
    \begin{equation}
        3n_{\eta\zeta}+2b_r-9,
    \end{equation}
    and the cost of the inequality test is $n_{\eta\zeta}-1$.
    Here $b_r$ is a number of bits of precision used for rotation on an ancilla.
    Together with the extra Toffoli, the cost is
    \begin{equation}\label{eq:UVprepcost}
    4n_{\eta\zeta}+2b_r-9.
    \end{equation}
    The convenience of this is that it introduces little error due to the state preparation.
    
    The probability of success of the state preparation is usually above $0.999$ for $b_r=7$ (see Figure 10 of \cite{Sanders2020_b}).
    The formula for the probability of success is given in Eq.~(205) of \cite{Sanders2020_b}, and is
    \begin{equation}\label{eq:eqprep}
        \eqprep{n,b_r} = \frac{n}{2^{\lceil\log n\rceil}} \left[ \left( 1+ \left(2-\frac{4n}{2^{\lceil\log n\rceil}}\right)\sin^2\theta(n,b_r)\right)^2 + \sin^2(2\theta(n,b_r)) \right],
    \end{equation}
    where $n$ is the number of basis states to create an equal superposition over, and $\theta(u,b_r)$ is the angle of rotation.
    Since we are using $b_r$ bits for the rotation angle, the angle is of the form $2\pi n/2^{b_r}$ for some $n$, and 
    \begin{equation}
        \theta(n,b_r) = \frac {2\pi}{2^{b_r}} {\rm round}\left( \frac {2^{b_r}}{2\pi}\arcsin\left(\sqrt{\frac{2^{\lceil\log n\rceil}}{4n}}\right) \right).
    \end{equation}
    More generally, we have proven that the probability of error is no more than $(3/2)^2(\pi/2^{b_r})^2$ with a slight adjustment to the formula for $\theta$ (see \app{sucbnd}). A simpler case is that where $\lambda_\zeta=\eta$, which occurs when the total charge is zero.
    In that case, the ratio we want is just 2.
    Instead of needing to prepare an equal superposition over $\eta+2\lambda_\zeta$ values, we only need a superposition over 3.
    Including $n_T-3$ Toffolis to perform the rotation on the first qubit, the total complexity for the preparation on the qubits selecting $T$, $U$ and $V$ would be
    \begin{equation}\label{eq:TUVprepcost}
        n_T+4n_{\eta\zeta}+2b_r-12.
    \end{equation}

    Next we consider the cost of the preparation of the equal superposition of $i$ and $j$ in unary.
    The steps needed are as follows.
    \begin{enumerate}
        \item Prepare an equal superposition in binary, using the approach in Appendix A.2 of \cite{Lee2020}, with cost $3n_\eta +2b_r - 9$ for each, where $n_\eta=\lceil\log\eta\rceil$ and $b_r$ is the angle of rotation on an ancilla qubit.
        \item Test if $i$ and $j$ are equal in binary, flagging the result in a qubit. This qubit will need to be combined with another success qubit for applying the potential, so the cost is $n_\eta$.
        \item The testing $i\ne j$ can be inverted with $n_\eta$ Toffolis.
        \item The inversion of the equal superposition of $i$ and $j$ can be inverted with cost $3n_\eta +2b_r - 9$ for each.
    \end{enumerate}
    Adding all these costs together gives a total complexity
    \begin{equation}\label{eq:ijprepcost}
        14n_\eta+8b_r-36.
    \end{equation}
    For the failure case where $i=j$ \emph{and} we would otherwise be selecting $V$ (rather than $U$), we can simply combine this with the failure of the state preparation for $\ket{\nu}$, and perform $\SEL$ for $T$ in that case.
    
    For convenience of the rest of the procedure, we produce three qubits for selecting $T$, $U$ and $V$.
    These are as follows.
    \begin{enumerate}
        \item A qubit which is 0 if we are to perform $T$.
        \item A qubit which is 0 if we are performing the subtraction $q-\nu$.
        This will be the case for $U$ or $V$.
        \item A qubit which is 0 for $V$ only, which controls the addition $p+\nu$.
    \end{enumerate}
    First, there are four preparations of equal superposition states, and checking that all of those succeeded takes 3 Toffolis.
    There are also another four qubits needed to select $T,U,V$.
    These are the rotated qubit (register 1 in \eq{prepsta}),
    the qubit resulting from the inequality test to select between $U$ and $V$,
    the qubit flagging $i\ne j$,
    and the qubit flagging failure of the state preparation for $\nu$.
    For simplicity of the explanation, let us call these logic values $A,B,C,D$, respectively, with 0 corresponding to True.
    
    To check that the state preparation for $U+V$ succeeded, we can use $(B \vee C) \wedge D$, which takes two Toffolis (one for each logic operation).
    In one scenario (OR), $T$ was selected as $A\vee \neg ((B \vee C) \wedge D)$, and in the other (AND)
    $T$ was selected as $A\wedge \neg ((B \vee C) \wedge D)$.
    Either takes one more Toffoli, but an extra Toffoli is needed to ensure that qubit is set only for success of the preparation of the equal superposition states.
    That is 7 Toffolis so far.
    
    Now, in the OR case, we set the second qubit (selecting $U$ or $V$) if $\neg (A\vee \neg ((B \vee C) \wedge D))$ and there were success of the equal superposition preparations.
    That takes one more Toffoli.
    In the AND case we would set that second qubit according to $(B \vee C) \wedge D$ and success of the preparations, which is again one Toffoli.
    The third qubit can then be set by taking the AND of $\neg B$ and the result in the second qubit.
    That is one more Toffoli for a total of 9.
    
    For the preparations of equal superposition states, the four needed are the equal superposition over 3 basis states for $w$, one with $\eta+2\lambda_\zeta$ basis states, and two with $\eta$.
    For the superposition over three basis states required for $w$, it is highly efficient to take $b_r=8$, so we use that rather than allowing arbitrary $b_r$ there.
    As a result, the block encoded state is multiplied by the probability
    \begin{equation}
        P_{\rm eq} := \eqprep{3,8}\eqprep{\eta+2\lambda_\zeta,b_r}\eqprep{\eta,b_r}^2 .
    \end{equation}
    This corresponds to giving an effective $\lambda$ divided by $P_{\rm eq}$.
    
	\subsection{Block-encoding the kinetic operator}
	\label{sec:block_encode_t}

	Recall that the usual form of the kinetic operator is given by
	\begin{equation}
	T=\sum_{j=1}^{\eta}\sum_{p\in G}\frac{\norm{k_p}^2}{2}\ket{p}\!\bra{p}_j.
	\end{equation}
	We have rewritten this expression to express the square explicitly as a sum of products of bits, as
	\begin{equation}
	T=\frac {2\pi^2}{\Omega^{2/3}} \sum_{j=1}^{\eta}\sum_{p\in G}\sum_{w\in\{x,y,z\}} \sum_{r=0}^{n_p-2}\sum_{s=0}^{n_p-2} 2^{r+s} p_{w,r} p_{w,s} \ket{p}\!\bra{p}_j,
	\end{equation}
	where $p_{w,r}$ is indicating bit $r$ of the $w$ component of $p$.
	To block-encode this, we need to prepare the superposition state proportional to
	\begin{equation}
\ket{+}\sum_{j=1}^{\eta}\ket{j} \sum_{w=0}^2 \ket{w} \sum_{r=0}^{n_p-2} 2^{r/2} \ket{r} \sum_{s=0}^{n_p-2} 2^{s/2} \ket{s}.
	\end{equation}
	The preparation of the equal superposition over $j$ was addressed above.
	For the equal superposition over $w$, we take $b_r=8$ and $n=2$, so the formula $3n+2b_r-9$ for the complexity gives a cost of only $13$ Toffolis.
	
	The superposition over $r$ and $s$ can be prepared with high chance of success by using the following approach.
	First, we aim to prepare the following state in unary
	\begin{equation}
	    2^{-n_p/2}\ket{n_p-1} + \sum_{r=0}^{n_p-2} 2^{-(r+1)/2} \ket{r}.
	\end{equation}
	Half the amplitude is on $\ket{0}$, then of the remaining states half the amplitude is on $\ket{1}$, and so forth.
	This means that to prepare the state in unary, we perform a Hadamard on the first qubit (labelled 0), then controlled on that we perform a Hadamard on the second qubit,
	and so forth.
	There are in total $n_p-2$ controlled Hadamards, and each controlled Hadamard can be performed with a single Toffoli and a catalytically used $\ket{T}$ state as in Figure 17 of \cite{Lee2020}.
	The qubit for $n_p-1$ is being used to flag success of the state preparation (it flags success with $\ket{0}$).
	Omitting the state $\ket{n_p-1}$, the state flagged for success is (up to normalization)
	\begin{equation}
	    \sum_{r=0}^{n_p-2} 2^{-(r+1)/2} \ket{r}.
	\end{equation}	
	We then flip all these remaining bits so that the state becomes
	\begin{equation}
	    \sum_{r=0}^{n_p-2} 2^{-(r+1)/2} \ket{n_p-2-r} = 2^{-(n_p-1)/2} \sum_{r=0}^{n_p-2} 2^{r/2} \ket{r}.
	\end{equation}	
	We will want the unary to be one-hot, but that conversion may be performed with CNOTs so the complexity is omitted in the Toffoli count. Note that to account for the case $r=0$, one can add one qubit in the state $\ket{1}$, so $r=0$ is encoded as $\ket{10000\ldots}$, $r=1$ is encoded as $\ket{11000\ldots}$, and so forth.
	Then, the sequence of CNOTs gives the desired one-hot unary. We therefore get a complexity of $n_p-2$ for this state preparation.
	Together with the $13$ Toffolis used to prepare the superposition over $w$, the overall complexity for the preparation over the $w$, $r$, and $s$ registers is
	\begin{equation}\label{eq:wrsprep}
	    2n_p+9.
	\end{equation}
	The effect of the preparation having a small probability of failure is that the effective value of $\lambda$ is increased to
	\begin{equation}\label{eq:lamT}
	    \lambda'_T = \frac{6\eta\pi^2}{\Omega^{2/3}} 2^{2(n_p-1)},
	\end{equation}
	where we have replaced $2^{n_p-1}-1$ with $2^{n_p-1}$.
	
	In order to perform the $\SEL$ operation, we need to use the value of $p$ in momentum register $j$.
	Because we need to perform arithmetic on the momentum in register $j$ for $U$ and $V$ as well, the most efficient procedure is to simply control a swap of the momentum in register $j$ into an ancilla register.
	Similarly, the momentum in register $i$ will be swapped into an ancilla register.
	The operations we perform on the momenta in these ancilla registers will then be controlled by the first two qubits selecting between $T,U,V$.
	That is more efficient than performing the operations in a controlled way on every momentum register.
	The complexity of controlling the swaps of the momenta out of and back into momentum registers $i$ and $j$ is $12\eta n_p$.
	For the control, we perform unary iteration on $i$ and $j$.
	That requires $\eta-2$ Toffolis for each of the four times (swapping the register out and back for $i$, then again for $j$), for a total of $4\eta-8$.
	Together with $12\eta n_p$ that gives a total of
	\begin{equation}\label{eq:swapscost}
	    12\eta n_p +4\eta-8.
	\end{equation}
	We will consider that cost separately from the costs for implementing $\SEL$ specifically for $T,U,V$.
	In the case of $T$, the other operations needed to perform $\SEL$ are as follows.
	\begin{enumerate}
	\item Use $w$ to control copying of component $w$ of $p$ into another ancilla.
	This has cost $3(n_p-1)$.
	\item Copy bit $r$ of component $w$ of $p$ to an ancilla qubit, with cost $n_p-1$.
	\item Copy bit $s$ of component $w$ of $p$ to an ancilla qubit, with cost $n_p-1$.
	\item Controlled on those two bits, perform a phase flip on the $\ket{+}$ state unless both bits are one.
	This gives an orthogonal $\ket{-}$ state unless both bits are one.
	The orthogonal state gives no contribution to the block-encoding.
	Only one Toffoli is needed here.
	\item Erase the two qubits containing $p_{w,r}$ and $p_{w,s}$, then the register containing component $w$ of $p$.
	These erasures can be done with measurements and Clifford gates.
	\end{enumerate}
	The overall Toffoli complexity is only about as much as is needed to run through the qubits representing $p$.
	It is $5(n_p-1)+1$.
	There is also some complexity due to the fact that we also have a control qubit selecting $T$, but control on one extra qubit needs only one more Toffoli.
	That gives a total $\SEL$ cost for $T$
    \begin{equation}\label{eq:tsel}
        5(n_p-1)+2,
    \end{equation}
	in addition to the $\SEL$ cost $12\eta n_p$ that is common between $T,U,V$.

	\subsection{State preparation for \texorpdfstring{$U$}{U} and \texorpdfstring{$V$}{V}}
	\label{sec:momentum}
	The $U$ and $V$ potential operators both have a $1/\|k_\nu\|^2$ weighting, which means that in order to do the state preparation for the block encoding we need to prepare a state proportional to
	\begin{equation}
	\sum_{\nu\in G_0}\frac{1}{\norm{\nu}}\ket{\nu},
	\end{equation}
	where $\nu\in G_0=\big[-(N^{1/3}-1),N^{1/3}-1\big]^3\backslash\{(0,0,0)\}$ is a three-component integer vector.
	In our implementation, we use eight registers $\ket{\mu}\ket{\nu_x}\ket{\nu_y}\ket{\nu_z}\ket{m}\ket{0}$ to hold this state.
	These subsystems are used as follows.
	\begin{itemize}
	    \item $\ket{\mu}$ flags which box was used in the series of nested boxes for the state preparation.
	    \item $\ket{\nu_x}\ket{\nu_y}\ket{\nu_z}$ are the components of $\nu$ given as signed integers.
	    \item $\ket{m}$ is an ancilla in an equal superposition used to give the correct amplitude via an inequality test.
	    \item $\ket{0}$ flags that the state preparation was successful.
	\end{itemize}
	There will be four parts of the preparation where there is some amplitude for failure.
	First, there is the preparation of $\mu$.
	Second, there is the negative zero in the signed integer representation of $\nu$, which is not allowed.
	Third, there is testing whether the position is inside the inner box in the state preparation.
	Fourth, there is the success of the inequality test used in the state preparation.
	Success for all four is indicated by $\ket{0}$ on the flag qubit above,
	which can then be used in amplitude amplification.
	
	The state that we will prepare has the form
	\begin{equation}
	\frac{1}{\sqrt{\mathcal{M}2^{n_p+2}}}\sum_{\mu=2}^{n_p+1}\sum_{\nu\in B_\mu}\sum_{m=0}^{\ceil{\mathcal{M}(2^{\mu-2}/\norm{\nu})^2}-1}\frac{1}{2^\mu}\ket{\mu}\ket{\nu_x}\ket{\nu_y}\ket{\nu_z}\ket{m}\ket{0}
	+\ket{\Psi^\bot},
	\end{equation}
	where $B_\mu$ is a subset of $\nu$ to be defined later and $\big(I\otimes\bra{111}\big)\ket{\Psi^\bot}=0$. Conditioned on the qubit flagging success with $\ket{0}$, the amplitude for each $\nu$ will then be
	\begin{equation}
	\sqrt{\frac{\ceil{\mathcal{M}(2^{\mu-2}/\norm{\nu})^2}}{\mathcal{M}2^{2\mu}2^{n_p+2}}}\approx
	\frac{1}{8\sqrt{2^{n_p}}}\frac{1}{\norm{\nu}}.
	\end{equation}
	The probability of the flag qubit in state $\ket{0}$ is around $1/4$, and a single step of amplitude amplification boosts the success probability to close to $1$ \cite[Fig.\ 1]{BabbushContinuum_b}.
	Note that the probability is slightly smaller here than in \cite{BabbushContinuum_b}, because we use a preparation on $\mu$ with nonzero amplitude for failure.
	
	The approach is to use a hierarchy of nested boxes in $\big[{-}(N^{1/3}-1),N^{1/3}-1\big]$ indexed by $\mu$ and to prepare a set of $\nu$ belonging to each box. We initially prepare a unary-encoded superposition state
	\begin{equation}\label{eq:nestedcube}
	\frac{1}{\sqrt{2^{n_p+2}}}\sum_{\mu=2}^{n_p+1}\sqrt{2^\mu}\ket{\mu}
	=\frac{1}{\sqrt{2^{n_p+2}}}\sum_{\mu=2}^{n_p+1}\sqrt{2^\mu}\ket{0\cdots 0\underbrace{1\cdots 1}_{\mu}}.
	\end{equation}
	This state can be created using Hadamard and controlled Hadamard gates in a similar way as we performed the state preparation for $r$ and $s$ in order to implement the squaring of $p$.
	Here we have omitted the failure part of that preparation, so this state is subnormalized, with the norm indicating the probability of success.
	There is just one more bit than in the case where we were preparing $r$ and $s$, so the cost is $n_p-1$ Toffolis.

	In the next stage, we prepare
	\begin{equation}
	\frac{1}{\sqrt{2^{n_p+2}}}\sum_{\mu=2}^{n_p+1}\sum_{\nu_x,\nu_y,\nu_z=-(2^{\mu-1}-1),-0}^{2^{\mu-1}-1}\frac{1}{2^{\mu}}\ket{\mu}\ket{\nu_x}\ket{\nu_y}\ket{\nu_z},
	\end{equation}
	where the basis state is explicitly represented as
	\begin{equation}
	\ket{0\cdots 0\underbrace{1\cdots 1}_{\mu}}\ket{\nu_{x,\text{sign}}0\cdots 0\nu_{x,\mu-2}\cdots\nu_{x,0}}\ket{\nu_{y,\text{sign}}0\cdots 0\nu_{y,\mu-2}\cdots\nu_{y,0}}\ket{\nu_{z,\text{sign}}0\cdots 0\nu_{z,\mu-2}\cdots\nu_{z,0}}.
	\end{equation}
	Note that in this signed integer representation, the number zero appears twice, with both a plus sign and a minus sign.
	The sum with ``$,-0$'' indicates that the negative zero is included at this point.
	This state preparation can be implemented using six Hadamards and $3(n_p-1)$ controlled Hadamards. Specifically, we create the uniform superposition state in registers $\ket{\nu_{x,\text{sign}},\nu_{x,0}}\ket{\nu_{y,\text{sign}},\nu_{y,0}}\ket{\nu_{z,\text{sign}},\nu_{z,0}}$ using six Hadamards.
	We always have an equal superposition on $\nu_{x,0}$, $\nu_{y,0}$ and $\nu_{z,0}$ because $\mu$ has a minimum value of $2$.
	Then we generate uniform superpositions in $\ket{\nu_{x,j}}$, $\ket{\nu_{y,j}}$, and $\ket{\nu_{z,j}}$ controlled on the ($j+1$)th qubit in $\ket{\mu}$ from the right for $j=2,\ldots,n_p$.
	As before, the controlled Hadamards can be implemented with a Toffoli and a catalytic state, so the Toffoli cost is $3(n_p-1)$.
	
	Now, to remove the minus zero in this representation, we flag it as a failure. This can be done by checking each $\ket{\nu_x}$, $\ket{\nu_y}$, and $\ket{\nu_z}$ whether the sign bit is one and the remaining bits are zero, which requires controlled $\Toffoli$s each having $n_p+1$ controls. We then use another two $\Toffoli$ gates to flag if any of these checkings returned true. Altogether, the Toffoli cost is
	\begin{equation}
	3n_p+2.
	\end{equation}
	
	In the next stage, we test whether all of $\nu_x$, $\nu_y$, and $\nu_z$ are smaller in absolute value than $2^{\mu-2}$.
	First, we convert the $\mu$ register to one-hot unary using $\CNOT$s.
	Then, conditioned on the $\mu$th qubit being one, we flag $\ket{f_{\text{box}}}$ depending on whether the ($\mu-1$)th qubit $\ket{\nu_{x,\mu-2}}$, $\ket{\nu_{y,\mu-2}}$, and $\ket{\nu_{z,\mu-2}}$ are all zero, using a controlled $\Toffoli$ gate with four controls, which has a cost of three Toffolis. 
	Because this must be done $n_p$ times, this gives a cost of
	\begin{equation}
	3n_p .
	\end{equation}
	The costs so far are then $4(n_p-1)$ controlled Hadamards and $6n_p+2$ Toffolis.
	Although the complexity of the controlled Hadamard is one Toffoli, we will need to pay this cost again when inverting the preparation, but the Toffolis can be erased with Clifford gates.
	The state excluding the failures then becomes
	\begin{equation}
	\frac{1}{\sqrt{2^{n_p+2}}}\sum_{\mu=2}^{n_p+1}\sum_{\nu\in B_\mu}\frac{1}{2^{\mu}}\ket{\mu}\ket{\nu_x}\ket{\nu_y}\ket{\nu_z},
	\end{equation}
	where $B_\mu$ is the set of $\nu$ such that $|\nu_x|$, $|\nu_y|$, $|\nu_z|$ are less than $2^{\mu-1}$, excluding the case that they are all less than $2^{\mu-2}$.
	
	We now prepare an ancilla state in an equal superposition of $\ket{m}$ for $m=0$ to $\mathcal{M}-1$. The parameter $\mathcal{M}$ is a power of two and is chosen to be sufficiently large to obtain an accurate block-encoding. This state can be prepared entirely using Hadamards, so no non-Clifford gates are required. The resulting state excluding the failures is now
	\begin{equation}
	\frac{1}{\sqrt{\mathcal{M}2^{n_p+2}}}\sum_{\mu=2}^{n_p+1}\sum_{\nu\in B_\mu}\sum_{m=0}^{\mathcal{M}-1}\frac{1}{2^{\mu}}\ket{\mu}\ket{\nu_x}\ket{\nu_y}\ket{\nu_z}\ket{m}.
	\end{equation}
	
	In the final stage, we test the inequality
	\begin{equation}
	\big(2^{\mu-2}\big)^2\mathcal{M}>m\big(\nu_x^2+\nu_y^2+\nu_z^2\big).
	\end{equation}
	We then need to use the four flag qubits for success to produce an overall success flag.
	The final state will then be
	\begin{equation}\label{eq:invprep}
	\frac{1}{\sqrt{\mathcal{M}2^{n_p+2}}}\sum_{\mu=2}^{n_\mu}\sum_{\nu\in B_\mu}\sum_{m=0}^{\ceil{\mathcal{M}(2^{\mu-2}/\norm{\nu})^2}-1}\frac{1}{2^\mu}\ket{\mu}\ket{\nu_x}\ket{\nu_y}\ket{\nu_z}\ket{m}\ket{0}
	+\ket{\Psi^\bot}
	\end{equation}
	as desired. To achieve this, we compute $\big(2^{\mu-2}\big)^2\mathcal{M}$ and $m\big(\nu_x^2+\nu_y^2+\nu_z^2\big)$ separately in two ancilla registers and compare the results.
	The unsigned number $m\big(\nu_x^2+\nu_y^2+\nu_z^2\big)$ is bounded by $3\mathcal{M}2^{2n_\mu-2}$, so we need at most $2n_\mu+n_{\mathcal{M}}$ qubits to represent it,
	where we use
	\begin{equation}
	    n_{\mathcal{M}} = \lceil\log(\mathcal{M})\rceil.
	\end{equation}
	The value $\big(2^{\mu-2}\big)^2\mathcal{M}$ is upper bounded by $2^{2n_\mu-4}\mathcal{M}$, so we could represent it with $2n_\mu-4+n_{\mathcal{M}}$ qubits.
	
	Because $\mathcal{M}$ is a classically chosen value, and $\mu$ is given in unary, $\big(2^{\mu-2}\big)^2\mathcal{M}$ can be computed in place with \emph{no} gates, and simply a relabelling of qubits.
	We can place zeroed qubits in between the qubits of the (one-hot) unary representation of $\mu$, to give $2\mu$.
	For example, $\mu=1$ is changed from $000001$ to $0\underline{0}0\underline{0}0\underline{0}0\underline{0}0\underline{0}1\underline{0}$, $\mu=2$ is changed from $000010$ to $0\underline{0}0\underline{0}0\underline{0}0\underline{0}1\underline{0}0\underline{0}$, and so forth, where the underlined zeros are those added. Then this unary representation of $2\mu$ is also a binary representation of $2^{2\mu+1}$.
	Lastly, to multiply by $2^{-5}\mathcal{M}$, we simply pad with $\log \mathcal{M} -5$ zeros.
	In practice these zeroed ancillae do not even need to be included, and we simply relabel the qubits for $\mu$.
		
	To compute $m\big(\nu_x^2+\nu_y^2+\nu_z^2\big)$, we first use the result \lem{triple} given in \app{squaring} to cost the sum of three squares, which is $3n^2-n-1$ for $n$-bit numbers.
	We can regard $\nu_x$, $\nu_y$, and $\nu_z$ as unsigned numbers of $n_p$ bits, so squaring them costs $3n_p^2-n_p-1$ $\Toffoli$s.
	Finally, we multiply two binary numbers of length $\log(\mathcal{M})$ and $2n_p+2$ using a similar approach (see \lem{products}) with
	\begin{equation}
	    2n_{\mathcal{M}}\big(2n_p+2\big)-n_{\mathcal{M}}
	\end{equation}
	$\Toffoli$ gates. 
	To test the inequality $\big(2^{\mu-2}\big)^2\mathcal{M}>m\big(\nu_x^2+\nu_y^2+\nu_z^2\big)$, we use a ($2n_p+n_{\mathcal{M}}+2$)-bit comparator, which has cost \cite{SLSB19}
	\begin{equation}
	2n_p+n_{\mathcal{M}}+2.
	\end{equation}
	The net cost of the squaring, multiplication and inequality test is then
	\begin{equation}
	3n_p^2+n_p+1+4n_{\mathcal{M}}\big(n_p+1\big)
	\end{equation}
	$\Toffoli$ gates.
	There are a further 3 Toffolis needed to produce the overall success flag qubit.
	
	Now for the overall block encoding, the state preparation needs to be inverted.
	Rather than paying this cost again, we can keep the ancillae used to perform the arithmetic and inequality test.
	As a result, the inequality test and arithmetic can be inverted with Clifford gates.
	The 3 Toffolis used for the overall success flag qubit can be inverted with Clifford gates too.
	Similarly, the $6n_p+2$ Toffolis used for the testing on $\nu$ can be inverted with zero non-Clifford cost.
	The only operations in this state preparation that have a non-Clifford cost in the inverse preparation are the $4(n_p-1)$ controlled Hadamards used for the preparation of the $\mu$ register and superpositions on the $\nu$ register.
	As a result, the total Toffoli cost for the preparation and inverse preparation is
	\begin{equation}\label{eq:nuprepcost}
	    3n_p^2+15n_p-7+4n_{\mathcal{M}}\big(n_p+1\big).
	\end{equation}
	In the case where amplitude amplification is performed, the cost will be three times this, due to the need to invert the inequality test and calculation, then perform it again.
	
	The other key parts of the state preparation for $U$ and $V$ are to prepare the superpositions over $i$, $j$ and $\ell$.
	The preparation over $i$ and $j$ are just simple preparations of equal superposition states which we have addressed already.
	The final preparation we need to consider here is therefore the preparation over $\ell$ with weightings $\sqrt{\zeta_\ell}$.
	In the preparation over the second qubit selecting between $U$ and $V$, we have already prepared a superposition, which will be an equal superposition over $2\lambda_\zeta$ values for the case of $U$.
	That is the case where we need to prepare this superposition state, so we can just prepare it in this case, rather than preparing it for all cases as described above.
	
	For the preparation, we can just use a QROM on this register, outputting the value of $\ell$ in an output register.
	When there are $\zeta_\ell$ values in the input register that give the same value of $\ell$ in the output register, the result is obtaining an amplitude exactly proportional to $\sqrt{\zeta_\ell}$.
	Moreover, we will need to compute $k_\nu \cdot R_\ell$ in performing the $\SEL$ operation.
	The value of $R_\ell$ can be output in another ancilla register at the same time.
	The complexity of the QROM, using the simple form of QROM, is just $\lambda_\zeta -1$.
	The QROM should be controlled on the qubit which $U+V$, so $R_\ell$ are not output there.
	The QROM automatically does not output $R_\ell$ for $V$, because this input register is outside the range for $V$.

    Although we have aimed to prepare a state with amplitudes $\sqrt{\zeta_\ell}$, note that the only way that $\ell$ is used is in outputting $R_\ell$.
    That means, if we are using QROM to output both $\ell$ and $R_\ell$ at once, then the $\ell$ register is not actually used.
    That means we can simply omit the $\ell$ register, and output $R_\ell$.
    That saves ancilla qubits, though does not change the Toffoli count.
	
	In inverting this preparation, erasing the output of the QROM has a Toffoli complexity of
	$2^k+\left\lceil 2^{-k} \lambda_\zeta \right\rceil$
	where $k$ is taken to be a power of 2.
	The resulting complexity is approximately the square root of the initial QROM.
	For convenience of describing the complexities, we will define
	\begin{equation}\label{eq:erse}
	    \erase{x} := \min_{k\in\mathbb{N}}\left( 2^k+\left\lceil 2^{-k} x \right\rceil \right),
	\end{equation}
	so this cost is $\erase{\lambda_\zeta}$.
	In terms of this function, the cost of outputting the values of $R_\ell$ via QROM and erasing them again can be given as
	\begin{equation}\label{eq:Rlqrcost}
	    \lambda_\zeta + \erase{\lambda_\zeta}.
	\end{equation}
	
	\subsection{The \texorpdfstring{$\SEL$}{SEL} operations for \texorpdfstring{$U$}{U} and \texorpdfstring{$V$}{V}}
	\label{sec:SelectUV}
	We have described the block encoding for $T$, and the state preparation for $U$ and $V$, and now give the method to perform the $\SEL$ operations for $U$ and $V$.
	The general principle is as follows.
	\begin{enumerate}
	    \item Use the $i$ and $j$ registers to control a swap of the momentum registers $p$ and $q$ into ancillae.
	    \item Controlled on the qubit that is $\ket{0}$ only for $V$, add $\nu$ (in place) into the ancilla for $p$.
	    \item Controlled on the qubit that is $\ket{0}$ for $U$ or $V$, subtract $\nu$ into the ancilla for $q$.
	    \item Check if either $p+\nu$ or $p-\nu$ is outside the box $G$, and if so perform a phase gate on the $\ket{+}$ state.
	    \item Control swap the ancillae back into the momentum registers.
	    \item Apply the phase $e^{-\mi k_\nu\cdot R_\ell}$ in the case of $U$ only.
	\end{enumerate}
	The cost of the controlled swaps was discussed above, and is $12\eta n_p$.
	The key cost to discuss now is the addition and subtraction of $\nu$ to momenta.
	The difficulty is that the momentum and $\nu$ are encoded as signed integers, whereas the usual methods for addition/subtraction are for integers encoded in two's complement.
	
    In order to achieve this, just one of the numbers needs to be converted into two's complement.
    For example, since $p+\nu$ is computed in-place, we can convert $p$ to two's complement, then use the sign bit of $\nu$ to control adding or subtracting $|\nu|$ into $p$.
    This control of addition versus subtraction can be done with no more complexity than regular addition and subtraction.
    This is done for each component, so
	the overall complexity of converting between signed and two's complement is approximately $12$ times $n_p$ for the conversions.
	The addition and subtraction also need to be controlled by the qubits selecting between $T,U,V$, so have complexity twice uncontrolled additions and subtractions.
	That gives an overall complexity about
	\begin{equation}\label{eq:addcosts}
	    24n_p.
	\end{equation}
	
	More specifically, to convert an $n$ digit number from signed to two's complement, we take the sign qubit and use it as a control on CNOTs with the remaining qubits.
	Then we add it to the remaining $n-1$ bits with complexity $n-2$ Toffolis.
    We can just invert this process for converting in the opposite direction.
    Because $p$ and $q$ have $n_p$ bits in each of their three components, the complexity of converting them to two's complement is $6(n_p-2)$.
    
    For the cost of the controlled addition (or subtraction), there is cost $n_p+1$ to make each component controlled, which can be achieved by (for example) controlling copying out of $\nu$ to an ancilla.
    (That can be erased without Toffolis by measurements and phase fixups.)
    Then the addition has cost $n_p+1$, giving a total cost of $2(n_p+1)$ for the controlled addition/subtraction of each component.
    The total cost is $12(n_p+1)$ for these controlled additions and subtractions.
    
    Lastly we need to convert $p+\nu$ and $q-\nu$ back to signed form.
    Adding/subtracting $\nu$ expands these numbers by 2 bits, so the cost is now $6n_p$.
    The total cost is then
    \begin{equation}
        6(n_p-2)+12(n_p+1)+6n_p=24n_p
    \end{equation}
    Toffolis.
    Thus we get exactly the same cost as approximately estimated above.
    
    Another major part of the $\SEL$ operations is to check whether $p+\nu$ or $q-\nu$ is outside the bounding box $G$.
    Because the complexity only depends on $N$ through the number of bits used in the arithmetic $n_p$, it is best to take $N$ as large as possible given $n_p$, which means that the bounds of the box $\pm(N^{1/3}-1)/2$ are taken to be of the form $2^{n_p-1}-1$.
    That means that, if $p+\nu$ or $q-\nu$ is outside the bounding box, then at least one of the extra qubits introduced in the arithmetic is set to $\ket{1}$.
    These qubits are additional, so are not swapped from the ancilla back to the main momentum registers.
    
    If we perform the checking as presented above, then we use these qubits to control a $Z$ on the $\ket{+}$ state, in order to remove parts of the Hamiltonian where $p+\nu$ or $q-\nu$ are outside the bounding box, which are excluded.
    However, note that in the block encoding we would select on $\ket{0}$ for these ancillae, which means that these parts of the Hamiltonian are automatically eliminated.
    It is therefore unnecessary to perform this controlled phase in the $\SEL$ operation.
    
    The last thing we need to do for implementing $\SEL$ is to apply the phase factor $-e^{-\mi k_\nu\cdot R_\ell}$.
    To achieve this we need to compute $k_\nu\cdot R_\ell$ then add it into a phase gradient register to apply the phase.
    We do not need to make the addition into the phase gradient register controlled on the qubits that select $U$, because the register with $R_\ell$ is nonzero only for $U$.
    The minus sign in this phase does need to be performed in a controlled way, but that is just a controlled phase on two qubits which is a Clifford gate.
    
    In order to perform the phase shift, a subtlety is that it is given in terms of $k_\nu$ rather than $\nu$.
    However, the conversion factor between $k_\nu$ and $\nu$ can simply be absorbed into $R_\ell$, rather than explicitly multiplied.
    This multiplication is most conveniently performed with signed integers.
    The product of each pair of components will then have a sign that is the product of the signs of the individual components.
    That product of signs can be computed with a CNOT.
    However, if we were to then add the components to compute $k_\nu\cdot R_\ell$, we would need to deal with the problem of adding signed integers again.
    Instead, it is convenient to add the products of each pair of components into the phase gradient state separately.
    They can be added or subtracted into the phase gradient state controlled on the sign with no more complexity than a regular addition.

    The maximum value of a component of $k_\nu$ is $2\pi/\Omega^{1/3}$ times $N^{1/3}-1$, which is the maximum value of a component of $\nu$, and the maximum value of a component of $R_\ell$ is $\Omega^{1/3}/2$, since $\Omega^{1/3}$ is the width of the spatial region.
    Therefore, a component of $k_\nu$ times $R_\ell$ can be rewritten as a component of $\nu$ times $2\pi R_\ell/\Omega^{1/3}$, which has a maximum value of $\pi$.
    There is one initial CNOT used to find the product of the signs of the component of $\nu$ and $R_\ell$.
    That will be used to control addition or subtraction of numbers into the phase gradient state.
    For the multiplication of the least significant bit of $\nu$ times $2\pi R_\ell/\Omega^{1/3}$, we have cost $n_R-1$ to compute the bit products, because we now exclude the sign bit.
    The most significant bit of that product being added into the phase gradient state corresponds to a phase shift of $\pi/2$, and the least significant bit corresponds to a phase shift of $2\pi/2^{n_R}$.
    As a result, the cost of adding these bits into the phase gradient state is $n_R-2$.
    For the next most significant bit of $\nu$, the most significant bit of the product would give phases of $\pi$, so we still need to compute the $n_R-1$ bit products, but since all phases are multiplied by 2, the cost of adding into the phase gradient state is reduced to $n_R-3$.
    After that, there is cost $n_R-2$ for the bit products, and $n_R-4$ for addition into the phase gradient state.
    For the $k$'th bit of $\nu$, we have cost $n_R-k+1$ for the bit products and $n_R-k-1$ for the addition for a total of $2(n_R-k)$.
    There is a subtlety in that if $n_R\le n_p$ then $n_R-k-1$ can be negative.
    Ignoring that case for the moment, because we would typically expect $n_R> n_p$, the Toffoli cost is
    \begin{equation}
        -1+\sum_{k=1}^{n_p} 2(n_R-k) = 2n_p n_R - n_p(n_p+1) -1 .
    \end{equation}
    In the case where $n_R=n_p$, then we get one more Toffoli, because we simply have $n_R-k$ instead of $n_R-k-1$ for the maximum value of $k$.
    Then the formula for the complexity can be given as $2n_p n_R - n_p(n_p+1)=n_R(n_R-1)$.
    In the case where $n_R<n_p$, then at $k=n_R+1$ we have both costs equal to zero, and so the sum can be taken to $k=n_R$, to give
    \begin{equation}
        \sum_{k=1}^{n_R} 2(n_R-k) = n_R(n_R-1) .
    \end{equation}
    Here we have removed the $-1$ at the beginning because the expression in the sum overcounts the Toffolis by 1 at $k=n_R$.
    This cost is multiplied by 3, because it is done once for each component.
    Therefore the Toffoli cost may be given as
    \begin{equation}\label{eq:Rphacost1}
        \begin{cases}
            3[2n_p n_R - n_p(n_p+1) -1], & n_R> n_p\\
            3n_R(n_R-1), & n_R\le n_p.
        \end{cases}
    \end{equation}
    For convenience of the final expression, we can give the Toffoli cost as just $6n_p n_R$, but the cost can be computed more accurately with the expression \eq{Rphacost1}.
    The maximum number of ancilla qubits used is $n_R-1$ for the bit products, plus $n_R-3$ for the addition into the phase gradient register, for a total of $2n_R-4$.
    
    The total Toffoli costs for the block encoding of the Hamiltonian are now as in \tab{qubcosts}.
    Here $b_r$ is a number of bits used for ancilla qubit rotations in preparations of equal superposition states, and can be taken to be $7$ for a probability of success no less than $0.999$.
    
\begin{table}
\begin{tabular}{| m{11cm} | m{5cm} |}
\hline
\centering Procedure &  \hspace{16mm} Toffoli cost \\
 \hline \hline
Preparing the superposition on the registers selecting between $T,U,V$, where $n_{\eta\zeta}$ is the number of bits needed to represent $\eta+\lambda_\zeta$; see \eq{TUVprepcost}. & $2(n_T+4n_{\eta\zeta}+2b_r-12)$ \\
\gline
Preparing equal superpositions over $\eta$ values of $i$ and $j$ in unary; see \eq{ijprepcost}. &
$14n_\eta+8b_r-36$ \\
\gline
The state preparation cost for the $w,r,s$ registers used for $T$, including the factor of 2 for inversion; see \eq{wrsprep}. & $2(2n_p+9)$ \\
\gline
Controlled swaps of the $p$ and $q$ registers into and out of ancillae (which is used for $T$, $U$, and $V$); see \eq{swapscost}. & $12\eta n_p+4\eta-8$ \\
\gline
The $\SEL$ cost for $T$; see \eq{tsel}. & $5(n_p-1)+2$ \\
\gline
The preparation of the state with amplitude $1/{\norm{\nu}}$ and inversion; see \eq{nuprepcost}. & $3n_p^2+15n_p-7+4n_{\mathcal{M}}\big(n_p+1\big)$\\
\gline
The QROM for $R_\ell$, combined with the state preparation with amplitudes $\sqrt{\zeta_\ell}$ (which may be implicit); see \eq{Rlqrcost}. & $\lambda_\zeta+\erase{\lambda_\zeta}$\\
\gline
Additions and subtractions of $\nu$ into the momentum registers; see \eq{addcosts}. & $24n_p$ \\
\gline
Phasing by $-e^{-\mi k_\nu\cdot R_\ell}$; see \eq{Rphacost1}. & $6n_p n_R$.\\
\hline
\end{tabular}
\caption{The costs involved in the block encoding of the Hamiltonian for qubitization.}
\label{tab:qubcosts}
\end{table}
    
    The complete step needed for the qubitization, to give the step that one would perform phase estimation on, needs a reflection on the control ancilla.
    The cost of this reflection corresponds to the numbers of qubits used in the state preparation.
    The total qubit cost, and the qubits needed to be reflected upon, are given in \app{qubcost}.
    There are a total of
    \begin{equation}\label{eq:refcost1}
        n_{\eta\zeta} + 2n_\eta + 6n_p + n_{\mathcal{M}} + 16
    \end{equation}
    qubits to reflect on.
    The corresponding Toffoli cost is two less.
    However, this reflection needs to be controlled on the qubits used as control for the phase measurement.
    Also, there is a single Toffoli cost for iterating those control qubits.
    We bundle those extra two Toffolis together with this cost, giving the cost as in \eq{refcost1}.

	\subsection{Total cost of constructing the qubitization operator}
	\label{sec:erroranly}
    This is a complete accounting for the cost of a single step of the block encoding, except that we need to estimate the error due to the finite values of $\cal{M}$ and $n_R$ in order to determine the appropriate values to use for these quantities.
    The finite value of $n_R$ corresponds to replacing $R_\ell$ with an $n_R$-digit approximation $\widetilde{R}_\ell$, so that the potential term $U$ that the algorithm actually implements is close to
	\begin{equation}
	U=-\frac{4\pi}{\Omega}\sum_{\ell=1}^{L}\sum_{j=1}^{\eta}\sum_{\substack{p,q\in G\\ p\neq q}}\bigg(\zeta_\ell\frac{e^{ik_{q-p}\cdot \tilde{R}_\ell}}{\norm{k_{p-q}}^2}\bigg)\ket{p}\bra{q}_j . \label{eq:Uapprox}
	\end{equation}
	By choosing $n_R$ sufficiently large, this error can be at most an appropriate fraction of the overall error budget $\epsilon$.
	To bound the error due to this source and choose the value of $n_R$, we use the following Lemma.
    
    	\begin{lemma}\label{lem:epsR}
		Given $\|\widetilde{R}_\ell - R_\ell\| \le \delta_R$ for all $\ell$, if $\lambda_j$ is the $j^{\rm th}$ eigenvalue of the true Hamiltonian and \eqref{eq:Uapprox} is used to approximate the nuclear potential, then there exists an eigenvalue of the approximated Hamiltonian $\widetilde{\lambda}_j$ such that
		\begin{equation}
		|\lambda_j - \widetilde{\lambda_j}| \le \frac{2\delta_R\eta\lambda_\zeta}{\Omega^{2/3}}\sum_{\nu\in G_0}\frac{1}{\norm{\nu}} .
		\end{equation}
	\end{lemma}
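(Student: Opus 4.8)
The plan is to combine standard Hermitian eigenvalue perturbation theory with a triangle-inequality estimate on the operator norm of the error introduced in the nuclear potential. Since the only modification is replacing $R_\ell$ by $\widetilde{R}_\ell$ inside $U$, the perturbation to the full Born--Oppenheimer Hamiltonian is $\Delta := \widetilde{H}-H = \widetilde{U}-U$, where $\widetilde{U}$ is the operator of \eqref{eq:Uapprox}. Both $U$ and $\widetilde{U}$ are Hermitian (one checks $U_{pq}=U_{qp}^*$ directly from the closed form in \eqref{eq:plane_wave_integrals}, and the same identity holds after substituting $\widetilde{R}_\ell$), so $\Delta$ is Hermitian. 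Weyl's perturbation inequality for Hermitian operators then implies that, ordering the spectra of $H$ and $\widetilde{H}$ consistently, for every $j$ there is an eigenvalue $\widetilde{\lambda}_j$ of $\widetilde{H}$ with $|\lambda_j-\widetilde{\lambda}_j|\le\|\Delta\|=\|\widetilde{U}-U\|$. (The same applies on the antisymmetric subspace, since $\Delta$ is a sum of identical single-particle operators and hence commutes with particle permutations; the norm on a subspace only decreases.) It therefore remains to bound $\|\widetilde{U}-U\|$.

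First I would re-index the difference from \eqref{eq:Uapprox} by the momentum transfer $\nu=q-p$, which ranges over $G_0$ because $p\neq q$ with $p,q\in G$. This expresses $U-\widetilde{U}$ as a linear combination, over $\ell$, $j\in\{1,\dots,\eta\}$, and $\nu\in G_0$, of the single-particle operators $\sum_{q}\ket{q-\nu}\!\bra{q}_j$ (sum over $q\in G$ with $q-\nu\in G$), each of which is a partial isometry and hence has operator norm at most $1$, with coefficients $-\tfrac{4\pi}{\Omega}\zeta_\ell\big(e^{\mi k_\nu\cdot R_\ell}-e^{\mi k_\nu\cdot\widetilde{R}_\ell}\big)/\norm{k_\nu}^2$. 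Applying the triangle inequality gives $\|\widetilde{U}-U\|\le\tfrac{4\pi}{\Omega}\sum_{\ell}\sum_{j=1}^{\eta}\sum_{\nu\in G_0}\zeta_\ell\,|e^{\mi k_\nu\cdot R_\ell}-e^{\mi k_\nu\cdot\widetilde{R}_\ell}|/\norm{k_\nu}^2$.

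Next I would bound each phase difference by $|e^{\mi k_\nu\cdot R_\ell}-e^{\mi k_\nu\cdot\widetilde{R}_\ell}|=|1-e^{\mi k_\nu\cdot(\widetilde{R}_\ell-R_\ell)}|\le|k_\nu\cdot(\widetilde{R}_\ell-R_\ell)|\le\norm{k_\nu}\,\delta_R$, using $|1-e^{\mi\theta}|\le|\theta|$ and Cauchy--Schwarz together with the hypothesis $\norm{\widetilde{R}_\ell-R_\ell}\le\delta_R$. Substituting this cancels one power of $\norm{k_\nu}$, and using $\norm{k_\nu}=2\pi\norm{\nu}/\Omega^{1/3}$ from \eqref{eq:G} together with $\lambda_\zeta=\sum_\ell\zeta_\ell$ yields $\|\widetilde{U}-U\|\le\tfrac{4\pi\delta_R}{\Omega}\,\eta\,\lambda_\zeta\sum_{\nu\in G_0}\tfrac{\Omega^{1/3}}{2\pi\norm{\nu}}=\tfrac{2\delta_R\eta\lambda_\zeta}{\Omega^{2/3}}\sum_{\nu\in G_0}\tfrac{1}{\norm{\nu}}$, which is exactly the claimed bound. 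Combining with the Weyl estimate completes the proof.

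The argument is essentially routine; the only points needing a little care are (i) confirming that the perturbation is genuinely Hermitian, so Weyl's inequality delivers the desired matching-eigenvalue conclusion rather than merely a spectral statement, and (ii) correctly identifying the shift operators $\sum_{q}\ket{q-\nu}\!\bra{q}_j$ as norm-at-most-one partial isometries and checking that the auxiliary $b\in\{0,1\}$ sum appearing in the LCU form \eqref{eq:opdef} of $U$ does not inflate the norm --- it does not, since summing over $b$ merely projects onto the $(q-\nu)\in G$ terms already present in \eqref{eq:Uapprox}.
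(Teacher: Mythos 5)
Your proof is correct and follows essentially the same strategy as the paper: bound $\|\widetilde{U}-U\|$ via the triangle inequality on the LCU form of $U$, the estimate $|1-e^{\mi\theta}|\le|\theta|$, and Cauchy--Schwarz, then invoke a Hermitian/normal eigenvalue-perturbation bound. The only (immaterial) difference is that the paper cites the normal-matrix perturbation result of Horn and Johnson (Corollary 6.3.4) where you invoke Weyl's inequality, which gives a slightly stronger ordered-spectrum matching but is certainly sufficient.
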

	\begin{proof}
		First, note from~\cite[Corollary 6.3.4]{horn2012matrix} that for any normal matrices $M$ and $M'$ that $\|M-M'\|\le \epsilon$ implies that for every eigenvalue $\lambda_j$ of $M$ that there exists an eigenvalue of $M'$ within distance $\epsilon$. Thus it suffices to bound the spectral norm of the difference between the true potential operator and the approximated potential.  Our claim then follows from the following bound on the difference between the two Hamiltonians.
		\begin{align}
		    \norm{U -\widetilde{U}}&=\norm{\sum_{\nu\in G_0}\sum_{\ell=1}^{L}\frac{4\pi\zeta_\ell}{\Omega\norm{k_\nu}^2}\sum_{j=1}^{\eta}\big(e^{-\mi k_\nu\cdot \tilde{R}_\ell}-e^{-\mi k_\nu\cdot R_\ell}\big)\bigg(\sum_{p\in G}\ket{p-\nu}\!\bra{p}_j\bigg)}\\
        	&\leq\eta\sum_{\nu\in G_0}\sum_{\ell=1}^{L}\frac{4\pi\zeta_\ell}{\Omega\norm{k_\nu}^2}\big|e^{-\mi k_\nu\cdot \tilde{R}_\ell}-e^{-\mi k_\nu\cdot R_\ell}\big|
        	\leq\eta\sum_{\nu\in G_0}\sum_{\ell=1}^{L}\frac{4\pi\zeta_\ell}{\Omega\norm{k_\nu}^2}\big|k_\nu\cdot (\tilde{R}_\ell-R_\ell)\big|\\
        	&\leq\eta\sum_{\nu\in G_0}\sum_{\ell=1}^{L}\frac{4\pi\zeta_\ell}{\Omega\norm{k_\nu}^2}\norm{k_\nu}\norm{\tilde{R}_\ell-R_\ell}
        	\leq\frac{2\delta_R\eta\lambda_\zeta}{\Omega^{2/3}}\sum_{\nu\in G_0}\frac{1}{\norm{\nu}}.
		\end{align}
		Note that $\sum_{\nu\in G_0}\frac{1}{\norm{\nu}}$ can be further upper bounded by a function of $N$. However, we use the current bound as it is tighter and more versatile for numerical implementation.
	\end{proof}

    Next we consider the error due to the finite value of $\cal{M}$.
    As can be seen from \eq{invprep}, the amplitude for each $\nu$ is proportional to
	\begin{equation}
	\sqrt{\frac{\ceil{\mathcal{M}(2^{\mu-2}/\norm{\nu})^2}}{\mathcal{M}2^{2\mu}2^{n_p+2}}},
	\end{equation}
	where $\mu$ is chosen such that $\nu\in B_\mu$.
	From the definition of $B_\mu$, this means that $\mu$ is the minimum integer such that $|\nu_x|$, $|\nu_y|$ and $|\nu_z|$ are all less than $2^{\mu-1}$.
	That is,
	\begin{equation}
	    \mu = \lfloor \log\max(|\nu_x|, |\nu_y|,|\nu_z|) \rfloor +2.
	\end{equation}
	As a result, the success probability of this state preparation is given by
	\begin{equation}\label{eq:psuc}
	\psuc =\sum_{\mu=2}^{n_p+1} \sum_{\nu\in B_\mu}\frac{\ceil{\mathcal{M}(2^{\mu-2}/\norm{\nu})^2}}{\mathcal{M}2^{2\mu}2^{n_\mu+1}}.
	\end{equation}
	As discussed in \cite[Section II D]{BabbushContinuum_b}, this is a constant around $1/4$ provided the ceiling function is not used.
	In the case where the ceiling function is used, then the probability of success is only increased.
		
	To bound the error due to the ceiling function, we can write each weight that is prepared as $1/\norm{\nu'}^2$, for a modified $\nu'$.
	Now it would appear that all coefficients are larger, because of the ceiling function.
	However, when accounting for the normalization of the state, that will be larger, so each coefficient is multiplied by a constant that is smaller than 1, resulting in approximate coefficients that may be larger or smaller than the exact coefficients.
	In fact, because we are choosing the amplitude on the potential terms, we have freedom to adjust it to minimise the error.
	Let us therefore call the constant factor $\alpha$.
	The squared amplitudes in the state we have prepared correspond to approximately
	\begin{equation}
	    \frac 1{2^{n_p+6}} \frac 1{\norm{\nu}^2}.
	\end{equation}
	Therefore, let us put the values of $\nu'$ as
	\begin{equation}
	    \frac 1{2^{n_p+6}}\frac 1{\norm{\nu'}^2} = \alpha \frac{\ceil{\mathcal{M}2^{\mu-2}/\norm{\nu})^2}}{\mathcal{M}2^{2\mu}2^{n_p+2}},
	\end{equation}
	so
	\begin{equation}
	    \frac 1{\norm{\nu'}^2} = 16\alpha \frac{\ceil{\mathcal{M}2^{2\mu}/(16\norm{\nu}^2)}}{\mathcal{M}2^{2\mu}}.
	\end{equation}
	Using this approach the error in the eigenvalues due to the approximation of the state preparation may be bounded as follows.
    	\begin{lemma}\label{lem:epsM}
    	Given that the state preparation for $\nu$ is performed as described above with a given $\mathcal{M}$,
		if $\lambda_j$ is the $j^{\rm th}$ eigenvalue of the true Hamiltonian then there exists an eigenvalue of the approximated Hamiltonian $\widetilde{\lambda}_j$ such that
		\begin{align*}
		|\lambda_j - \widetilde{\lambda}_j| &\le \frac {2\eta}{\pi\mathcal{M}\Omega^{1/3}}\left( \eta-1 + 2\lambda_\zeta\right) \left( 7 \times 2^{n_p+1} -9n_p -11 -3\times 2^{-n_p} \right).\\
		&=\frac{2^{n_p}(28\eta)\left( \eta-1 + 2\lambda_\zeta\right) }{\pi\mathcal{M}\Omega^{1/3}}\left(1+ \mathcal{O}(2^{-n_p}) \right)
		\end{align*}
	\end{lemma}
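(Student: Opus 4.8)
The plan is to mirror the structure of \lem{epsR}. First I would invoke the perturbation estimate of \cite[Corollary 6.3.4]{horn2012matrix}: for normal matrices $M,M'$, $\|M-M'\|\le\epsilon$ forces every eigenvalue of $M$ to lie within $\epsilon$ of some eigenvalue of $M'$. Taking $M=H=T+U+V$ and $M'=\widetilde H=T+\widetilde U+\widetilde V$ the Hamiltonian that the algorithm actually block-encodes once the $\ket{\nu}$ register is prepared with a finite $\mathcal{M}$, it suffices to bound $\|H-\widetilde H\|$. Since the finite-$\mathcal{M}$ truncation only perturbs the common $1/\|k_\nu\|^2$ weighting of $U$ and $V$ (the kinetic block encoding is exact), $\|H-\widetilde H\|\le\|U-\widetilde U\|+\|V-\widetilde V\|$, and I would reduce each summand to a single scalar sum over $\nu\in G_0$.

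Next I would pin down the effective weights. Reading off \eq{invprep}, conditioned on the success flag the squared amplitude on a given $\nu$ is $\lceil\mathcal{M}(2^{\mu-2}/\|\nu\|)^2\rceil/(\mathcal{M}2^{2\mu}2^{n_p+2})$ with $\mu$ the unique index such that $\nu\in B_\mu$, i.e.\ $\mu=\lfloor\log\max(|\nu_x|,|\nu_y|,|\nu_z|)\rfloor+2$. Because the block encoding renormalizes by the total one-norm and we are free to rescale the amplitude carried by the potential terms, I would introduce the single constant $\alpha$ discussed just above the lemma and define modified frequencies $\nu'$ by $1/\|\nu'\|^2=16\alpha\,\lceil\mathcal{M}2^{2\mu}/(16\|\nu\|^2)\rceil/(\mathcal{M}2^{2\mu})$, so that $\widetilde U,\widetilde V$ are exactly $U,V$ with $\|k_\nu\|^2$ replaced by $\|k_{\nu'}\|^2$. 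Expanding $\widetilde U,\widetilde V$ as LCUs as in \eq{opdef2}, applying the triangle inequality over the $\ell,j$ (resp.\ $i\neq j$) and $b$ sums, using that each momentum-shift operator $\sum_p\ket{p-\nu}\!\bra{p}$ has unit norm, and substituting $\|k_\nu\|=(2\pi/\Omega^{1/3})\|\nu\|$, one obtains
\[
\|U-\widetilde U\|\le\frac{\eta\lambda_\zeta}{\pi\Omega^{1/3}}\sum_{\nu\in G_0}\left|\frac{1}{\|\nu\|^2}-\frac{1}{\|\nu'\|^2}\right|,\qquad
\|V-\widetilde V\|\le\frac{\eta(\eta-1)}{2\pi\Omega^{1/3}}\sum_{\nu\in G_0}\left|\frac{1}{\|\nu\|^2}-\frac{1}{\|\nu'\|^2}\right|.
\]

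The remaining work is the scalar sum. For $\nu\in B_\mu$ one has $2^{\mu-2}\le\max(|\nu_x|,|\nu_y|,|\nu_z|)\le\|\nu\|$, so $\mathcal{M}2^{2\mu}/(16\|\nu\|^2)\ge1$; with $\alpha=1$ the ceiling raises this by less than $1$, which bounds the per-$\nu$ error by $16/(\mathcal{M}2^{2\mu})$ (optimizing $\alpha$ shaves the constant without changing the leading term). I would then group the sum by $\mu$, use $|B_\mu|=(2^\mu-1)^3-(2^{\mu-1}-1)^3$, and evaluate $\tfrac{16}{\mathcal{M}}\sum_{\mu=2}^{n_p+1}\big[(2^\mu-1)^3-(2^{\mu-1}-1)^3\big]/2^{2\mu}$ in closed form by expanding the cubes and summing the resulting geometric and arithmetic series, which yields $\tfrac{4}{\mathcal{M}}\big(7\cdot2^{n_p+1}-9n_p-11-\mathcal{O}(2^{-n_p})\big)$. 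Multiplying by $\tfrac{\eta\lambda_\zeta}{\pi\Omega^{1/3}}+\tfrac{\eta(\eta-1)}{2\pi\Omega^{1/3}}=\tfrac{\eta(\eta-1+2\lambda_\zeta)}{2\pi\Omega^{1/3}}$ and feeding the result into the perturbation bound produces the stated inequality; collecting the lower-order terms into $\mathcal{O}(2^{-n_p})$ gives the second displayed form.

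I expect the main obstacle to be the third step: handling the renormalization constant $\alpha$ carefully (once rescaled the prepared weights need not sit on one side of the true weights, and $\alpha$ also interacts with the total prepared one-norm that sets the effective $\lambda$), together with the bookkeeping of the $|B_\mu|$ counts and the exact evaluation of the $\mu$-sum so that the subleading $-9n_p-11-\mathcal{O}(2^{-n_p})$ terms come out with the right constants. By contrast, the norm reductions for $U$ and $V$ are routine triangle-inequality arguments essentially identical to those in \lem{epsR}.
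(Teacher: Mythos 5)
Your proposal follows the same route as the paper's proof: bound the eigenvalue shift by the operator-norm difference, reduce $\|U-\widetilde U\|$ and $\|V-\widetilde V\|$ by the triangle inequality to a scalar sum of $|1/\|\nu\|^2 - 1/\|\nu'\|^2|$ over $\nu\in G_0$, take $\alpha=1$ so the per-$\nu$ error is at most $16/(\mathcal{M}2^{2\mu})$ from the ceiling function, group by $\mu$ using $|B_\mu|=(2^\mu-1)^3-(2^{\mu-1}-1)^3$, and evaluate the resulting geometric/arithmetic sums. The coefficients $\frac{\eta\lambda_\zeta}{\pi\Omega^{1/3}}$ and $\frac{\eta(\eta-1)}{2\pi\Omega^{1/3}}$ and their sum $\frac{\eta(\eta-1+2\lambda_\zeta)}{2\pi\Omega^{1/3}}$ also match the paper. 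One small inaccuracy: you assert $\mathcal{M}2^{2\mu}/(16\|\nu\|^2)\ge 1$ from $\|\nu\|\ge 2^{\mu-2}$, but that lower bound on $\|\nu\|$ gives an \emph{upper} bound on the ratio (you would need an upper bound on $\|\nu\|^2$, i.e.\ $\|\nu\|^2<3\cdot 2^{2\mu-2}$, and even then only $\mathcal{M}2^{2\mu}/(16\|\nu\|^2)>\mathcal{M}/12$); fortunately this claim is never used, since the ceiling's excess is bounded by one regardless, so the final bound is unaffected.
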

	
	\begin{proof}
	Again we use the fact that the norm of the difference in the operators can be used to upper bound the difference in the eigenvalues.
	The error in the approximation of $U$ can be bounded as
	\begin{align}\label{eq:Udif}
		    \norm{U -\widetilde{U}}&\le \norm{\sum_{\nu\in G_0}\sum_{\ell=1}^{L}\frac{\zeta_\ell}{\pi\Omega^{1/3}}\left( \frac 1{\norm{\nu}^2} - \frac 1{\norm{\nu'}^2} \right)\sum_{j=1}^{\eta}e^{-\mi k_\nu\cdot R_\ell}\bigg(\sum_{p\in G} \ket{p-\nu}\bra{p}_j\bigg)} \nn
		    &\le \sum_{\nu\in G_0}\frac{\eta\lambda_\zeta}{\pi\Omega^{1/3}}\left| \frac 1{\norm{\nu'}^2} - \frac 1{\norm{\nu}^2} \right| \nn
		    &= \frac {\eta\lambda_\zeta}{\pi\Omega^{1/3}}\sum_{\mu=2}^{n_p+1}\sum_{\nu\in B_\mu}\left| \frac 1{\norm{\nu'}^2} - \frac 1{\norm{\nu}^2} \right|.
	\end{align}
		Similarly, the error in the approximation of $V$ can be bounded as
    \begin{align}
    \norm{V -\widetilde{V}}&\le
        \norm{\sum_{\nu\in G_0}\frac{1}{2\pi\Omega^{1/3}}\left(\frac 1{\norm{\nu}^2} - \frac 1{\norm{\nu'}^2}\right)\sum_{\substack{i,j=1\\i\neq j}}^{\eta}\left(\sum_{p,q\in G}\ket{p+\nu}\bra{p}_i\cdot\ket{q-\nu}\bra{q}_j\right)}\nn
        &\le \sum_{\nu\in G_0}\frac{\eta(\eta-1)}{2\pi\Omega^{1/3}}\left| \frac 1{\norm{\nu}^2} - \frac 1{\norm{\nu'}^2} \right| \nn
        &= \frac {\eta(\eta-1)}{2\pi\Omega^{1/3}}\sum_{\mu=2}^{n_p+1}\sum_{\nu\in B_\mu}\left| \frac 1{\norm{\nu'}^2} - \frac 1{\norm{\nu}^2} \right|.
    \end{align}
    Thus the total error in the potential component of the Hamiltonian can be upper bounded as
    \begin{equation}
        \norm{U+V -(\widetilde{U}+\widetilde{V})} \le \frac {\eta}{2\pi\Omega^{1/3}}\left( \eta-1 + 2\lambda_\zeta\right) \sum_{\mu=2}^{n_\mu}\sum_{\nu\in B_\mu}\left| \frac 1{\norm{\nu'}^2} - \frac 1{\norm{\nu}^2} \right|.
    \end{equation}
    For given parameters, it is possible to bound the error by numerically performing this sum.
    To give an analytic result, we take $\alpha=1$ and place an upper bound on the sum.
    We obtain
    \begin{align}
        \sum_{\mu=2}^{n_\mu}\sum_{\nu\in B_\mu}\left| \frac 1{\norm{\nu'}^2} - \frac 1{\norm{\nu}^2} \right| &\le \sum_{\mu=2}^{n_p+1}\sum_{\nu\in B_\mu}\frac{16}{\mathcal{M}2^{2\mu}} \nn
        &= \sum_{\mu=2}^{n_p+1} \left[ (2^\mu-1)^3 - (2^{\mu-1}-1)^3 \right]\frac{16}{\mathcal{M}2^{2\mu}} \nn
        &= \frac 4{\mathcal{M}}\left( 7 \times 2^{n_p+1} -9n_p -11 -3\times 2^{-n_p} \right).
    \end{align}
    Thus the total error in the approximation of the Hamiltonian due to finite $\mathcal{M}$ can be upper bounded by
    \begin{equation}
        \norm{U+V -(\widetilde{U}+\widetilde{V})} \le \frac {2\eta}{\pi\mathcal{M}\Omega^{1/3}}\left( \eta-1 + 2\sum_{\ell=1}^L \zeta_\ell\right) \left( 7 \times 2^{n_p+1} -9n_p -11 -3\times 2^{-n_p} \right).
    \end{equation}
    This expression then provides the upper bound on the difference in eigenvalues as required.
    \end{proof}
    
    As a result, if we are allowing an error due to the state preparation of $\epsilon_\mathcal{M}$, then the value of $\mathcal{M}$ that should be chosen is then about
	\begin{equation}\label{eq:Mval}
	    \mathcal{M} \approx  \frac{28\eta 2^{n_p}}{\epsilon_\mathcal{M}\pi\Omega^{1/3}} \left( \eta + 2 \lambda_\zeta \right).
	\end{equation}
	Note that $\mathcal{M}$ is required to be a power of 2, so $2^{n_{\mathcal{M}}}$.
	In numerical testing, we find that computing the actual sum gives a result about $50\%$ of the upper bound.
	This is what would be expected, because the ceiling will add $1/2$ on average, which is $50\%$ of the upper bound of 1.
	If $\alpha\ne 1$ is used, then the error can be reduced further to about $30\%$ of the upper bound, using $\alpha$ in the range $1-3/2\mathcal{M}$ to $1-1/\mathcal{M}$.
	
	The errors due to approximation of $\nu$ and approximation of $R_\ell$ can be upper bounded by the sum of the two sources of error.
	Calling the $U$ with both approximations $\overline{U}$, and that with just $R_\ell$ approximated $\widetilde{U}$, we have $\| \overline{U} - U \|\le \| \overline{U} - \widetilde{U} \|+\| \widetilde{U} - U \|$.
	Here $\| \widetilde{U} - U \|$ is upper bounded as in \lem{epsR}.
	For $\| \overline{U} - \widetilde{U} \|$ exactly the same reasoning as in \eq{Udif} holds, except with $R_\ell$ replaced with $\widetilde{R}_\ell$.
	This means that the same error bound holds and so the bounds on the error due to the two sources can be added.
	
	The other main source of error is the error in the phase estimation.
	With $m$ steps of phase estimation, the root-mean-square error in the estimated energy due to the phase estimation is $\pi\lambda/(2m)$.
	It is possible to use $m$ that is not a power of 2 simply by using unary iteration on the control register for the phase estimation.
	The effective value of $\lambda$ will vary dependent on the relative sizes of the $T$, $U$, and $V$ operators.
	
	As discussed above, if $\psuc\lambda_T = (1-\psuc)(\lambda_U+\lambda_V)$, then in the case of the failure of the inequality test for the preparation of $\nu$, we apply the $\SEL$ operation for $T$, and the net value of $\lambda$ is $\lambda_T+\lambda_U+\lambda_V$.
	If $\psuc\lambda_T > (1-\psuc)(\lambda_U+\lambda_V)$, we apply $\SEL$ for $T$ in the case of failure of the inequality test, \emph{or} $\ket{0}$ on an ancilla qubit, and again $\lambda=\lambda_T+\lambda_U+\lambda_V$.
	In the case where $\psuc\lambda_T < (1-\psuc)(\lambda_U+\lambda_V)$, we apply $\SEL$ for $T$ in the case of failure of the inequality test, \emph{and} $\ket{0}$ on an ancilla qubit.
	In that case, the value of $\lambda$ is $(\lambda_U+\lambda_V)/\psuc$.
	To account for the three cases, we have
	\begin{equation}
	    \lambda = \max\left[ \lambda_T+ \lambda_U+\lambda_V, (\lambda_U+\lambda_V)/\psuc \right].
	\end{equation}
	We can alternatively perform amplitude amplification in the state preparation for $\ket\nu$, which triples that cost, but boosts the probability of success to
	\begin{equation}\label{eq:psucamp}
	    \psuc^{\rm amp} = \sin^2 \left( 3\arcsin\left( \sqrt{\psuc} \right) \right),
	\end{equation}
	which is close to 1.
	As a result the value of $\lambda$ can be given as
	\begin{equation}
	    \lambda = \max\left[ \lambda_T+ \lambda_U+\lambda_V, (\lambda_U+\lambda_V)/\psuc^{\rm amp} \right].
	\end{equation}
	
	A further subtlety is that we are testing $i\ne j$ in preparing the equal superposition over $i$ and $j$ for $V$.
	In that case, if we apply $T$ in the case where this test fails, then the effective $\lambda$ becomes
	\begin{equation}
	    \lambda = \max\left[ \lambda_T+ \lambda_U+\lambda_V, [\lambda_U+\lambda_V/(1-1/\eta)]/\psuc \right].
	\end{equation}
    The reason for this expression is that the preparation over all values of $i$ and $j$ gives an effective $\lambda_V$ which is proportional to $\eta^2$ rather than $\eta(\eta-1)$.
    This means the effective $\lambda_V$ is $\lambda_V\eta^2/[\eta(\eta-1)]=\lambda_V/(1-1/\eta)$.
	Then, in the case of amplitude amplification on $\ket\nu$, the effective $\lambda$ would be
	\begin{equation}
	    \lambda = \max\left[ \lambda_T+ \lambda_U+\lambda_V, [\lambda_U+\lambda_V/(1-1/\eta)]/\psuc^{\rm amp} \right].
	\end{equation}
	See \app{selTUV} for a more in-depth derivation of this expression.
	
	Another adjustment to the value of $\lambda$ is that needed to account for the probabilities of failure in the preparation of the equal superposition states for $w$, $i$ and $j$.
	These equal superpositions are required for both parts of the Hamiltonian (kinetic and potential), and so the value of $\lambda$ is divided by $P_{\rm eq}=\eqprep{3,8}\eqprep{\eta+2\lambda_\zeta,b_r}\eqprep{\eta,b_r}^2$.
	That gives the resulting values of $\lambda$ without amplitude amplification
	\begin{equation}\label{eq:lamnoaa}
	    \lambda = \max\left[ \lambda_T+ \lambda_U+\lambda_V, [\lambda_U+\lambda_V/(1-1/\eta)]/\psuc \right]/P_{\rm eq} ,
	\end{equation}
	and with amplitude amplification
	\begin{equation}\label{eq:lamaa}
	    \lambda = \max\left[ \lambda_T+ \lambda_U+\lambda_V, [\lambda_U+\lambda_V/(1-1/\eta)]/\psuc^{\rm amp} \right]/P_{\rm eq}.
	\end{equation}
	Lastly we note that the $\lambda$-values are modified by our block encoding method.
	We use $\lambda'_T$ as given by \eq{lamT}, because of the state preparation used for $T$.
	Moreover, because we use an imperfect state preparation for $\nu$, the values of $\lambda_U,\lambda_V$ are modified.
	We define the modified $\lambda_\nu$ value
	\begin{equation}
	    \lambda_\nu^\alpha := \alpha \sum_{\nu\in G_0} \frac{\ceil{\mathcal{M}(2^{\mu-2}/\norm{\nu})^2}}{\mathcal{M}2^{2\mu-4}}.
	\end{equation}
	Then the $\lambda$-values for $U$ and $V$ are scaled in this way to give
	\begin{equation}
	    \lambda_U^\alpha := \lambda_\nu^\alpha\lambda_U / \lambda_\nu, \qquad \lambda_V^\alpha := \lambda_\nu^\alpha\lambda_V/\lambda_\nu.
	\end{equation}
	
	As a result of these approaches, we have the following theorem for the complexity.
	
	\begin{theorem}
	    \label{thm:qubitization}
	    The eigenvalue of the Hamiltonian given by \eq{first_quant_ham} can be estimated to within RMS error $\epsilon$ using a number of Toffoli gates
	    \begin{align}\label{eq:totqubcost}
	        &\left\lceil\frac{\pi\lambda}{2\epsilon_{\rm pha}}\right\rceil \big\{ 2(n_T+4n_{\eta\zeta}+2b_r-12)  
	        +14n_\eta+8b_r-36 
	        +\amam [3n_p^2+15 n_p -7 + 4n_{\mathcal{M}}(n_p+1)] 
	        \nn & \quad
	        +\lambda_\zeta + \erase{\lambda_\zeta} 
	        +2(2n_p+2b_r-7) 
	        +12\eta n_p 
	        +5(n_p-1)+2 
	        +24 n_p 
	        +6n_p n_R 
	        +18 
	        \nn
	        & \quad + n_{\eta\zeta}+2n_\eta+6n_p+n_{\mathcal{M}}+16 
	        + \mathcal{O}(\log(1/\epsilon))
	        \big\}
	    \end{align}
	    where ``$\amam$'' is equal to 3 if we perform amplitude amplification and 1 otherwise.
	    For this complexity we require that a phase gradient state and a $\ket{T}$ state are given as a resource.
	    The value of $\lambda$ is given by
	\begin{equation}
	    \lambda = \max\left[ \lambda'_T+ \lambda^1_U+\lambda^1_V, [\lambda^1_U+\lambda^1_V/(1-1/\eta)]/\psuc^{\rm amp} \right]/P_{\rm eq}
	\end{equation}
	if we perform amplitude amplification, and
	\begin{equation}
	    \lambda = \max\left[ \lambda'_T+ \lambda^1_U+\lambda^1_V, [\lambda^1_U+\lambda^1_V/(1-1/\eta)]/\psuc \right]/P_{\rm eq}
	\end{equation}
	otherwise, where
	\begin{align}
	\psuc &=\sum_{\mu=2}^{n_p+1} \sum_{\nu\in B_\mu}\frac{\ceil{\mathcal{M}(2^{\mu-2}/\norm{\nu})^2}}{\mathcal{M}2^{2\mu}2^{n_\mu+1}}, \\
	\psuc^{\rm amp} &= \sin^2 \left( 3\arcsin\left( \sqrt{\psuc} \right) \right), \\
	P_{\rm eq}&=\eqprep{3,8}\eqprep{\eta+2\lambda_\zeta,b_r}\eqprep{\eta,b_r}^2.
	\end{align}
	The value of $\epsilon_{\rm pha}$ is the allowable RMS error in phase estimation.
	We require that $\epsilon_{\rm pha}>0$, $n_R,n_{\mathcal{M}},n_T\in\mathbb{N}$ are chosen such that
	\begin{align}\label{eq:phaer}
	    \epsilon^2 &\ge \epsilon_{\rm pha}^2+(\epsilon_{\mathcal{M}}+\epsilon_R+\epsilon_T)^2  \\
	    \label{eq:Mer}
	    \epsilon_\mathcal{M} &= \frac{2\eta}{2^{n_\mathcal{M}}\pi\Omega^{1/3}} \left( \eta-1 + 2 \lambda_\zeta \right)\left( 7 \times 2^{n_p+1} -9n_p -11 -3\times 2^{-n_p} \right)  \\
	    \label{eq:Rer}
	    \epsilon_R &=\frac{\eta\lambda_\zeta}{2^{n_R}\Omega^{1/3}}\sum_{\nu\in G_0}\frac{1}{\norm{\nu}}  \\
	    \epsilon_T &=\frac{\pi\lambda}{2^{n_T}} .
	\end{align}
	\end{theorem}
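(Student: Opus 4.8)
The plan is to assemble the theorem from three ingredients already developed in this section: the per-step Toffoli cost of the block encoding collected in \tab{qubcosts}, the number of iterations of the qubitization walk needed for phase estimation, and an error budget that partitions the target RMS error $\epsilon$ among the distinct approximations made along the way.

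First I would fix the error budget. Four sources contribute: the finite $\mathcal{M}$ used in the $\nu$ preparation, which perturbs $U+V$ by the operator-norm bound of \lem{epsM}; the $n_R$-digit truncation $\widetilde{R}_\ell$ of the nuclear coordinates, which perturbs $U$ by the bound of \lem{epsR}; the $n_T$-bit rotation that sets the $T$-versus-$U+V$ amplitude, which injects an angular error $\Delta\theta\le\pi/2^{n_T}$ and hence shifts block-encoded eigenvalues by at most $\lambda\,\Delta\theta=\epsilon_T$ (cf.\ \eqref{eq:selTUV} and \app{selTUV}); and the phase estimation error itself. The first three are \emph{deterministic} perturbations of a normal (indeed Hermitian) operator, so by Corollary~6.3.4 of \cite{horn2012matrix}---exactly the ingredient used in the proofs of \lem{epsR} and \lem{epsM}---each moves every eigenvalue by no more than the stated norm bound, and as biases they add to $\epsilon_{\mathcal{M}}+\epsilon_R+\epsilon_T$; the phase estimation error is an independent random quantity, so it combines with this bias in quadrature, giving \eqref{eq:phaer}: $\epsilon^2\ge\epsilon_{\rm pha}^2+(\epsilon_{\mathcal{M}}+\epsilon_R+\epsilon_T)^2$, with $\epsilon_{\mathcal{M}},\epsilon_R$ read off from the two lemmas (fixing the arithmetic-register widths and taking $\alpha=1$) to obtain \eqref{eq:Mer}--\eqref{eq:Rer}.

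Next I would count iterations. The block-encoding lemma proved above shows $(\PREP^\dagger\otimes I)\SEL(\PREP\otimes I)$ is a $(\lambda,N_a,0)$-block encoding of $H_{\rm BO}$, so $Q=(2\ketbra{0}{0}-I)\,\PREP^\dagger\SEL\,\PREP$ qubitizes $H/\lambda$ and has eigenvalues $e^{\pm\mi\arccos(E_k/\lambda)}$; phase estimation on $Q$---or, since $\SEL$ is made self-inverse via the add/subtract ancilla flipped each step (as in \cite{Berry2018}), on the multiplexed $\ketbra{0}{0}\otimes Q^\dagger+\ketbra{1}{1}\otimes Q$, which halves the step count at the cost of one Toffoli per step---has RMS error $\pi\lambda/(2m)$ after $m$ steps, and classical post-processing with $\arccos$ returns $E_k$, so $m=\lceil\pi\lambda/(2\epsilon_{\rm pha})\rceil$ steps suffice. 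The subtle point is that $\lambda$ must be the \emph{effective} normalization; the bookkeeping of \sec{erroranly} identifies it as $\max[\lambda'_T+\lambda^1_U+\lambda^1_V,\ (\lambda^1_U+\lambda^1_V/(1-1/\eta))/p]/P_{\rm eq}$, with $\lambda'_T$ the inflation \eqref{eq:lamT} from the $r,s$ preparation, ceiling-inflated $\lambda^1_U,\lambda^1_V$, $1/P_{\rm eq}$ absorbing the failure probabilities of the $w,i,j$ equal-superposition preparations, $1/(1-1/\eta)$ from initially allowing $i=j$, and $p\in\{\psuc,\psuc^{\rm amp}\}$ according to whether amplitude amplification on $\ket{\nu}$ is performed; one must verify that every ``failure'' branch (the $\nu$-inequality test and the $i=j$ case) is legitimately repurposed to apply $\SEL_T$ so that no amplitude leaks outside the block, which is precisely what makes the single displayed $\lambda$ valid.

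Finally I would add the per-step costs. Summing the rows of \tab{qubcosts}---$T/U/V$-selection preparation and its inverse \eqref{eq:TUVprepcost}, the unary $i,j$ superpositions \eqref{eq:ijprepcost}, the $w,r,s$ preparation and its inverse \eqref{eq:wrsprep}, the controlled momentum swaps \eqref{eq:swapscost}, $\SEL_T$ \eqref{eq:tsel}, the $\nu$ preparation \eqref{eq:nuprepcost} (times $\amam$, since amplification triples it), the $R_\ell$ QROM plus erasure \eqref{eq:Rlqrcost}, the signed $\pm\nu$ additions \eqref{eq:addcosts}, and the $-e^{-\mi k_\nu\cdot R_\ell}$ phasing \eqref{eq:Rphacost1}---then adding the controlled reflection \eqref{eq:refcost1}, the constant $18$ Toffolis that form and uncompute the selection/success flags together with the per-step multiplexing Toffoli, and the $\mathcal{O}(\log(1/\epsilon))$ Toffolis of the phase-estimation control arithmetic, reproduces the brace of \eqref{eq:totqubcost}; multiplying by $m=\lceil\pi\lambda/(2\epsilon_{\rm pha})\rceil$ completes the bound. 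The hard part will not be any single sum but keeping $\lambda$ self-consistent---it enters both $m$ and $\epsilon_T=\pi\lambda/2^{n_T}$, depends on the numerically evaluated $\psuc$ and $P_{\rm eq}$ and on the amplification choice, and requires checking that each failure branch has been properly reused---after which the remaining work is routine addition.
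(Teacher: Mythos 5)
Your proposal is correct and takes essentially the same approach as the paper's proof: sum the per-step Toffoli costs from \tab{qubcosts}, add the reflection cost \eq{refcost1}, the 18 Toffolis for producing the selection flags, and the $\mathcal{O}(\log(1/\epsilon))$ phase-estimation control arithmetic; multiply by $\lceil\pi\lambda/(2\epsilon_{\rm pha})\rceil$; and budget $\epsilon$ by combining the deterministic perturbation bounds from \lem{epsM}, \lem{epsR}, and \eq{selTUV} additively, then in quadrature with $\epsilon_{\rm pha}$, with the effective $\lambda$ taken from \eq{lamaa}/\eq{lamnoaa} using $\lambda'_T,\lambda^1_U,\lambda^1_V$. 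Your explicit citations of the block-encoding lemma, the self-inverse $\SEL$ construction, and the Horn--Johnson eigenvalue-perturbation corollary, and your flagging of the self-consistency between $\lambda$ and $\epsilon_T$, make the argument somewhat more explicit but do not change the route.
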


    \begin{proof}
        The complexity given is the sum of the costs for the block encoding given in the list in \tab{qubcosts},
        plus the number of qubits that need to be reflected upon given in \eq{refcost1}.
        The term $18$ is to account for the Toffolis used to produce the qubits selecting between $T,U,V$, and the one flagging success of the state preparation.
        That takes 9 Toffolis both in the preparation and the inverse preparation.
        The term $\mathcal{O}(\log(1/\epsilon)$ is to account for complexity of preparing the initial state to use as control for the phase measurement, and the processing of this control at the end to obtain the phase estimate.
        
        The appropriate values of $\lambda$ in the cases with and without amplitude amplification were given above in \eq{lamaa} and \eq{lamnoaa}.
        In those expressions we replace $\lambda_T,\lambda_U,\lambda_V$ with $\lambda'_T,\lambda^1_U,\lambda^1_V$ to account for our block encoding, where we have taken $\alpha=1$ in order to use \lem{epsM}.
        The error due to the phase estimation is independent from all other sources of error, so we can upper bound the RMS error due to all sources as the sum of squares of the error due to the phase estimation and the error from the other sources.
        The bound $\epsilon_\mathcal{M}$ on the error due to finite $\mathcal{M}$ is obtained from \eq{Mval}, which is based on \lem{epsM}.
        The bound $\epsilon_R$ on the error due to the nuclear positions is obtained from \lem{epsR}.
        In that expression we have substituted $\delta_R=\Omega^{1/3}/2^{n_R+1}$, which comes from the cell length of a side being $\Omega^{1/3}$, and the maximum error in the position being half of $1/2^{n_R}$ times the length of the region.
        The bound $\epsilon_T$ on the error due to the rotation of the qubit selecting between $T$ and $U+V$ is given in \eq{selTUV}.
        
        The phase gradient state and $\ket{T}$ state are used catalytically throughout the procedure, with the phase gradient state being used to implement phase rotations and the $\ket{T}$ state used to implement controlled Hadamards.
        These states can be prepared at the beginning of the algorithm with negligible complexity compared to the rest of the algorithm.
    \end{proof}
    
    In order to choose parameters for the algorithm, we first take some maximum allowable values for $\epsilon_\mathcal{M}$, $\epsilon_R$ and $\epsilon_T$ as, for example, $\epsilon/10$.
    Then we substitute those values into \eq{Mer}, \eq{Rer}, and \eq{selTUV} to find minimum values of $n_\mathcal{M}$, $n_R$ and $n_T$.
    We can then use those values on the right-hand side (RHS) of \eq{Mer} and \eq{Rer} to compute updated (smaller) values of $\epsilon_\mathcal{M}$ and $\epsilon_R$.
    Those values can then be substituted into \eq{phaer} to solve for the maximum $\epsilon_{\rm pha}$ that satisfies the inequality.
    These values of the parameters can then be substituted into \eq{totqubcost} to determine the total Toffoli cost.   In order to optimise the parameters, instead of using just the values of $n_\mathcal{M}$, $n_R$ and $n_T$ as given by this method, we can compute the cost for an array of values of $n_\mathcal{M}$, $n_R$ and $n_T$, for example testing 4 below and 4 above the numbers of bits given by this approach. We can then select from this array the parameters that give the minimum Toffoli complexity.
    
    Another improvement is to compute the exact bound on the state preparation error, rather than just using the bound given in \lem{epsM}.
    That is, we use
    \begin{equation}
        \epsilon_{\mathcal{M}} = \frac {\eta}{2\pi\Omega^{1/3}}\left( \eta-1 + 2\lambda_\zeta\right) \sum_{\mu=2}^{n_\mu}\sum_{\nu\in B_\mu}\left| \frac 1{\norm{\nu'}^2} - \frac 1{\norm{\nu}^2} \right|,
    \end{equation}
    with
	\begin{equation}
	    \frac 1{\norm{\nu'}^2} = \frac{16\alpha}{\mathcal{M}2^{2\mu}} \left\lceil{\frac{\mathcal{M}2^{2\mu}}{16\norm{\nu}^2}}\right\rceil .
	\end{equation}
	Choosing a finite number of bits for $n_T$ corresponds to giving a value of $\alpha\ne 1$.
	Since the lowest $\epsilon_{\mathcal{M}}$ is obtained in the region $\alpha=1-3/2\mathcal{M}$ to $1-1/\mathcal{M}$, we can choose $n_T$ in order to give $\alpha$ in that range and recompute $\epsilon_{\mathcal{M}}$.
	The value of $n_T$ needed to obtain $\alpha$ in that range is numerically found to be typically around $n_\mathcal{M}$ plus 4 to 8.
	With this choice of $n_T$ and $\alpha$, $\epsilon_{\mathcal{M}}$ is about 30\% of the bound proven in \lem{epsM}, and there is no additional error due to finite $n_T$.


	\section{The interaction picture based algorithm}
	\label{sec:interaction_picture}

In the interaction picture approach, one has a time-independent Hamiltonian $H=A+B$, where the norm of $B$ is much smaller than that of $A$.
The time evolution under $A+B$ for time $\tau$ can then be given as (with $\mathcal{T}$ the time ordering operator)
\begin{align}
e^{-\mi(A+B)\tau} &= e^{-\mi\tau A} \mathcal{T} \exp \left( -\mi \int_0^\tau ds \, e^{\mi sA} B e^{-\mi sA}\right) \\
&= \sum_{k=0}^\infty \frac{(-\mi)^k}{k!} \int_{0}^\tau d\tau_1 \int_{0}^{\tau} d\tau_2 \cdots \int_{0}^\tau d\tau_k \,  e^{-\mi(\tau-\tau'_k)A} B e^{-\mi(\tau'_k-\tau'_{k-1})A} B \ldots B e^{-\mi(\tau'_2-\tau'_1)A} B e^{-\mi\tau_1 A} \nn
&= \lim_{\substack{K\rightarrow\infty\\ M\rightarrow \infty}}\sum_{k=0}^K \frac{(-\mi\tau)^k}{M^k k!} \sum_{m_1=0}^{M-1} \sum_{m_2=0}^{M-1} \cdots \sum_{m_k=0}^{M-1} e^{-\mi\tau(M-1/2-m'_k)A/M} B e^{-\mi\tau(m'_k-m'_{k-1})A/M} B \ldots \nn& \quad \ldots B e^{-\mi\tau(m'_2-m'_1)A/M} B e^{-\mi\tau(m'_1+1/2) A/M} \nonumber
\end{align}
where $\tau'_1,\ldots,\tau'_k$ are sorted times from $\tau_1,\ldots,\tau_k$, and $m'_1,\ldots,m'_k$ are sorted integers from $m_1,\ldots,m_k$.
The approximation comes from both truncating the sum at $K$ and discretising the integrals.
The discretization of the integral here is taken by using $\tau_j = (m_j+1/2)/M$, so the midpoints of the intervals are used.
This does not affect the final expression, except that there are shifts by 1/2 on the beginning and ending rotations.

Note that we do not immediately get a linear combination of unitaries from the above Dyson-series expansion, as the operator $B$ from the Hamiltonian is not unitary in general. Instead, we now describe the Hamiltonian input model for the interaction picture algorithm in which $B$ is specified. We assume that $B$ is given by a block encoding as
\begin{equation}
    \bra{0}\PREP_B^\dagger\cdot \SEL_B\cdot\PREP_B\ket{0}=\frac{B}{\lambda_B} \, .
\end{equation}
Here, $\PREP_B$ can be viewed as some state preparation procedure acting nontrivially only on an ancilla register initialized in state $\ket{0}$, and $\SEL_B$ is a selection unitary acting jointly on the ancilla register and the register of the target system. By preparing the ancilla state, performing the selection operation, and unpreparing the state, we can implement operator $B$ on the target system with normalization factor $\lambda_B\geq\norm{B}$. We describe a quantum circuit in \sec{sel_prep} that block-encodes $B$ for the first quantized quantum chemistry simulation.

For the operator $A$, we assume that the matrix exponential $e^{-\mi sA}$ can be efficiently implemented for an arbitrary real number $s$. This is the case for instance when $A$ is already diagonalized and contains pairwise commuting terms, each of which can be directly exponentiated. In particular, this holds for the kinetic operator $T$ when the first quantized Hamiltonian is represented in the plane wave basis. See \sec{kinetic_exp} for a detailed discussion of the circuit implementation.

Using the block encoding of $B$, we now rewrite the Dyson-series expansion as
\begin{align}
e^{-\mi(A+B)\tau} &\approx\sum_{k=0}^K \frac{(-\mi\lambda_B\tau)^k}{M^k k!}\bra{0}^{\otimes k}\PREP_B^{\dagger\otimes k} \sum_{m_1,...,m_k=0}^{M-1} \bigg(e^{-\mi\tau(M-1/2-m'_k)A/M} \SEL_B e^{-\mi\tau(m'_k-m'_{k-1})A/M} \SEL_B \ldots \nn& \quad \ldots \SEL_B e^{-\mi\tau(m'_2-m'_1)A/M} \SEL_B e^{-\mi\tau(m'_1+1/2) A/M}\bigg)\PREP_B^{\otimes k}\ket{0}^{\otimes k}\\
&=\left(\bra{0}\PREP_B^\dagger\right)^{\otimes K}\!\sum_{k=0}^K \frac{(-\mi\lambda_B\tau)^k}{M^k k!}\!\!\! \sum_{m_1,...,m_k=0}^{M-1} \! \bigg(e^{-\mi\tau(M-1/2-m'_k)A/M} \SEL_B e^{-\mi\tau(m'_k-m'_{k-1})A/M} \SEL_B \ldots \nn& \quad \ldots \SEL_B e^{-\mi\tau(m'_2-m'_1)A/M} \SEL_B e^{-\mi\tau(m'_1+1/2) A/M}\bigg)\left(\PREP_B\ket{0}\right)^{\otimes K}.\nonumber
\end{align}
Here for each fixed value of $k$, the $k$ selection operators $\SEL_B$ act on $k$ different ancilla registers and the same target register; the last equality follows from the fact that the remaining $K-k$ ancilla states $\PREP_B\ket{0}$ cancel with $\bra{0}\PREP_B^\dagger$ in pairs and so the number of their occurrences can be made independent of $k$. Note that the resulting expression is expressed as a linear combination of unitaries and can thus in principle be implemented using quantum circuits.

Although we have written this as performing all the preparations at the beginning and inverting them at the end, in practice we will perform the preparations and inverse preparations for each $\SEL_B$ before the next preparation.
That avoids needing to use all qubits for all preparations at once, which would have a much higher qubit cost.
A further simplification is that we can separate out $e^{-\mi\tau A/2M}$ at the beginning and the end.
Then $e^{-\mi\tau(M-1/2-m'_k)A/M}$ is replaced with $e^{-\mi\tau(M-1-m'_k)A/M}$.
The advantage here is that $M-1-m'_k$ can be computed simply by performing NOT gates on all qubits, with zero Toffoli cost.

To implement this linear combination of unitaries, we need to prepare quantum states encoding the coefficients from the combination. For readability, we describe a naive circuit implementation here, and leave a detailed discussion of the subtleties and improvements in subsequent subsections. We first prepare
\begin{equation}
    \frac{1}{\sqrt{\beta}}\sum_{k=0}^{K}\sqrt{\frac{(-\mi\lambda_B\tau)^k}{k!}}\ket{k},
\end{equation}
where $\beta$ is the normalization factor and $\ket{k}$ is encoded in unary as $\ket{1^k0^{K-k}}$. This state may be prepared using the naive approach with one rotation gate and $K$ controlled rotation gates. However, we propose an improved approach based on the inequality test which can significantly reduce the implementation cost; see \sec{dyscost} for details.

Next, we prepare the correct time indices used in the interaction picture simulation. One possible method is to prepare an equal superposition over $k$ time indices $m_1,...,m_k$ and quantumly sort them \cite{Kieferova18}, giving
\begin{equation}
    \frac{1}{\sqrt{\beta}}\sum_{k=0}^{K}\sqrt{\frac{(-\mi\lambda_B\tau)^k}{M^kk!}}\ket{k}\sum_{m_1,...,m_k=0}^{M-1}\ket{m_1',...,m_k'},
\end{equation}
where again $0\leq m_1'\leq...\leq m_k'\leq M-1$ are sorted integers from $m_1,..,m_k$. We then compute the time differences as
\begin{equation}
    \frac{1}{\sqrt{\beta}}\sum_{k=0}^{K}\sqrt{\frac{(-\mi\lambda_B\tau)^k}{M^kk!}}\ket{k}\sum_{m_1,...,m_k=0}^{M-1}\ket{m_1',m_2'-m_1',...,m_k'-m_{k-1}',M-1-m_k'}.
\end{equation}
These difference values are further used to control the matrix exponentials in the Dyson-series expansion. Specifically, we have the decomposition
\begin{equation}
    \sum_{m=0}^{M-1}\ket{m}\bra{m}\otimes e^{-\mi\tau mA/M}
    =\prod_{j=0}^{\log (M)}\left(\ket{0}\bra{0}_j\otimes I+\ket{1}\bra{1}_j\otimes e^{-\mi\tau 2^jA/M}\right),
\end{equation}
where each controlled exponential of $A$ can be directly implemented.
In practice, in the implementation it will convenient to apply the controlled exponential jointly controlled by all bits of $m$, rather than controlling on the bits of $m$ independently.
On the other hand, we apply the selection unitaries $\SEL_B$ controlled by the unary-encoded $k$ values. These operations, together with the preparation and unpreparation procedures, implement the Dyson-series expansion up to the normalization factor $\beta$. Beyond the sorting method, previous work also suggested alternative methods based on a ``compressed encoding'' \cite{Kieferova18} and ``inequality tests'' \cite{Low2018}. We compare these methods and justify our choice of implementation in \sec{choice}.

For $K$ sufficiently large, the normalization factor $\beta$ is approximately given by
\begin{equation}
    \beta=\sum_{k=0}^{K}\frac{(\lambda_B\tau)^k}{k!}
    \approx \exp(\lambda_B \tau).
\end{equation}
In the original interaction picture algorithm, we choose $\tau\approx \ln(2)/\lambda_B$ so that $\beta\approx 1/2$. Then, a single step of oblivious amplitude amplification boosts the success amplitude close to $1$. The ideal evolution can be simulated for time $t$ by repeating this $\lambda_B t/\ln(2)$ times. However, for performing phase estimation, we develop a new approach by applying qubitization on the pre-amplified interaction picture subroutine, which significantly reduces the implementation cost. We discuss our approach in detail in \sec{choice}.

\subsection{Implementation choices for the interaction picture algorithm}
\label{sec:choice}
There are two main choices for how to provide the ordered integers $m'_j$.
One is to generate a sequence of differences according to a negative exponential weighting.
Then one directly has the differences $m'_{j+1}-m'_j$, but it is necessary to add these differences together to determine which difference gives a total over $M$.
That is used to determine $k$, rather than generating a register for $k$ directly.
This method is called the ``compressed encoding'' in \cite{Kieferova18}.
The preparation of the state with the negative exponential weighting can be performed as described in \cite{BerryQIC14} with a sequence of controlled rotations.
That results in state preparation complexity corresponding to a factor of $\log(1/\epsilon)$ times the number of qubits in the time registers.

The other choice for preparing the state with the ordered integers $m'_j$ is to prepare a set of unordered integers, then sort them. The type of sort needed is a quantum version of a sorting network, as described in \cite{Qsort1,Qsort2}.
There are algorithms for generating sorting networks, but in practice for small numbers (as we have here) it is best to choose optimized sorting networks.
The number of comparator operations required for the best known sorting networks for up to 16 items are as shown in \tab{sorttab} \cite{SortingNetworks}.
Each comparator needed corresponds to an inequality test and a controlled swap.
The inequality test uses approximately a Toffoli for each qubit of the registers, and the controlled swap uses a Toffoli for each qubit of the registers.
The number of Toffolis therefore corresponds to twice the number of comparators multiplied by the quits used for each time register.
For the case of $n=16$, that is $7.5$ times the number of qubits in the time registers, as compared to $\log(1/\epsilon)$ for the compressed encoding approach.
This will be a smaller multiplying factor, which is why we will use the sorting approach instead of the compressed encoding.

\begin{table}[tbh]
    \begin{tabular}{|c|c|c|c|c|c|c|c|c|c|c|c|c|c|c|c|}\hline
        $K$ & 2 & 3 & 4 & 5 & 6 & 7 & 8 & 9 & 10 & 11 & 12 & 13 & 14 & 15 & 16 \\ \hline
        comparators & 1 & 3 & 5 & 9 & 12 & 16 & 19 & 25 & 29 & 35 & 39 & 45 & 51 & 56 & 60 \\
         \hline
    \end{tabular}
    \caption{The number of comparators needed for a quantum sort of $K$ items.
    Each comparator involves an inequality test and a controlled swap.
    The number of comparators for a given $K$ is denoted $\srt{K}$.}
    \label{tab:sorttab}
\end{table}

For the implementation of $e^{-\mi(A+B)\tau}$ as a linear combination of unitaries, the general method is to prepare a superposition state with weights that are the square roots of the amplitudes in the linear combination of unitaries, then apply controlled unitary operations.
Oblivious amplitude amplification is used in order to implement the step deterministically, and in order to only need a single step of oblivious amplitude amplification it is necessary to choose the parameters so the amplitude of success is $1/2$.

In the case where the sorting is performed, the sum of the amplitudes needed for the approximation of $e^{-\mi(A+B)\tau}$ as a linear combination of unitaries is
\begin{equation}
    \sum_{k=0}^K \frac{\tau^k}{k!} \lambda_B^k \le \exp(\tau\lambda_B).
\end{equation}
An inequality is written here due to the truncation at finite $K$, but the two sides of this equation will be very close (with error corresponding to error in the evolution), so it is sufficient to assume that they are equal for the purpose of costing the algorithm.
The amplitude for success in the linear combination of unitaries is the inverse of this value, so in order to obtain amplitude $1/2$, we need $\exp(\tau\lambda_B)\approx 2$, which corresponds to choosing $\tau \approx \ln 2/\lambda_B$. In order to perform an accurate phase estimation we need long time evolution, so greater efficiency is obtained by the time being broken into intervals $\tau$ that are as large as possible.

An alternative approach is given by \cite{Low2018}, where instead of performing a sort, inequality tests are performed in the time registers.
When performing the inequality test, the factor of $k!$ is removed in the denominator, because that corresponds to the number of permutations of times.
As a result, the sum of amplitudes in the method of \cite{Low2018} is
\begin{equation}
    \sum_{k=0}^K \tau^k \lambda_B^k \le \frac 1{1-\tau\lambda_B}.
\end{equation}
Again the two sides of the expression are approximately equal.
For the amplitude amplification, we need this amplitude to be $2$ (because the amplitude for success is the inverse of the sum of amplitudes).
That means that in the approach of \cite{Low2018} one needs $\tau\approx 1/(2\lambda_B)$.
This value of $\tau$ is about $72\%$ of the value of $\tau$ in the method of \cite{Kieferova18}, which means that the overall complexity is about $39\%$ larger.
The method of \cite{Low2018} saves the complexity of the sort, but that complexity is trivial as compared to other complexities in the algorithm.

So far we have been assuming that phase estimation is performed on the time evolution, which requires performing amplitude amplification.  That triples the cost, because even a single step of oblivious amplitude amplification requires a forward step, a reverse step, then a forward step again.
An alternative method can be obtained by simply performing phase estimation on the single step of the evolution, as per the discussion in \cite{Berry2018,Poulin2017}.
The difference here is that we are block encoding the evolution $e^{-\mi(A+B)\tau}$ rather than the Hamiltonian, and the evolution is not Hermitian so the analysis does not hold.
However, we can easily block encode $\sin[(A+B)\tau]=\left[e^{\mi(A+B)\tau}-e^{-\mi(A+B)\tau}\right]/(2\mi)$.
All we need to do is use a single qubit that chooses between the block encoding of $e^{\mi(A+B)\tau}$ and $e^{-\mi(A+B)\tau}$.
All that qubit needs to do is control the sign of the exponential in terms like $e^{-\mi\tau(m'_2-m'_1)A/M}$, which can be achieved with negligible complexity.

If the operation $\sin[(A+B)\tau]$ has eigenvalues $\mu_j=\sin(E_j\tau)$, where $E_j$ are eigenvalues of $A+B$, then as discussed in \cite{Berry2018} the qubitization procedure gives an operator with eigenvalues $\pm e^{ \pm\mi\arcsin(\mu_j/\lambda)}$, where $\lambda$ is the sum of the amplitudes in the block encoding of $\sin[(A+B)\tau]$. (In \cite{Berry2018} $H$ is equivalent to $\sin[(A+B)\tau]$ here, and $E_k$ there is equivalent to $\mu_j$ here.)
Now the value of $\lambda$ in the block encoding of $\sin[(A+B)\tau]$ is the same as that of $e^{-\mi(A+B)\tau}$, and so with the method of \cite{Kieferova18} is $\exp(\tau\lambda_B)$.
The magnitude of the eigenvalues in the exponential are therefore $\arcsin[\sin(E_j\tau)/\exp(\tau\lambda_B)]$.
In the case of small $E_j$ (as we expect here), the linearization is approximately $E_j\tau/\exp(\tau\lambda_B)$, so this is equivalent to using an effective evolution time $\tau_{\rm eff}=\tau/\exp(\tau\lambda_B)$. If we were to choose $\tau\approx \ln 2/\lambda_B$, then we would have $\tau_{\rm eff}\approx \ln 2/(2\lambda_B)$.
That is, the evolution time is halved from $\tau$, which corresponds to doubling the cost for the phase estimation.
On the other hand, the oblivious amplitude amplification triples the cost, so this qubitization method gives $2/3$ the cost for performing phase estimation on the time evolution.
However, we can further reduce the cost by increasing $\tau$ to $1/\lambda_B$.
That gives an effective time $\tau_{\rm eff}\approx 1/(e\lambda_B)$ where $e$ is Euler's number.
That means the complexity is then $63\%$ of the method via time evolution from \cite{Kieferova18}, or $45\%$ of the complexity using the method of \cite{Low2018}.
Therefore, we see that even though these individual improvements are only minor, there is total improvement by more than a factor of 2 over the method of \cite{Low2018}.

For this effective time, we are assuming that $E_j$ is near zero, so $\arcsin[\sin(E_j\tau)/e]$ (where we have taken $\tau=1/\lambda_B$) is approximately $E_j\tau/e$.
In cases where $E_j$ is not near zero, then the uncertainty in estimating $\arcsin[\sin(E_j\tau)/e]$ will correspond to uncertainty in estimating $E_j$ with a different constant factor.
However, we have freedom in adding a phase shift to the operator we are implementing.
Instead of implementing $e^{\pm \mi(A+B)\tau}$, we can implement $e^{\pm \mi(A+B-E_0)\tau}$, simply by performing an additional phase shift.
That means we are then estimating eigenvalues $\arcsin[\sin((E_j-E_0)\tau)/e]$.
Provided we have an initial estimate of $E_j$, we can ensure that $E_j-E_0$ is near zero so we are in the region where the sine and arcsine can be linearized.

To be more specific, the slope at $E_j-E_0$ is increased to
\begin{equation}
    \frac \tau e \left( 1 + \frac{(e^2-1)(E_j-E_0)^2\tau^2}{2e^2} + \mathcal{O}((E_j-E_0)^4\tau^4)\right).
\end{equation}
This means that there is a relative increase in the number of steps needed to obtain a given accuracy of $\mathcal{O}((E_j-E_0)^2\tau^2)$.
There is also a slight increase in the estimation error due to the nonlinearity of this function.
This relative increase turns out to be on the order of the standard deviation of the phase estimation, so scales as $\mathcal{O}(1/\reps)$ for phase estimation on $\reps$ steps.
This higher-order increase in the error can be accounted for by increasing $\reps$ by $\mathcal{O}(1)$.
We do not need to take account of these factors when considering phase estimation for the qubitization, because there the effect of the nonlinearity is to reduce the error.

To summarise, there are two main choices for the algorithm.
First, there is the choice of how to prepare the time registers, for which there are three main alternatives.
\begin{enumerate}
    \item The ``compressed encoding'' method from \cite{Kieferova18}, which requires controlled rotations by arbitrary angles on the qubits for the time registers, and therefore has a complexity scaling like the logarithm of the inverse error for each qubit of the time registers.
    \item The sorting method from \cite{Kieferova18}, which requires a number of Toffolis for each qubit that can be determined from \tab{sorttab}, and is expected to be around 8 (so less than for compressed encoding).
    \item The method of \cite{Low2018}, which involves postselecting on success of inequality tests, rather than a sort. That reduces the complexity of the state preparation for the time registers, but means the number of steps is increased by a factor of $2\ln 2\approx 1.38$ over the methods of \cite{Kieferova18}.
    This is expected to result in an increased overall cost, because the state preparation for the time registers is a small part of the overall cost in each step.
\end{enumerate}
The other choice is between the following two alternatives.
\begin{enumerate}
    \item One can simulate the time evolution under the Hamiltonian, and perform phase estimation on the time evolution. Then the number of steps scales as $\lambda_B T/\ln 2$ for the methods of \cite{Kieferova18}, or  $2\lambda_B T$ for the method of \cite{Low2018}.
    \item One can instead perform a qubitization of $\sin(H\tau)$, and perform phase estimation on that. That reduces the complexity because it removes the need for amplitude amplification for time evolution (and the corresponding error). The number of steps would be $e\lambda_B T$ for the methods of \cite{Kieferova18} or $2e\ln 2\lambda_B T$ for the method of \cite{Low2018}.
    The number of steps is increased, but the complexity of each step is reduced by a factor of 3 as compared to the time evolution approach.
\end{enumerate}
	
We will focus on the case where the time registers are prepared using a sort, and we perform qubitization of $\sin(H\tau)$, as this gives the best performance.
As in the case of the qubitization approach, we need to ensure that our $\SEL$ procedure is self-inverse.
We will use an ancilla to control $e^{-\mi(A+B)\tau}$ versus $e^{\mi(A+B)\tau}$.
To ensure that it is self-inverse, it will be controlled on the ancilla qubit as
\begin{equation}
    \ket{0}\bra{1} \otimes e^{-\mi(A+B)\tau} + \ket{1}\bra{0} \otimes e^{\mi(A+B)\tau}.
\end{equation}
This ancilla qubit is prepared as $\ket{+}$, ensuring we obtain the sum of $e^{-\mi(A+B)\tau}$ and $e^{\mi(A+B)\tau}$ in the block encoding.
However, we do need to ensure that the implementation of $e^{-\mi(A+B)\tau}$ is the inverse of $e^{\mi(A+B)\tau}$.
To achieve that, we apply the following method.
\begin{itemize}
    \item Use the qubit as control for a phase gate on each qubit in the unary encoding of $k$ to control between $(-\mi)^k$ and $(\mi)^k$ in the Dyson series. There is no extra Toffoli cost.
    \item Use the qubit to control whether the sort of the time registers sorts them in ascending or descending order.
    One can simply use the qubit as a control with the qubits flagging the result of the inequality test as target.
    There is again no Toffoli cost.
    \item Use the qubit to control taking the absolute values of the time differences (see \app{kstate}).
    This is the only part with a Toffoli cost.
    \item Use this qubit to control between adding and subtracting $\nu$, which has no extra Toffoli cost.
    \item Use the qubit to control positive versus negative phasing by the kinetic energy, which can be performed without Toffolis by controlling between adding and subtracting the phase into the phase gradient state.
\end{itemize}

	\subsection{Exponentiating the kinetic operator}
	\label{sec:kinetic_exp}
	It is estimated in \cite{BabbushContinuum_b} that the cost of simulating the first quantized electronic structure Hamiltonian is dominated by the exponentiation $e^{-\mi\tau T}$. 
	The reason for that is the kinetic energy needs to be computed in the computational basis, which requires squaring $3\eta$ numbers (for the 3 components of the momenta of each electron) and adding them together.
	Squaring an $n_p$-bit number can be performed with approximately $n_p^2$ Toffolis, giving an overall complexity of approximately $3\eta n_p^2$.
	
	However, it is possible to greatly reduce the complexity, by keeping the sum of squares in an ancilla, instead of recomputing it every time.
	Then when we do the block encoding of the $U+V$, we can update this sum of squares.
	The cost of updating the sum of squares is only logarithmically dependent on $\eta$.
	To be more specific, for $V$ we have $p$ replaced with $p+\nu$ and $q$ replaced with $q-\nu$.
	For the first the change in the sum of squares is $\|p+\nu\|^2-\|p\|^2=2p\cdot \nu+\|\nu\|^2$, and for the second the difference in squares is $\|q-\nu\|^2-\|q\|^2=-2q\cdot \nu+\|\nu\|^2$.
	The quantity $\|\nu\|^2$ has already been computed in preparing the state with amplitudes $1/\|\nu\|$, so does not need to be recomputed.
	
	This method also enables us to take account of the energy offset $E_0$ we are using.
	We aim to subtract all energies by $E_0$ to put us in the linear region of the function we are estimating.
	We can simply subtract $E_0$ from the kinetic energy register at the beginning of the procedure.
	That means the energy of every step will be shifted by $E_0$ with no more gate complexity (other than the initial subtraction).
	
	Therefore the majority of the cost is in the computation of $p\cdot\nu$ and $q\cdot\nu$.
	The cost for computing the product of the \emph{absolute value} of each component is $2n_p(n_p-1)-n_p=2n_p^2-3n_p$, and the sign of the product can be computed with a CNOT.
	Rather than adding all three components then adding into the energy register, it is convenient to add them one at a time.
	This is because we have signed integers, and adding the three signed integers would incur an additional cost for converting to two's complement.
	When adding the signed integer into the energy register, we can instead use the sign bit to control addition or subtraction into the energy register, which incurs no additional cost.
	
	The number of bits needed to represent the sum of momenta is $n_\eta+2n_p$.
	We are guaranteed that the net result of all the arithmetic will result in no overflow qubit, but because we are adding terms in succession there may be overflow in intermediate steps.
	Therefore the cost of addition of $\|\nu\|$ will be $n_\eta+2n_p$.
	On the other hand, because we have twice $p\cdot\nu$ and $q\cdot\nu$, the least significant bit is 0, and the cost of the addition of each component is reduced by 1 to $n_\eta+2n_p-1$.

    We will have qubits that control whether $U$ and $V$ are implemented, and these will need to control whether each of the additions are performed.
    For $\|\nu\|^2$ we have a cost of the control each time of $2n_p$, for a total of $4n_p$.
    To eliminate the cost of controlling the addition of $p\cdot\nu$ and $q\cdot\nu$, we can instead make the swaps of these momenta into the ancilla controlled on the qubits which control whether $U$ and $V$ are implemented.
    That control on the unary iteration on $i$ and $j$ takes only one extra Toffoli each time, for a total cost of 4 Toffolis there.
    However, if the swap was not performed, then $p\cdot\nu$ or $q\cdot\nu$ was computed with a momentum of zero, giving zero.
    Therefore these additions do not need to be made controlled.
	
	There is an important nuance in that the phase needed will not be exactly a multiple of $2\pi/2^{b_{\rm grad}}$ for some integer $b_{\rm grad}$.
    The kinetic energy is given by $2\pi^2/\Omega^{2/3}$ times the sum of squares of the integers stored in the momentum registers, and this energy is multiplied by $\tau=1/(\lambda_U+\lambda_V)$ to give the phase required.
    The ideal situation is that $2\tau\pi^2/\Omega^{2/3}$ (the smallest phase shift) is exactly $2\pi/2^{b_{\rm grad}}$.
    Then one can simply take the sum of squares of momenta and add it into the phase gradient register.
    
    On the other hand, say that for example $2\tau\pi^2/\Omega^{2/3}$ is instead $3\times 2\pi/2^{b_{\rm grad}}$.
    One can then instead use the modified phase gradient state
    \begin{equation}
        \frac 1{\sqrt{2^{b_{\rm grad}}}} \sum_{\ell=0}^{2^{b_{\rm grad}}-1} e^{-6\pi \mi \ell/2^{b_{\rm grad}}}\ket{\ell} .
    \end{equation}
    Then it is easy to see that adding an integer to the register will give a phase shift that is a multiple of $3\times 2\pi/2^{b_{\rm grad}}$.
    Typically, $2\tau\pi^2/\Omega^{2/3}$ would not be of the form of an integer multiple of $2\pi/2^{b_{\rm grad}}$, but we have freedom to adjust the time $\tau$.
    Say for example that we are using a multiple with 10 bits.
    We will use $b_T$ for the number of bits in the multiple.
    Then the worst case is when $2\tau\pi^2/\Omega^{2/3}$ with $\tau=1/(\lambda_U+\lambda_V)$ is about $512.5$ times $2\pi/2^{b_{\rm grad}}$.
    Then rounding to 513 gives the largest relative error, whereas rounding numbers like, for example, $599.5$ to $600$ would result in smaller relative error.
    (Numbers below 512 could be represented with fewer bits.)
    Adjusting the value of $\tau$ to make the multiple exactly 513, means that the value of $\tau/\exp(\tau\lambda_B)$ (which governs the number of steps needed in the phase estimation) would be changed by less than 1 part in a million.
    In general, as discussed in \app{mgrad}, one can take
    \begin{equation}\label{eq:bgrad}
        b_{\rm grad} = b_T - \left\lceil \log \left( \frac{\pi}{(\lambda_U+\lambda_V)\Omega^{2/3}} \right) \right\rceil ,
    \end{equation}
    and the increase in the $\lambda$ will be no more than $1 + 1/2^{2b_T+1}$.
	
	To summarise, the costs associated with the phasing by $T$ are as follows.
	\begin{enumerate}
	    \item We need to compute time differences from two times of $n_t$ qubits with no carry bit, which has cost $n_t-1$.
	    \item There is cost $2n_t(n_\eta+2n_p)-n_t$ for the multiplication of the sum of momenta by the difference of times.
	    \item The addition into the phase gradient register has cost $b_{\rm grad}-2$ Toffolis.
	    \item The cost of computing all the products of components of $p$ or $q$ with $\nu$ is $6\times (2n_p^2-3n_p)=12n_p^2-18n_p$ altogether.
	    \item The cost of the addition of all these components into the energy register is $6(n_\eta+2n_p-1)$.
	    \item The cost of two additions of $\|\nu\|^2$ into the energy is $2(n_\eta+2n_p)$.
	    \item The cost of making each addition of $\|\nu\|^2$ controlled is $4n_p$.
	    \item For the kinetic energy phasing for time $\tau/2M$ at the beginning and the end we simply need to add the (appropriately bit-shifted) kinetic energy into the phase gradient state, with cost $b_{\rm grad}-2$ Toffolis each time.
	\end{enumerate}
	These costs are needed different numbers of times.
	For the time difference, there is only Toffoli cost where we compute differences of times given in quantum registers, which is done $K-1$ times.
	This is because $M-1-m'_k$ can be computed with NOT gates (see \app{kstate} for more detail).
	The multiplication in step 2 and the phasing in step 3 are both needed $K+1$ times for all of the phasings.
	The remaining costs are updating the energy register, which only needs to be done between phasings, so we have $K$ of these costs.
	
	The arithmetic for the time differences and multiplications can be performed in place with ancillae, so the arithmetic can be inverted without Toffolis.
	This gives a complexity for each time difference $n_t-1$. There is also cost $n_t-1$ to take the absolute value of the time, as controlled by the qubit selecting between forward and reverse time evolution (see \app{kstate}).
	With $K-1$ differences, the total cost is
    \begin{equation}\label{eq:timesubcost}
        2(K-1)(n_t-1) .
    \end{equation}
	For the phase shift the total Toffoli cost is
    \begin{equation}\label{eq:kinexpcost}
        (K+1)[2n_t(n_\eta+2n_p)-n_t+b_{\rm grad}-2]+2(b_{\rm grad}-2),
    \end{equation}	
    where we have included $2(b_{\rm grad}-2)$ for the initial and final phasings.
	For updating the sum of momenta
	\begin{align}\label{eq:momentasum}
	   &K\{ [12n_p^2-18n_p]  +  [6(n_\eta+2n_p-1)]  + [2(n_\eta+2n_p)] + [4n_p] \} \nn
	   &= K (12n_p^2 +2n_p + 8n_\eta ) .
	\end{align}
    Note that the complexity is now only logarithmically dependent on $\eta$.
    There is still linear dependence on $\eta$ for selecting momentum registers in block encoding $U+V$.

	\subsection{Implementing \texorpdfstring{$\SEL$}{SEL} and \texorpdfstring{$\PREP$}{PREP}}
	\label{sec:sel_prep}
	In this section, we discuss the costing of the block encoding of $U+V$ for the interaction picture.
    As part of this block encoding, we will update the sum of squares of momenta, as described above.
    We have previously described the complexity for block encoding $T+U+V$.
    The complexity of block encoding $U+V$ is very similar, except the parts needed to block encode $T$ are omitted, and we need to update the register storing the sum of momenta.
    Therefore our previous costing of the block encoding from \tab{qubcosts} can be modified as in \tab{intcosts1}.
    We will make the controlled swaps of momenta into the ancillas controlled on qubits flagging whether $U$ and $V$ are to be performed.
    That overall control takes one extra Toffoli for each of the controlled swaps, for a total of 4.
    The final item in this table is the new cost for updating the sum of momenta.
    
\begin{table}
\begin{tabular}{| m{11cm} | m{5cm} |}
\hline
\centering Procedure &  \hspace{16mm} Toffoli cost \\
 \hline \hline
Preparing the superposition on the qubit selecting between $U$ and $V$, but not $T$; see \eq{UVprepcost}. & $2(4n_{\eta\zeta}+2b_r-9)$ \\
\gline
Preparing equal superpositions over $\eta$ values of $i$ and $j$ in unary; see \eq{ijprepcost}. &
$14n_\eta+8b_r-36$ \\
\gline
The state preparation cost for the $w,r,s$ registers used for $T$ is eliminated. & $0$ \\
\gline
Controlled swaps of the $p$ and $q$ registers into and out of ancillae (which is used for $U$ and $V$), with another 4 Toffolis to give overall control; see \eq{swapscost}. & $12\eta n_p+4\eta-4$ \\
\gline
The $\SEL$ cost for $T$ is eliminated. & $0$ \\
\gline
The preparation of the state with amplitude $1/{\norm{\nu}}$ and inversion; see \eq{nuprepcost}. & $3n_p^2+15n_p-7+4n_{\mathcal{M}}\big(n_p+1\big)$\\
\gline
The QROM for $R_\ell$, combined with the state preparation with amplitudes $\sqrt{\zeta_\ell}$ (which may be implicit); see \eq{Rlqrcost}. & $\lambda_\zeta+\erase{\lambda_\zeta}$ \\
\gline
The additions and subtractions of $\nu$ into the momentum registers; see \eq{addcosts}. & $24n_p$ \\
\gline
Phasing by $-e^{-\mi k_\nu\cdot R_\ell}$; see \eq{Rphacost1}. & $6n_p n_R$ \\
\gline
Updating the sum of momenta; see \eq{momentasum}. & $12n_p^2 +2n_p + 8n_\eta$\\
\hline
\end{tabular}
\caption{The costs involved in the block encoding of the Hamiltonian for the interaction picture, following the same entries as in \tab{qubcosts}.}
\label{tab:intcosts1}
\end{table}
    
    For the total complexity, in the Dyson series the most efficient method is to qubitize a single step of the Dyson series, and perform phase estimation on it, rather than to use amplitude amplification to obtain the explicit time evolution.
    As a result, the effective time for a single step is approximately $1/[e\lambda_B]$.
    The RMS error in the energy due to phase estimation is then $\pi e\lambda_B/(2\reps)$, where $\reps$ is the number of steps, so one should take
    \begin{equation}
        \reps = \left\lceil \frac{\pi e \lambda_B}{2\epsilon_{\rm pha}} \right\rceil,
    \end{equation}
    where $\epsilon_{\rm pha}$ is the allowable error due to phase estimation.
    For the application to the electronic structure problem, we take $\lambda_B=\lambda_U+\lambda_V$.
    As described above, we will not be using exactly the optimal size of time step, but the increase in the cost is upper bounded by $1+1/2^{2b_T+1}$, giving
    \begin{equation}
        \reps = \left\lceil (1+1/2^{2b_T+1}) \frac{\pi e \lambda_B}{2\epsilon_{\rm pha}} \right\rceil .
    \end{equation}

	\subsection{Total cost of the interaction picture simulation}
	\label{sec:dyscost}
	
    In the block encoding of the Dyson series, there are a number of parts that need to be costed.
    \begin{enumerate}
        \item Preparing the superposition over $k$ in unary.
        \item The superpositions on the time registers will be prepared with controlled Hadamards.
        \item The $K$ registers can be sorted with a number of comparators as given in \tab{sorttab}.
    \end{enumerate}
    The contributions to the error are as follows.
    \begin{enumerate}
        \item The finite cutoff on the Dyson series, $K$.
        \item The finite angles of the rotations for preparing the superposition over $k$.
        \item The finite number of bits used to represent the times, $n_t$.
    \end{enumerate}
    
    For the time steps we are using for the qubitized Dyson series approach, the time interval is $\tau = 1/\lambda_B$, which means that the size of term $k$ in the Dyson series is simply $1/k!$, and the error is approximately $1/(K+1)!$.
    More precisely, the error is
    \begin{equation}
        e - \sum_{k=0}^K \frac 1{k!}.
    \end{equation}
    This is error in block-encoding $\sin((A+B)\tau)=\sin((A+B)/\lambda_B)$, and therefore corresponds to error in the estimation of the energy of
    \begin{equation}\label{eq:errK}
        \lambda_B \left( e - \sum_{k=0}^K \frac 1{k!} \right) \approx \frac{\lambda_B}{(K+1)!} .
    \end{equation}
    Note that here we are assuming that the eigenvalue is near zero, so the sine function can be linearized in estimating the error.
    In the case where $E_j$ (the energy eigenvalue to be measured) is not near zero, it is possible to just apply a phase shift so we are block encoding $\sin((A+B-E_0)\tau)$ with $E_0$ near $E_j$.
    
    For the error in preparing the superposition over $k$, we note that because the amplitudes we need are precisely $1/\sqrt{k!}$, the state can be written as proportional to
    \begin{equation}
        \sum_{k=0}^K \sqrt{\frac{K!}{k!}} \ket{k},
    \end{equation}
    where the amplitudes are the square roots of integers. The $-\mi$ factors in the Dyson series can be implemented just with $S$ gates on the qubits with the unary encoding of $k$, so do not require non-Clifford gates.
    We will also be controlling between forward and reverse time evolution, which means that we will be controlling between $\pm \mi$.
    That can be achieved simply with a controlled phase gate, and again there is no non-Clifford cost.
    We can therefore prepare an equal superposition over a number of integers $\sig (0)\approx K! e$, where we define
    \begin{equation}\label{eq:sigdef}
        \sig (k) = \sum_{\ell=k}^K \frac{K!}{\ell!} ,
    \end{equation}    
    and use inequality testing to exactly obtain the required state.
    The equal superposition over the integers is not performed exactly, but the failure, which has small probability, is flagged so it only corresponds to slightly reducing the size of the block encoded operation, rather than introducing any error into that block encoded operation.
    
    For the complexity, the contributions are as follows.
    \begin{enumerate}
        \item First there is the preparation of an equal superposition over a number of integers that is upper bounded by $K! e$.
        The number of bits can be taken to be $n_k =\lceil \log( \sig (0) ) \rceil$, so the cost of preparing the equal superposition is (according to the usual formula)
        \begin{equation}
            3n_k +2b_r -9.
        \end{equation}
        \item To select $k=0$ or $k=1$, we can perform an inequality test between the equal superposition register and $\sig (2)$.
        The Toffoli cost of this inequality test is $n_k-1$.
        To distinguish $k=0$ and $k=1$, we simply use a Hadamard on an ancilla qubit and use a Toffoli to set the qubit.
        \item For $1<k<K$, we perform an inequality test with the number $\sig (k+1)$.
        To save complexity, we can only perform an inequality test with a number of bits necessary to store $\sig (k)$, conditioned on already passing the inequality test with $\sig (k)$ in the previous step.
        The Toffoli complexity may then be given as $\lceil \log( \sig (k) ) \rceil-1$ for the inequality test, and another Toffoli is needed for the conditioning.
    \end{enumerate}
    The first contribution here is a complexity that must be paid again in the inverse preparation, giving a total cost
        \begin{equation}\label{eq:nkprep}
            2 ( 3n_k +2b_r -9 ) .
        \end{equation}
    For contributions 2 and 3, the Toffolis and inequality tests may be inverted with Clifford gates, giving a Toffoli cost
    \begin{equation}\label{eq:nkprep2}
        n_k + \sum_{k=2}^{K-1} \lceil \log( \sig (k) ) \rceil .
    \end{equation}
    In order to ensure that these inequality tests and Toffolis can be inverted with Cliffords, the same number of qubits are kept throughout the calculation.
    For a more detailed explanation of the costing, see \app{kstate}.

    The other contributor to the error listed above is the finite number of bits used to represent the time, $n_t$.
    In prior work, this error was estimated based on a bound on the derivative of the Hamiltonian.
    Here we use a tighter bound on the error based on the second derivative.
    This is a lengthy derivation, so is relegated to \app{timedisc}.
    The result is that, for time interval $\tau$, the error in the time-ordered exponential can be bounded by
\begin{equation}
\left(\frac{2\|A\|}{\|B\|} + \frac{\|A\|^2}{\|B\|^2}\right) \left[ 2^{n_t} \left( e^{\|B\|\tau/2^{n_t}}-1 \right) - \|B\|\tau (1+\|B\|\tau/2^{n_t+1}) \right] .
\end{equation}
    Here we take $\tau=1/(\lambda_U+\lambda_V)$, and replace $\|B\|$ with $\lambda_U+\lambda_V$ and $\|A\|$ with $\lambda_T$.
    That gives the bound
\begin{equation}
\left(\frac{2\lambda_T}{\lambda_U+\lambda_V} + \frac{\lambda_T^2}{(\lambda_U+\lambda_V)^2}\right) \left[ 2^{n_t} \left( e^{2^{-n_t}}-1 \right) - (1+2^{-(n_t+1)}) \right] .
\end{equation}   
    Again, this is error in block encoding of $\sin((A+B)/\lambda_B)$, and corresponds to error in the estimation of the energy that is multiplied by $\lambda_B=\lambda_U+\lambda_V$, so the error in the energy is
\begin{equation}\label{eq:epsnt}
\left(2\lambda_T + \frac{\lambda_T^2}{\lambda_U+\lambda_V}\right) \left[ 2^{n_t} \left( e^{2^{-n_t}}-1 \right) - (1+2^{-(n_t+1)}) \right] .
\end{equation}

    The complexity of the controlled Hadamards to generate the superpositions over times is therefore $Kn_t$. The same complexity is needed in inverting the state preparation, giving total cost
    \begin{equation}\label{eq:ctrlhadcosts}
        2Kn_t.
    \end{equation}
    Sorting the times with $n_t$ bits gives complexity $2n_t$ times the numbers of comparators given in \tab{sorttab}.
    These numbers of comparators are denoted $\srt{K}$, and the same cost is needed to invert the sort, giving a total sorting cost
    \begin{equation}\label{eq:srtcosts}
        4n_t\srt{K}.
    \end{equation}
    
    For the times, we need to compute the differences.
    There is a subtlety here, in that we need to compute the difference with the maximum time for the final time difference for a given $k$.
    That can be achieved by taking all times where we have \emph{not} performed the Hadamards to be set to the maximum time (all ones).
    One can simply perform $X$ gates after the controlled Hadamards.
    The total cost of the differences and absolute values was then given before as $2(K-1)(n_t-1)$.

    That accounts for the need to make the exponentiation of the kinetic operator controlled on the $k$ register, but we also need to do this for the block encoding of $U+V$.
    The block encoding is a combination of the $\PREP$ operation, a $\SEL$ operation, and an inverse preparation $\PREP^\dagger$.
    Because the preparation and inverse preparation cancel, only the $\SEL$ operation need be made controlled.
    The cost of this control may be quantified as follows.
    
    First, we need to control on success of the three equal superposition state preparations and the preparation of $\nu$, as well as the qubit in the $k$ register.
    That takes 4 Toffolis to produce a qubit that is the AND of all five bits.
    We will apply another 4 Toffolis in erasing this qubit (though we could also keep the ancillas to erase it without Toffolis).
    
    For the phasing by $-e^{-\mi k_\nu\cdot R_\ell}$, the $R_\ell$ can be made zero in the case where we do not want to perform $\SEL$ by simply controlling the QROM for outputting of $R_\ell$ on that qubit.
    That QROM is already controlled on the qubit selecting between $U$ and $V$, and controlling on another qubit merely needs one more Toffoli.
    There also needs to be control of the minus sign on that qubit as well.
    However, in controlling the output of the QROM, we are generating an ancilla that can be used to control that sign flip, and no additional Toffolis are required.
    
    The addition $p+\nu$ is already controlled by the qubit selecting between $U$ and $V$.
    Again, making it controlled on one more qubit only takes one more Toffoli.
    We have costed the subtraction $q-\nu$ as being controlled, but for the interaction picture that subtraction only needs to be controlled on this one qubit.
    We have already given the cost for this subtraction to be controlled, so there is \emph{no} additional Toffoli cost over what we have already presented.    
    As a result, the total additional complexity to make the block encoding of $U+V$ controlled on a qubit is just \emph{two} Toffolis, which is completely trivial compared to the other costs in the circuit.
    
    In the block encoding, we also need to reflect on all qubits used that are not guaranteed to be returned to zero.
    The detailed analysis is given in \app{qubcost}, and yields a Toffoli cost of
    \begin{equation}\label{eq:refqbtsint}
         K ( n_{\eta\zeta}+2n_\eta+4n_p+n_{\mathcal{M}}+n_t + 12) + n_k+3 .
    \end{equation}
    In this costing, we use intermediate checks between the block encodings of $U+V$ to ensure that the control qubits are rezeroed, and we can reuse them.
 Lastly we provide the total cost of a single step of the interaction picture including inversion in \tab{intcosts2}.
 There, 10 Toffolis are added to the costs from \tab{intcosts1} to account for the complexity of making the operations for $U$ and $V$ controlled.

\begin{table}
\begin{tabular}{| m{11cm} | m{5cm} |}
\hline
\centering Procedure & \hspace{15mm} Toffoli cost \\
 \hline \hline
The preparation of the equal superposition over $\sig (0)$ numbers; see \eq{nkprep}. & $2(3n_k +2b_r - 9)$ \\
\gline
Inequality tests used for the preparation of the superposition over $k$; see \eq{nkprep2}. & $n_k+ \sum_{k=2}^{K-1} \lceil \log( \sig (k) ) \rceil $ \\
\gline
Controlled Hadamards for preparing the superpositions over times; see \eq{ctrlhadcosts}. & $2Kn_t$ \\
\gline
The sorting of the time registers; see \eq{srtcosts}. & $4n_t\srt{K}$ \\
\gline
The differences of the times; see \eq{timesubcost}. & $2(K-1)(n_t-1)$ \\
\gline
The kinetic operator needs to be exponentiated $K+1$ times; see \eq{kinexpcost}. & $(K+1)[2n_t(n_\eta+2n_p)-n_t+b_{\rm grad}-2]+2(b_{\rm grad}-2)$ \\
\gline
Perform the controlled block encoding of the potential operator $K$ times. & $K(10+{\rm costs~in~\tab{intcosts1}})$ \\
\gline
The reflection cost for qubitization. & \eq{refqbtsint} \\
\hline
\end{tabular}
\caption{The costs involved in the block encoding of the interaction picture representation of the Hamiltonian.}
\label{tab:intcosts2}
\end{table}
    
    To determine the effective $\lambda$, we need to also account for the probabilities of success of the amplitude amplification for the various state preparations we have performed.
    For the block encoding of $U+V$, these are as follows.
    \begin{enumerate}
        \item The preparation of the state with amplitudes $1/\|\nu\|$, which has probability of success denoted $\psuc^{\rm amp}$.
        \item The preparation of the equal superposition state for selecting between $U$ and $V$, which has probability $\eqprep{\eta+2\lambda_\zeta,b_r}$.
        \item Preparing the equal superpositions over $i$ and $j$ have probabilities of success $\eqprep{\eta,b_r}$ each.
        \item The probability of obtaining $i\ne j$ is $1-1/\eta$.
    \end{enumerate}
    The value of $\lambda_U+\lambda_V$ needs to be divided by all the probabilities in parts 1 to 3, and just $\lambda_V$ needs to be divided by $1-1/\eta$.
    It is also possible to perform a joint state preparation over $i$ and $j$, eliminating the need to divide by $1-1/\eta$.
    We do not explain that here as the procedure is already quite complicated.
    
    In performing the block encoding of $\sin((U+V)\tau)$ there is a further preparation of an equal superposition state in order to prepare the superposition over $k$.
    That has probability of success $\eqprep{\sig(0),b_r}$.
    Because we are considering a region where the sine can be linearized, the effect is the same as multiplying $U+V$ by this probability, and we should therefore multiply all these probabilities.
    Also taking into account the factor $(1+1/2^{2b_T+1})$ due to adjusting the length of time, we get
 \begin{equation}\label{eq:Peq}
     P_{\rm eq} = \psuc^{\rm amp} \eqprep{\sig(0),b_r}\eqprep{\eta+2\lambda_\zeta,b_r}\eqprep{\eta,b_r}^2 /(1+1/2^{2b_T+1}).
 \end{equation}
    The value of $\lambda_U+\lambda_V$ then needs to be divided by $P_{\rm eq}$ to obtain the effective $\lambda$.
    Therefore, to get the total cost, the costs for a single step of the interaction picture need to be multiplied by a number of steps
    \begin{equation}
        \reps = \left \lceil \frac{\pi e (\lambda_U+\lambda_V/(1-1/\eta))}{2\epsilon_{\rm pha}P_{\rm eq}} \right\rceil .
    \end{equation}   
    
    Now we take account of the nonlinearity in using an operator with eigenvalue $\arcsin[\sin((E_j-E_0)\tau)/e]$.
    As discussed above, the error of the estimation is increased by a relative amount $\mathcal{O}((\lambda_U+\lambda_V)^2\Delta E^2)$, so we need to introduce this factor to the number of steps used.
    Also, the nonlinearity gives an $\mathcal{O}(1)$ increase to the number of steps needed.
    If we have an initial estimate of the energy with precision $\Delta E$, we can choose $E_0$ such that the number of steps needed is
    \begin{equation}\label{eq:intreps}
        \reps = \frac{\pi e (\lambda_U+\lambda_V/(1-1/\eta))}{2\epsilon_{\rm pha}P_{\rm eq}}  \left[ 1+ \mathcal{O}((\lambda_U+\lambda_V)^2\Delta E^2) \right]+ \mathcal{O}(1).
    \end{equation}
    In this expression, we should also use the values $\lambda^\alpha_U,\lambda^\alpha_V$ given by the block encoding to be precise, though in practice the difference should be negligible.
    
    In computing these costs, we need to ensure that the total error is no larger than $\epsilon$, where the sources of error are as follows.
    \begin{enumerate}
        \item The error due to the phase estimation $\epsilon_{\rm pha}$.
        \item The error due to the truncation of the Dyson series, $\epsilon_K\approx (\lambda_U+\lambda_V)/(K+1)!$.
        \item The error $\epsilon_R$ due to the finite number of bits used to represent the nuclear positions, which is upper bounded as in \lem{epsR}.
        \item The error $\epsilon_{\mathcal{M}}$ due to the preparation of the state with amplitudes $1/\|\nu\|$, which is upper bounded as in \lem{epsM}.
        \item The error due to the finite number of time steps, $\epsilon_t$, which is upper bounded as in \eq{epsnt}.
    \end{enumerate}
    Because the error $\epsilon_{\rm pha}$ is an RMS value, we combine it with the other errors as a square root of a sum of squares.
    We therefore require that
    \begin{equation}
        \epsilon_{\rm pha}^2 + ( \epsilon_K+\epsilon_R+\epsilon_{\mathcal{M}}+\epsilon_t)^2 \le \epsilon^2.
    \end{equation}
	
    To summarise, the result can be given as in the following theorem.
    \begin{theorem}
    \label{thm:interaction_cost}
 Using the interaction picture, assuming an initial estimate with accuracy $\Delta E$, the eigenvalue of the Hamiltonian given by \eq{first_quant_ham} can be estimated to within RMS error $\epsilon$, using a number of Toffoli gates
 \begin{align}
     \reps & \left( 
     2(3n_k +2b_r -9) 
     + n_k+ \sum_{k=2}^{K-1} \lceil \log( \sig (k) ) \rceil  
     +2Kn_t 
     +4n_t\srt{K}\right. 
     \nn
     & \quad +2(K-1)(n_t-1) 
     +(K+1)[2n_t(n_\eta+2n_p)-n_t+b_{\rm grad}-2]+2(b_{\rm grad}-2) 
     \nn
     &\quad +K \left\{10+ 2(4 n_{\eta\zeta}+2b_r-9) + (14 n_\eta + 8b_r-36) + 3[3n_p^2+15n_p-7+4n_{\mathcal{M}}(n_p+1)]\right. \nn
            &\quad\left. +\lambda_\zeta+\erase{\lambda_\zeta}
            + (12\eta n_p +4\eta-4)+ 24 n_p + 6n_p n_R + (12n_p^2 +2 n_p + 8n_\eta)\right\} 
    \nn
    &\quad \left. \vphantom{\sum_{k=2}^{K-1}} +K ( n_{\eta\zeta}+2n_\eta+2+4n_p+n_{\mathcal{M}}+n_t + 12) + n_k+3 +\mathcal{O}(\log(1/\epsilon))
     \right),
 \end{align}
 where
 \begin{align}
 \reps &= \frac{ \pi e (\lambda^1_U+\lambda^1_V/(1-1/\eta))}{2\epsilon_{\rm pha}P_{\rm eq}} \left[ 1+ \mathcal{O}((\lambda_U+\lambda_V)^2\Delta E^2) \right]+ \mathcal{O}(1), \\
     P_{\rm eq} &= \psuc^{\rm amp} \eqprep{\sig(0),b_r}\eqprep{\eta+2\lambda_\zeta,b_r}\eqprep{\eta,b_r}^2 /(1+1/2^{2b_T+1}),\\    
        b_{\rm grad} &= b_T - \left\lceil \log \left( \frac{\pi}{(\lambda_U+\lambda_V)\Omega^{2/3}} \right) \right\rceil ,
\end{align}
with $b_T\in\mathbb{N}$.
For this complexity we require appropriately chosen phase gradient states to be given as a resource.
The quantities $\epsilon_{\rm pha}>0$ and $K,n_R,n_{\mathcal{M}},n_T\in\mathbb{N}$ are chosen such that
     \begin{align}\label{eq:totinterr}
        \epsilon^2 & \ge \epsilon_{\rm pha}^2 + ( \epsilon_K+\epsilon_R+\epsilon_{\mathcal{M}}+\epsilon_t)^2 ,\\
        \epsilon_K &= (\lambda_U+\lambda_V) \left( e - \sum_{k=0}^K \frac 1{k!} \right) ,\\
	    \epsilon_\mathcal{M} &= \frac{2\eta}{2^{n_{\mathcal{M}}}\pi\Omega^{1/3}} \left( \eta + 2 \sum_{\ell=1}^L \zeta_\ell \right)\left( 7 \times 2^{n_p+1} -9n_p -11 -3\times 2^{-n_p} \right) \\
		\epsilon_R &= \frac{\eta\lambda_\zeta}{2^{n_R}\Omega^{2/3}}\sum_{\nu\in G_0}\frac{1}{\norm{\nu}} , \\
\epsilon_t &= \left(2\lambda_T + \frac{\lambda_T^2}{\lambda_U+\lambda_V}\right) \left[ 2^{n_t} \left( e^{2^{-n_t}}-1 \right) - (1+2^{-(n_t+1)}) \right] .
		\end{align}
    \end{theorem}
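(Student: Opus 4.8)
The plan is to assemble the bound from three ingredients: the per-step Toffoli cost obtained by summing the line items of \tab{intcosts2}; the number of qubitization steps $\reps$ needed for phase estimation to reach RMS energy error $\epsilon_{\rm pha}$; and the error budget relating $\epsilon_{\rm pha}$, $K$, $n_R$, $n_{\mathcal{M}}$, and $n_t$ to the target precision $\epsilon$. The overall count is then just $\reps$ times the per-step cost plus the $\mathcal{O}(\log(1/\epsilon))$ tail for preparing and post-processing the phase-estimation control register.

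For the per-step cost, each entry of \tab{intcosts2} is justified by the subroutine costings already established above: the superposition over $k$ in unary from \eq{nkprep} and \eq{nkprep2}, using the integer-amplitude inequality-test construction of \sec{dyscost}; the controlled Hadamards preparing the time registers \eq{ctrlhadcosts}; the sorting network \eq{srtcosts} with comparator counts from \tab{sorttab}; the time differences and absolute values \eq{timesubcost}; the $K+1$ exponentiations of the kinetic operator \eq{kinexpcost} together with the $K$ momentum-sum updates \eq{momentasum} derived in \sec{kinetic_exp}; the $K$ controlled block encodings of $U+V$, built from the entries of \tab{intcosts1} plus the additional $10$ Toffolis needed to control $\SEL$ on the qubit from the $k$ register and to form and erase the joint success flag; and the reflection cost \eq{refqbtsint} from \app{qubcost}, using intermediate rezeroing of the control qubits between block encodings. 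Summing these gives the bracketed expression multiplying $\reps$.

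For the number of steps I would invoke the analysis of \sec{choice}. Rather than the non-Hermitian $e^{-\mi(A+B)\tau}$, we block-encode $\sin((A+B)\tau)$ using a single control qubit prepared in $\ket{+}$ that toggles the sign of every exponent (and, correspondingly, the $k$-dependent phase, the sort direction, the sign of the $\nu$ additions, and the sign of the kinetic phasing, so that the two branches are mutual inverses and $\SEL$ is self-inverse). By the qubitization result of \cite{Berry2018}, the walk operator then has eigenvalues $\pm e^{\pm\mi\arcsin(\mu_j/\lambda)}$ with $\mu_j=\sin(E_j\tau)$ and $\lambda$ the one-norm $\exp(\tau\lambda_B)$ of the block encoding; taking $\tau=1/\lambda_B$ gives effective time $\tau_{\rm eff}\approx 1/(e\lambda_B)$, so $\reps$ steps of phase estimation produce RMS energy error $\pi e\lambda_B/(2\reps)$. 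I then refine $\lambda_B$: it is $\lambda_U+\lambda_V$, inflated by $1/P_{\rm eq}$ to absorb the success probabilities of the $\nu$-state amplitude amplification ($\psuc^{\rm amp}$) and the three equal-superposition preparations, with $\lambda_V$ further inflated by $1/(1-1/\eta)$ since we prepare over all pairs $i,j$ including $i=j$, and with the factor $1+1/2^{2b_T+1}$ from the slightly suboptimal time step fixed by \eq{bgrad}. Finally I fold in the nonlinearity of $\arcsin[\sin((E_j-E_0)\tau)/e]$: an initial estimate of precision $\Delta E$ lets us choose $E_0$ so that this function is in its linear regime, yielding the $[1+\mathcal{O}((\lambda_U+\lambda_V)^2\Delta E^2)]$ relative and $\mathcal{O}(1)$ additive corrections of \eq{intreps}. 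This step is the main obstacle: one must correctly track how $\exp(\tau\lambda_B)$ and all the success probabilities combine into a single effective evolution time, and make the linearization of $\arcsin\circ\sin$ about $E_0$ precise enough that the nonlinearity contributes only an $\mathcal{O}(1)$ additive and an $\mathcal{O}((\lambda_U+\lambda_V)^2\Delta E^2)$ relative correction — a careless treatment would either forfeit the factor-of-$e$ advantage over oblivious amplitude amplification or overstate the cost.

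For the error budget, the phase-estimation error is independent of the approximation errors, so I combine the RMS quantity $\epsilon_{\rm pha}$ in quadrature with the sum $\epsilon_K+\epsilon_R+\epsilon_{\mathcal{M}}+\epsilon_t$. Here $\epsilon_K=(\lambda_U+\lambda_V)(e-\sum_{k=0}^K 1/k!)$ is the Dyson-truncation error \eq{errK} (linearizing the sine about the shifted eigenvalue); $\epsilon_R$ follows from \lem{epsR} with $\delta_R=\Omega^{1/3}/2^{n_R+1}$; $\epsilon_{\mathcal{M}}$ follows from \lem{epsM} with $\alpha=1$; and $\epsilon_t$ is the second-derivative time-discretization bound \eq{epsnt} derived in \app{timedisc}, with $\|A\|\mapsto\lambda_T$, $\|B\|\mapsto\lambda_U+\lambda_V$, $\tau=1/(\lambda_U+\lambda_V)$, converted from a block-encoding error to an energy error by the factor $\lambda_B$. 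Requiring the combination to be at most $\epsilon$ gives \eq{totinterr}, and multiplying the per-step cost by $\reps$ yields the stated total.
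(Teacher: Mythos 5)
Your proposal is correct and follows essentially the same route as the paper's own proof: sum the per-step Toffoli costs from Table~\ref{tab:intcosts2} (each entry traceable to the subroutine costings established in the preceding subsections and Appendix~\ref{app:qubcost}), multiply by the number of steps $\reps$ obtained from the qubitization-of-$\sin((A+B)\tau)$ analysis with the $P_{\rm eq}$ and $(1+1/2^{2b_T+1})$ inflations and the $\Delta E$-nonlinearity correction, and combine the phase-estimation RMS error in quadrature with the sum of the Dyson-truncation, nuclear-position, $\nu$-preparation, and time-discretization errors using Lemma~\ref{lem:epsR}, Lemma~\ref{lem:epsM}, Eq.~\eqref{eq:errK}, and Eq.~\eqref{eq:epsnt}. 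The only point you might add for completeness is the paper's remark that $\lambda_U^\alpha,\lambda_V^\alpha$ are taken with $\alpha=1$ specifically so that the bound in Lemma~\ref{lem:epsM} applies.
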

    
\begin{proof}
    The Toffoli costs given in this theorem are simply the sum of the costs found above in \tab{intcosts2}.
    We take $\epsilon_{\rm pha}$, $\epsilon_K$, $\epsilon_R$, $\epsilon_\mathcal{M}$, and $\epsilon_t$ to be the allowable errors from phase estimation, truncation of the Dyson series at order $K$, the finite approximation of the positions of the nuclei, preparation of the state with amplitudes $1/\|\nu\|$, and finite number of bits for the time, respectively.
    Because the error in the phase estimate is independent of all other errors, the sum of the squares of this error and all other errors needs to be smaller than $\epsilon^2$.
    The expression for $\epsilon_K$ is from \eq{errK}, the expression for $\epsilon_R$ is from \lem{epsR}, the expression for $\epsilon_{\mathcal{M}}$ is from \lem{epsM}, 
    and the expression for $\epsilon_t$ is obtained from the upper bound in \eq{epsnt}.
    The probability of success of the state preparations is given by \eq{Peq}, and the number of repetitions needed for the state preparation is given by \eq{intreps}.
    We have used $\lambda^\alpha_U,\lambda^\alpha_V$ with $\alpha=1$ so we can use \lem{epsM}.
    The term $\mathcal{O}(\log(1/\epsilon))$ is to account for complexity of preparing the initial state to use as control for the phase measurement, and the processing of this control at the end to obtain the phase estimate.
\end{proof}

As discussed previously for the case of qubitization, the general way to use this theorem is to choose some allowable maxima for the errors, compute the appropriate values of the parameters required to obtain those errors, then compute the actual error to obtain $\epsilon_K$, $\epsilon_{\mathcal{M}}$, $\epsilon_R$, and $\epsilon_t$.
Those are then used in \eq{totinterr} to obtain $\epsilon_{\rm pha}$, which is then used to obtain the total complexity.
The errors $\epsilon_{\mathcal{M}}$ and $\epsilon_R$ were discussed above for \thm{qubitization}.
For the case of $\epsilon_K$, we can choose $K$ proportional to $\ln[(\lambda_U+\lambda_V)/\epsilon_V]/\ln\ln[(\lambda_U+\lambda_V)/\epsilon_V]$.
For $\epsilon_t$ we can use the approximation that the expression in the square brackets is about $2^{-2n_t}/6$ to estimate the appropriate value of $n_t$.
The value of $b_T$ is chosen to be an integer, usually around 8.
We can then adjust the values used for $n_{\mathcal{M}}$, $n_R$, $K$, $n_t$, and $b_T$ to minimise the complexity.

\section{Results and discussion}
\label{sec:results}

	\FloatBarrier

\subsection{The resources required to simulate real systems}

In this section we numerically compute and compare the costs of the central algorithms of the preceding two sections to each other, and to other methods in the literature. The constant factor Toffoli complexity is given for our qubitization approach by \thm{qubitization} and given for our interaction picture approach by \thm{interaction_cost}. While those expressions are rather complex, the simpler asymptotic complexities are given in \eq{asymptotic_qubitization} and \eq{asymptotic_interaction}, respectively. We see that we can characterize the complexity of these approaches in terms of $\eta$, $N$, $\Omega$ and $\epsilon$. The number of nuclear charges $L$ also appears in the scaling but only inside of subdominant terms that do not significantly impact the complexity. Thus, our algorithm has no leading order dependence on the location, magnitude, or number of nuclear charges. This is a stark contrast to algorithms based on Gaussian orbitals where the basis itself, and hence the $\lambda$ values as well as the cost of the quantum walk steps, are greatly influenced by the particular details of the system being simulated. Because our costs do not depend strongly on such properties, we will be able to analyze the scaling of our algorithms much more broadly than is possible with the molecular orbital based algorithms.

	\begin{figure*}[h]
		\centering
	  \begin{tabular}{c @{\qquad} c }
	  \includegraphics[width=0.45\textwidth]{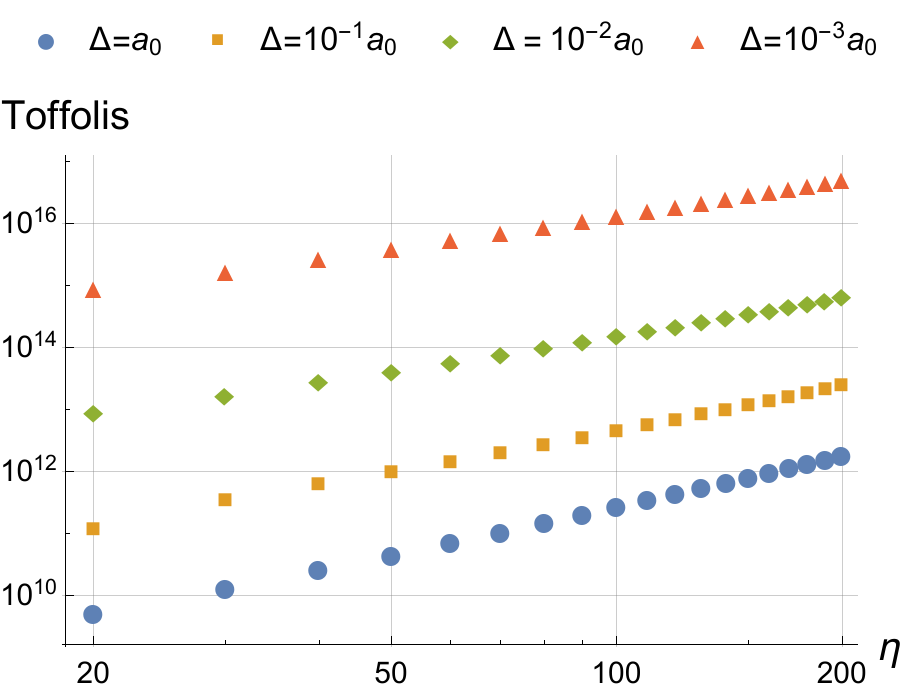}& \includegraphics[width=0.45\textwidth]{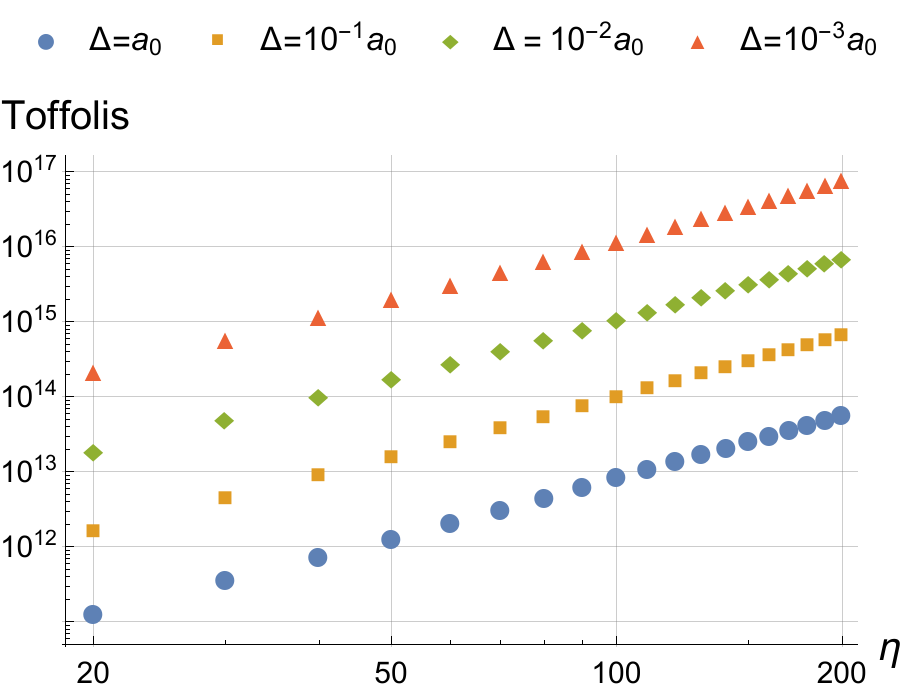}\\
	  \small (a) Qubitization Toffoli count & \small (b) Interaction picture Toffoli count \\
	  \includegraphics[width=0.45\textwidth]{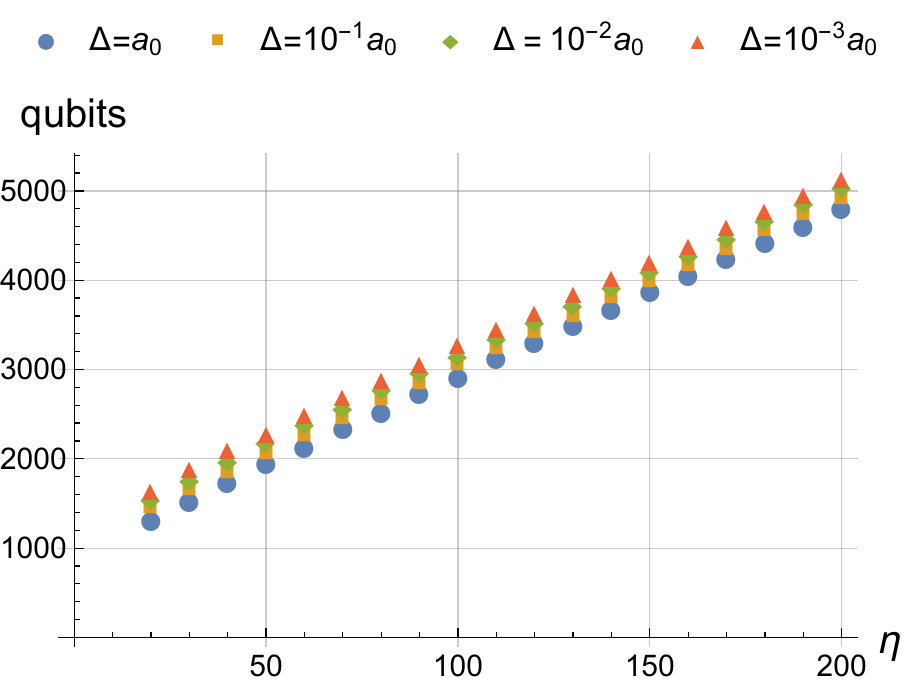}& \includegraphics[width=0.45\textwidth]{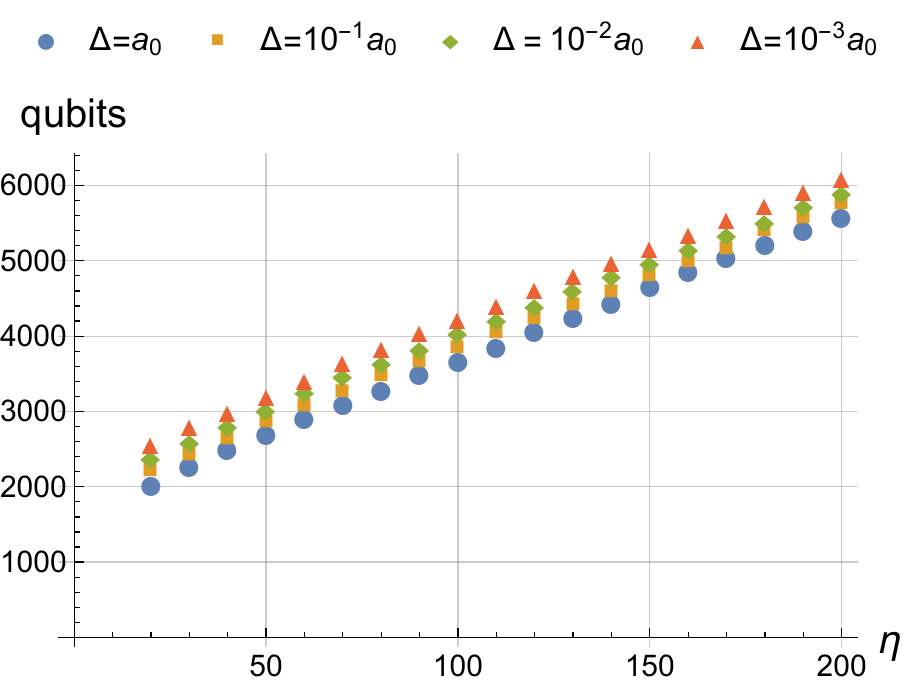}\\
	  \small (c) Qubitization logical qubit count & \small (d) Interaction picture logical qubit count
	  \end{tabular}
		\caption{Toffoli (top) and logical qubit (bottom) counts of the qubitization algorithm (left) and interaction picture algorithm (right) as a function of $\eta$. Here we take $\epsilon=0.0016$ Hartree (corresponding to ``chemical accuracy'') and $N=2^{18} \approx 2.6 \times 10^5$ but we note that outside of the dependence on $\Delta$, algorithm costs depend only logarithmically on $N$. For simplicity these plots correspond to the uniform electron gas (i.e., $U=0$) but block encoding the external potential adds only marginal overhead so we expect these costs to be broadly representative of the resources required to simulate arbitrary molecular systems as a function of $\eta$ and $\Delta$. We emphasize that $\Delta=10^{-3} a_0$ is an excessively accurate discretization that would correspond to a grid of one billion points per each Bohr radius cubed of system volume. In most cases, we expect that a resolution of $\Delta = 10^{-2} a_0$ (which is one million grid points per Bohr radius cubed) would correspond to more accurate energies than large correlation consistent Gaussian basis sets. In the bottom panels we see that the qubit counts of the two algorithms are similar.}
		\label{fig:toffolis_sigma}
	\end{figure*}
	
	\begin{figure*}[h]
		\centering
	  \begin{tabular}{c @{\qquad} c }
	  \includegraphics[width=0.45\textwidth]{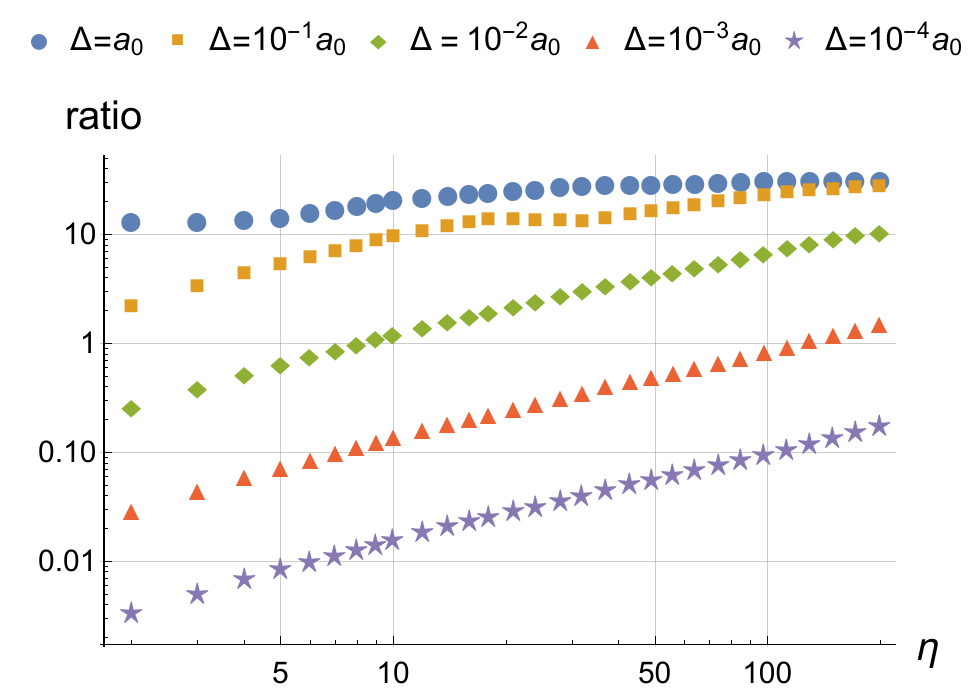} & 
	  \includegraphics[width=0.45\textwidth]{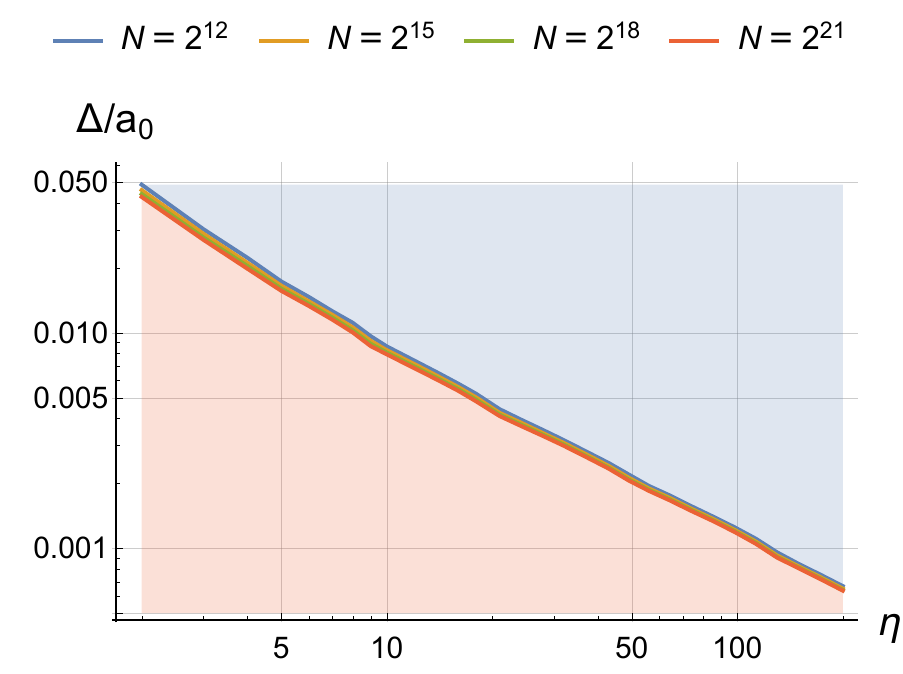} \\
	  \small (a) Ratio of Toffoli counts & \small (b) Parameter regions of advantage \\
	  \end{tabular}
		\caption{(a) The ratio of the number of Toffolis for the interaction picture approach to the number of Toffolis for the qubitization approach. We take $\epsilon=0.0016$ Hartree (corresponding to ``chemical accuracy'') and $N=2^{18} \approx 2.6 \times 10^5$. We again note that outside of the dependence on $\Delta$, algorithm costs depend only logarithmically on $N$. For simplicity these plots correspond to the uniform electron gas (i.e., $U=0$) but block encoding the external potential adds only marginal overhead so we expect these costs to be broadly representative of the resources required to simulate arbitrary molecular systems as a function of $\eta$ and $\Delta$. (b) The parameter region of $\Delta$ versus $\eta$ where there is an advantage for the Toffoli complexity for the interaction picture approach versus qubitization. The region where the interaction picture has the advantage is shown in red, and the region where qubitization has the advantage is shown in blue.
		That is, the lower left of (b) is showing the values of $\Delta$ for which the cost ratio plotted in panel (a) is less than one, as a function of $\eta$.
		In this figure we show the crossover for four different values of $N$, demonstrating that it barely changes as a function of $N$.
		In both figures we see that the interaction picture approach only has an advantage for low particle numbers and/or when simulations target unusually high precision.}
		\label{fig:toffolis_ratio}
	\end{figure*}

While the logarithmic factors in the complexity sometimes depend on $N$ without $\Omega$ (a result of the $\log N$ sized ancilla and momenta registers), in the leading order complexity we see that $N$ and $\Omega$ always appear together as a ratio. For example, assuming fixed precision (in these numerics we will target chemical accuracy, $\epsilon = 0.0016$ Hartree) we can express the leading order algorithmic gate complexities as
\begin{equation}
{\rm gates}\left({\rm qubitization}\right) = \widetilde{\cal O}\left(\frac{\eta^2}{\Delta^2}  + \frac{\eta^3}{\Delta}\right) 
\qquad
{\rm gates}\left({\rm interaction \,\, pic}\right) =\widetilde{\cal O}\left( \frac{\eta^3}{\Delta}\right) 
\qquad
\Delta = \left(\frac{\Omega}{N}\right)^{1/3}
\label{eq:resolution_complexity}
\end{equation}
where we have defined $\Delta$ as the spacing of the plane wave reciprocal lattice. Intuitively, $\Delta$ can be thought of as the length scale of resolution within our basis (e.g., wavefunction features smaller than $\Delta$ cannot be resolved but those larger than $\Delta$ can be resolved). It is easy to see that the $1/\Delta$ dependence arises from the maximum energy of the Coulomb operator between grid points separated by $\Delta$ whereas the $1/\Delta^2$ dependence arises due to the maximum energy of the kinetic operator which is proportional to the square of velocities on such a grid. If one desires to model a non-periodic system using plane waves then the typical strategy is to choose a sufficiently large box size $\Omega$ such that the periodic images of the Coulomb operator do not interact\footnote{A more efficient procedure for modeling non-periodic systems using plane waves is to use a truncated Coulomb operator with a slightly larger cell size~\cite{rozzi2006exact,sundararaman2013regularization,ismail2006truncation} or to subtract terms from the molecular multipole \cite{makov1995periodic2} or Martyna-Tuckerman corrections \cite{Tuckerman1999}. This can likely be incorporated into our approach in a straightforward fashion but would require minor modifications to the potential block encodings which we leave for future work.}. At first glace, the larger one makes the box, the cheaper our algorithms due to the inverse dependence on $\Omega$. However, as $\Omega$ gets larger the resolution $\Delta$ decreases, and thus to maintain reasonable precision one must also increase $N$. By expressing the complexity as in \eq{resolution_complexity} we see that if the goal is perform a calculation with fixed spatial resolution $\Delta$, the exact size of the box will not substantially impact the complexity.

In \fig{toffolis_sigma} we compare the complexities of the two algorithms of this paper by plotting the total Toffolis required against $\eta$ for several different values of $\Delta$. While the interaction picture algorithm has slightly better scaling in terms of $\Delta$, the constant factors are meaningfully worse.
To make the comparison between the complexities more directly, in \fig{toffolis_ratio} we show both the ratio of the Toffoli complexities and the parameter regimes where each approach is more efficient.
It would appear that the qubitization algorithm has lower cost in almost all regimes of interest. The exception would be for \emph{extremely} high resolution simulations where both $\eta$ and $\Delta$ are on the smaller side of what we would imagine requiring (there is little dependence on $N$).
For example, the interaction picture algorithm is cheaper when $\eta = 20$ and $\Delta = 10^{-3} a_0$. While such precisions might be appropriate for performing dynamics, we not anticipate that those precisions will typically be required for preparing and sampling observables from molecular eigenstates. In terms of spatial resolution, $\Delta = 10^{-3} a_0$ is quite high since it corresponds to a reciprocal lattice grid with a billion lattice points per Bohr radius cubed.
Of course, at higher values of $\eta$ or $\Delta$ the term with $\eta^3 / \Delta$ scaling dominates in the qubitization cost, which gives a scaling that is similar to that of the interaction picture algorithm but with lower constant factors. 
A priori, it was not obvious to us that the qubitization algorithm would be more efficient than the interaction picture algorithm for modest values of $\eta$. Thus, the fact that qubitization is usually more practical for chemistry problems, is an important finding of our work.

\begin{table*}[t]
\begin{tabular}{|c|c|c|c|c|c|c|}
\hline
system
& conventional cell type
& $\Omega / a_0^3$
& $\eta$ (cell total)
& $r_s / a_0$ (total)
& $\eta$ (cell valence)
& $r_s / a_0$ (valence)\\
\hline\hline
metallic lithium
& body-centered-cubic
& 284.94
& 6
& 2.25
& 2
& 3.24\\
metallic potassium
& body-centered-cubic
& 961.67
& 38
& 1.82
& 2
& 4.86\\
diamond
& diamond cubic
& 307.04
& 48
& 1.15
& 32
& 1.32\\
crystalline silicon
& diamond cubic
& 1,080.43
& 112
& 1.32
& 32
& 2.01\\
iron (II) oxide
& NaCl structure
& 539.84
& 136
& 0.98
& 52
& 1.35\\
cobalt oxide
& NaCl structure
& 522.81
& 140
& 0.96
& 60
& 1.28\\
aluminium arsenide
& zincblende
& 1,197.86
& 184
& 1.16
& 32
& 2.08\\
indium phosphide
& zincblende
& 1,364.93
& 256
& 1.08
& 32
& 2.17\\
\hline
\end{tabular}
\caption{\label{tab:wigner_seitz} Values of $\eta$, $\Omega$ and $r_s$ for a variety of common materials. Above, $a_0$ is the Bohr radius of the hydrogen atom and thus $r_s$ values are in units of Bohr and $\Omega$ values are units of ${\rm Bohr}^3$. Often one would want to perform a simulation with multiple unit cells in order to resolve correlations in between the units. Doing this would not change $r_s$ but it would increase $\eta$ proportional to the number of unit cells included. The valence $\eta$ and $r_s$ values correspond to simulations with core electrons removed. Doing this would require the use of pseudopotentials, which is beyond the scope of the current paper but likely an important next step. Note further that the unit cell volumes tabulated here correspond to conventional unit cells, rather than primitive unit cells. Using primitive unit cells would necessitate adapting our approach to reciprocal spaces beyond the cubic Bravais lattices considered here.}
\end{table*}

	\begin{figure*}[h]
		\centering
		  \begin{tabular}{c @{\qquad} c }
		  \includegraphics[width=0.45\textwidth]{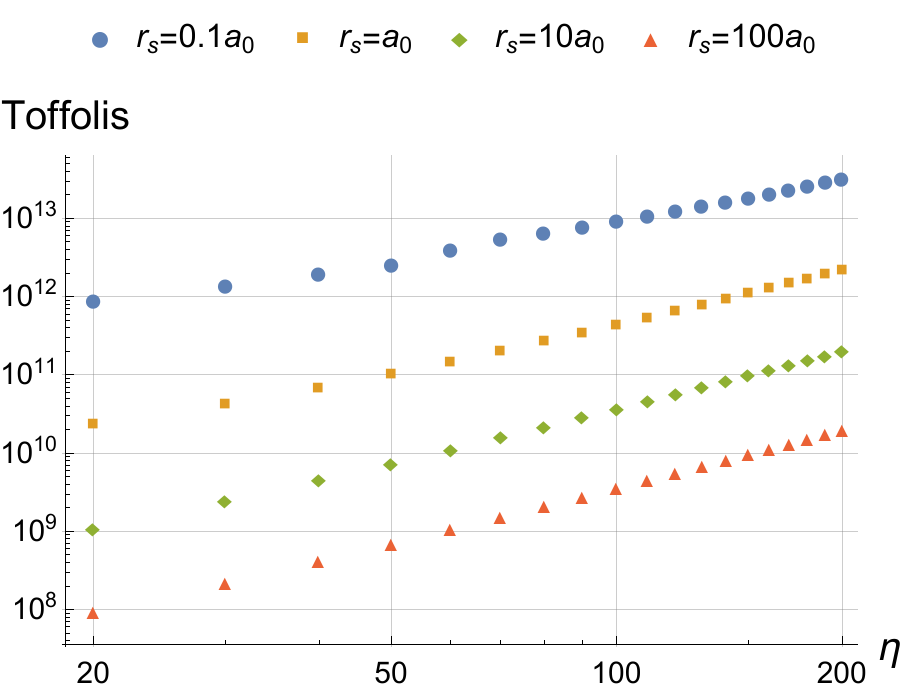} & \includegraphics[width=0.45\textwidth]{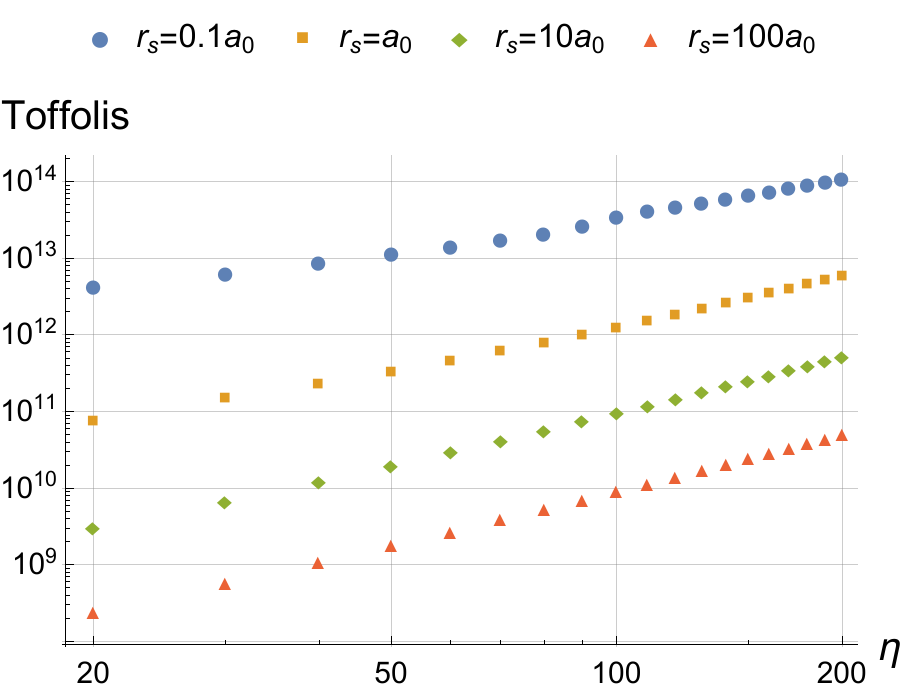} \\
		  \small (a) $N=2^{12} \approx 4.1 \times 10^3$ & \small (b) $N=2^{15} \approx 3.3 \times 10^4$ \\
		  \includegraphics[width=0.45\textwidth]{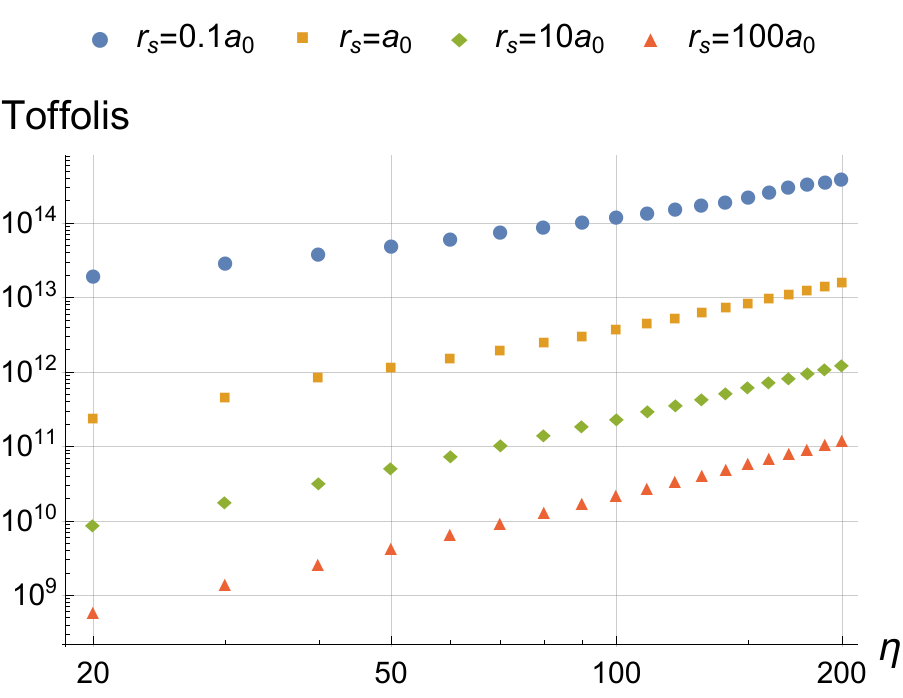} & \includegraphics[width=0.45\textwidth]{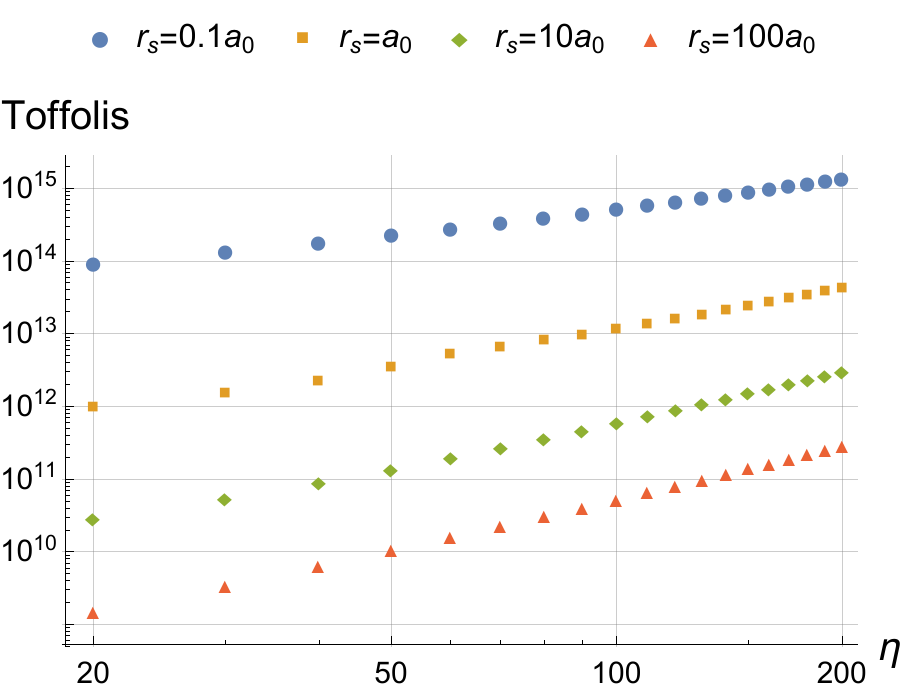} \\
		  \small (c) $N=2^{18} \approx 2.6 \times 10^5$ & \small (d) $N=2^{21} \approx 2.1 \times 10^6$
		  \end{tabular}
		\caption{$\Toffoli$ count of qubitization as a function of $\eta$ using the improved techniques of this paper. In contrast to \fig{toffolis_sigma}, we plot these costs for fixed values of the Wigner-Seitz radius $r_s$ to give a sense of the scaling when $\eta$ increases proportionally to $\Omega$ (as would be the case when growing condensed phase simulations towards their thermodynamic limit). Thus, we expect these plots to be representative of costs when simulating periodic materials. We see that the qubitization based algorithm performs better across all regimes studied to compared to the interaction picture based algorithm (\fig{toffoli_interaction}).}
		\label{fig:toffoli_qubitization}
	\end{figure*}

	\begin{figure*}[h]
		\centering
		  \begin{tabular}{c @{\qquad} c }
		  \includegraphics[width=0.45\textwidth]{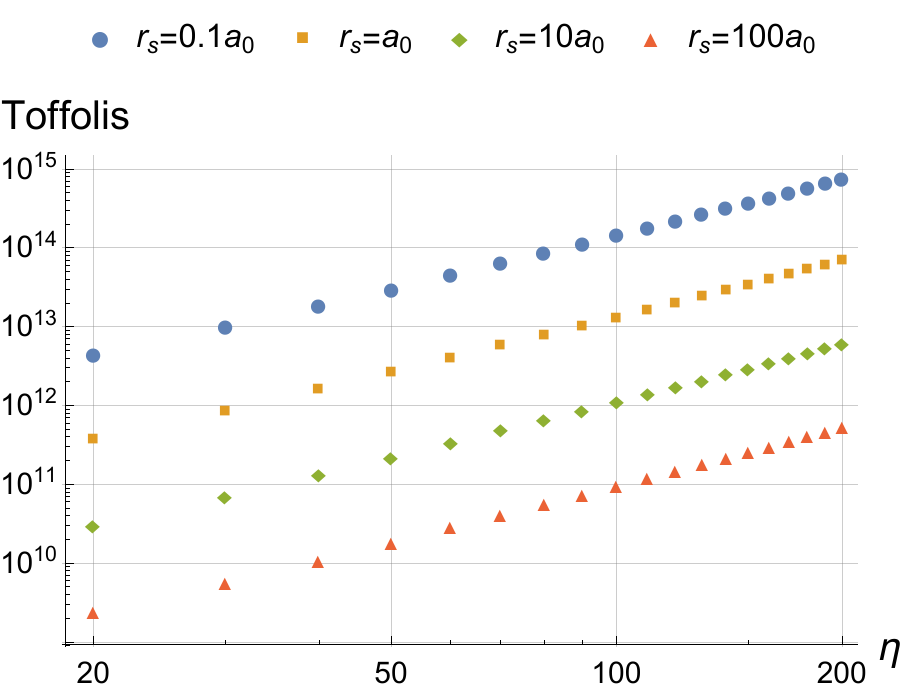} & \includegraphics[width=0.45\textwidth]{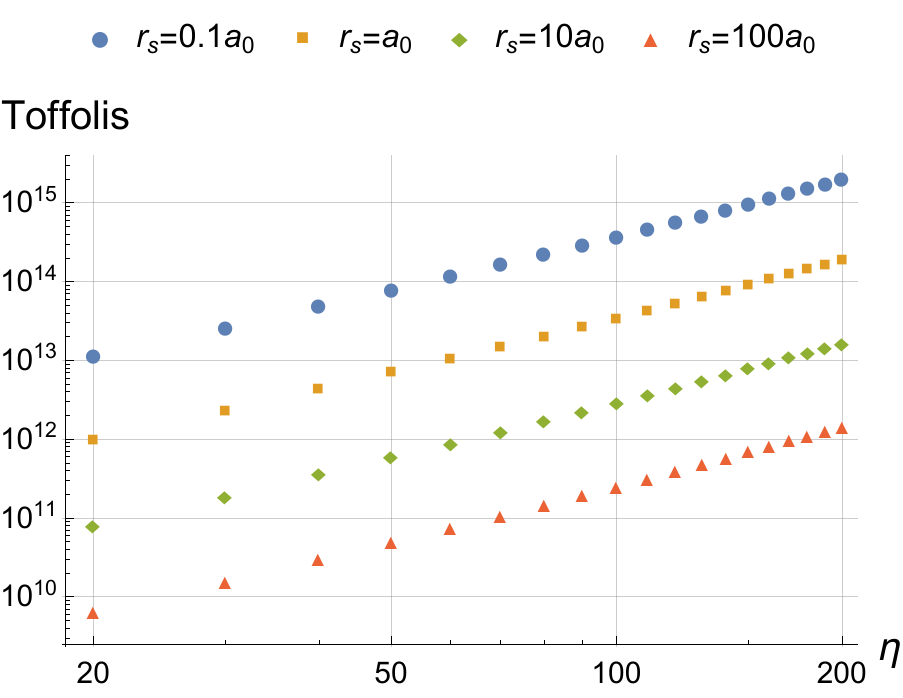} \\
		  \small (a) $N=2^{12} \approx 4.1 \times 10^3$ & \small (b) $N=2^{15} \approx 3.3 \times 10^4$ \\
		  \includegraphics[width=0.45\textwidth]{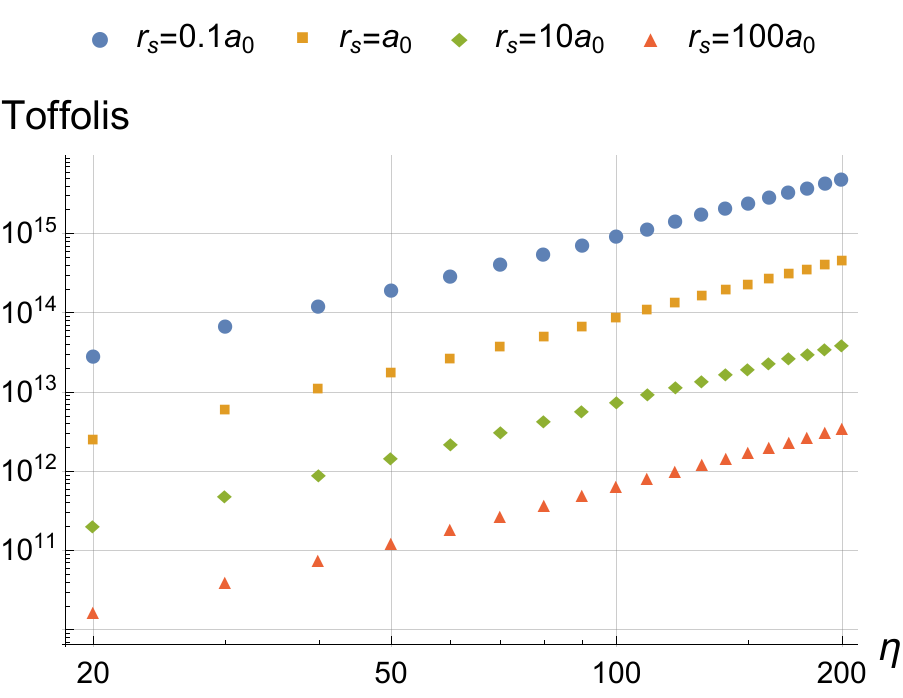} & \includegraphics[width=0.45\textwidth]{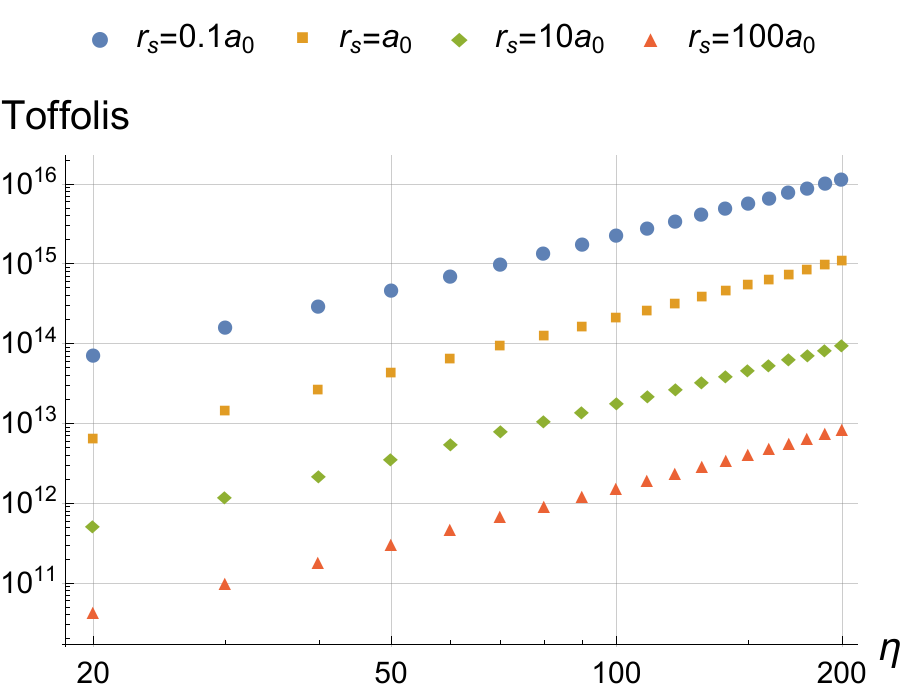} \\
		  \small (c) $N=2^{18} \approx 2.6 \times 10^5$ & \small (d) $N=2^{21} \approx 2.1 \times 10^6$
		  \end{tabular}
		\caption{$\Toffoli$ count of interaction picture algorithm as a function of $\eta$ using improved techniques. In contrast to \fig{toffolis_sigma}, we plot these costs for fixed values of the Wigner-Seitz radius $r_s$ to give a sense of the scaling when $\eta$ increases proportionally to $\Omega$ (as would be the case when growing condensed phase simulations towards their thermodynamic limit). Thus, we expect these plots to be representative of costs when simulating periodic materials. We see that the scaling depends rather weakly on the number of plane waves, $N$.}
		\label{fig:toffoli_interaction}
	\end{figure*}

Next, we turn to the topic of how our algorithm scales for the important application of simulating periodic systems such as crystals, metals, surfaces and polymers. Such systems are ubiquitous in nature and are a major focus of electronic structure methods deployed to materials science. While it is important to understand how our algorithm scales as one approaches the continuum basis set limit where $\Delta \rightarrow 0$ (as discussed in the context of \fig{toffolis_sigma}) for condensed phase applications it is also important to understand how our algorithm scales towards the thermodynamic limit. That is, the computational cell size should be larger than the correlation length in order for the simulation to faithfully capture the behavior of the infinite system.  The computational cell will be composed of multiple unit cells (the simplest repeating unit in the crystalline lattice). For such simulations, one can imagine adding more unit cells to the computational cell but keeping the overall density of particles (i.e., $\eta / \Omega$) fixed. Often, behavior in the thermodynamic limit is extrapolated from a series of progressively larger calculations. In order to discuss simulations with fixed $\eta / \Omega$ in more intuitive terms, we will analyze scaling in terms of the Wigner-Seitz radius, $r_s$ (a common measure of a system's average radius between electrons), defined through the relation
\begin{equation}
\label{eq:wigner_seitz}
\frac{4}{3} \pi r_s^3 = \frac{\Omega}{\eta}, \qquad \qquad r_s = \left(\frac{3 \Omega}{4 \pi \eta}\right)^{1/3} \, .
\end{equation}
We report Wigner-Seitz radii for several common materials in \tab{wigner_seitz} and note that they are mostly between one and a few Bohr radii. In terms of Wigner-Seitz radii we can express the overall Toffoli complexities as
\begin{equation}
    {\rm Toffolis}\left({\rm qubitization}\right) \! = \widetilde{\cal O}\!\left(\frac{\eta^{4/3} N^{2/3}}{r_s^2}  + \frac{\eta^{8/3} N^{1/3}}{r_s}\right) 
\qquad 
{\rm Toffolis}\left({\rm interaction \,\, pic}\right) \! =\widetilde{\cal O}\!\left(\frac{\eta^{8/3} N^{1/3}}{r_s}\right) \, .
\qquad 
\end{equation}

In \fig{toffoli_qubitization} and \fig{toffoli_interaction} we plot the estimated resources required to perform condensed phase simulations for the qubitization and interaction picture algorithms, respectively. We make these plots between $\eta=20$ and $\eta=200$, between $N=2^{12}$ and $N=2^{21}$ and between $r_s = 0.1$ and $r_s=100$. We observe that the qubitization algorithm is most practical due to lower constant factors across all regimes plotted. As before, in the limit of large $\eta$, the qubitization algorithm also recovers the better scaling of the interaction picture algorithm (but with better constant factors). By comparing \fig{toffoli_qubitization} with the requirements to simulate various materials in \tab{wigner_seitz} we see that our method can simulate classically intractable real material Hamiltonians with between $10^{10}$ and $10^{12}$ Toffoli gates. 
These numbers of Toffoli gates and qubits are comparable to those (for a different system) in \cite{Lee2020}, where estimates are given of the resources needed to implement the algorithm on a real device.
If we adopt the same assumptions here, including a surface code cycle time of 1 \micro s, our proposed quantum algorithms would require on the order of a few million physical qubits and a few days of runtime.
No prior quantum algorithm for chemistry has been shown to require anywhere near this few resources to simulate realistic descriptions of condensed phase molecular systems.

We now address the question of how the more practical of the two algorithms described in this work (the qubitization one) compares to other methods for quantum simulating chemistry. To date, the only prior methods that have been assessed for viability within fault-tolerance have used second quantization. These papers fall into two categories: those that focus on plane waves (see \tab{plane_wave_comparison}) and those that focus on Gaussian orbitals. The plane wave algorithms assessed for fault-tolerance are by Babbush \emph{et al.}~\cite{BGBWMPFN18} and Kivlichan \emph{et al.}~\cite{Kivlichan2020improvedfault}. Because the present work also uses plane waves it is straightforward to make a comparison. As discussed in \sec{problem_definition}, for simulating molecules these prior papers using plane waves are not particularly practical options because at least tens or hundreds of thousands of plane waves would be required for reasonably accurate simulations and those works have a space complexity scaling linearly in $N$. Those algorithms also have significantly worse Toffoli complexity, scaling roughly like $\widetilde{\cal O}(N^3)$. 

The main motivation for those past papers using plane waves in second quantization is essentially a different regime than that of realistic simulations of materials. Those papers emphasize their resource estimates in terms of the first classically intractable applications of very small fault-tolerant processors (perhaps only a few hundred thousand physical qubits). In this sense, the goal is to articulate example problems on which ``quantum supremacy'' \cite{Preskill2012b,Boixo2016} might be first demonstrated for quantum chemistry in an error-corrected context, rather than to highlight applications corresponding to realistic simulations of nature. For example, at small $N$ and high filling fraction $\eta / N$, there are a variety of model systems that are not converged to the continuum limit but provide examples of electronic structure that would be classically intractable to simulate within the basis. The value of such simulations for the jellium model (the homogeneous electron gas that occurs when $U=0$) near $r_s = 10 a_0$, is argued in Section III of \cite{BabbushLow}. As a result, this model has been used as a benchmark in several subsequent papers. For example, the largest jellium simulation reported in Table III of \cite{BGBWMPFN18} would require $2.1 \times 10^{10}$ Toffoli gates\footnote{Many of the prior works we compare to, including Kim \emph{et al.}~\cite{Kim2021b}, Kivlichan \emph{et al.}~\cite{Kivlichan2020improvedfault} and Babbush \emph{et al.}~\cite{BGBWMPFN18}, report the number of T gates required rather than the number of Toffoli gates required. In order to facilitate a clear comparison we assume that the cost of a Toffoli gate is equal to the cost of 2 T gates. This equivalence is justified by the surface code spacetime volume required for a Toffoli gate compared to a T gate using the catalyzed T gate distillation scheme of \cite{Gidney2019c}.} and $1,\!136$ logical qubits when $r_s = 10 a_0$. This simulation would involve just $1,\!024$ plane wave spin-orbitals (equivalent here to $N=512$). The largest $r_s = 10 a_0$ jellium system analyzed in Table II of Kivlichan \emph{et al.}~\cite{Kivlichan2020improvedfault} involves only $N=125$ plane waves but would require $378$ logical qubits and $2.8 \times 10^9$ Toffolis. Because those simulations occur at half-filling ($\eta = N$), they are likely more efficient than the methods of this paper at those system sizes. However, if we instead target a more physically accurate simulation (with less basis set discretization error) where $N \geq 10^3$ and $\eta / N \leq 1/10$, we would expect the first quantized approach of this paper to require far fewer error-corrected resources than these past algorithms utilizing second quantization.

To compare the the complexities reported here, the smallest value of $N$ we consider here is for $N=2^{12}$, which is $8$ times larger than in \cite{BGBWMPFN18}.
The $N^3$ scaling of the second quantization approach would mean that the complexity is about $10^{13}$ Toffolis, nearly 100 times larger than the largest complexity we have for $r_s=10 a_0$ for the first quantized qubitization approach.
For the largest value of $N$ considered of $2^{21}$, the $N^3$ scaling would indicate a complexity of over $10^{21}$ Toffolis, around a billion times larger than the complexity of the first quantized approach here.

Unlike second quantized plane wave methods, second quantized algorithms using Gaussian basis functions seem promising options for practical quantum simulations of realistic molecular Hamiltonians. There are now a handful of papers presenting in depth analysis of new fault-tolerant algorithms \cite{Reiher2017,Berry2019b,vonBurg2020Catalysis,Lee2020} in this representation. Dating back to the first of these papers by Reiher \emph{et al.}~\cite{Reiher2017}, these papers have demonstrated algorithmic improvements by estimating the resources required to simulate a few benchmark molecules such as the FeMoCo molecule relevant to biological nitrogen fixation. The work by Lee \emph{et al.}~\cite{Lee2020} holds the record for the Gaussian orbital based algorithms with both the lowest constant factor and lowest asymptotic Toffoli scaling. We emphasize that all of these papers have focused on simulating molecular systems in active spaces, and none of them have developed methods for simulating periodic materials. As a result it is difficult to directly compare the algorithms developed here to those prior papers.

Active space Hamiltonians are coarse grained versions of larger molecular Hamiltonians, designed to capture the physics of a subset of the total electronic degrees of freedom. Often the active space electrons are the valence electrons involved in the breaking or formation of chemical bonds. For example, the FeMoCo molecule (${\rm Fe}_7 {\rm Mo} {\rm S}_9 {\rm C}$) is a large system with 374 electrons; however, the active spaces studied by these papers require only 54 \cite{Reiher2017} or 113 \cite{Li2019_b} electrons. A variety of methods are available for computing active space Hamiltonians~\cite{stein2019autocas, sayfutyarova2017automated, keller2015selection, bao2019automatic, hermes2020variational} that are compatible with molecular orbital based quantum algorithms. Most such approaches involve removing from the basis molecular orbitals that are expected to be completely occupied in the ground state and adding the potential arising from the charge density of those removed electrons to $U$. Unfortunately, such methods are not immediately compatible with the algorithms developed in this paper. This is because the molecular orbitals occupied by the core electrons are not orthogonal to the rest of the plane wave basis and thus cannot be easily removed from the basis.

However, there is a way to make similar approximations for plane wave methods. In particular, pseudopotentials \cite{tosoni2007comparison,booth2016plane} provide a well studied methodology for removing core electrons from plane wave calculations by modifying the external potential to account for the core. In addition to reducing the effective system size $\eta$, pseudopotentials dramatically accelerate the convergence of the basis error. The reason for this is because the cusps in electronic wavefunction which occur at the nuclei are smoothed out by pseudpotentials. Because the Fourier transform of smooth functions is exponentially convergent, the plane wave basis can then converge dramatically faster. Thus, we suggest that the use of pseudopotentials~\cite{Kleinam1982, dal2014pseudopotentials} is likely to be essential for the practical deployment of these methods. However, as the present paper is already rather long, we leave the topic of removing core electrons via pseudopotentials to future work. Instead, we will compare our algorithms to Gaussian orbital resource estimates that do not use active spaces.

\begin{table*}[h]
\begin{tabular}{|c|c|c|c|c|c|c|c|}
\hline
reference
& system
& basis
& $\Omega$
& $\eta$
& $N$
& logical qubits
& Toffoli count\\
\hline\hline
Kim \emph{et al.}~\cite{Kim2021b}
& ethylene carbonate
& STO-3G
& N/A
& 46
& 34
& 2,685
& $3.2 \times 10^{10}$\\
Kim \emph{et al.}~\cite{Kim2021b}
& ethylene carbonate
& 6-311G
& N/A
& 46
& 90
& 14,492
& $8.5 \times 10^{11}$\\
Kim \emph{et al.}~\cite{Kim2021b}
& ethylene carbonate
& cc-pVDZ
& N/A
& 46
& 104
& 16,698
& $1.3 \times 10^{12}$\\
Kim \emph{et al.}~\cite{Kim2021b}
& ethylene carbonate
& cc-pVTZ
& N/A
& 46
& 236
& 81,958
& $3.1 \times 10^{13}$\\
This work
& ethylene carbonate
& plane waves
& $10^5 a_0^3$
& 46
& $4.2 \times 10^3$
& 1,395
& $2.5\times 10^{10}$ \\
This work
& ethylene carbonate
& plane waves
& $10^5 a_0^3$
& 46
& $3.3 \times 10^4$
& 1,701
& $6.6\times 10^{10}$ \\
This work
& ethylene carbonate
& plane waves
& $10^5 a_0^3$
& 46
& $2.6 \times 10^5$
& 2,021
& $1.7\times 10^{11}$ \\
This work
& ethylene carbonate
& plane waves
& $10^5 a_0^3$
& 46
& $2.1 \times 10^6$
& 2,355
& $4.2\times 10^{11}$ \\
\hline
Kim \emph{et al.}~\cite{Kim2021b}
& ${\rm LiPF}_6$
& STO-6G
& N/A
& 72
& 44
& 3,507
& $1.0 \times 10^{11}$\\
Kim \emph{et al.}~\cite{Kim2021b}
& ${\rm LiPF}_6$
& 6-311G
& N/A
& 72
& 112
& 18,423
& $1.9 \times 10^{12}$\\
Kim \emph{et al.}~\cite{Kim2021b}
& ${\rm LiPF}_6$
& cc-pVDZ
& N/A
& 72
& 116
& 18,600
& $1.7 \times 10^{12}$\\
Kim \emph{et al.}~\cite{Kim2021b}
& ${\rm LiPF}_6$
& cc-pVTZ
& N/A
& 72
& 244
& 84,721
& $3.3 \times 10^{13}$\\
This work
& ${\rm LiPF}_6$
& plane waves
& $10^5 a_0^3$
& 72
& $4.1 \times 10^3$
& 1,758
& $8.0\times 10^{10}$ \\
This work
& ${\rm LiPF}_6$
& plane waves
& $10^5 a_0^3$
& 72
& $3.3 \times 10^4$
& 2,150
& $2.1\times 10^{11}$ \\
This work
& ${\rm LiPF}_6$
& plane waves
& $10^5 a_0^3$
& 72
& $2.6 \times 10^5$
& 2,556
& $5.1\times 10^{11}$ \\
This work
& ${\rm LiPF}_6$
& plane waves
& $10^5 a_0^3$
& 72
& $2.1 \times 10^6$
& 2,976
& $1.3\times 10^{12}$ \\
\hline
\end{tabular}
\caption{\label{tab:kim_comparison} Table comparing the cost of the qubitization algorithm of this paper to the cost of a state-of-the-art second quantized Gaussian basis algorithm recently analyzed by Kim \emph{et al.}~\cite{Kim2021b}. The work of Kim \emph{et al.}~estimates the resources required to simulate systems, which despite being small single molecules, are presented as relevant to the chemistry of lithium batteries. Their estimates are a natural point of comparison because they do not employ active spaces, unlike the focus of most past work using molecular orbitals \cite{Reiher2017,Berry2019b,vonBurg2020Catalysis,Lee2020,Elfving2020b}. We focus on two representative molecules from that work: ethylene carbonate and ${\rm LiPF}_6$, discretized using four different Gaussian orbital basis sets. We note that the algorithm of Lee \emph{et al.}~\cite{Lee2020} has been shown to scale better than the algorithm of von Burg \emph{et al.}~\cite{vonBurg2020Catalysis}, on which the work by Kim \emph{et al.}~is based, but all resource estimates using that approach involve active spaces. We expect that in most contexts, if one uses a number of plane waves that is more than a thousand times the number of orbitals in a given Gaussian basis set, then one can expect similar or higher precision from the larger plane wave calculation, even for small non-periodic systems like the ones analyzed here. This would suggest that for $N=2.6 \times 10^5$ plane waves, our algorithms may achieve higher precision than even the correlation consistent triple zeta (cc-pVTZ) basis set calculations studied by Kim \emph{et al.}~\cite{Kim2021b}. Thus, the results in this table may suggest that our algorithms requires significantly less surface code spacetime volume to reach similar levels of precision compared to some of the most performant Gaussian based calculations.}
\end{table*}

A recent paper by Kim \emph{et al.}~\cite{Kim2021b} provides resource estimates based on the algorithm of von Burg \emph{et al.}~\cite{vonBurg2020Catalysis} (with modifications from Lee \emph{et al.}~\cite{Lee2020}) that happens to not use active spaces, thus enabling a comparison to the present work which we provide in \tab{kim_comparison}. However, it is still difficult to know precisely how many plane waves would be required to match the accuracy of a given Gaussian orbital calculation. While both suppress basis error as ${\cal O}(1/N)$, even if a given Gaussian orbital calculation is more accurate than a given plane wave calculation at fixed $N$, plane waves tend to converge more systematically than Gaussian orbitals to the continuum limit. As a result, one can more reliably predict the complete basis set limit energy from extrapolations made from a series of calculations in progressively larger basis sets when using plane waves.

Appendix E of \cite{BabbushLow} attempts to give an idea of the relative prefactors between the number of plane waves and Gaussians required to obtain similar levels of precision. While the precise ratio of plane waves to Gaussians will depend on system specific details, there it is suggested that periodic calculations with roughly $20\, N$ plane waves will often give higher precision than the same calculations with $N$ Gaussian orbitals. It is more difficult to simulate non-periodic systems like single molecules with plane waves and reducing this periodicity requires yet more plane waves by about a factor of $8$ \cite{fusti2002accurateb,BabbushLow}. Achieving this convergence with plane waves may require softening nuclear cusps with pseudopotentials. Moreover, when targeting extremely high precision, the ratio of the required number of plane waves to Gaussians should decrease because the challenge becomes resolving the electron-electron cusp. Since the electron-electron cusp occurs at all points in space, Gaussians centered on nuclei have no special advantages for resolving it. The fact that Gaussians derive their advantage from being centered on nuclei also means that as the number of atoms in a fixed volume increases, the number of Gaussian orbitals to represent the system to fixed accuracy must also increase roughly proportionally. On the other hand, plane waves are not biased towards representing any particular nuclear locations; thus, as the number of atoms per unit volume is increased, the ratio of the number of plane waves to the number of Gaussians required for the same accuracy should decrease.

Considering all of these factors, \tab{kim_comparison} suggests that our qubitized first quantized plane wave algorithm might be more performant than the best Gaussian orbital basis algorithms applied to even small non-periodic molecules. For example, when comparing our approach with $N=2.6 \times 10^5$ plane waves to the approach of Kim \emph{et al.}~\cite{Kim2021b} using a triple zeta basis set for which $N$ is between 236 and 244 for the two systems included in \tab{kim_comparison}, see that we are using more than a thousand times the number of plane waves as Gaussian orbitals but that the first quantized plane wave approach needs several dozen times fewer logical qubits and about a hundred times fewer Toffoli gates (amounting to approximately three orders of magnitude less surface code spacetime volume). Furthermore, it is clear that if the objective is significantly lower basis set error (e.g., if spectroscopic precision of $10^{-6}$ Hartree is desired) or if the goal is to simulate periodic systems such as crystals, metals, polymers, or surfaces, then our first quantized qubitization approach is dramatically more practical than other fault-tolerant algorithms described in the literature. Thus, our analysis has shown that these methods are an important and state-of-the-art tool for the quantum simulation of chemistry.

\subsection{Conclusion and outlook}

    We have completed the first constant factor analysis of the resources required to implement first quantized quantum simulations of quantum chemistry. Unsurprisingly, these methods appear to be the most promising approach for simulating realistic materials Hamiltonians within quantum error-correction. More surprisingly, they are also highly competitive with the best second quantized approaches for simulating single molecules (often requiring fewer surface code resources by multiple orders of magnitude).  We focused on optimizing and compiling the first quantized algorithms using a plane wave basis first described by Babbush \emph{et al.}~\cite{BabbushContinuum_b}. Despite the interaction picture of that work having better scaling, we found that the qubitization algorithm of that work tends to require fewer resources for the regimes of interest to us.
    
    Our work goes significantly beyond just translating the prior asymptotic algorithm descriptions of Babbush \emph{et al.}~\cite{BabbushContinuum_b} to concrete circuits, even though just doing that would have still required a similarly detailed manuscript. We also introduce many improvements to the implementation of both the qubitized form and the interaction picture simulation that reduce the complexity. The innovations we introduce that lead to the most significant reduction in algorithmic complexity are summarized below.
    \begin{itemize}[leftmargin=*]
        \item In the case of qubitizing the Hamiltonian, we entirely eliminate the cost of the arithmetic for the kinetic energy term $T$ (see \sec{block_encode_t}).
        Instead of computing the sum of squares of momenta, control registers are used to select the momenta as well as the individual bits to multiply.
        The only arithmetic is the multiplication of bits, which is performed with a single Toffoli.
        This is a significant improvement, because multiplication would otherwise be the main cost.
        \item For the interaction picture simulations we greatly reduce the cost of computing the total kinetic energy (see \sec{kinetic_exp}). 
        Instead of computing it at each step, we compute the sum of squares of momenta at the beginning.
        Then, each time we block encode the potential operator, which involves updating just two momenta, we also update the register with the total momentum by adding the change in momentum.
        \item In both cases it is necessary to prepare a state with amplitudes $\propto 1/\|\nu\|$, where $\nu$ is a 3-dimensional vector of integers.
        This preparation normally involves amplitude amplification, which triples the cost of state preparation~\cite{Brassard2002}.
        For qubitizing the Hamiltonian, rather than using amplitude amplification, we can use the insight that this state preparation is only needed for the potential term, but not for the kinetic term.
        As a result, instead of using amplitude amplification, we can implement the kinetic term in branches of the quantum state that would otherwise correspond to failure (see \app{selTUV}).
        That reduces the cost of this term by a factor of 3.
        \item In the case of the interaction picture one would normally require amplitude amplification on the Dyson series in order to implement a step of the time evolution deterministically, which again triples the cost.
        Instead, we can qubitize a single step of the Dyson series (see \sec{choice}).
        This does not deterministically implement time evolution, but gives a step with eigenvalues related to those of the Hamiltonian.
        That eliminates one part where amplitude amplification was required, but in that case we still need to use amplitude amplification on the preparation of the state with amplitudes $\propto 1/\|\nu\|$.
        \item Rather than performing arithmetic in place on the momentum registers, we swap the correct momentum registers into ancilla registers controlled on ancillae that select the registers (see \sec{SelectUV} and \sec{sel_prep}).
        That means the arithmetic only needs to be performed once in those ancillae, rather than for every momentum register.
        \item For the discretization of the integrals, we use times in the centers of the intervals, which results in the error being higher order (see \app{timedisc}).
        With a careful analysis of the error due to the time discretization, the number of bits needed to represent the time is about half of that using the methods from prior work. 
    \end{itemize}
    As a result of these improvements, in cases where there are a large number of electrons the majority of the cost in each step is just selecting the momentum registers.
    There is an overhead in the interaction picture approach corresponding to the order of the Dyson series.    That means that to provide an improvement via the interaction picture, there should be a ratio between the kinetic and potential energy that is at least as large as the order of the Dyson series.
    That ratio is large in cases where the number of electrons is small, but in those cases the cost of preparing the state with amplitudes $1/\|\nu\|$ tends to be dominant.
    In those cases the direct qubitization approach has the additional advantage that one can avoid the amplitude amplification.
    That means the ratio needs to be even larger before there is an advantage due to using the interaction picture.
    We find that there is a minor improvement via the interaction picture with only $\eta=20$ electrons and $N=2^{21}\approx 2 \times 10^6$ plane waves, but as the number of electrons grows this advantage diminishes.
    
    We have demonstrated that our first quantized qubitization algorithm compares favorably to all prior fault-tolerant algorithms for chemistry for both periodic, and non-periodic systems. However, there are several obvious next steps that should be pursued in order to further refine and generalize our algorithms. While we have alluded to all of these ideas in the main text, we also summarize them below.
    \begin{itemize}[leftmargin=*]
        \item We should modify the algorithm for block encoding $U$ to include various types of accurate pseudopotentials. Using pseudopotentials will allow us to remove core electrons from the simulation in order to only treat electronic degrees of freedom that are involved in chemical reactivity and bonding, thus effectively reducing $\eta$ while increasing $r_s$. Furthermore, the screened potentials resulting from removing core electrons (even if only 1s electrons are removed) are smooth functions. With sharp nuclear cusps removed from the Coulomb potential (and thus, the wavefunction), the plane wave basis will more rapidly converge to the continuum limit. It will be straightforward to make this modification for spherically symmetric pseudopotentials such as those that remove just the 1s electrons, but it will be more involved to adapt the block encodings to more precise pseudopotentials for higher angular momenta electrons (e.g., using non-local pseudopotentials). Including pseudopotentials will also facilitate a more straightforward comparison to Gaussian basis approaches using active spaces.
        
        \item In \eq{G} of this work we define the reciprocal lattice associated with our plane wave mesh on a cubic lattice. However, for systems that lack a cubic conventional unit cell (e.g., those with hexagonal symmetry), such lattices cannot be used. Instead, we would need to define the reciprocal lattice using non-orthogonal Bravais vectors. This would change the definition of $k_p$ from what is given in \eq{G} to
\begin{equation}
\label{eq:non_orthogonal}
k_p = \frac{2 \pi}{\Omega^{1/3}}\left(p_1 g^{(1)} + p_2 g^{(2)} + p_3 g^{(3)}\right) \qquad \qquad p  = \left(p_1, p_2, p_3\right) \in G \, ,
\end{equation}
where $G$ is defined in \eq{G} and $g^{(1)}$, $g^{(2)}$ and $g^{(3)}$ are the three dimensional non-orthogonal unit vectors of the Bravais lattice defined by the \emph{primitive} unit cell geometry. Even for systems with cubic conventional unit cells, there are likely convergence advantages associated with being able to represent non-cubic primitive unit cells. However, using this more general Bravais lattice will also come at a cost since it will make computing $\|k_p\|^2$ (a bottleneck of our approach) more costly. Further work should explore optimizations of these dot products in order to generalize our methods while retaining high efficiency.
\item Related to the above point, there are different ways to converge the system physics towards its thermodynamic limit. The approach explored here involves growing the system in real space (i.e., by adding more unit cells to the computational cell). However, one can also grow the system towards the thermodynamic limit by evaluating an integral over the Brillouin zone in a fashion referred to as ``k-point sampling'' \cite{MartinES2004}. These schemes make use of generalized Bravais lattices such as the Monkhorst-Pack mesh, which allows one to represent systems of arbitrary symmetry while distributing the reciprocal lattice vector points homogeneously in the Brillouin zone \cite{MartinES2004}. Thus far we have only described how to perform so-called ``$\Gamma$-point'' calculations. Future work should investigate the benefits of k-point sampling in the context of these algorithms and, if advantageous to do so, make the relatively simple modification of the algorithms required to support sampling outside of the $\Gamma$-point. 
        \item We have discussed that the methods of this paper can be used to simulate non-periodic systems despite the underlying periodicity of plane waves. In particular, this can be accomplish by using a sufficiently large cell volume $\Omega$ so that periodic images do not interact. However, there are more efficient ways of preventing these periodic images from interacting. For example, one can use a range truncated Coulomb operator~\cite{rozzi2006exact,sundararaman2013regularization,ismail2006truncation}. For an isolated molecule, the total electronic density decays exponentially quickly away from its center, and thus the molecule can be represented in a box of volume $\Omega = D^3$ with only exponentially small parts of its true wavefunction density outside of the box. By using a Coulomb kernel truncated at distance $D$~\cite{fusti2002accurateb}, such that
\begin{equation}
V\left(r, r'\right) = \begin{cases} \frac{1}{\left | r - r' \right |} & \left | r - r' \right | \leq D \\
0  & \left | r - r' \right | > D,
\end{cases}
\end{equation}
and by carrying out the simulation in a box of size $8\, \Omega = (2\,D)^3$, we ensure that there is no Coulomb interaction at all between the repeated images of the molecule, up to exponentially small terms in $\Omega$ arising from the density of the molecule outside of the box. While the Fourier amplitudes of the normal Coulomb kernel are $4 \pi / k^2$, the Fourier amplitudes of the truncated Coulomb interaction shown above are $4 \pi (1 - \cos(|k|D)) / k^2$. This would modify the potential operator coefficients of \eq{plane_wave_integrals} to be
\begin{equation}
    U_{pq} \!= \sum_{\ell=1}^{L}  \frac{4\pi \left(1\! -\! \cos\left(\left \|k_{p-q} \right \|D \right)\right) \zeta_\ell  e^{i k_{q-p}\cdot R_\ell}}{\Omega \norm{k_{p-q}}^2}
    \qquad 
    V_{pqrs}^{(\alpha,\beta)} \! = \delta_{p-q,r-s}  \frac{4\pi \left(1\! - \!\cos\left(\left \|k_{p-q} \right \|D \right)\right) \zeta_\alpha \zeta_\beta}{\Omega \left\| k_{p-q}\right\|^2} \, .
\end{equation}
Thus, future work focused on simulating systems of reduced periodicity should adapt the algorithms presented here to block encode operators with the above modification to the coefficients.
        \item This paper has provided first quantized algorithms for preparing molecular eigenstates given an initial state with sufficient overlap on an eigenstate of interest. However, methods for state preparation in first quantization are under-developed compared to their second quantized counterparts. The work of \cite{Berry2018} provided an algorithm for symmetrizing or antisymmetrizing initial states with complexity ${\cal O}(\eta \log N \log \eta)$, but another common step in state preparation would be to initialize an arbitrary Slater determinant. Such states are described by ${\cal O}(\eta N)$ coefficients, and in second quantization can be prepared with gate complexity of $\widetilde{\cal O}(\eta N)$ \cite{Kivlichan2017}. We would expect that this is also a lower bound on the gate complexity required in first quantization. It would not be ideal for such state preparations to need to scale linearly in $N$. However, one can choose to involve fewer plane waves in the initial state preparation and so long as the number of plane waves taken is less than $M = \widetilde{\cal O}(\eta^{1/3}N^{2/3}/\epsilon + \eta^{5/3} N^{1/3}/\epsilon)$ then state preparation will require a number of gates that is less than the ${\cal O}(\eta M)$ of our qubitization algorithm, and thus be a negligible additive cost (we also expect lower constant factors). But the most straightforward translation of this second quantized algorithm into first quantization gives an approach scaling as $\widetilde{\cal O}(\eta^2 M)$, which would require taking $M = \widetilde{\cal O}(N^{2/3}/(\eta^{2/3} \epsilon) + \eta^{2/3} N^{1/3}/\epsilon)$ in order to remain an additive negligible cost. This is likely fine for some applications but future work should investigate whether this scaling can be improved. Likewise, the topic of state preparation beyond single Slater determinants (even in second quantization) has often been ignored by papers providing error-corrected cost estimates for chemistry. However, when  complex initial states are required (e.g., for challenging systems like FeMoCo), preparing these states may introduce a non-negligible cost that should be taken into account more carefully.
        \item The qubitization algorithm of this work can be used to perform time-evolution, as opposed to used for preparing eigenstates for phase estimation, with the very minimal additional overhead (see \eq{signals}) of quantum signal processing \cite{Low2016}. We expect that time-evolution of the non-Born-Oppenheimer Hamiltonian would be especially useful for computing chemical reaction rates \cite{Lidar1999,Kassal2008}. There would also need to be minor modifications to the block encoding algorithms of this paper to simulate the nuclear kinetic operators and explicit nuclei interaction terms appearing in \eq{non_bo_ham}. Future work should analyze the constant factors of such modifications in the context of a specific chemical dynamics problem and estimate the resources required to solve that problem within fault-tolerance. This would constitute a fundamentally new type of application assessed for viability on an error-corrected quantum processor. Furthermore, we note that our interaction picture technique is trivially compatible with state preparation methods based on adiabatic state preparation \cite{BabbushSR14} as we can easily use that approach to perform time-dependent Hamiltonian evolutions. Such methods might prove important for preparing complex initial states when single Slater determinants fail to have sufficient overlap with the eigenstates of interest. However, more work is required to figure out how to best use the qubitization based approach in the context of adiabatic state preparation.
        \item Finally, in \app{real_space} we asymptotically describe how one can instead simulate a real space version of these algorithms using the grid-like representation employed by Kassal \emph{et al.}~\cite{Kassal2008}, but employing more modern quantum simulation methods. We introduce algorithms for block encoding $U$ and $V$ in this representation that lead to qubitization and interaction picture algorithms with asymptotically the same complexity as the main algorithms explored in this work. However, the constant factors are different and the non-periodic nature of that representation may have advantages when simulating non-periodic systems. Thus, we propose that future work should more thoroughly explore this simulation strategy and compare the constant factors associated with it to those of the techniques in this paper.
    \end{itemize}

    In conclusion, there are still many aspects of these algorithms (and their applications) to explore. Nonetheless, this first study of fault-tolerant overheads of first quantized quantum simulations of chemistry suggests many potential advantages over their second quantized counterparts. While Gaussian orbital based algorithms in second quantization will continue to compete with first quantized algorithms when simulating small non-periodic molecules, we believe that the longer term future of quantum computing for chemistry will focus much more on first quantized simulations. We expect this is the case due to the ease of removing the Born-Oppenheimer approximation, the fact that one can more efficiently refine the accuracy of simulations in first quantization, the fact that one does not require complicated and costly classical precomputation of active spaces and molecular orbitals, and the fact that, on a whole, these methods scale better than their second quantized counterparts, while still having reasonable constant factors (as our work has shown).

	\subsection*{Acknowledgements}
	
	The authors thank Garnet Kin-Lic Chan, Matthias Degroote, Craig Gidney, Cody Jones, Joonho Lee, Jarrod McClean, Yuval Sanders, Norm Tubman and the quantum computing team at BASF for helpful discussions. DWB worked on this project under a sponsored research agreement with Google Quantum AI. DWB is also supported by Australian Research Council Discovery Projects DP190102633 and DP210101367. NW was funded by a grant from Google Quantum AI as well as the US Department of Energy, Office of Science, National Quantum Information Science Research Centers, Co-Design Center for Quantum Advantage under contract number DE-SC0012704 which supported his work on the asymptotic analysis of the algorithms. YS was partly supported by the National Science Foundation RAISE-TAQS 1839204. The Institute for Quantum Information and Matter is an NSF Physics Frontiers Center PHY-1733907.
	
	\FloatBarrier
	\bibliography{interaction_picture,references,chem_refs}

	\appendix

\section{Variable and function names}
\label{app:names}

\begin{itemize}
    \item $N$ -- The number of reciprocal lattice vectors per electron.
    \item $\eta$ -- The number of electrons.
    \item $m$ -- The electron mass, and also used as an index of summation in other equations (particularly for discretized time). The difference is clear by the context.
    \item $T$ -- The kinetic energy operator.
    \item $U$ -- The potential energy between electrons and nuclei.
    \item $V$ -- Potential energy between electrons.
    \item $L$ -- The number of nuclei.
    \item $\zeta_\ell$ -- The nuclear charges.
    \item $R_\ell$ -- Nuclear positions.
    \item $\ell$ -- Usually indexes over the number of the nucleus.
    \item $\Omega$ -- The computational cell volume.
    \item $G$ -- The set of $N$ reciprocal lattice vectors defined in \eq{G}.
    \item $G_0$ -- The set of reciprocal lattice vectors defined below \eq{first_quant_ham}.
    \item $p,q$ -- Plane wave numbers, which are elements of $G$.
    \item $\nu$ -- Differences of plane wave numbers, which are elements of $G_0$.
    \item $k_p,k_\nu$ -- Reciprocal lattice vectors defined in \eq{G}.
    \item $i,j$ -- Used to index registers containing plane wave numbers.
    \item $\mi$ -- The complex number $\mi=\sqrt{-1}$ is used in upright font, and the difference from the index $i$ should be obvious from context.
    \item $\lambda$ -- The total sum of weights in the linear combination of unitaries.
    \item $\lambda_T$ -- The sum of the weights for $T$, see \eq{lambdas}.
    \item $\lambda_U$ -- The sum of weights for $U$, see \eq{lambdas}.
    \item $\lambda_V$ -- The sum of weights for $V$, see \eq{lambdas}.
    \item $\lambda_\zeta$ -- The sum of nuclear charges, $\sum_\ell \zeta_\ell$.
    \item $\lambda_\nu$ -- The sum over $\nu\in G_0$ of $1/\|\nu\|$.
    \item $n_p$ -- The number of bits used to store each component of the plane wave number.
    \item $n_\eta$ -- The number of bits needed to store the electron number, $\lceil\log\eta\rceil$.
    \item $n_{\mathcal{M}}$ -- The number of bits used in the inequality testing to prepare a state with amplitudes $1/\|\nu\|$.
    \item $n_t$ -- The number of bits used to store the time in each time register.
    \item $n_{\eta\zeta}$ -- The number of bits needed to store a number up to $\eta+2\lambda_\zeta$.
    \item $n_k$ -- The number of qubits used for the register that we use to prepare the superposition over $k$, given by $\lceil \log \sig(0)\rceil$.
    \item $n_T$ -- The number of bits used in the rotation on the qubit selecting between $T$ and $U+V$.
    \item $n_R$ -- The number of bits used to store each component of the nuclear position.
    \item $K$ -- The cutoff on the Dyson series order used in the interaction picture.
    \item $k$ -- The particular order used in the Dyson series. (Some other uses of $k$ are obvious from the context.)
    \item $\mathcal{M}$ -- The total number of alternative values used in the inequality testing for preparing amplitudes $1/\|\nu\|$, related to $n_{\mathcal{M}}$ as $\mathcal{M}=2^{n_{\mathcal{M}}}$.
    \item $M$ -- The number of time steps in the discretization of the integrals for the time in the Dyson series, related to $n_t$ as $M=2^{n_t}$.
    \item $\epsilon$ -- The allowable total RMS error in the estimation of the energy.
    \item $\epsilon_{\rm pha}$ -- The allowable RMS error in the energy due to phase estimation.
    \item $\epsilon_{\mathcal{M}}$ -- The allowable error in the estimated energy due to imprecise preparation of the state with amplitudes $1/\|\nu\|$.
    \item $\epsilon_R$ -- The allowable error in energies due to the finite precision of the nuclear positions.
    \item $\epsilon_t$ -- Allowable error in energies due to the finite precision in the times.
    \item $\epsilon_T$ -- Allowable error in energies due to the finite precision in the rotation of the qubit selecting between $T$ and $U+V$.
    \item $w$ -- Indexes over the $x$, $y$ and $z$ components of the plane wave numbers.
    \item $r,s$ -- Index over bits of a single component of the plane wave number.
    \item $\amam$ -- Is given as 1 without amplitude amplification or 3 for amplitude amplification (to account for the increase in cost).
    \item $\erase{n}$ -- The cost of erasing the QROM with $n$ values, defined in \eq{erse}.
    \item $\srt{n}$ -- The number of comparators to sort $n$ items, listed in \tab{sorttab}.
    \item $\sig(n)$ -- A number used in the preparation of the superposition over $k$, defined in \eq{sigdef}.
    \item $\eqprep{n,b_r}$ -- The probability of success for preparing an equal superposition state, given in \eq{eqprep}.
    \item $r_s$ -- The Wigner-Seitz radius, which satisfies $4\pi r_s^3/3=\Omega/\eta$.
    \item $\Delta$ -- The spacing of the inverse reciprocal lattice given by $\Omega^{1/3} / N^{1/3}$.
    \item $\reps$ -- The number of steps used in the phase estimation, proportional to $\lambda/\epsilon_{\rm pha}$.
    \item $\tau$ -- The length of time for a single step used in the phase estimation.
    \item $A$ -- The component of the Hamiltonian with larger norm considered in the interaction picture. We have $A=T$.
    \item $B$ -- The component of the Hamiltonian with smaller norm considered in the interaction picture.  We take $B=U+V$.
    \item $E_j$ -- The energy eigenvalues.
    \item $E_0$ -- An energy shift used to ensure that we can linearize the sine and arcsine in block encoding the Dyson series.
    \item $m_1,\ldots,m_k$ -- Numbers used to index the discretized times.
    \item $\psuc$ -- The probability of success of the state with the state with amplitudes $1/\|\nu\|$; see \eq{psuc}.
    \item $\psuc^{\rm amp}$ -- The probability of success when using amplitude amplification; see \eq{psucamp}.
    \item $\mu$ -- The numbering of the cube in the series of nested cubes used in preparing the state with amplitudes $1/\|\nu\|$.
    \item $b_r$ -- The number of bits used in rotating an ancilla used for preparing an equal superposition state.
    \item $b_{\rm grad}$ -- The number of bits used in the kinetic energy phasing according to \eq{bgrad}.
    \item $b_T$ -- The number of bits used for the implicit multiplication in the kinetic energy phasing.
    \item $\alp$ -- A factor used in the phasing by $T$ in the interaction picture.
\end{itemize}

\section{Background on Galerkin representations}
\label{app:background}

\subsection{Galerkin basis representations in first quantization}

The Galerkin discretization is derived from the weak formulation of the Schr\"{o}dinger equation in Hilbert space, given by finding $\ket{\phi}$ (spanned by the basis vectors $\{\ket{\phi_p}\}$) such that $\bra{\phi_p}H\ket{\phi} = E \braket{\phi_p}{\phi}$ for all $p$. This is contrasted with the strong formulation of the differential equation that requires the discretization hold at all points in space, as opposed to assessing error on the restricted subspace spanned by $\{\ket{\phi_p}\}$. Energy errors are bounded from below within a Galerkin representation, meaning that for a finite basis size $N$, the energy of an eigenstate is always over-estimated due to the discretization error. This property is beneficial when using variational methods to solve for eigenstates.

We now define both the Born-Oppenheimer and non-Born-Oppenheimer molecular Hamiltonians in a first quantized Galerkin discretization. In the Born-Oppenheimer case we will store our wavefunction by having a computational basis that encodes configurations of the electrons in $N$ basis functions such that a configuration is specified as $\ket{\phi_1 \phi_2 \cdots \phi_{\eta}}$ where each $\phi$ encodes the index of an occupied basis function. In the non-Born-Oppenheimer case we will represent both nuclear and electronic degrees of freedom explicitly such that a configuration is specified by $\ket{\phi_1 \phi_2 \cdots \phi_{\eta + L}}$ where the first $\eta$ registers encode the index of the basis function occupied by the $\eta$ electrons and the last $L$ registers encode the index of the basis function occupied by the $L$ nuclei. Then, we can write the BO and non-BO Hamiltonians as
\begin{align}
\label{eq:galerkin_app}
\begin{aligned}
    H_{\rm BO} & = T + U + V + \frac{1}{2}\sum_{\ell\neq\kappa=1}^{L} \frac{\zeta_\ell \zeta_\kappa}{\left\|R_\ell - R_\kappa\right\|} \\
	T &=  \sum_{i=1}^{\eta}\sum_{p,q=1}^{N} T_{pq}^{(1)} \ket{p}\!\bra{q}_i \\
	U &= -\sum_{i=1}^\eta \sum_{p,q=1}^N U_{pq} \ket{p}\!\bra{q}_i
	\\
	V & = \frac{1}{2}\sum_{i\neq j=1}^{\eta}\sum_{p,q,r,s=1}^N V_{pqrs}^{(1,1)} \ket{p}\!\bra{s}_i \ket{q}\!\bra{r}_j
	\end{aligned}\qquad\qquad
\begin{aligned}
    H_{\rm non-BO} &= T + T_{\rm nuc} + U_{\rm non-BO} + V + V_{\rm nuc}\\
	T_{\rm nuc} &=\sum_{\ell=\eta+1}^{L + \eta}\sum_{p,q=1}^{N} T_{pq}^{(m_\ell)}\ket{p}\!\bra{q}_\ell\\
	U_{\rm non-BO} &=-\sum_{i=1}^{\eta}\sum_{\ell=\eta+1}^{L + \eta}\sum_{p,q,r,s=1}^N V_{pqrs}^{(1,\zeta_\ell)} \ket{p}\!\bra{s}_i \ket{q}\!\bra{r}_\ell
	\\
	V_{\rm nuc} &= \frac{1}{2}\sum_{\ell \neq \kappa=\eta+1}^{L+\eta}\sum_{p,q,r,s=1}^N V_{pqrs}^{(\zeta_\ell, \zeta_\kappa)} \ket{p}\!\bra{s}_\ell \ket{q}\!\bra{r}_\kappa
	\end{aligned}
	\end{align}
	where the integrals are as defined in \eq{integrals}, $R_\ell$ are the positions of the $L$ nuclei (under the Born-Oppenheimer approximation), $m_\ell$ are the masses of those nuclei and $\zeta_\ell$ are their atomic numbers. As in the main text, the notation $\ket{p}\!\bra{q}_i$ is a shorthand for $I_1\otimes I_2\cdots\otimes\ket{p}\!\bra{q}_i \cdots\otimes I_{L +\eta}$. Finally, we note that it is not necessary to use the same basis functions for all distinguishable particles. For example, we could use a different basis (involving a fewer or greater number of functions) for the nuclei versus for the electrons. But here for simplicity, we have defined the Hamiltonians assuming the same basis for all particles.
	
	\subsection{Galerkin basis representations in second quantization}
	
	The methods and analysis of this paper focus on first quantization. However, the majority of past work in both classical quantum chemisty and quantum algorithms for chemistry focus on using second quantization. Thus, we will quickly review second quantization and its relation to first quantization for completeness and to place our work in context. In second quantization one works with computational basis states that represent symmetrized (or antisymmetrized) configurations. For example, when encoding fermions in second quantization the computational basis state $\ket{\phi_1 \phi_2 \cdots \phi_\eta}$ would typically encode a real space wavefunction corresponding to a Slater determinant of those basis functions; i.e.,
\begin{align}
&  \braket{r_1 r_2 \cdots r_{\eta}}{\phi_1 \phi_2 \cdots \phi_\eta} = \sqrt{\frac{1}{\eta!}}
\begin{vmatrix}
\phi_{p_1}\left(r_1\right) & \phi_{p_2}\left( r_1\right) & \cdots & \phi_{p_{\eta}} \left( r_1\right) \\
\phi_{p_1}\left(r_2\right) & \phi_{p_2}\left( r_2\right) & \cdots & \phi_{p_{\eta}} \left( r_2\right) \\
\vdots & \vdots & \ddots & \vdots\\
\phi_{p_1}\left(r_{\eta}\right) & \phi_{p_2}\left(r_{\eta}\right) & \cdots & \phi_{p_{\eta}} \left(r_{\eta}\right) \end{vmatrix}\, .
\end{align}
One can see that this function is antisymmetric by construction since swapping any two rows of a determinant incurs a negative sign. Of course, the actual computational basis states as defined on qubits do not have the same symmetries as the wavefunctions they encode. As a result, in second quantization one constructs operators so that they act on the wavefunction in a way that effectively enforces the proper symmetries.

The essential distinction between first and second quantization is that in first quantization the symmetries are encoded explicitly in the wavefunction whereas in second quantization the symmetries are encoded in the operators that act on the wavefunction. However, typically there is also a difference in how one indexes the occupied basis functions in first versus second quantization. Whereas in first quantization one tends to keep track of which basis function each particle occupies in a configuration, because the second quantized configurations encode a symmetrized representation, it is sufficient to merely keep track of which basis functions are occupied without associating a particular particle. For example, a typical way to store fermionic second quantized wavefunctions is to have a single qubit to represent each spin assigned basis function which is either $\ket{0}$ if the spin-orbital is unoccupied or $\ket{1}$ if the spin-orbital is occupied\footnote{While this style of encoding second quantized operators is by far the most common, it is not necessary. One can also perform second quantized simulations in a fashion that requires ${\cal O}((\eta + L)\log N)$ space rather than ${\cal O}((\eta + L)N)$ space. For example, the work of \cite{BabbushSparse2} performs a quantum simulation of the electronic structure Hamiltonian in a fixed particle-number manifold of the second quantized Hamiltonian in a fashion that requires ${\cal O}(\eta \log N)$ qubits rather than ${\cal O}(N)$ qubits.}. However, because one should only symmetrize identical particles this needs to be done separately for each distinguishable type of particle (in this context, we should consider ``spin-up'' particles to be distinct from ``spin-down'' particles, etc.). As a result, in second quantization if we have $J$ unique species of distinguishable nuclei then the number of qubits required will be proportional to $J$. With this notation, the Galerkin formulation of the second quantized molecular operators defining the Hamiltonians in \eq{galerkin_app} take the explicit forms
\begin{align}
\begin{aligned}
T &= \sum_{\alpha \in \{\uparrow, \downarrow\}} \sum_{p,q=1}^N  T_{pq}^{(1)} a^\dagger_{p,\alpha} a_{q,\alpha}\\
U &= -\sum_{\alpha \in \{\uparrow, \downarrow\}}\sum_{p,q=1}^N U_{pq} a^\dagger_{p,\alpha} a_{q,\alpha} \\
V & = \frac{1}{2}\sum_{\alpha,\beta \in \{\uparrow, \downarrow\}}\sum_{p,q,r,s=1}^N V_{pqrs}^{(1, 1)} a^\dagger_{p,\alpha} a^\dagger_{q,\beta} a_{r,\beta} a_{s,\alpha}
\end{aligned}
\begin{aligned}
T_{\rm nuc} &= \sum_{\ell=1}^J\sum_{p,q=1}^N  T_{pq}^{(m_\ell)} a^\dagger_{p + \ell N} a_{q + \ell N}\\
U_{\rm non-BO} & = \sum_{\alpha\,\beta \in \{\uparrow, \downarrow\}}\sum_{\ell=1}^{J} \sum_{p,q,r,s=1}^N V_{pqrs}^{(1, \zeta_\ell)} a^\dagger_{p,\alpha} a^\dagger_{q+\ell N,\beta} a_{r + \ell N,\beta} a_{s,\alpha} \\
V_{\rm nuc} & = \frac{1}{2}\sum_{\ell,\kappa=1}^{J} \sum_{p,q,r,s=1}^N V_{pqrs}^{(\zeta_\ell,\zeta_\kappa)} a^\dagger_{p+\ell N} a^\dagger_{q+\kappa N} a_{r + \kappa N} a_{s + \ell N} 
\end{aligned}
\end{align}
where the indices $\ell$ and $\kappa$ run over the $J$ different species of distinguishable nuclei (rather than index the $L$ different specific nuclei, as before).The nuclear orbitals do not have explicit spin assignments because we are assuming that the spin designation is summed over as one of the $J$ nuclei types.

The operators $a^\dagger_p$ and $a_q^\dagger$ are either fermionic or bosonic creation and annihilation operators, depending on whether the type of particle they correspond to are fermions or bosons. For fermions we require the anticommutation relation that $\{a_p, a^\dagger_q \} = \delta_{pq}$ whereas for bosons we require the commutation relation that $[a_p, a^\dagger_q] = \delta_{pq}$. For fermions there are many well known methods that one can use to map these operators to qubits \cite{Somma2002,Seeley2012,Verstraete2005,Jiang2019}. Such techniques yield a qubit Hamiltonian with ${\cal O}(J N)$ qubits where $J$ here is the number of distinguishable fermionic species. Mapping the bosonic algebra to qubits is considerably more onerous since each mode (basis function) could be occupied by a number of bosons that is between zero and the total number of bosons in the simulation. For example, if there are $b$ particles of a particular bosonic species then the number of qubits required to represent just that species is ${\cal O}(N \log b)$ in second quantization. This cost is multiplied by the number of different bosonic species.

\section{Qubit costings}
\label{app:qubcost}
\subsection{Qubit costings for qubitization approach}
    The complete step needed for the qubitization, to give the step that one would perform phase estimation on, needs a reflection on the control ancilla.
    The cost of this reflection corresponds to the numbers of qubits used in the state preparation.
    To give this cost we will first list the entire qubit cost, then describe the specific qubits that need to be reflected upon.
    \begin{enumerate}
        \item There are $\eta$ registers storing the momenta, each of which has three components with $n_p$ qubits, so the number of qubits needed is $3\eta n_p$.
        \item The control register for the phase estimation needs $\lceil\log\reps\rceil$ qubits, and there are $\lceil\log\reps\rceil-1$ qubits for the temporary registers, where
        \begin{equation}
            \reps=\left\lceil \frac{\pi\lambda}{2\epsilon_{\rm pha}} \right\rceil .
        \end{equation}
        \item The phase gradient state that is used for the phase rotations.
        There are $n_R+1$ bits used in the phasing, and $n_T$ bits used in the rotation of the qubit selecting $T$, so the number of qubits needed for the phase gradient state is the maximum of these.
        \item One qubit for the $\ket{T}$ state that is used catalytically for controlled Hadamards.
        \item The qubit that is rotated to select between the $T$ and $U+V$ components of the Hamiltonian.
        \item The $n_{\eta\zeta}+3$ qubits for the equal superposition for selecting between $U$ and $V$, including the $n_{\eta\zeta}$ qubits for the superposition itself, the extra rotated qubit, the qubit flagging success of the preparation, and the qubit flagging the result of the inequality test to select between $U$ and $V$.
        \item Three qubits for selecting whether each of $T$, $U$ and $V$ is performed.
        \item The $2n_\eta+5$ qubits from the preparation of the superpositions over $i$ and $j$.
        There are $\eta$ qubits for each of these registers in unary, then 2 qubits that are rotated for each of these preparations, 2 qubits that flag success of the two preparations, and 1 qubit that flags whether $i=j$.
        \item The preparing the superposition state over $\nu$ has the following subcosts.
        \begin{enumerate}
            \item Storing $\nu$ requires $3(n_p+1)$ qubits.
            \item $\mu$ needs $n_p$ qubits since it is given in unary.
            Note that $\mu$ takes values from 2 to $n_\mu=n_p+1$, so there are $n_p$ possible values to be encoded in unary.
            \item $n_{\mathcal{M}}$ qubits for the equal superposition state.
            \item $3n_p+2$ qubits for testing if we have negative zero, including the flag qubit.
            \item $2n_p+1$ qubits used in signalling whether $\nu$ is outside the box, including the flag qubit.
            There are 2 qubits needed as ancillae for each of the $n_p$ quadruple-controlled Toffolis, and the target qubit is the same each time.
            \item The qubits resulting from computing the sum of squares of components of $\nu$ and multiplying by $\mathcal{M}$ are kept and used for uncomputing without Toffolis, and there are $3n_p^2+n_p+1+4n_{\mathcal{M}}(n_p+1)$ (the same as the number of Toffolis).
            \item The qubit resulting from the inequality test.
            \item Two qubits, one flagging success of the inequality test, no negative zero and $\nu$ not outside the box, and the other an ancilla qubit used to produce the triply controlled Toffoli.
        \end{enumerate}
        \item For the preparation of the equal superposition state over 3 basis states, we need 4 qubits.
        There are 2 to store the state itself, one qubit that is rotated, and one flagging success of the preparation.
        \item The states $r$ and $s$ are prepared in unary, and need $n_p$ qubits each, for a total of $2n_p$.
        \item There are a number of temporary ancillae used in the arithmetic adding and subtracting $\nu$ into momenta.
        The cost here is given by items (a) and (c), giving a total of $5n_p+1$.
        \begin{enumerate}
        \item In implementing the $\SEL$ operations, we need to control a swap of a momentum register into an ancilla, which takes $3n_p$ qubits for the output.
        The $n_\eta-1$ temporary ancillae for the unary iteration on the $i$ or $j$ register can be ignored because they are fewer than the other temporary ancillae used later.
        \item We use $n_p+3$ temporary qubits to implement the block encoding of $T$, where we copy component $w$ of the momentum into an ancilla, copy out two bits of this component of the momentum, then perform a controlled phase with those two qubits as control as well as the qubit flagging that $T$ is to be performed.
        \item For the controlled addition or subtraction by $\nu$ in the $\SEL$ operations for $U$ and $V$, we use $n_p$ bits to copy a component of $\nu$ into an ancilla, then there are another $n_p+1$ temporary qubits used in the addition, for a total of $2n_p+1$ temporary qubits in this part.
        \item There are also temporary qubits used in converting the momentum back and forth between signed and two's complement, but these are fewer than those used in the previous step.
        \end{enumerate}
        \item There are 2 overflow qubits obtained every time we add or subtract a component of $\nu$ into a momentum.
        All these qubits must be kept, giving a total of $6$.
        \item For the preparation of the state with the amplitudes $\zeta_\ell$, we do not need to output $\ell$ directly, because we can just output $R_\ell$, which is all that is needed.
        The size of this output is $3n_R$ for the three components of the nuclear position.
        \item There are also temporary qubits used in the arithmetic to implement $e^{-\mi k_\nu \cdot R_\ell}$.
        The arithmetic requires a maximum of $2(n_R-2)$ qubits.
        Note that the $R_\ell$ can be output by the QROM, the phase factor applied, and the $R_\ell$ erased, after (or before) the arithmetic in item 12 is performed, so we need only take the maximum of the $5n_R-4$ qubits used in items 14 and 15 and the $5n_p+1$ temporary qubits used in 12.
        \item A single qubit is used to control between adding and subtracting $\nu$ in order to make $\SEL$ self-inverse.
    \end{enumerate}
    
   Out of these qubits, the ones that need to be reflected upon are as follows.
    \begin{itemize}
        \item Item 5, the qubit that is rotated to select between $T$ and $U+V$, needs to be reflected upon.
        \item From item 6, the $n_{\eta\zeta}$ qubits that the equal superposition is prepared on, as well as the rotated ancilla, all need to be reflected upon.
        That gives a total of $n_{\eta\zeta}+1$.
        \item From item 8, there are $2n_\eta$ qubits for $i$ and $j$.
        There are also 2 qubits that are rotated, for a total of $2n_\eta+2$ qubits (the flag qubits are rezeroed so do not need to be reflected upon).
        \item From item 9, we need to reflect upon the qubits listed in (a), (b) and (c), for a total of $4n_p+n_{\mathcal{M}}+3$.
        \item From item 10, there are two qubits for $w$, as well as 1 rotated qubit, for a total of 3.
        \item From item 11, there are $n_p$ for each of $r$ and $s$, for a total of $2n_p$.
        \item From item 13, there are $6$ qubits as overflow from the arithmetic that need to be reflected upon.
        \item The state with amplitudes $\sqrt{\zeta_\ell}$ is only prepared in an implicit way, via a QROM from the state generated in part 1.
        This QROM may be reversed erasing the extra ancillae, so there are no additional qubits from the state preparation that need to be reflected upon.
    \end{itemize}
    This gives a total of
    \begin{equation}
        n_{\eta\zeta} + 2n_\eta + 6n_p + n_{\mathcal{M}} + 16
    \end{equation}
    qubits to reflect on, with a corresponding Toffoli cost.

\subsection{Qubit costings for interaction picture approach}
Now we consider the complete qubit costing for the interaction picture algorithm.
In the interaction picture algorithm, we need to block-encode the potential energy $U+V$ many times ($K$) in succession.
In a naive implementation, one would keep all the ancilla qubits used in the block encoding of each $U+V$, and reflect upon them all at the end.
However, after each block encoding of $U+V$, one can use Toffolis to check that all the ancilla qubits are zero, putting the result in a flag qubit.
Then for the next block-encoding of $U+V$, one can reuse these ancilla qubits, because controlled on the flag qubit being zero these ancilla qubits have been rezeroed.
After block encoding the entire Dyson series with the product of $K$ block encodings of $U+V$ one can reflect upon these $K-1$ flag qubits as well.
The number of flag qubits is $K-1$ rather than $K$, because the checking is done in between each of the $K$ block encodings of $U+V$.
It should also be noted that when using ancilla for the purpose of arithmetic, these ancilla qubits can \emph{not} be used.
This is because in the arithmetic Toffoli gates are inverted using measurements, and that requires that the ancillae used are guaranteed to be rezeroed.

    For the computation of the time difference, multiplication of the time by the kinetic energy and phasing, at each step when a Toffoli is performed an extra ancilla qubit is introduced, so the number of ancillae is the same as the number of Toffolis for these three parts (the first three items in the list following \eq{bgrad} plus $n_t-1$ for computing the absolute values of the times),
    \begin{equation}\label{eq:phasinganc}
       2( n_t-1) + 2n_t(n_\eta+2n_p)-n_t + b_{\rm grad}-2 = 2n_t(n_\eta+2n_p) +n_t + b_{\rm grad}- 4.
    \end{equation}
    For updating the kinetic energy register, there are two main steps.
    \begin{itemize}
        \item Computing one component of $q\cdot\nu$ or $p\cdot\nu$ has cost $2n_p^2-3n_p$, and takes that many ancillae.
        The addition into the kinetic energy register uses $n_\eta+2n_p-1$ qubits. That gives a total
        \begin{equation}\label{eq:qnuanc}
            n_p(2n_p-1)+n_\eta-1.
        \end{equation}
        \item For controlled addition of $\|\nu\|$ we use $2n_p$ ancilla qubits and $n_\eta+2n_p$ qubits for the addition itself, which takes less ancilla qubits than the previous step provided $n_p\ge 3$.
    \end{itemize}
    The dominant ancilla cost in updating the kinetic energy is that in \eq{qnuanc}.

To quantify the total number of ancillae, we will consider the qubit costs specific to the interaction picture algorithm, then consider other qubit costs that are analogous to those for the qubitization approach.
The specific qubit costs are as follows.
\begin{enumerate}[label=I\arabic*.]
    \item In preparing the equal superposition state for the preparation for $k$, we need $n_k$ qubits to store the equal superposition, another qubit that is rotated, and a success flag qubit, for a total of $n_k+2$.
    \item In completing the preparation of the superposition over $k$, we are keeping temporary ancillae to reduce the cost of inverting the inequality tests.
    As discussed in \app{kstate}, the number of qubits including the unary representation of $k$ output is $n_k+1+\sum_{k=2}^{K-1} \lceil \log(\sig(k))\rceil$.
    \item The number of ancillae needed to store the $K$ times is $K n_t$.
    \item In the reversible sort, a qubit is generated for every comparator, so there are $\srt{K}$ qubits.
    \item We use an ancilla to store the total kinetic energy.
    This total kinetic energy has a maximum value of $3\eta (2^{n_p-1}-1)^2$, so the number of qubits needed is $n_\eta+2n_p$.
    \item In phasing by $T$, $2n_t(n_\eta+2n_p) + n_t +b_{\rm grad}- 4$ qubits are used according to \eq{phasinganc}, where $b_{\rm grad}$ is as in \eq{bgrad}.
    \item For updating the total kinetic energy the maximum number of ancillae used is $n_p(2n_p-1)+n_\eta-1$ according to \eq{qnuanc}.
    \item An extra phase gradient state with $b_{\rm grad}$ bits is used.
    \item There are $K-1$ flag qubits used to indicate success of each of the first $K+1$ block encodings of $U+V$.
    \item One qubit for controlling between forward and reverse time evolution for obtaining the sine.
\end{enumerate}
The following qubit costs are analogous to the qubitization, so we use the same numbering.
An exception is that we include I7 above as 13 (e) below, because it is needed at that step in the procedure.
    \begin{enumerate}
        \item There are $\eta$ registers storing the momenta, each of which has three components with $n_p$ qubits, so the number of qubits needed is $3\eta n_p$.
        \item The control register for the phase estimation needs $\lceil\log\reps\rceil$ qubits, and there are $\lceil\log\reps\rceil-1$ qubits for the temporary registers.
        \item The phase gradient state that is used for the phase rotations (except those for kinetic energy phasing).
        The maximum precision of the rotations is that in the phasing for $U$, and needs $n_R+1$ bits, so this is the number of qubits needed for the phase gradient state.
        \item One qubit for the $\ket{T}$ state.
        \item No qubit needs to be rotated to select between the $T$ and $U+V$ components of the Hamiltonian, since we are not block encoding $T$.
        \item For each of the $K$ block encodings of $U+V$, there are $n_{\eta\zeta}+3$ qubits for the equal superposition, including the $n_{\eta\zeta}$ qubits for the superposition itself, the extra rotated qubit, the qubit flagging success of the preparation, and the qubit flagging the result of the inequality test to select between $U$ and $V$.
        A feature here is that $n_{\eta\zeta}+1$ qubits need to be kept and checked to be equal to zero, whereas the two flag qubits are temporary and rezeroed.
        This means that the total qubit cost is $n_{\eta\zeta}+1$ plus 2 temporary ancillae.
        \item Because we are not block encoding $T$, we do not need an extra qubit for the result of the inequality here.
        \item For the superpositions over $i$ and $j$, there are $2n_\eta$ binary qubits.
        In addition, for each there are two qubits rotatated in the state preparation, giving a total of $2(n_\eta+1)$ ancillae to be checked.
        There are also 3 temporary ancillae flagging success of the two state preparations and $i\ne j$, for a total of $3$ temporary ancillae.
        \item For preparing the superposition state over $\nu$, the first three subcosts we list below are qubits that are imperfectly erased, so need to be checked to be equal to zero.
        For the rest of the subcosts they are temporary.
        \begin{enumerate}
            \item Storing $\nu$ requires $3(n_p+1)$ qubits.
            \item $\mu$ needs $n_p$ qubits.
            \item $n_{\mathcal{M}}$ qubits for the equal superposition state.
            \item $3n_p+2$ qubits for testing if we have negative zero, including a flag qubit.
            \item $2n_p+1$ qubits used in signalling whether $\nu$ is outside the box, including a flag qubit.
            There are 2 qubits needed as ancillae for each of the $n_p$ quadruple-controlled Toffolis, and the target qubit is the same each time.
            \item The qubits resulting from computing the sum of squares of components of $\nu$ and multiplying by $\mathcal{M}$ are kept and used for uncomputing without Toffolis, and there are $3n_p^2+n_p+1+4n_{\mathcal{M}}(n_p+1)$ (the same as the number of Toffolis).
            \item The qubit resulting from the inequality test.
            \item Two qubits, one flagging success of the inequality test, no negative zero and $\nu$ not outside the box, and the other an ancilla qubit used to produce the triply controlled Toffoli to set the first qubit.
        \end{enumerate}
        \item For the preparation of the state with the amplitudes $\zeta_\ell$, we do not need to output $\ell$ directly, because we can just output $R_\ell$, which is all that is needed.
        The size of this output is $3n_R$.
        \item We are no longer preparing the equal superposition state over 3 basis states for $w$.
        \item We are no longer preparing superposition states for $r$ and $s$.
        \item There are a number of temporary ancillae used in the arithmetic adding and subtracting $\nu$ into momenta.
        The cost here is given by items (a) and (e), giving a total of $2n_p^2+5n_p+n_\eta$.
        \begin{enumerate}
            \item In implementing the $\SEL$ operations, we need to control a swap of a momentum register into an ancilla, which takes $3n_p$ qubits.
            There are another $n_\eta-1$ temporary ancillae used in the unary iteration, but these may be ignored because they are not used at the same time as other temporary ancillae below.
            \item We are no longer block encoding $T$.
            \item For the controlled addition or subtraction by $\nu$ in the $\SEL$ operations for $U$ and $V$, we use $n_p$ bits to copy a component of $\nu$ into an ancilla, then there are another $n_p+1$ temporary qubits used in the addition, for a total of $2n_p+1$ temporary qubits in this part.
            \item There are also temporary qubits used in converting the momentum back and forth between signed and two's complement, but these are fewer than those used in the previous step.
            \item It is at this point that we also need to update the kinetic energy register, which takes $2n_p(n_p+1)+n_\eta$ qubits.
            This cost is needed at the same time as that in (a), but not those in (c) or (d).
        \end{enumerate}
        \item There are 2 overflow qubits obtained every time we add or subtract a component of $\nu$ into a momentum, for a total of 6.
        \item There are $2(n_R-2)$ temporary qubits used in the arithmetic to implement $e^{-\mi k_\nu \cdot R_\ell}$.
    \end{enumerate}
    
    Now we need to consider the number of qubits that need to be checked to be zero to implement the block encoding of $U+V$.
    We have costs from items 6, 8, 9(a) to (c), and 14.
    Adding these gives a total
    \begin{equation}\label{eq:checkqubits}
        n_{\eta\zeta}+1+2n_\eta+2+3n_p+3+n_p+n_{\mathcal{M}}+6 = n_{\eta\zeta}+2n_\eta+4n_p+n_{\mathcal{M}}+12.
    \end{equation}
    The cost of checking whether these qubits are all zero is one less Toffoli, and it is done $K-1$ times, for a total of
    \begin{equation}
        (K-1)( n_{\eta\zeta}+2n_\eta+4n_p+n_{\mathcal{M}}+11)
    \end{equation}
    Toffolis.
    
    Next, we consider how many qubits we need to reflect upon for the complete qubitization.
    We then have the qubits listed above, as well as $n_k+2$ from I1, $Kn_t$ from I3, the $K-1$ flag qubits from I9, and one qubit from I10.
    This gives a total of
    \begin{equation}
        n_{\eta\zeta}+2n_\eta+4n_p+n_{\mathcal{M}}+12 + n_k+2 + Kn_t+K-1 +1
    \end{equation}
    qubits to reflect upon, with a corresponding cost in Toffolis for the reflection.
    Again, we are including 2 Toffolis for the control on the qubits for phase estimation, as well as the iteration of that register.
    That gives a total number of Toffolis
    \begin{align}
        & (K-1)( n_{\eta\zeta}+2n_\eta+4n_p+n_{\mathcal{M}}+11)\nn 
        &+ n_{\eta\zeta}+2n_\eta+4n_p+n_{\mathcal{M}}+12 + n_k+2 + Kn_t+K-1 +1 \nn
         & =K ( n_{\eta\zeta}+2n_\eta+4n_p+n_{\mathcal{M}}+n_t + 12) + n_k+3.
    \end{align}
    
    The analysis of the total number of ancillae needed is complicated by the fact that there are temporary ancillae used in various parts of the procedure.
    The temporary qubits used in parts 13 and 15 are not needed at the same time, so we can take the maximum of those ancilla counts.
    But, the temporary ancillae used in part 9 are needed at the same time (because it is for preparing $\nu$), so we need to add that temporary ancilla cost to the maximum of those in 13 and 15.
    From part 8, 3 of the qubits are temporary, and also need to be added to the total temporary ancilla costs, as are 2 qubits from part 7.
    But, all those temporary ancilla costs are \emph{not} needed during the phasing by $T$.
    We also need ancilla qubits in order to check whether the number of qubits given in \eq{checkqubits} are all zero, but these are not needed at the same time as the other temporary ancillae.
    Therefore we take the maximum of these three temporary ancilla counts.
    
\section{Selection between \texorpdfstring{$T$}{T}, \texorpdfstring{$U$}{U}, and \texorpdfstring{$V$}{V}}
\label{app:selTUV}
Here we give detail for how the selection between $T$, $U$, and $V$ is performed, and show how to calculate the precision required for the qubit for selecting between $T$ and $U+V$.
We are using a qubit to select between $T$ and $U+V$, but are also applying $\SEL$ for $T$ in some cases where state preparations for $U+V$ fail.
To show this explicitly, we have a rotation on an ancilla selecting between $T$ and $U+V$, to give
\begin{equation}
\cos\theta \ket{0} + \sin\theta\ket{1},
\end{equation}
and a preparation on the register selecting between $V$ ($\ket{0}$) and $U$ ($\ket{1}$) of
\begin{equation}
\frac 1{\sqrt{\eta+2\lambda_\zeta}} \left( \sqrt{\eta} \ket{0} + \sqrt{2\lambda_\zeta} \ket{1} \right),
\end{equation}
where we have abstracted the state preparation as being just on a single qubit, rather than a superposition over $\eta+2\lambda_\zeta$ basis states as was proposed above.
The $\ket{0}$ and $\ket{1}$ here would be equivalent to a superposition of subsets of those basis states, but can be described in the way presented here for clarity.
The inequality test for $i\ne j$ has the effect of preparing the state
\begin{equation}
\frac 1{\sqrt{\eta}} \left( \sqrt{\eta-1}\ket{0} +  \ket{1} \right).
\end{equation}
Again we are using notation with a single qubit as a way of describing sets of basis states.
Here $\ket{0}$ would be equivalent to the subset of $i$ and $j$ where $i\ne j$.
The preparation of $1/\|\nu\|$ with some probability of failure can be written as
\begin{equation}
\ket{0} \sqrt{\psuc /\lambda_\nu} \sum_{\nu\in G_0} \frac 1{\|\nu\|} \ket{\nu} + \ket{1}\sqrt{1-\psuc}\ket{\nu^\perp},
\end{equation}
where $\lambda_\nu$ is needed for normalization for the state, $\psuc$ is the probability of success of the state preparation, and $\ket{\nu^\perp}$ is a state that is obtained in the case of failure.
In the case of amplitude amplification, we can simply replace $\psuc$ with $\psuc^{\rm amp}$.
We are adopting the convention that 0 corresponds to success and 1 flags failure.

Now there are two alternatives we can choose.
\begin{enumerate}
\item We can apply $T$ if the ancilla for selecting $T$ is 0 \emph{OR} we have failure of the state preparation or the inequality test for $i\ne j$ with $V$.
\item We can apply $T$ if the ancilla for selecting $T$ is 0 \emph{AND} we have failure of the state preparation or the inequality test for $i\ne j$ with $V$.
\end{enumerate}
There are four bits that need to be checked for either of these alternatives.
We will order them in the order they were listed above; the qubits for selecting between $T$ and $U+V$, the qubit for selecting between $U$ and $V$, the qubit flagging $i\ne j$, and the qubit flagging success of preparation of $1/\|\nu\|$. 
So, for the first case (OR), the $\SEL$ for $T$ is applied in the following cases.
Below we use ``$x$'' to indicate arbitrary values.
\begin{enumerate}
\item $\ket{0xxx}$ -- That is, we have $\ket{0}$ on the first qubit, and any values for the other qubits.
\item $\ket{1xx1}$ -- We have $\ket{1}$ on the first qubit, but the preparation of $1/\|\nu\|$ failed.
\item $\ket{1010}$ -- We have $\ket{1}$ on the first qubit, the preparation of $1/\|\nu\|$ succeeded, but the second qubit selected $V$ and the third qubit flagged $i=j$.
\end{enumerate}
Because we perform $\SEL_T$, then invert these state preparations, the total weight for $T$ is the sum of the squares of the amplitudes for these alternatives.
The total weight for $T$ is then
\begin{equation}
\cos^2\theta + \sin^2\theta \left( 1-\psuc + \frac \psuc{\eta+2\lambda_\zeta} \right)
\end{equation}
The case where $\SEL_V$ is applied is just $\ket{1000}$, where the first qubit flags $U+V$, the second qubit flags $V$, the third qubit flags $i\ne j$, and the fourth flags success of the preparation.
As a result, the weight on $V$ is
\begin{equation}
\psuc \sin^2\theta \frac {\eta-1}{\eta+2\lambda_\zeta}.
\end{equation}
The cases where $\SEL_U$ are applied are $\ket{11x0}$, where the first qubit flags $U+V$, the second qubit flags $U$, and the fourth flags success of the preparation.
As a result, the weight for $U$ is
\begin{equation}
\psuc \sin^2\theta \frac {2\lambda_\zeta}{\eta+2\lambda_\zeta}.
\end{equation}
These relative weights for $T$, $U$, and $V$ need to correspond to the relative weights between the $\lambda$-values.
First, note that the ratio of the weights between $U$ and $V$ is $2\lambda_\zeta/(\eta-1)$, which is exactly what we need for $\lambda_U/\lambda_V$.
This was the result of the inequality test $i\ne j$ adjusting the weights between $U$ and $V$.
We also need for the relative weight on $T$
\begin{equation}
\cos^2\theta + \sin^2\theta \left( 1-\psuc + \frac \psuc{\eta+2\lambda_\zeta} \right) = \frac{\lambda_T}{\lambda_T+\lambda_U+\lambda_V}.
\end{equation}
The value of $\theta$ needs to be chosen to make this expression hold.
If $\lambda_T$ is large, then it is always possible to choose $\theta$, because the left-hand side (LHS) can be arbitrarily close to 1.
On the other hand, the minimum value of the LHS is the expression in brackets.
If that is \emph{greater} than the RHS, then we need to perform the state preparation with the second alternative, where we use the AND.
Before we proceed to that alternative, note that because we always implement $\SEL$ for $T$, $U$, or $V$, the net $\lambda$-value is $\lambda_T+\lambda_U+\lambda_V$.

In the case of AND, we will implement $\SEL_T$ in the following cases.
\begin{enumerate}
\item $\ket{0xx1}$ -- We have $\ket{0}$ on the first qubit, and the preparation of $1/\|\nu\|$ failed.
\item $\ket{0010}$ -- We have $\ket{0}$ on the first qubit, and the preparation of $1/\|\nu\|$ succeeded, but the second qubit selected $V$ and the third qubit flagged $i=j$.
\end{enumerate}
As a result, the net weight on $T$ is
\begin{equation}
\cos^2\theta \left(1-\psuc + \frac \psuc{\eta+2\lambda_\zeta}\right) = \cos^2\theta - \psuc\cos^2\theta \left(1- \frac 1{\eta+2\lambda_\zeta}\right).
\end{equation}
The cases where $\SEL_U$ and $\SEL_V$ are applied are changed in that we no longer check the first qubit.
As a result, the weights for $V$ and $U$ are changed to
\begin{equation}
\psuc \frac {\eta-1}{\eta+2\lambda_\zeta}, \qquad \psuc \frac {2\lambda_\zeta}{\eta+2\lambda_\zeta},
\end{equation}
respectively.
We also need to apply the identity in the remaining cases.
These are as follows.
\begin{enumerate}
\item $\ket{1xx1}$ -- We have $\ket{1}$ on the first qubit, but the preparation of $1/\|\nu\|$ failed.
\item $\ket{1010}$ -- We have $\ket{1}$ on the first qubit, the preparation of $1/\|\nu\|$ succeeded, but the second qubit selected $V$ and the fourth qubit flagged $i=j$.
\end{enumerate}
Adding these weights together gives
\begin{equation}
\sin^2\theta \left( 1-\psuc + \frac \psuc{\eta+2\lambda_\zeta} \right) .
\end{equation}
It is easily verified that the weights add to 1.
However, the total $\lambda$ is larger than $\lambda_T+\lambda_U+\lambda_V$ now, because there is also $\lambda_I$ for performing the identity.
Adding the weights for $U$ and $V$ gives
\begin{equation}
\psuc \left( 1-\frac 1{\eta+2\lambda_\zeta} \right),
\end{equation}
so the total weight including $T$, $U$ and $V$ is
\begin{equation}
\cos^2\theta + \psuc\sin^2\theta \left( 1-\frac 1{\eta+2\lambda_\zeta} \right)
\end{equation}
We need the relative weight for $T$
\begin{equation}
\frac{\cos^2\theta \left(1-\psuc + \frac \psuc{\eta+2\lambda_\zeta}\right)}{\cos^2\theta + \psuc\sin^2\theta \left( 1-\frac 1{\eta+2\lambda_\zeta} \right)} = \frac{\lambda_T}{\lambda_T+\lambda_U+\lambda_V}.
\end{equation}
In this case the \emph{largest} value on the LHS is given by the expression in brackets in the numerator, which is the same as the smallest value before.
Therefore, one should choose one of the two alternatives for applying $T$ depending on the relative values of $1-\psuc+\psuc/(\eta+2\lambda_\zeta)$ and $\lambda_T/(\lambda_T+\lambda_U+\lambda_V)$, and in either case it is possible to choose $\theta$ to get the appropriate weightings.

To determine the total value of $\lambda$ in the second case (AND), we can use that the relative weights for $U$ and $V$ should satisfy
\begin{equation}
\psuc \left( 1-\frac 1{\eta+2\lambda_\zeta} \right) = \frac{\lambda_U+\lambda_V}{\lambda},
\end{equation}
so
\begin{equation}
\lambda = \frac{\lambda_U+\lambda_V}{\psuc \left( 1-\frac 1{\eta+2\lambda_\zeta} \right) }.
\end{equation}
Now we have $\lambda_V/\lambda_U=(\eta-1)/(2\lambda_\zeta)$, so
\begin{align}
\frac{\lambda_U+\lambda_V}{\lambda_U+\lambda_V/(1-1/\eta)} &= \frac{1+\lambda_V/\lambda_U}{1+\eta\lambda_V/\lambda_U/(\eta-1)} \nn
&=\frac{1+(\eta-1)/(2\lambda_\zeta)}{1+\eta/(2\lambda_\zeta)}\nn
&= 1-\frac 1{\eta+2\lambda_\zeta}.
\end{align}
Hence, the $\lambda$-value can be given as
\begin{equation}
\lambda = \frac 1{\psuc} \left(\lambda_U+\frac{\lambda_V}{1-1/\eta}\right) .
\end{equation}
This $\lambda$-value necessarily must be at least $\lambda_T+\lambda_U+\lambda_V$, because $\lambda_I$ is non-negative.

In the case where
\begin{equation}
1-\psuc+\frac {\psuc}{\eta+2\lambda_\zeta} < \frac{\lambda_T}{\lambda_T+\lambda_U+\lambda_V},
\end{equation}
then we need to apply the OR.
In that case we find that
\begin{equation}
1-\frac{\lambda_T}{\lambda_T+\lambda_U+\lambda_V} < \psuc \left( 1-\frac 1{\eta+2\lambda_\zeta} \right),
\end{equation}
so
\begin{align}
\frac{\lambda_U+\lambda_V}{\lambda_T+\lambda_U+\lambda_V} &< \psuc \left( 1-\frac 1{\eta+2\lambda_\zeta} \right)\nn
\frac{\lambda_U+\lambda_V}{\psuc \left( 1-\frac 1{\eta+2\lambda_\zeta} \right) } &< \lambda_T+\lambda_U+\lambda_V.
\end{align}
Therefore, in this case $\lambda_T+\lambda_U+\lambda_V$ is \emph{larger} than the expression we use for $\lambda$ in the AND case.
To account for the two cases, we can therefore give the effective $\lambda$-value as
\begin{equation}
\max \left( \lambda_T+\lambda_U+\lambda_V ~ , ~\frac 1{\psuc} \left(\lambda_U+\frac{\lambda_V}{1-1/\eta}\right) \right) .
\end{equation}

Next we consider the error due to the rotation $\theta$ being given to finite precision.
Let us define
\begin{equation}
\psuc' := \psuc \left( 1- \frac 1{\eta+2\lambda_\zeta} \right).
\end{equation}
In the case where OR is used in the implementation of $\SEL_T$, the amplitude on $U+V$ relative to the entire Hamiltonian is
\begin{equation}
\psuc'\sin^2\theta,
\end{equation}
and the weight on $T$ relative to the entire Hamiltonian is
\begin{equation}
1-\psuc'\sin^2\theta.
\end{equation}
When there is error $\Delta\theta$ in the value of $\theta$, the relative size in either can be changed by no more than
\begin{equation}
\psuc'\Delta\theta  < \Delta\theta .
\end{equation}
To bound the error in the overall energy, it can be no more than $\lambda_T\Delta\theta$ for $T$, plus $(\lambda_U + \lambda_V)\Delta\theta$ for $U+V$, for a total error no larger than $\lambda\Delta\theta$.

Similarly, in the case where AND is used in the implementation of $\SEL_T$, the relative weight for $T$ can be given as
\begin{equation}
\cos^2\theta (1-\psuc').
\end{equation}
The weight on $U+V$ is independent of $\theta$.
The error in the relative weight for $T$ is then no larger than $\Delta\theta (1-\psuc')<\Delta\theta$, so the error in the energy is no larger than $\lambda_T \Delta\theta$.
Taking into account both cases, the error in the energy is no larger than $\lambda\Delta\theta$.
Now say we have $n_T$ bits used in the rotation of the bit for selecting between $T$ and $U+V$.
The error in $\theta$ is then no greater than $\pi/2^{n_T}$.
We therefore require
\begin{equation}
\pi\lambda/2^{n_T} \le \epsilon_T,
\end{equation}
where $\epsilon_T$ is allowable error due to finite precision of this rotation.
We can then take
\begin{equation}
n_T \ge \log(\pi\lambda/\epsilon_T).
\end{equation}

For our block encodings, we replace $\lambda_T$ with the slightly larger $\lambda'_T$, which is due to the preparation of the registers for selecting the digits, and modify $\lambda_U$ and $\lambda_V$ to $\lambda^\alpha_U,\lambda^\alpha_V$ to account for the finite precision of preparation of the superposition of $\nu$.
These modified values can be used in the above reasoning unchanged.
Moreover, because $\lambda^\alpha_U,\lambda^\alpha_V$ are proportional to $\alpha$, we can select $\alpha$ such that the rotation required for $\theta$ uses a finite number of bits, and we can just consider imprecision in the preparation of the superposition over $\nu$ with that value of $\alpha$, rather than the finite number of bits for $\theta$.

To be more specific,
\begin{align}
	    \lambda'_T &= \frac{6\eta\pi^2}{\Omega^{2/3}} 2^{2n_p-2},\\
	    \lambda^\alpha_U+\lambda^\alpha_V &= \frac {\eta}{2\pi \Omega^{1/3}} (\eta-1+2\lambda_\zeta)\lambda^\alpha_\nu .
\end{align}
Therefore the ratio is
\begin{align}
\frac{\lambda^\alpha_U+\lambda^\alpha_V}{\lambda'_T} &= \frac{\Omega^{1/3}}{12\pi^3} \frac{(\eta-1+2\lambda_\zeta)}{2^{2n_p-2}}\lambda^\alpha_\nu \nn
&= \alpha \frac{4\Omega^{1/3}}{3\pi^3} \frac{(\eta-1+2\lambda_\zeta)}{2^{2n_p-2}}2^{n_p+2} p_\nu \nn
& = \alpha \frac{4\Omega^{1/3}}{3\pi^3} {(\eta-1+2\lambda_\zeta)}2^{-n_p} p_\nu.
\end{align}
In practice, we find the best values of $\alpha$ are around $1-3/2\mathcal{M}$ to $1-1/\mathcal{M}$, with only a few percent variation in the error in that range.
Therefore, if we have enough precision to give $\alpha$ in that range, then the error will be negligibly increased.
The precision required for $\alpha$ approximately corresponds to the number of bits in $\mathcal{M}$, so we can typically take approximately $n_{\mathcal{M}}$ bits for the rotation of $\alpha$.

\section{Detailed costing of preparation of \texorpdfstring{$k$}{k} state}
\label{app:kstate}

To explain the sequence of operations to prepare the superposition over $k$ in more detail, we first prepare an equal superposition state of the form
\begin{equation}
    \frac 1{\sqrt{\sig(0)}} \sum_{w=0}^{\sig(0)-1} \ket{w}.
\end{equation}
Introducing a zeroed ancilla and performing the inequality test with $\sig(2)$ gives the state
\begin{equation}
    \frac 1{\sqrt{\sig(0)}} \left( \sum_{w=\sig(2)}^{\sig(0)-1} \ket{w}\ket{0} + \sum_{w=0}^{\sig(2)-1} \ket{w}\ket{1} \right).
\end{equation}
Introducing a $\ket{0}$ state and performing a Hadamard gives $\ket{+}$.
Then if we have zero on this qubit and the result of the inequality test, then we set zero in another ancilla; that requires just one Toffoli.
This qubit is $\ket{0}$ for $k=0$, and $\ket{1}$ for all other values of $k$.
The state is now
\begin{equation}
    \frac 1{\sqrt{\sig(0)}} \left( \frac 1{\sqrt{2}}\sum_{w=\sig(2)}^{\sig(0)-1} \ket{w}\ket{0}_{2}(\ket{0}\ket{0}_{1}+\ket{1}\ket{1}_{1}) + \sum_{w=0}^{\sig(2)-1} \ket{w}\ket{1}_{2}\ket{+}\ket{1}_{1} \right),
\end{equation}
where the subscripts $1$ and $2$ denote the qubits with the unary encoding of $k$, which are $\ket{1}$ for $k\ge 1$ and $k\ge 2$, respectively.
Next, perform an inequality test with $\sig(3)$.
This gives
\begin{align}
    \frac 1{\sqrt{\sig(0)}} \left( \frac 1{\sqrt{2}}\sum_{w=\sig(2)}^{\sig(0)-1} \ket{w}\ket{0}_{2}(\ket{0}\ket{0}_{1}+\ket{1}\ket{1}_{1})\ket{0}_{3} + \sum_{w=\sig(3)}^{\sig(2)-1} \ket{w}\ket{1}_{2}\ket{+}\ket{1}_{1}\ket{0}_{3}+\sum_{w=0}^{\sig(3)-1} \ket{w}\ket{1}_{2}\ket{+}\ket{1}_{1}\ket{1}_{3} \right).
\end{align}
Now note that the inequality test here was performed on all the qubits storing $w$.
We only need to perform an inequality test on the qubits that are sufficient to store values up to $\sig(2)-1$, which gives
\begin{align}
    \frac 1{\sqrt{\sig(0)}} \left( \frac 1{\sqrt{2}}\sum_{w=\sig(2)}^{\sig(0)-1} \ket{w}\ket{0}_{2}(\ket{0}\ket{0}_{1}+\ket{1}\ket{1}_{1})\ket{?} + \sum_{w=\sig(3)}^{\sig(2)-1} \ket{w}\ket{1}_{2}\ket{+}\ket{1}_{1}\ket{0}+\sum_{w=0}^{\sig(3)-1} \ket{w}\ket{1}_{2}\ket{+}\ket{1}_{1}\ket{1} \right).
\end{align}
The question mark is used to indicate a value that may be 0 or 1 depending on $w$, but whose value we do not care about.
Now we set a new qubit to $\ket{1}$ if the result of that inequality test \emph{and} the subsystem with subscript 2 is in the state $\ket{1}$.
That takes one Toffoli, and gives the state
\begin{align}
    \frac 1{\sqrt{\sig(0)}} \!\!\left( \frac 1{\sqrt{2}}\!\!\sum_{w=\sig(2)}^{\sig(0)-1}\!\! \ket{w}\ket{0}_{2}(\ket{0}\ket{0}_{1}+\ket{1}\ket{1}_{1})\ket{?}\ket{0}_3 + \!\!\sum_{w=\sig(3)}^{\sig(2)-1}\!\! \ket{w}\ket{1}_{2}\ket{+}\ket{1}_{1}\ket{0}\ket{0}_3+\!\!\sum_{w=0}^{\sig(3)-1}\!\! \ket{w}\ket{1}_{2}\ket{+}\ket{1}_{1}\ket{1}\ket{1}_3 \right)\! .
\end{align}
That new qubit is now the third qubit in the unary representation of $k$.
In this way we generate the unary representation of $k$ directly, without needing to perform binary to unary conversion.
The final inequality test we perform is that with $\sig(K)$ to generate the unary qubit flagging $k=K$.

To explain how to perform the inequality test more explicitly, see Fig.~18 in \cite{Sanders2020_b}.
That shows how to implement addition with a (classically given) constant without needing to store that constant in qubits.
Simply by reversing the circuit, we can subtract a constant which can be used to perform an inequality test with that constant.
That is shown in \fig{subtract}.
In this figure, the inequality test would be on 4 bits, with $i_4=t_4=0$ and the result of the inequality test being given in the bottom register.
	\begin{figure*}[tbh]
		\centering
			\includegraphics[width=0.7\textwidth]{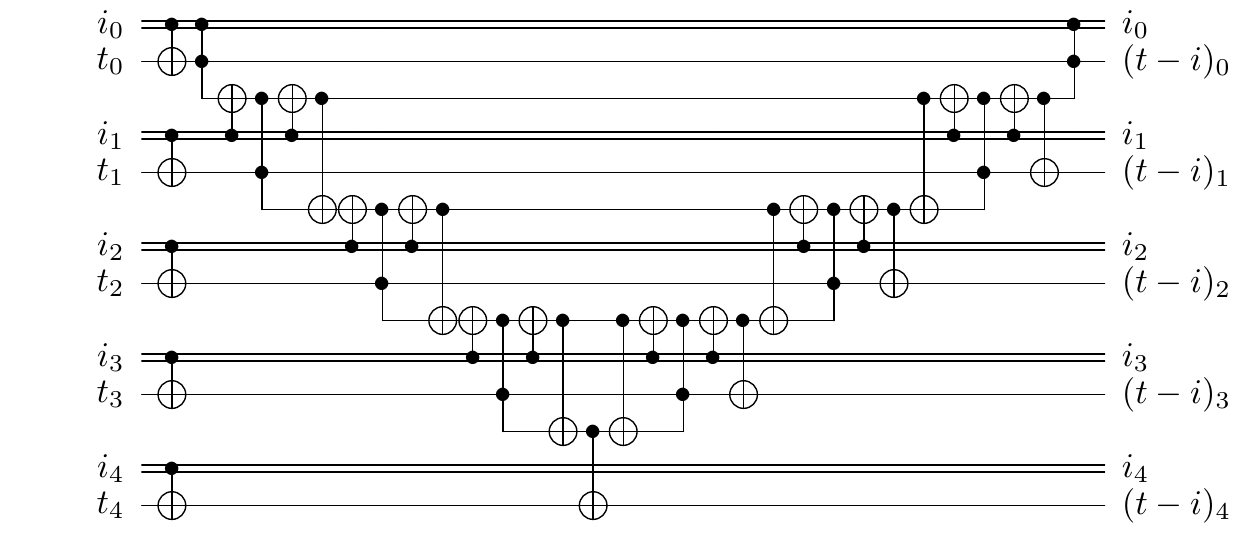}
		\caption{A circuit to subtract a constant from a quantum register without storing that constant in qubits.
		The double lines indicate classical registers.
		The qubit at the bottom provides the output of an inequality test.}
		\label{fig:subtract}
	\end{figure*}

In that case, the result of the inequality test is given in the temporary ancilla where the second CNOT on $t_4$ is performed.
That means that we can use the result of the inequality test immediately without needing to perform the rest of the quantum circuit.
We also want to be able to use the input qubits for other purposes, without them being altered.
The only operations that are performed with them as the target are the initial column of CNOTs.
Therefore, those CNOTs can be performed on those qubits again to return them to their initial values, giving the circuit on the left side of the dotted line in \fig{ineqtest}.

	\begin{figure*}[tbh]
		\centering
			\includegraphics[width=0.7\textwidth]{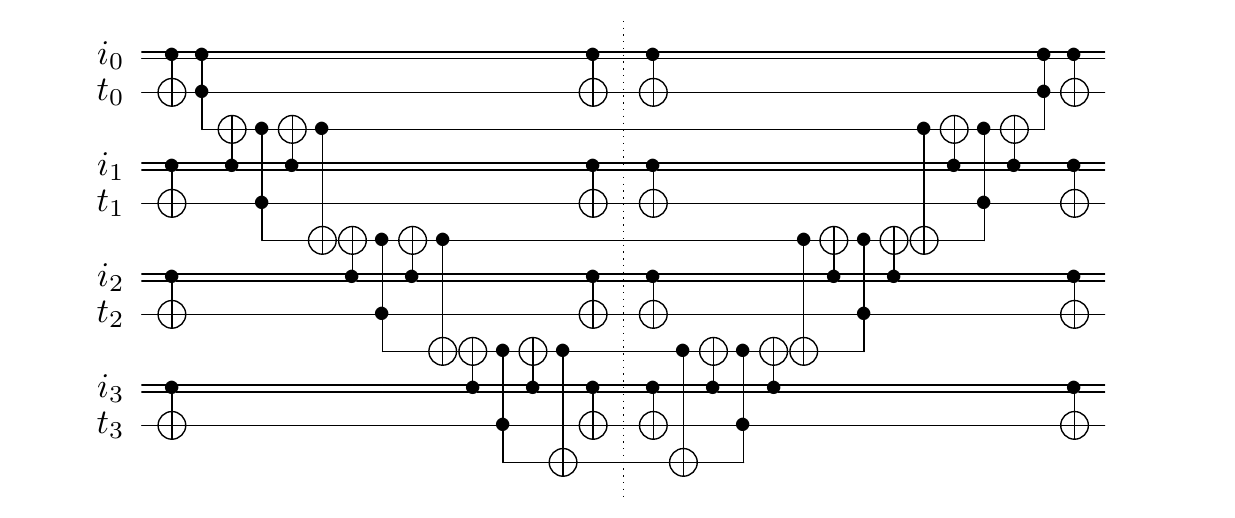}
		\caption{A circuit to perform an inequality test with a constant giving the result in the bottom register at the dotted line.
		At the position of the dotted line the input registers are returned to their original values.
		For application of this circuit, operations that use the result of the inequality test would be put in the position of the dotted line.
		For the state preparation over $k$, the part of the circuit to the left of the dotted line would be used for preparation, then the $\SEL$ operations would be performed, then the part of the circuit on the right would be used to invert the state preparation.}
		\label{fig:ineqtest}
	\end{figure*}

At the point of the dotted line, the result of the inequality test has been output, and the other qubits have been returned to their initial values.
Then one can simply reverse the circuit in order to erase the result of the inequality test.
In this example, there is an inequality test with 4 qubits, and only 3 Toffolis are needed.
Recall that the left and right elbows in these diagrams correspond to Toffolis, but the right elbows can be performed with measurements and Clifford gates.
The initial Toffoli has one control that is a classical register, so is a Clifford gate.
This circuit also indicates that 4 temporary ancillae are used, but the top one was produced with Clifford gates, so can be erased with Clifford gates and need not be kept through the rest of the calculation.
As a result, there is a Toffoli cost of one less than the number of qubits, and an ancilla cost of one less than the number of qubits.

For the number of qubits, first there are $n_k$ to store $w$.
Then for the first inequality test we have $n_k-1$ to perform the inequality test, including the output.
That result of the inequality test is the first unary qubit for $k$.
Then we have an additional $\ket{+}$ state, and generate the second unary qubit for $k$, with a total of $2n_k+1$ qubits so far.
Then for the other $K-2$ inequality tests, we have $\lceil\log\sig(k)\rceil-1$ qubits to make the inequality test reversible, then one qubit for the unary representation of $k$.
That is distinct from the result of the inequality test because we are not using all the qubits in the inequality test.
As a result, the total number of ancillae is $n_k+1+\sum_{k=2}^{K-1} \lceil\log\sig(k)\rceil$ including the unary representation of $k$, plus $n_k$ to store $w$.

Next we describe how to properly prepare the times based on $k$.
Let us first consider the case where the times are given in ascending order for implementing $e^{-iH\tau}$.
For each unary register for $k$, the controlled Hadamard on $n_t$ qubits gives an equal superposition.
For each register beyond the range of $k$, we set the $n_t$ qubits to 1.
Then we have $m_{k+1}=2^{n_t-1}-1=M-1$, and subtracting $m_k$ gives $M-1-m_k$.
We apply an extra phasing corresponding to $1/2$ at the end, yielding $M-1/2-m_k$.
Similarly, we apply a phasing of $1/2$ at the beginning, so we do not need to explicitly add $1/2$ to $m_1$.

For register $K$, which is storing $m_K$ if $k=K$ and $M-1$ otherwise, we flip all the bits.
This is equivalent to taking the number $M-1$ (all ones) and subtracting our number, so we would get $M-1-m_k$ if $k=K$, or $0$ otherwise.
In the case $k=K$ we have the value $M-1-m_k$ required.

Now we consider the case that we wish to control between forward and time evolutions.
This means the roles of the first and last time differences are reversed.
The only difference in the procedure we have used is that for the last time difference we have flipped $n_t$ bits.
Therefore we can control this part by simply using CNOTs, and there is no Toffoli cost.

For the remaining $K-1$ time differences, when we have the times in reverse order we obtain negative numbers in two's complement or zero.
To find the absolute value, we can perform bit flips and add 1.
To make this controlled, we can use CNOTs, and add the bit that flags that we are giving the times in reverse order.
The cost of addition on $n_t$ bits is $n_t-1$ Toffolis.
Combined with the $n_t-1$ Toffolis needed for each subtraction, we have a cost of $2(K-1)(n_t-1)$ Toffolis.

A further subtlety in this procedure is that in the case of failure of the equal superposition state, we should perform no operations on the target system.
However, we need only leave the qubit flagging failure of this state preparation flipped (not erasing it at the end).
In the block encoding, this part is automatically eliminated, since postselection on this qubit being zero eliminates the failure.

\section{The modified phase gradient state}
\label{app:mgrad}
    To account for multiplying factors in phase shifts, the phase gradient state can be modified to something of the form
    \begin{equation}
        \frac 1{\sqrt{2^{\bgr}}} \sum_{\ell=0}^{2^{b_{\rm grad}}-1} e^{-2m\pi i \ell/2^{\bgr}}\ket{\ell} ,
    \end{equation}
    for some integer $m$.
    As discussed in \sec{kinetic_exp}, the kinetic energy is given by $2\pi^2/\Omega^{2/3}$ times the sum of squares of the integers stored in the momentum registers, and this energy is multiplied by $\tau=1/(\lambda_U+\lambda_V)$ to give the phase required.
    Now say that
    \begin{equation}
        \frac {2\pi^2}{(\lambda_U+\lambda_V)\Omega^{2/3}} = \alp \frac {2\pi}{2^{\bgr}},
    \end{equation}
    so $\alp$ is the multiplying factor on $2\pi/2^{b_{\rm grad}}$ that we need.
    We choose $b_{\rm grad}$ so that $\alp$ is between $2^{b_T-1}$ and $2^{b_T}-1$ for some integer $b_T$, and we aim to round $\alp$ to an integer with $b_T$ bits.
    The worst case is when $\alp$ is between $2^{b_T-1}$ and $2^{b_T-1}+1$ (in the case of $b_T=10$, this is 512 and 513).
    Then the reduction in $\tau/\exp(\tau(\lambda_U+\lambda_V))$ in rounding to $2^{b_T-1}$ would be
    \begin{equation}
        \frac{e}{1/(\lambda_U+\lambda_V)} \frac{\alp/2^{b_T-1}/(\lambda_U+\lambda_V)} {\exp{(\alp/2^{b_T-1})}}= \frac{\alp/2^{b_T-1}}{\exp{(\alp/2^{b_T-1}-1)}}.
    \end{equation}
    The reduction in $\tau/\exp(\tau(\lambda_U+\lambda_V))$ in rounding to $2^{b_T-1}+1$ would be
    \begin{equation}
        \frac{e}{1/(\lambda_U+\lambda_V)} \frac{\alp/(2^{b_T-1}+1)/(\lambda_U+\lambda_V)} {\exp{(\alp/(2^{b_T-1}+1))}}= \frac{\alp/(2^{b_T-1}+1)}{\exp{(\alp/(2^{b_T-1}+1)-1)}}.
    \end{equation}
    Because there is freedom in which way $\alp$ is rounded, we should consider the maximum of these two.
    The worst case (the smallest maximum) is when they are equal, which gives
    \begin{align}
        \frac{1/2^{b_T-1}}{\exp{(\alp/2^{b_T-1}-1)}} &= \frac{1/(2^{b_T-1}+1)}{\exp{(\alp/(2^{b_T-1}+1)-1)}} \nn
       {1+1/2^{b_T-1}} &= \frac{\exp{(\alp/2^{b_T-1}-1)}}{\exp{(\alp/(2^{b_T-1}+1)-1)}}\nn
       {1+1/2^{b_T-1}} &= \exp{(\alp/(2^{b_T-1}+2^{2b_T-2}))} \nn
       \alp &= (2^{b_T-1}+2^{2b_T-2}) \ln (1+1/2^{b_T-1}).
    \end{align}
    Substituting that value of $\alp$ gives
    \begin{align}
        \frac{\alp/(2^{b_T-1}+1)}{\exp{(\alp/(2^{b_T-1}+1)-1)}} &= \frac{e 2^{b_T-1} \ln (1+1/2^{b_T-1})}{\exp[2^{b_T-1} \ln (1+1/2^{b_T-1})]}\nn
        &\ge \frac 1{1 + 1/2^{2b_T+1}}.
    \end{align}
    
    Following this reasoning, in evaluating the complexity we can compute the values of $\lambda_U$ and $\lambda_V$, choose a value of $b_{\rm grad}$ and find $\alp$ as
    \begin{equation}
        \alp = \frac{\pi 2^{b_{\rm grad}}}{(\lambda_U+\lambda_V)\Omega^{2/3}},
    \end{equation}
    then take
    \begin{equation}
        b_T = \lceil \log \alp \rceil =  b_{\rm grad} + \left\lceil \log \left( \frac{\pi}{(\lambda_U+\lambda_V)\Omega^{2/3}} \right) \right\rceil ,
    \end{equation}
    where $b_T$ is the number of bits being used for the approximation of $\alp$.
    Then the increase in the effective $\lambda$ from adjusting $\tau$ is no more than a factor of $1+1/2^{2b_T+1}$.
    The choice of $b_{\rm grad}$ can alternatively be made by choosing $b_T$ first, then taking
    \begin{equation}
        b_{\rm grad} = b_T - \left\lceil \log \left( \frac{\pi}{(\lambda_U+\lambda_V)\Omega^{2/3}} \right) \right\rceil .
    \end{equation}
    The cost of the addition into the phase gradient register is $b_{\rm grad}-2$.
    Rather than using the upper bound $1+1/2^{2b_T+1}$ on the increase in the cost, it is also possible to just compute the effective time as
    \begin{equation}
        \frac{[\alp]/\alp}{(\lambda_U+\lambda_V)\exp([\alp]/\alp)},
    \end{equation}
    where $[\alp]$ is being used to indicate the rounded value of $\alp$.

\section{Complexity of binary squaring}
\label{app:squaring}

In this appendix we provide a sequence of results on the Toffoli complexity of computing squares of numbers.
The first result uses a relatively simple technique based on repeated addition, but taking account of the fact that some bits are repeated.

	\begin{lemma}[Binary squaring]
		\label{lem:squaring}
		An $n$-bit binary number can be squared using $n^2-2$ $\Toffoli$ gates.
	\end{lemma}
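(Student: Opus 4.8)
\emph{Plan.} I would use the elementary ``repeated addition'' (schoolbook) squaring circuit, but exploit the two special features of squaring a binary number: a bit times itself satisfies $x_i x_i = x_i$ and so costs nothing, and each genuine off-diagonal product $x_i x_j$ ($i\neq j$) occurs twice. Writing the input as $x=\sum_{i=0}^{n-1}x_i 2^i$ and using $x_i^2=x_i$,
\begin{equation}
x^2 \;=\; \sum_{i=0}^{n-1} x_i 2^{2i} \;+\; \sum_{j=1}^{n-1} x_j\,\big(x\bmod 2^{j}\big)\,2^{j+1},
\end{equation}
where $x\bmod 2^{j}=\sum_{i=0}^{j-1}x_i 2^i$ is carried physically by the low $j$ input qubits. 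The diagonal sum $\sum_i x_i 2^{2i}$ is loaded into the zeroed output register with \emph{no} $\Toffoli$ gates: qubit $x_i$ is simply routed (a CNOT) to output position $2i$. The second sum is then built up by $n-1$ successive controlled additions: for $j=1,\dots,n-1$ in increasing order, add the $j$-qubit number $(x_0,\dots,x_{j-1})$, aligned at output position $j+1$ and controlled on $x_j$, into the running total.

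I would implement the $j$-th controlled addition as a single ripple-carry sweep over $j$ bit positions in which the partial-product bit $x_j\wedge x_i$ is formed on the fly and immediately combined with the incoming carry, so that each position costs two $\Toffoli$ gates (one for the bit product, one for the majority/carry update) with one additional $\Toffoli$ for the carry-out bit. The key structural fact to verify is \emph{locality of the carry}: before the $j$-th addition the running total has a $0$ at output bit $2j+1$, since the diagonal has no bit there and earlier partial products reach no higher than output bit $2(j-1)$; hence the carry out of the $j$-th sweep is absorbed entirely in bit $2j+1$ and never disturbs the diagonal bit $x_{j+1}$ sitting at position $2j+2$. Consequently each addition is a genuinely length-$j$ local operation rather than a full-width ripple. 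All scratch qubits used for the bit products are returned to $\ket{0}$ by measurement-based uncomputation, which is Clifford and not counted. Totalling the cost, the $j$-th step costs $2j+1$ $\Toffoli$ gates for $j\ge 2$, while the $j=1$ step (one $\Toffoli$ forming $x_0\wedge x_1$, one $\Toffoli$ for the single carry) costs only $2$, so
\begin{equation}
2+\sum_{j=2}^{n-1}(2j+1)\;=\;\sum_{j=1}^{n-1}(2j+1)-1\;=\;(n^2-1)-1\;=\;n^2-2 .
\end{equation}

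\emph{Main obstacle.} The content of the proof is entirely in the careful $\Toffoli$ bookkeeping together with this carry-locality argument: one must confirm that the fused ``multiply-and-add'' sweep can be realized with exactly two non-Clifford gates per bit position, that the carry propagates at most one extra bit, and that the small boundary effects (the saving at $j=1$, and the top of the register at $j=n-1$) bring the count down from the naive $\sum_j 2j=n(n-1)$-type estimate to precisely $n^2-2$. I would also flag the degenerate regime: for $n=1$ squaring is free, so the bound should be read as valid for $n\ge 2$ (or understood as an upper bound only where $n^2-2\ge 0$).
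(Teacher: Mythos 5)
Your proposal is correct and achieves the claimed count $n^2 - 2$, and it uses the same high-level idea as the paper's own proof: ``schoolbook'' squaring by repeated (controlled) additions, exploiting that diagonal products $x_i x_i = x_i$ are free and each off-diagonal product occurs twice and hence shifts by one bit. What you do differently, and what I think is an improvement in clarity, is the bookkeeping. The paper iterates over $b_i$ and in each step packs the diagonal bit $b_i$ \emph{together with} the cross terms $b_i b_j$ ($j>i$) into a wide ancilla register, then uses a full-width adder whose size ranges from $(n+1)$ bits down to $2$ bits, citing the Gidney adder as a black box and never having to reason about how far a carry can travel. You instead pre-load the entire diagonal into the output with CNOTs, iterate over $x_j$ adding only its $j$ cross terms with lower-indexed bits, and supply the explicit carry-locality observation (output bit $2j+1$ is still zero before the $j$-th sweep because odd positions get no diagonal bit and earlier additions reach at most position $2j-1$) that justifies keeping the adders narrow and stopping the ripple at $2j+1$. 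That argument is a genuine structural fact that the paper's proof sidesteps rather than states; it is correct and worth making explicit.

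One small remark on the per-step count: with a measurement-uncomputed (Gidney-style) ripple adder, a $j$-bit addition with carry-out costs $j$ Toffolis, so a careful count of your fused sweep would give $2j$ per step and a total of $n(n-1) = n^2 - n$, which is slightly \emph{below} $n^2 - 2$ for $n>2$. Your $2j+1$ accounting appears to double-count the final carry (it is already the last carry Toffoli of the ripple, not an extra gate), so your estimate is mildly pessimistic; this only strengthens the bound and the total you report happens to land exactly on the lemma's $n^2-2$. Your observation that the statement should be read with $n\ge 2$ is also correct, since $n^2-2<0$ for $n=1$ and the paper does not flag this.
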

	\begin{proof}
		Consider a binary number $b_{n-1}\cdots b_{0}$ stored in the input register. Our goal is to compute its square in a result register of length $2n$. In the initial step, we copy $b_0=b_0\cdot b_0$ to the result register using one $\CNOT$ gate. We also create $b_0\cdot b_{n-1},\ldots,b_0\cdot b_{1}$ using $n-1$ $\Toffoli$s at positions $3,\ldots,n+1$ from the right. We then uncompute all the intermediate results.
		
		In the second step, we copy $b_1=b_1\cdot b_1$ to an ancilla register of length $n$ using one $\CNOT$ gate. We also create $b_1\cdot b_{n-1},\ldots,b_1\cdot b_{2}$ in the ancilla register using $n-2$ $\Toffoli$s at positions $3,\ldots,n$. We now add this number to the corresponding positions of the result register and we need an ($n+1$)-bit adder (including the carry), which by \cite{Gidney18} can be implemented using $n$ $\Toffoli$s. We uncompute all the intermediate results.
		
		In the next step, we copy $b_2=b_2\cdot b_2$ to an ancilla register of length $n-1$ using one $\CNOT$ gate. We also create $b_2\cdot b_{n-1},\ldots,b_2\cdot b_{3}$ using $n-3$ $\Toffoli$s at positions $3,\ldots,n-1$. We now add this number to the result register and we need an $n$-bit adder (including the carry), which can be implemented using $n-1$ $\Toffoli$s. We uncompute all the intermediate results.
		
		This procedure can be recursed. In the penultimate step, we copy $b_{n-2}=b_{n-2}\cdot b_{n-2}$ to an ancilla register of length $3$ using one $\CNOT$ gate. We also create $b_{n-2}\cdot b_{n-1}$ using $1$ $\Toffoli$s at the left-most position. We now add this number to the result register and we need a $4$-bit adder (including the carry), which can be implemented using $3$ $\Toffoli$s. We uncompute all the intermediate results.
		
		In the last step, we copy $b_{n-1}=b_{n-1}\cdot b_{n-1}$ to a single-qubit ancilla register. We use a $2$-bit adder to add this value back to the result register. This costs $1$ $\Toffoli$ gate.
		
		Overall, we have used a total number 
		\begin{equation}
		\frac{[1+(n-1)](n-1)}{2}+\frac{(3+n)(n-2)}{2}+1=n^2-2
		\end{equation}
		of $\Toffoli$ gates.
	\end{proof}

However, it is possible to improve the complexity if we apply the approach for computing Hamming sums from \cite{Kivlichan2020improvedfault}.
We also used this method for improved arithmetic in \cite{Sanders2020_b}.
The general principle is that bits are grouped into those at different levels, where there are those multiplied by 1, those multiplied by 2, those multiplied by $2^2$ and so forth.
Three (or two) bits can be summed with a single Toffoli, giving a carry bit at the next level.

Therefore, one should sum bits at the first level, giving carry bits at the next level, then sum bits at the next level giving carry bits at the level after that, and so forth.
In computing Hamming weights as in \cite{Kivlichan2020improvedfault} there were only carry bits, but when computing products or sums there are bits at each level as well as carry bits.
The bits obtained at each level will correspond to products of bits from the original numbers that can be computed with Toffolis.

In the following we will refer to the complexity of ``summing'' bits at a level to mean the complexity of simply summing bits and giving carry bits at the next level, and refer to ``completely summing'' bits to mean that we sum the carry bits, then the carry bits from that sum, and so forth, as in the scheme for computing Hamming sums from \cite{Kivlichan2020improvedfault}.
Then the complexity of summing $m$ bits at a level is equal to $\lfloor m/2\rfloor$, so for even numbers it is just $m/2$ whereas for odd numbers there is rounding down.
The number of carry bits is \emph{equal} to the number of Toffolis used.
The complexity of \emph{completely} summing $m$ bits (the Hamming sum) is no more than $m-1$, though it is often less in specific instances \cite{Kivlichan2020improvedfault}.

First we consider the simpler case of computing a square, then proceed to sums of squares.

	\begin{lemma}[Binary squaring improved]
		\label{lem:squaring_improved}
		An $n$-bit binary number can be squared using $n(n-1)$ $\Toffoli$ gates.
	\end{lemma}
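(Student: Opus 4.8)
The plan is to replace the iterated ripple-carry additions of \lem{squaring} by a single carry-save (Wallace/Dadda-style) reduction built from the Hamming-sum primitive of \cite{Kivlichan2020improvedfault} described above, and to extract the $n-2$ Toffoli saving from one clean global accounting identity. First I would expand $b=\sum_{i=0}^{n-1}b_i2^i$ as $b^2=\sum_i b_i2^{2i}+\sum_{0\le i<j\le n-1}b_ib_j\,2^{i+j+1}$, using $b_i^2=b_i$ and $2b_ib_j$ for the off-diagonal terms. The $\binom n2$ cross products $b_ib_j$ ($i<j$) cost one Toffoli each, while the $n$ diagonal bits $b_i$ are copies of input bits and are free. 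I would then arrange all $n+\binom n2$ of these bits into columns indexed by weight $2^\ell$ and reduce column by column from least to most significant: each full adder consumes three bits, leaves one sum bit and one outgoing carry (one Toffoli), and where a column has an even count I use one half adder (also one Toffoli, one carry). Reducing a column of $W_\ell$ bits down to a single output bit thus costs exactly $\lfloor W_\ell/2\rfloor$ Toffolis and pushes $\lfloor W_\ell/2\rfloor$ carries into the next column, where $W_\ell=g_\ell+c_\ell$ with $g_\ell$ the number of product/diagonal bits of weight $2^\ell$ and $c_\ell=\lfloor W_{\ell-1}/2\rfloor$ the incoming carries ($c_0=0$).

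The accounting step goes as follows. Let $S=\sum_\ell\lfloor W_\ell/2\rfloor$ be the total number of summation Toffolis and $Q=\#\{\ell:W_\ell\text{ odd}\}$. Since every adder produces one carry we have $\sum_\ell c_\ell=S$, hence $\sum_\ell W_\ell=\sum_\ell g_\ell+S=\big(n+\binom n2\big)+S$; combining this with $S=\sum_\ell\lfloor W_\ell/2\rfloor=\tfrac12\sum_\ell W_\ell-\tfrac12 Q$ gives $S=n+\binom n2-Q$, so the total Toffoli count is $\binom n2+S=2\binom n2+n-Q=n^2-Q$. To pin down $Q$, an easy induction on $\ell$ shows $\sum_{\ell'\le\ell}(W_{\ell'}\bmod 2)2^{\ell'}+\lfloor W_\ell/2\rfloor 2^{\ell+1}=\sum_{\ell'\le\ell}g_{\ell'}2^{\ell'}$, i.e. the parities $W_\ell\bmod 2$ are exactly the binary digits of the fixed integer $\sum_\ell g_\ell2^\ell$. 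But $\sum_\ell g_\ell2^\ell=\sum_i 2^{2i}+\sum_{i<j}2^{i+j+1}=\big(\sum_{i=0}^{n-1}2^i\big)^2=(2^n-1)^2=2^{2n}-2^{n+1}+1$, whose binary expansion is $n-1$ ones, then $n$ zeros, then a single one, so its Hamming weight is $n$. Therefore $Q=n$ and the count is $n^2-n=n(n-1)$, as claimed.

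The genuinely load-bearing observation here is the last one: the purely combinatorial wire-counts $W_\ell$ turn out to encode the binary representation of the \emph{input-independent} integer $(2^n-1)^2$, which is what fixes $Q=n$ and upgrades the crude estimate $\sum_\ell W_\ell/2\le n^2$ into the sharp $n(n-1)$. Everything else is routine: the full/half adder primitives each cost one Toffoli (and the half-adder carry, being an AND of two bits, must still be charged one Toffoli in the worst case), the induction in the parity identity is two lines, and one only needs to note in passing that no carry escapes past column $2n-1$ since $(2^n-1)^2<2^{2n}$. I would also sanity-check the small cases $n=1$ (count $0$) and $n=2,3$ (counts $2$ and $6$) against the formula, and then translate the column reduction into an explicit gate sequence — keeping the product ancillae so their computation need not be separately uncomputed — to confirm the ancilla/Clifford bookkeeping matches the style of the surrounding lemmas.
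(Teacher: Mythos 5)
Your proof is correct and reaches the paper's $n(n-1)$ count, but the accounting is genuinely different. Both you and the paper use the same carry-save column reduction with the same primitives — one Toffoli per full or half adder, so $\lfloor W_\ell/2\rfloor$ Toffolis and as many outgoing carries to collapse column $\ell$, plus one Toffoli per cross-product $b_ib_j$ — but the paper's proof then splits $p^2$ into four explicit sums $\Sigma_1,\dots,\Sigma_4$ by column parity and by whether $\ell$ lies above or below $n/2$, proves by induction that the column widths behave as $2\ell$ and $2(n-\ell)\pm 1$, treats $n$ even and $n$ odd separately, and finally sums the resulting arithmetic series to $n(n-1)/2$ reduction Toffolis. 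You sidestep all of that casework with a global double-count: every adder consumes two wires net and emits one carry, giving $S=\sum_\ell g_\ell - Q$ with $Q$ the number of odd-parity columns, and then the load-bearing new observation is that the parities $W_\ell\bmod 2$ are pinned — by your telescoping identity, which I checked — to be the binary digits of the \emph{fixed} integer $\sum_\ell g_\ell 2^\ell=(\sum_i 2^i)^2=(2^n-1)^2=2^0+\sum_{k=n+1}^{2n-1}2^k$, whose Hamming weight is exactly $n$, so $Q=n$ and the total is $n^2-n$. What your route buys is a uniform, casework-free argument that makes the $-n$ correction to the naive $n^2$ estimate transparent; what the paper's more explicit bookkeeping buys is per-column wire-count formulas that it then reuses almost verbatim in \lem{triple} and \lem{sumsqa}. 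Your identity extends cleanly to those as well: for a sum of $k$ squares the count becomes $kn^2-Q$ with $Q$ the Hamming weight of $k(2^n-1)^2$, which reproduces $3n^2-n-1$ for $k=3$, $n\ge 3$ and even tightens it slightly at $n=2$ — consistent with the remark after \lem{sumsqa} that the achievable count sits a little below the proven bound.
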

	\begin{proof}
Writing an $n$-bit number $p$ in terms of its bits
\begin{equation}
p = \sum_{j=0}^{n-1} p_j 2^j,
\end{equation}
we have
\begin{align}
p^2 &= \sum_{j=0}^{n-1}\sum_{k=0}^{n-1} p_j p_k 2^{j+k}\nn
&=  \sum_{j=0}^{n-1} p_j 2^{2j} + 2\sum_{j=0}^{n-1}\sum_{k=0}^{j-1} p_j p_k 2^{j+k}\nn
&= \sum_{j=0}^{n-1} p_j 2^{2j} +
\sum_{\ell=0}^{2n-3} 2^{\ell+1} \sum_{j=\max(0,\ell-n+1)}^{\lfloor (\ell-1)/2\rfloor} p_{\ell-j} p_j ,
\end{align}
where $\lfloor (\ell-1)/2\rfloor$ is to ensure that $\ell-j>j$.
Breaking this up into odd and even gives
\begin{equation}
p^2 = p_0 + \sum_{\ell=1}^{n-2} 2^{2\ell+1}  \sum_{j=\max(0,2\ell-n+1)}^{\ell-1} p_{2\ell-j} p_j +
\sum_{\ell=1}^{n-1} 2^{2\ell} \left( p_\ell+\sum_{j=\max(0,2\ell-n)}^{\ell-1} p_{2\ell-1-j} p_j\right) .
\end{equation}

Now say $n$ is even, so that for $\ell < n/2$ we have $\max(0,2\ell-n+1)=0$ and $\max(0,2\ell-n)=0$ and for $\ell\ge n/2$ we have $\max(0,2\ell-n+1)=2\ell-n+1$ and $\max(0,2\ell-n)=2\ell-n$.
Then we get
\begin{align}\label{eq:even}
p^2 &= p_0 + \sum_{\ell=1}^{n/2-1} 2^{2\ell} \left( p_\ell+\sum_{j=0}^{\ell-1} p_{2\ell-1-j} p_j\right)+
\sum_{\ell=1}^{n/2-1} 2^{2\ell+1}  \sum_{j=0}^{\ell-1} p_{2\ell-j} p_j \nn & \quad +
\sum_{\ell=n/2}^{n-1} 2^{2\ell} \left( p_\ell+\sum_{j=2\ell-n}^{\ell-1} p_{2\ell-1-j} p_j\right)+  \sum_{\ell=n/2}^{n-2} 2^{2\ell+1}  \sum_{j=2\ell-n+1}^{\ell-1} p_{2\ell-j} p_j .
\end{align}

Alternatively, if $n$ is odd,
for $\ell<(n+1)/2$ we have $\max(0,2\ell-n+1)=0$ and $\max(0,2\ell-n)=0$ and for $\ell\ge (n+1)/2$ we have $\max(0,2\ell-n+1)=2\ell-n+1$ and $\max(0,2\ell-n)=2\ell-n$.
Then we get
\begin{align}\label{eq:odd}
p^2 &= p_0 + \sum_{\ell=1}^{(n-1)/2} 2^{2\ell} \left( p_\ell+\sum_{j=0}^{\ell-1} p_{2\ell-1-j} p_j\right)+
\sum_{\ell=1}^{(n-1)/2} 2^{2\ell+1}  \sum_{j=0}^{\ell-1} p_{2\ell-j} p_j \nn & \quad +
\sum_{\ell=(n+1)/2}^{n-1} 2^{2\ell} \left( p_\ell+\sum_{j=2\ell-n}^{\ell-1} p_{2\ell-1-j} p_j\right)+  \sum_{\ell=(n+1)/2}^{n-2} 2^{2\ell+1}  \sum_{j=2\ell-n+1}^{\ell-1} p_{2\ell-j} p_j .
\end{align}

To write both odd and even cases together we can write
\begin{align}\label{eq:evenodd}
p^2 &= p_0 + \Sigma_1+\Sigma_2+\Sigma_3+\Sigma_4
\end{align}
where
\begin{align}
\Sigma_1 &= \sum_{\ell=1}^{\lceil n/2\rceil -1} 2^{2\ell} \left( p_\ell+\sum_{j=0}^{\ell-1} p_{2\ell-1-j} p_j\right) \\
\Sigma_2 &= \sum_{\ell=1}^{\lceil n/2\rceil -1} 2^{2\ell+1}  \sum_{j=0}^{\ell-1} p_{2\ell-j} p_j \\
\Sigma_3 &= \sum_{\ell=\lceil n/2\rceil}^{n-1} 2^{2\ell} \left( p_\ell+\sum_{j=2\ell-n}^{\ell-1} p_{2\ell-1-j} p_j\right) \\
\Sigma_4 &= \sum_{\ell=\lceil n/2\rceil}^{n-2} 2^{2\ell+1}  \sum_{j=2\ell-n+1}^{\ell-1} p_{2\ell-j} p_j .
\end{align}
In the following we will refer to $\Sigma_j$ by these symbols, and use ``sums'' to refer to sums of bits from individual terms of $\Sigma_j$.

The terms in each $\Sigma_j$ provide bits at various levels that need to be summed.
There is initially a bit from $p_0$, which is on its own and does not need to be summed.
Then the next level with bits to be summed corresponds to $\ell=1$ in $\Sigma_1$, then $\ell=1$ in $\Sigma_2$, then $\ell=2$ in $\Sigma_1$, and so forth.
Summing each set of bits leads to carry bits at the next level.
This proceeds up to bits at a level corresponding to $\ell=\lfloor n/2\rfloor -1$ for $\Sigma_2$, then we go to $\ell=\lfloor n/2\rfloor$ for $\Sigma_3$.
We now alternate between $\Sigma_3$ and $\Sigma_4$, until the last level is that with $\ell=n-1$ in $\Sigma_3$ (because $\Sigma_4$ ends at $\ell=n-2$).

Now, for each of the $\Sigma_j$ we have the following numbers of bits to sum given $\ell$.
\begin{enumerate}
\item $\Sigma_1$: There are $\ell+1$ bits to sum.
\item $\Sigma_2$: There are $\ell$ bits to sum.
\item $\Sigma_3$: There are $n-\ell+1$ bits to sum.
\item $\Sigma_4$: There are $n-\ell-1$ bits to sum.
\end{enumerate}
Therefore, for $\ell=1$ with $\Sigma_1$, we have $2$ bits to sum, giving one carry bit to $\ell=1$ with $\Sigma_2$, so there are again 2 bits to sum, giving one carry bit to $\ell=2$ with $\Sigma_1$.
There are then 4 bits to sum including the carry bit, with complexity 2 and 2 carry bits to $\ell=2$ with $\Sigma_2$.

In general, there are $2\ell$ bits to sum.
We have seen this at the starting values of $\ell$.
To prove by induction, we assume it is true for some $\ell$ for $\Sigma_1$.
Then there are $\ell$ carry bits from $\Sigma_1$ to $\Sigma_2$ at the same value of $\ell$.
That gives a total number of bits to sum $2\ell$ as required.
Then there are $(\ell+1)+1+\ell$ bits to sum for $\Sigma_1$ with $\ell'=\ell+1$.
That is equal to $2\ell'$, so the formula is correct for the next value of $\ell$, this proving the formula by induction.

Passing from $\Sigma_2$ to $\Sigma_3$ with $\ell=\lceil n/2\rceil$, there are $\lceil n/2\rceil-1$ carry bits, giving a total number of bits to sum $n-\lceil n/2\rceil +1 +\lceil n/2\rceil -1 = n$.
Now we will consider the cases of odd and even $n$ separately.
For even $n$, there are $n/2$ bits carried to $\Sigma_4$, and then a total number of bits to sum $n-n/2-1+n/2=n-1$.
Then there are $n/2-1$ bits carried to $\Sigma_3$ at $\ell=n/2+1$, giving a total number of bits to sum $n-(n/2+1)+1+n/2-1=n-1$.
Then there are $n/2-1$ bits carried to $\Sigma_4$ at $\ell=n/2+1$, giving a total number of bits to sum $n-(n/2+1)-1+n/2-1=n-3$.
These bits can be summed with $n/2-2$ Toffolis, giving the same number of carry bits.

In general, the number of bits to sum is $2(n-\ell)+1$ for $\Sigma_3$ and $2(n-\ell)-1$ for $\Sigma_4$, except for $2(n-\ell)$ with $\ell=n/2$ for $\Sigma_3$.
We have seen this for the starting values, and for the iteration there are $n-\ell$ carry bits from $\Sigma_3$ to $\Sigma_4$, giving a total number of bits to sum $2(n-\ell)-1$.
Then there are $n-\ell-1$ carry bits from $\Sigma_4$ to $\Sigma_3$ at $\ell'=\ell+1$, so there is then $n-(\ell+1)+1+n-\ell-1=2(n-\ell')+1$ bits to sum.
This gives the iteration required.

Then at the end we have $2(n-(n-1))+1=3$ bits to sum with $\ell=n-1$ for $\Sigma_3$, and there are no further bit sums needed.
Therefore, for the even $n$ case the total number of Toffolis is
\begin{align}
    \sum_{\ell=1}^{n/2-1} \ell + \sum_{\ell=1}^{n/2-1} \ell + \sum_{\ell=n/2}^{n-1} (n-\ell) + \sum_{\ell=n/2}^{n-2} (n-\ell-1) =n(n-1)/2.
\end{align}
The four sums here correspond to $\Sigma_1$ to $\Sigma_4$, and the special case $\ell=n/2$ for $\Sigma_3$ can be combined with the others because the cost of the bit sums is still $n/2$.
There are also $n(n-1)/2$ Toffolis needed to compute the products of bits that are to be summed, giving a total complexity $n(n-1)$ as required.

Then for the odd $n$ case, there are $(n-1)/2$ carry bits to $\Sigma_4$ at $\ell=(n+1)/2$, giving a total number of bits to sum $n-(n+1)/2-1+(n-1)/2=n-2$.
That is equal to $2(n-\ell)-1$, and so we are finding that the same formula as before holds.
Similarly we have $2(n-\ell)+1$ bits to sum for $\Sigma_3$, and in fact we find that this formula holds for the initial $\ell(n+1)/2$ for $\Sigma_3$ as well, so it does not need to be treated as a special case.
The iteration to prove the formula is exactly the same as for the case of even $n$, so will not be repeated.
Then the total Toffoli complexity is
\begin{align}
    \sum_{\ell=1}^{(n-1)/2} \ell + \sum_{\ell=1}^{(n-1)/2} \ell + \sum_{\ell=(n+1)/2}^{n-1} (n-\ell) + \sum_{\ell=(n+1)/2}^{n-2} (n-\ell-1) =n(n-1)/2.
\end{align}
Again there is the complexity $n(n-1)/2$ to compute the products of bits, giving a total complexity $n(n-1)$ as required.
	\end{proof}
	
Next we consider the case where we need to sum three squares.
This is a task that we need repeatedly, in order to calculate the norms of vectors.
Summing the bits all together is more efficient than simply separately computing the sums then adding them together.
The result is as follows.
	
	\begin{lemma}[Summing triple squares]
	\label{lem:triple}
	The sum of three $n$-bit binary numbers can be computed using $3n^2-n-1$ Toffolis.
	\end{lemma}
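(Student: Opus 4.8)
The plan is to reuse the bit-counting strategy of \lem{squaring_improved} but apply it to all three squares simultaneously, so that bits at the same level coming from different summands are merged before carrying. First I would write each of the three $n$-bit inputs $p,q,t$ in the expansion \eqref{eq:evenodd}, so that $p^2+q^2+t^2 = (p_0+q_0+t_0) + \sum_{j=1}^4 (\Sigma_j^{(p)}+\Sigma_j^{(q)}+\Sigma_j^{(t)})$. The key observation is that at level $\ell$ the number of bits to be summed is now three times what it was for a single square, \emph{plus} whatever carry bits are propagated up from the previous level; so I would redo the induction of \lem{squaring_improved} tracking this tripled count. Concretely: at the lowest nontrivial level there are $3$ bits (the $p_0,q_0,t_0$ contributions are actually a single bit of weight $1$ that does not need summing, then level $\ell=1$ of $\Sigma_1$ for the three numbers gives $2\cdot 3 = 6$ bits), and each ``sum'' step replaces $m$ bits by $\lfloor m/2\rfloor$ Toffolis and $\lfloor m/2\rfloor$ carry bits fed to the next level. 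I would verify by induction that the steady-state count at level $\ell$ is $6\ell$ (for $\Sigma_1$) and $6\ell$ (for $\Sigma_2$, after folding in the $3\ell$ carries), then $2\cdot 3(n-\ell)+\text{const}$ in the $\Sigma_3,\Sigma_4$ regime, exactly mirroring the single-square analysis but with every quantity tripled and a small correction for the carries entering the first level.

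The second ingredient is bookkeeping the cost of forming the bit-\emph{products} that are being summed. For each of the three numbers this is $n(n-1)/2$ Toffolis (one per off-diagonal product $p_jp_k$ with $j<k$, as in \lem{squaring_improved}), for a total of $3n(n-1)/2$. The cost of the bit-summation itself, by the tripled induction above, should come out to roughly $3n(n-1)/2$ as well, plus a few extra Toffolis to absorb the extra carries that arise because we now have three copies' worth of bits at the very first level (these extra carries do not appear in the single-square case, where the first level has only two bits and produces a single carry). Adding these two contributions and the small carry-correction is where the $-n-1$ term (rather than a clean $-2n$ or $0$) comes from: I would carry out the two geometric-type sums $\sum_\ell (6\ell)/2$ and $\sum_\ell 3(n-\ell)$ carefully, treating odd and even $n$ separately exactly as in \lem{squaring_improved}, and check that in both parities the grand total collapses to $3n^2-n-1$.

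The main obstacle I anticipate is the edge-of-range accounting: in \lem{squaring_improved} the transition $\Sigma_2 \to \Sigma_3$ at $\ell=\lceil n/2\rceil$ and the final level $\ell=n-1$ of $\Sigma_3$ needed special handling, and with three numbers these boundary levels now receive a mixture of fresh bits and an unusually large number of incoming carries, so the ``$2(n-\ell)\pm 1$'' formulas will need to be re-derived with the correct constants rather than just scaled by $3$. I would also have to be careful that the carries from the three separate first-level sums are themselves summed (this is what makes the coefficient of the linear term $-1$ instead of something larger), and to confirm that no extra overflow bit beyond the expected $2n$ result width is needed, since $p^2+q^2+t^2 < 3\cdot 2^{2n}$ fits in $2n+2$ bits and the leading bits can be handled by a short Hamming-sum tail at negligible cost. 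Once the parity cases are checked and the two sums evaluated, the arithmetic identity $3\cdot\frac{n(n-1)}{2} + 3\cdot\frac{n(n-1)}{2} + (\text{carry correction}) = 3n^2 - n - 1$ should close the proof; the forthcoming author's proof presumably does exactly this, and I would follow the same route.
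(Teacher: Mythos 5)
Your approach is essentially the same as the paper's: reuse the $\Sigma_1,\ldots,\Sigma_4$ decomposition from \lem{squaring_improved}, merge bits of equal weight across the three summands before carrying, prove by induction that the bit count per level is $6\ell$ in the $\Sigma_1,\Sigma_2$ regime and $6(n-\ell)+\mathrm{const}$ in the $\Sigma_3,\Sigma_4$ regime, add $3n(n-1)/2$ Toffolis for the bit products, and split into even and odd $n$. One detail worth fixing: $p_0,q_0,t_0$ are three distinct bits of weight one that must be summed (costing one Toffoli and producing one carry bit of weight two, which is the sole bit at that level and so needs no further work), not ``a single bit of weight 1 that does not need summing''; this Toffoli is precisely the leading ``$1+$'' in the paper's sum, and the resulting ``carry correction'' you anticipate actually comes out to $2n-1$ rather than a constant, so be sure the final bookkeeping reflects that linear term before concluding $3n^2-n-1$.
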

	\begin{proof}
We use the same expression for the squares as before from \eq{evenodd}, again using $\Sigma_j$.
In this case there are three bits at the first level, corresponding to $p_0$ for the 3 value of $p$ to square.
These may be summed with complexity 1, giving 1 carry bit at the next level (bits multiplied by 2).
However, the next level with bits to sum is that for bits multiplied by $4$, corresponding to $\ell=1$ in $\Sigma_1$.
Therefore the carry bits are not included at that level, and
there are $3(\ell+1)=6$ bits to sum, with complexity 3.
That gives three carry bits to the next level, which corresponds to $\Sigma_2$ with $\ell=1$.
There are again $3\ell+3=6$ bits to sum, with complexity 3.
Then these 3 carry bits are taken back to $\Sigma_1$ with $\ell=2$, and then we have $3(\ell+1)=12$ bits to sum, with complexity 6.
We have 6 carry bits for the next level, $\Sigma_2$ with $\ell=2$, which means there are $3\ell+6=12$ bits to sum, again with complexity 6.

In general, for $\Sigma_1$ and $\Sigma_2$ there are $6\ell$ bits to sum, which has complexity $3\ell$.
Note that we have already shown this for $\ell=1$ and $2$.
For induction, if it is true for some $\ell$ with $\Sigma_1$, then there are $3\ell$ carry bits being taken to term $\ell$ in $\Sigma_2$, so the number of bits to sum is $6\ell$ again.
Then $3\ell$ bits are being taken to term $\ell+1$ in $\Sigma_1$, and there are then $3(\ell+2)+3\ell = 6(\ell+1)$ bits to sum.
This demonstrates that our formula is correct via induction.

Next we consider the special case of even $n$, where
we have the following special starting cases for $\ell\ge n/2$.
\begin{enumerate}
\item For $\Sigma_3$ with $\ell=n/2$, we have $3n/2-3$ carry bits from the previous level ($\Sigma_2$ with $\ell=n/2-1$).
The number of bits to sum is therefore $3(n-\ell+1)+3n/2-3=3n$, which can be summed with $3n/2$ Toffolis.
\item There are $3n/2$ carry bits for $\Sigma_4$ with $\ell=n/2$, so there are $3(n-\ell-1)+3n/2=3n-3$ bits to sum.
These are summed with complexity $3n/2-2$.
\item There are $3n/2-2$ carry bits for $\Sigma_3$ at $\ell=n/2+1$.
That gives $3(n-\ell+1)+3n/2-2=3n-2$ bits to sum, which is done with complexity $3n/2-1$.
\item There are $3n/2-1$ carry bits for $\Sigma_4$ with $\ell=n/2+1$, so there are $3(n-\ell-1)+3n/2-1=3n-7$ bits to sum.
These are summed with complexity $3n/2-4$.
\item There are $3n/2-4$ carry bits for $\Sigma_3$ at $\ell=n/2+2$.
That gives $3(n-\ell+1)+3n/2-4=3n-7$ bits to sum, which is done with complexity $3n/2-4$.
\end{enumerate}
From this point on, we find that there are $6(n-\ell)-1$ bits to sum for $\Sigma_4$ with $\ell\ge n/2+1$, and $6(n-\ell)+5$ bits to sum for $\Sigma_3$ with $\ell\ge n/2+2$.
The last two items above we use as the starting point for the induction, where we see that the formula is correct.

To prove the iteration for induction, say there are $6(n-\ell)+5$ bits to sum for $\Sigma_3$ with some $\ell$.
Then there are $3(n-\ell)+2$ carry bits to $\Sigma_4$.
That gives $3(n-\ell-1)+3(n-\ell)+2=6(n-\ell)-1$ bits for $\Sigma_4$ as required.
Then those can be summed with $3(n-\ell)-1$ Toffolis, giving this number of carry bits to $\Sigma_3$ at $\ell+1$.
Then the number of bits to sum is $3(n-\ell)+3(n-\ell)-1=6(n-\ell)-1 = 6(n-(\ell+1))+5$.
This is the formula for $\Sigma_3$ for $\ell+1$ as required.

Finally, for $\ell=n-1$, there are $11$ bits to sum for $\Sigma_3$.
These give 5 bits at the next level, but there is no matching term from $\Sigma_4$, so these can be completely summed with just another 3 Toffolis (there are two Toffolis and carry bits to the next level, then the two bits at the next level can be summed with one Toffoli).
As a result, the overall Toffoli complexity is
\begin{align}
&1+\sum_{\ell=1}^{n/2-1} 3\ell +\sum_{\ell=1}^{n/2-1} 3\ell +3n/2+(3n/2-2)+(3n/2-1)+\sum_{\ell=n/2+2}^{n-1} [3(n-\ell) +2] + \sum_{\ell=n/2+1}^{n-2} [3(n-\ell)-1] + 3 \nn
&=n(3n+1)/2 -1 .
\end{align}
Now, we also note that to compute the $n(n-1)/2$ bit products $p_jp_k$ for each of the three components of $p$ requires $3n(n-1)/2$ Toffolis.
Adding that to the Toffoli count above gives $3n^2-n-1$, as required.

Next we consider the case of odd $n$ with $\Sigma_3$ and $\Sigma_4$.
Then for $\ell\ge (n+1)/2$, we have the following starting cases.
\begin{enumerate}
\item For $\Sigma_3$ with $\ell=(n+1)/2$ we have $3(n-1)/2$ carry bits from $\Sigma_2$.
There is then a total of $3(n-\ell+1)+3(n-1)/2=3n$, which can be summed with $(3n-1)/2$ Toffolis.
\item For $\Sigma_4$ with $\ell=(n+1)/2$ we have $3(n-1)/2$ carry bits, giving a total $3(n-\ell-1)+(3n-1)/2=3n-5$.
This can be summed with $(3n-5)/2$ Toffolis.
\item For $\Sigma_3$ with $\ell=(n+3)/2$ we have $(3n-5)/2$ carry bits from $\Sigma_4$.
The total number of bits to sum is $3(n-\ell+1)+(3n-5)/2=3n-4$, which can be summed with $(3n-5)/2$ Toffolis.
\item For $\Sigma_4$ with $\ell=(n+3)/2$ we have $3(n-5)/2$ carry bits, giving a total $3(n-\ell-1)+(3n-5)/2=3n-10$.
This can be summed with $(3n-11)/2$ Toffolis.
\item For $\Sigma_3$ with $\ell=(n+5)/2$ we have $(3n-11)/2$ carry bits from $\Sigma_4$.
The total number of bits to sum is $3(n-\ell+1)+(3n-5)/2=3n-10$, which can be summed with $(3n-11)/2$ Toffolis.
\end{enumerate}
From these last two cases on, we have the same formulae for the numbers of bits to sum as before, and again they can be used as the starting point for the induction.
The induction proceeds in exactly the same way as before, so will not be repeated.

Lastly, for $\ell=n-1$, there are again $11$ bits to sum for $\Sigma_3$, and the 5 carry bits can be completely summed with 3 Toffolis.
As a result, the overall Toffoli complexity is
\begin{align}
&1+\sum_{\ell=1}^{(n-1)/2} 3\ell +\sum_{\ell=1}^{(n-1)/2} 3\ell +(3n-1)/2+(3n-5)+\sum_{\ell=(n+5)/2}^{n-1} [3(n-\ell) +2] + \sum_{\ell=(n+3)/2}^{n-2} [3(n-\ell)-1] + 3 \nn
&=n(3n+1)/2 -1.
\end{align}
Again, adding the $3n(n-1)/2$ Toffolis needed for computing the bit products gives the total of $3n^2-n-1$, as required.
	\end{proof}
	
Next we consider a more general case where we are summing $k$ squares.
In this case the bound we prove an upper bound on the complexity that is more efficient than computing the squares then summing, but numerical testing indicates that the actual complexity will be slightly lower.

\begin{lemma}[Binary sums of squares]
\label{lem:sumsqa}
The sum of the squares of $k$ binary numbers, each of $n$ bits, can be computed using $kn^2$ Toffoli gates.
\end{lemma}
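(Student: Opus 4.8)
The plan is to mimic the carry-save strategy used in the proofs of \lem{squaring_improved} and \lem{triple}, but to replace their explicit level-by-level induction by a single global counting argument that is uniform in $k$. Writing the $i$-th input as $p^{(i)}=\sum_{a=0}^{n-1}p^{(i)}_a2^a$, I would first expand each square exactly as in \eq{evenodd}, so that $(p^{(i)})^2$ is presented as a collection of single bits placed at the levels (columns) $0,2,3,\ldots,2n-2$: one ``diagonal'' bit $p^{(i)}_a$ at each even level $2a$ (and $p^{(i)}_0$ at level $0$), together with the off-diagonal bit products $p^{(i)}_ap^{(i)}_b$ ($a>b$) at the appropriate shifted levels. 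Summing the $k$ expansions simply superposes all of these bits column by column.

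The first contribution to the cost is computing the bit products: there are $\binom{n}{2}=n(n-1)/2$ of them per square, each obtained with a single Toffoli, for a total of $kn(n-1)/2$, and they can subsequently be uncomputed with measurements and Clifford gates (as in the earlier lemmas) at no Toffoli cost. Let $B_0=kn(n+1)/2$ denote the total number of ``native'' bits fed into the summation — the $kn$ diagonal bits plus the $kn(n-1)/2$ products.

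The second contribution is summing those $B_0$ bits into the binary representation of the result. Here I would process levels in increasing order: at level $c$, with $m_c$ bits present (its native bits plus the carries received from level $c-1$), apply full adders — together with exactly one half adder when $m_c$ is even — to collapse level $c$ to a single residual bit, sending one carry bit per adder up to level $c+1$. A direct count shows this uses exactly $\lfloor m_c/2\rfloor$ Toffolis and produces exactly $\lfloor m_c/2\rfloor$ carries at level $c$. Hence, if $T$ is the total number of summation Toffolis, the total number of carries is also $T$, so $\sum_c m_c=B_0+T$, and therefore
\begin{equation}
T \;=\; \sum_c \left\lfloor \frac{m_c}{2}\right\rfloor \;\le\; \frac12\sum_c m_c \;=\; \frac{B_0+T}{2},
\end{equation}
which rearranges to $T\le B_0=kn(n+1)/2$. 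Adding the two contributions gives a total Toffoli count of at most $kn(n-1)/2+kn(n+1)/2=kn^2$, as claimed. Termination of the forward sweep is immediate, since the carries entering the (eventually empty) high levels halve each step and the answer occupies only $2n+\lceil\log_2 k\rceil$ levels.

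The step I expect to need the most care is the per-level bookkeeping: verifying that one can always reach a single residual bit at level $c$ using precisely $\lfloor m_c/2\rfloor$ adders — full adders until one or two bits remain, then a half adder only in the even case — with each adder contributing exactly one Toffoli and one carry, consistent with the paper's convention that summing two or three bits costs one Toffoli. Once that is pinned down, the self-referential inequality $T\le(B_0+T)/2$ does the remaining work and sidesteps the odd/even-$n$ and $\Sigma_1$–$\Sigma_4$ casework carried out explicitly for $k=1$ and $k=3$. As a sanity check, for $k=1$ and $k=3$ the bound obtained here is $n^2$ and $3n^2$, which are weaker than (hence consistent with) the tighter $n(n-1)$ and $3n^2-n-1$ counts of \lem{squaring_improved} and \lem{triple}.
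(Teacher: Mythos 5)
Your proof is correct, and it takes a genuinely different (and cleaner) route than the paper's. Both arguments expand each square into the same collection of diagonal bits and off-diagonal bit products and process the columns with the carry-save strategy, and both use the same per-level fact that collapsing $m_c$ bits to a single residual bit costs exactly $\lfloor m_c/2\rfloor$ Toffolis and emits exactly that many carries. Where the paper then pins down the exact value of $m_c$ at every level by an explicit induction over $\Sigma_1,\ldots,\Sigma_4$ (split into even and odd $n$, with a special adjustment to avoid rounding down so the initial $k$ bits can be bounded separately), you instead observe the global identity $\sum_c m_c = B_0 + T$ — each slot is filled either by a native bit (counted in $B_0 = kn(n+1)/2$) or by a carry (counted in $T$, since each Toffoli emits exactly one carry) — and then close the loop with $T = \sum_c \lfloor m_c/2\rfloor \le \frac{1}{2}\sum_c m_c = (B_0+T)/2$, hence $T \le B_0$. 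Adding the $kn(n-1)/2$ Toffolis for the bit products gives $kn^2$. This double-counting argument sidesteps the odd/even-$n$ casework and the separate handling of the final "complete sum" entirely, and it is manifestly uniform in $k$. The one thing it gives up is the ability to extract the slightly tighter exact counts that the paper's explicit bookkeeping yields for $k=1$ and $k=3$ (Lemmas~\ref{lem:squaring_improved} and~\ref{lem:triple}), and which the paper notes numerically beat $kn^2$ by a lower-order term; but for the lemma as stated, $T\le B_0$ is all that is needed, and your argument delivers it more economically.
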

\begin{proof}
Again we can use \eq{evenodd} with $\Sigma_j$ for a square of a single integer.
Similar to the previous cases, we need to consider the sum of bits that are multiplied by 1, giving carry bits to those multiplied by 2, then sum the bits that are multiplied by 2, and so forth.
In the above case where we summed the squares of three numbers, there were three bits to sum that were multiplied by 1 (the least significant bit for each of the three numbers).
That gave a single carry bit that was multiplied by 2.
That was the only bit that was multiplied by 2, then the next level was bits that were multiplied by 4, corresponding to those for $\ell=1$ in $\Sigma_1$.
Therefore no sum was needed for the single bit, and one could simply consider the sum of the bits for $\ell=1$ for $\Sigma_1$.

In the case where we are summing the squares for $k$ numbers where $k$ may be larger than 3, there may be more than one carry bit multiplied by 2.
Those then need to be summed, which will introduce carry bits that are multiplied by 4, and will need to be summed together with the other bits from $\ell=1$ in $\Sigma_1$.
The complexity of the sum of $m$ bits (giving carry bits at the next level) is generally $\lfloor m/2\rfloor$ Toffolis, so if $m$ is even it is just $m/2$, but if $m$ is odd then the complexity is $m/2$ rounded down to the nearest integer.
This feature can be used to bound the complexity due to summing bits in terms in $\Sigma_j$ together with possible carry bits that were obtained from the initial sum of $k$ bits (the least significant bits from each of the $k$ numbers).
If we have $m$ bits to sum and another $c$ bits to sum together with those, then the number of Toffolis needed is no larger than $(m+c)/2$, regardless of whether $m+c$ is odd or even.

What this means is that we can upper bound the complexity by considering the sums of the bits in $\Sigma_j$ separately from the carry bits from the initial sum of $k$ bits.
If we simply upper bound the complexity by dividing by 2, then the complexity due to summing both together can be upper bounded by upper bounding the complexities independently then adding the two upper bounds.

With that in mind, let us consider the numbers of bits to sum for $\Sigma_1$ and $\Sigma_2$ disregarding any carry bits from the initial sum over $k$ bits.
Here, the number of bits to sum (including carry bits from $\Sigma_1$ and $\Sigma_2$) will be shown to be $2k\ell$.
Recall that we sum bits with $\ell=1$ in $\Sigma_1$, then with $\ell=1$ for $\Sigma_2$, then for $\ell=2$ for $\Sigma_1$, and so forth.
To show this expression by induction, we first consider $\Sigma_1$ and $\ell=1$, where we have $k(\ell+1)=2k=2k\ell$ bits to sum.
Therefore the formula is true for $\ell=1$.
Then assuming the formula is true for some $\ell$ for $\Sigma_1$, we have $k\ell$ carry bits to $\Sigma_2$.
The total number of bits to sum is then $k\ell+k\ell=2k\ell$ as required.
Then, given the formula for $\Sigma_2$, there are $k\ell$ bits to carry over to $\Sigma_1$ at $\ell'=\ell+1$, giving $k\ell+k(\ell+2)=2k\ell'$ bits to sum, which is the correct formula with $\ell'=\ell+1$, thereby establishing the iteration for induction.
That proves that the formula $2k\ell$ is correct.

Now, considering the case of even $n$, we need to consider $\Sigma_3$ with $\ell=n/2$.
Then we have $k(n/2-1)$ carry bits and $k(n/2+1)$ bits from $\Sigma_3$, giving a total number of bits to sum $kn$.
We will write this number of bits as $2k(n-\ell+1-2^{n-2\ell})$ for reasons that will become evident shortly.
Then, given this formula, the number of carry bits to $\Sigma_4$ is no more than $k(n-\ell+1-2^{n-2\ell})$.
Here we are now ignoring any rounding down for odd cases so we can later introduce the complexity of summing the initial $k$ bits.

The total number of bits to sum at $\ell$ for $\Sigma_4$ is then $k(n-\ell-1)+k(n-\ell+1-2^{n-2\ell})=2k(n-\ell-2^{n-2\ell-1})$.
Then there are $k(n-\ell-2^{n-2\ell-1})$ carry bits to $\Sigma_3$ with $\ell'=\ell+1$.
The total number of bits to sum is then $k(n-\ell)+k(n-\ell-2^{n-2\ell-1})=2k(n-\ell-2^{n-2\ell-2})=2k(n-\ell'+1-2^{n-2\ell'})$.
This iteration means that for $\ell>n/2$ the number of bits to sum may be upper bounded as $2k(n-\ell+1-2^{n-2\ell})$ and $2k(n-\ell-2^{n-2\ell-1})$ for $\Sigma_3$ and $\Sigma_4$, respectively.

The highest level in $\Sigma_3$ and $\Sigma_4$ is $\ell=n-1$ for $\Sigma_3$, which corresponds to bits multiplied by $2^{2n-2}$.
Using the formula we just proved, we have an upper bound on the number of bits to sum $2k(2-2^{2-n})$, and
the number of carried bits is $k(2-2^{2-n})$.
To completely sum those bits the complexity would usually be given as the number of bits minus 1.
However, that is using rounding down, and in order to account for the initial $k$ bits summed we are not rounding down.
Then the complexity of summing this number of bits is just a number of Toffolis equal to the number of bits.
Similarly, we give the complexity of summing the $k$ bits from the first level as $k$.
This gives the overall complexity as
\begin{align}
&k+\sum_{\ell=1}^{n/2-1} k\ell +\sum_{\ell=1}^{n/2-1} k\ell +\sum_{\ell=n/2}^{n-1} k(n-\ell+1-2^{n-2\ell}) + \sum_{\ell=n/2}^{n-2} k(n-\ell-2^{n-2\ell-1}) + k(2-2^{2-n}) \nn
&=kn(n+1)/2.
\end{align}
In this expression the initial term with $k$ is the upper bound on the contribution to the complexity from summing the initial $k$ bits.
Terms 2 to 5 are then complexities of the sums for $\Sigma_1$ to $\Sigma_4$, respectively.
The final term is the upper bound on the complexity for summing the final $k(2-2^{2-n})$ bits.
There are $kn(n-1)/2$ Toffolis needed to compute the bit products of the form $p_jp_k$, so the total Toffoli complexity is upper bounded by $kn^2$.

Then for the case of odd $n$, for $\Sigma_3$ at $\ell=(n+1)/2$, we have $k(n-1)/2$ carry bits and $k(n-\ell+1)=k(n+1)/2$ bits from $\Sigma_3$.
The total number of bits to sum is $kn$ again.
We can again write this as $2k(n-\ell+1-2^{n-2\ell})$, and we obtain the formula $2k(n-\ell-2^{n-2\ell-1})$ for the number of bits to sum at each level for $\Sigma_4$ again, by exactly the same process of logic as for the even $n$ case.
In the sum for the upper bound on the number of Toffolis needed, only the bounds on the sums are changed, and we obtain
\begin{align}
&k+\sum_{\ell=1}^{(n-1)/2} k\ell +\sum_{\ell=1}^{(n-1)/2} k\ell +\sum_{\ell=(n+1)/2}^{n-1} k(n-\ell+1-2^{n-2\ell}) + \sum_{\ell=(n+1)/2}^{n-2} k(n-\ell-2^{n-2\ell-1}) + k(2-2^{2 - n}) \nn
&=kn(n+1)/2.
\end{align}
Again adding the $kn(n-1)/2$ Toffolis for computing bit products gives a total of $kn^2$.
\end{proof}
Despite this result being quite neat, it is not as tight as what we find when computing the Toffoli costs for particular examples.
Numerically we find that the number of Toffolis needed is $kn^2-n$, which is $n$ lower than was proven above.
In the case where $k$ is a multiple of 3, numerical testing indicates the even tighter bound $kn^2-n-1$.

Lastly we give the complexity for products.
The result is as follows.
\begin{lemma}[Binary products]
\label{lem:products}
The cost of multiplying two numbers, one with $n$ bits and the other with $m$ bits, is $2nm-n$ Toffolis for $n\ge m$.
\end{lemma}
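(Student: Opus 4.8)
The plan is to mirror the accounting strategy already used for squaring in \lem{squaring_improved} and for sums of squares in \lem{triple} and \lem{sumsqa}, but now applied to a genuine product where there is no symmetry to exploit. Writing the two numbers in binary as $a=\sum_{j=0}^{n-1}a_j2^j$ and $b=\sum_{k=0}^{m-1}b_k2^k$, we have
\begin{equation}
ab=\sum_{\ell=0}^{n+m-2}2^\ell\sum_{\substack{j+k=\ell\\0\le j\le n-1,\ 0\le k\le m-1}}a_jb_k.
\end{equation}
The first step is to compute every bit product $a_jb_k$; each is an AND of two qubits and hence costs a single $\Toffoli$, and there are $nm$ of them, contributing $nm$ $\Toffoli$ gates. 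The remaining task is to sum these $nm$ bits, now grouped by their level $\ell=j+k$, into the $(n+m)$-bit result register. Let $c_\ell$ denote the number of bit products at level $\ell$; one checks directly, using $n\ge m$, that $c_\ell=\ell+1$ for $0\le\ell\le m-1$, that $c_\ell=m$ for $m-1\le\ell\le n-1$, and that $c_\ell=n+m-1-\ell$ for $n-1\le\ell\le n+m-2$, so that $\sum_\ell c_\ell=nm$.

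The summation itself is performed exactly as in the Hamming-sum scheme of \cite{Kivlichan2020improvedfault}: at each level we reduce the bits present to a single output bit, pushing the carries up to the next level, so that reducing $m'$ bits at a level costs $\floor{m'/2}$ $\Toffoli$ gates and produces $\floor{m'/2}$ carry bits. Writing $d_\ell$ for the number of carry bits entering level $\ell$ (with $d_0=0$), the cost at level $\ell$ is $\floor{(c_\ell+d_\ell)/2}$ and $d_{\ell+1}=\floor{(c_\ell+d_\ell)/2}$. The next step is to solve this recursion region by region: for $1\le\ell\le m-1$ one shows by induction that $d_\ell=\ell-1$, so the level cost is $\ell$ and the region total is $m(m-1)/2$; for $m\le\ell\le n-1$ the recursion reaches the fixed point $d_\ell=m-1$ with level cost $m-1$, giving $(n-m)(m-1)$; and for $n\le\ell\le n+m-2$ a short induction gives level costs $m-1,m-2,m-3,\dots,1$, hence again $m(m-1)/2$. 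Adding these yields $m(m-1)+(n-m)(m-1)=n(m-1)=nm-n$ $\Toffoli$ gates for the summation. Combined with the $nm$ products, the total is $2nm-n$, as claimed.

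The main obstacle I anticipate is the careful bookkeeping of the carry propagation across the three regions, in particular verifying the base cases of the inductions and the behaviour at the three transition levels $\ell=0$, $\ell=m-1\to m$, and $\ell=n-1\to n$, as well as confirming that the last carry bit lands at position $n+m-1$ with no further reduction needed, so that the result really fits in $n+m$ bits and no stray $\Toffoli$ is overlooked. One should also note that a half adder, which is invoked whenever $c_\ell+d_\ell$ is odd, still costs one $\Toffoli$ (for its carry) and is therefore already accounted for by the $\floor{(c_\ell+d_\ell)/2}$ count; and that the $nm$ bit products can be retained as ancillae so that they are erased with Clifford gates when the multiplier is uncomputed, though only the $\Toffoli$ count enters the statement.
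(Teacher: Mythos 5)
Your proof is correct and follows essentially the same approach as the paper: decompose the $nm$ partial products into three level-ranges, propagate carries via the Hamming-sum/Wallace-tree scheme of $\lfloor m'/2\rfloor$ Toffolis per $m'$-bit column, and add the region totals $m(m-1)/2 + (n-m)(m-1) + m(m-1)/2 = n(m-1)$ to the $nm$ bit-product Toffolis. You merely make the carry recursion $d_{\ell+1}=\lfloor(c_\ell+d_\ell)/2\rfloor$ explicit where the paper tracks it informally.
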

\begin{proof}
Without loss of generality we take $p$ to be of length $n$ and $q$ to be of length $m$ bits.
To consider the cost of products, we can write out the product as
\begin{equation}
\sum_{j=0}^{n-1} \sum_{k=0}^{m-1} p_j q_k 2^{j+k} =
\sum_{\ell=0}^{m-1} \sum_{k=0}^\ell p_{\ell-k} q_k 2^{\ell} +
\sum_{\ell=m}^{n-1}  \sum_{k=0}^{m-1} p_{\ell-k} q_k 2^{\ell} +
\sum_{\ell=n}^{n+m-2}  \sum_{k={\ell-n+1}}^{m-1} p_{\ell-k} q_k 2^{\ell} .
\end{equation}
Then for the first few levels we have the following.
\begin{enumerate}
\item At $\ell=0$, there is just a single bit and no sum needed.
\item At $\ell=1$ there are two bits with a cost of 1 and one carry bit.
\item At $\ell=2$ there are three bits and one carry bit to sum giving cost 2 and 2 carry bits.
\end{enumerate}
The general formula is that at $\ell$ the cost of summing is $\ell$.
To see that, if there are $\ell$ carry bits to $\ell'=\ell+1$, the number of bits to sum is then $2\ell+2$, which has cost $\ell+1=\ell'$, so the iteration needed for induction holds.

Then for $\ell=m$ to $n-1$ there are always $m$ bits to sum and the carry bits.
So, at $\ell=m$ there are $m-1$ carry bits and $m$ bits giving cost $m-1$.
After that there are always $m-1$ carry bits and $m$ bits to sum, giving cost $m-1$.

After that, for $\ell\ge n$, there are $n+m-1-\ell$ bits to sum plus the carry bits.
At $\ell=n$ there are $m-1$ carry bits and another $m-1$, giving cost $m-1$.
Then at $\ell=n+1$ there are $m-1$ carry bits and another $m-2$, giving cost $m-2$.
In general, the cost of summing at $\ell$ is $n+m-1-\ell$.
The total cost is then
\begin{equation}
\sum_{\ell=0}^{m-1} \ell +
\sum_{\ell=m}^{n-1} (m-1)+
\sum_{\ell=n}^{n+m-2} (n+m-1-\ell) =nm-n.
\end{equation}
The cost of computing the bit products is then $nm$ Toffolis, giving a total cost of $2nm-n$ as required.
\end{proof}

\section{Bound on time-discretization error}
\label{app:timedisc}

For the time evolution under the Dyson series, we have the time evolution operator that evolves an initial state as
\begin{equation}
\ket{\psi(t)}=\mathcal{T} \exp \left( -\mi \int_0^t H(s) \, ds \right)\ket{\psi(0)}.
\end{equation}
Here 
\begin{equation}
    \partial_t \mathcal{T} \exp \left( -\mi \int_0^t H(s) \, ds \right) = -\mi H(t) \mathcal{T} \exp \left( -\mi \int_0^t H(s) \, ds \right).
\end{equation}
We now consider the error due to the discretization of this integral.
In each short time interval, we have
\begin{equation}
\int_{\tau}^{\tau+\delta} H(s) \, ds \approx H(\tau) \delta \, .
\end{equation}
We can approximate $H$ within the interval by
\begin{equation}
H(s) \approx H(\tau) + (s-\tau) H'(\tau).\label{eq:Hlin}
\end{equation}
The error is upper bounded by
\begin{equation}
\left\|  H(\tau) -H(s) \right\| \le \delta \max_{v\in [\tau,\tau+\delta]} \|H'(v)\|.
\end{equation}
Then we can use the triangle inequality on unitary operators in the form
\begin{align}
\| ABC - A' B' C' \| &= \| ABC -A'BC + A'BC - A'B'C + A'B'C - A' B' C' \| \nn
&\le \|(A-A')BC\| + \|A'(B-B')C\| + \|A'B'(C-C')\| \nn
&= \|A-A'\| + \|B-B'\| + \|C-C'\|.
\end{align}
That means, for the time evolution, which is a product of evolutions for infinitesimal time periods, we can bound the difference by the integral of the differences between the Hamiltonians.
So, if we have Hamiltonians $H_1$ and $H_2$, then
\begin{align}
&\left\| \mathcal{T} \exp \left( -\mi \int_0^t H_1(s) \, ds \right) - \mathcal{T} \exp \left( -\mi \int_0^t H_2(s) \, ds \right) \right\| \nonumber\\
&= \left\| \int_0^t \partial_\tau \mathcal{T} \exp \left( -\mi \int_0^\tau H_1(s) \, ds \right)\mathrm{d}\tau - \int_0^t \partial_\tau \mathcal{T} \exp \left( -\mi \int_0^\tau H_2(s) \, ds \right)\mathrm{d}\tau \right\| \nonumber\\
&= \left\| \int_0^t (-\mi H_1(\tau)) \mathcal{T} \exp \left( -\mi \int_0^\tau H_1(s) \, ds \right)\mathrm{d}\tau - \int_0^t (-\mi H_2(\tau)) \mathcal{T} \exp \left( -\mi \int_0^\tau H_2(s) \, ds \right)\mathrm{d}\tau \right\| \nonumber\\
&\le \int_0^t \| H_1(s) - H_2(s)\| \, ds.
\end{align}
In the case of the interaction picture, we have
\begin{equation}
H(t) = e^{\mi tA} B e^{-\mi tA},
\end{equation}
and thus the product rule implies
\begin{equation}
H'(t) = \mi \, e^{\mi tA} [A,B] e^{-\mi tA}.
\end{equation}
As a result, using the triangle inequality and sub-multiplicative property of the spectral norm we can bound $\|H'(t)\|$ as
\begin{equation}
\|H'(t)\| \le 2\|A\| \, \|B\|.
\end{equation}
For a single time step of length $\delta$ we have
\begin{align}
 \int_{\tau}^{\tau+\delta} \left\| H(\tau)-H(s) \right\| ds  &\le \int_{\tau}^{\tau+\delta}|s-\tau| 2\|A\|\, \|B\| \, ds \nn
&= \delta^2 \|A\| \, \|B\|.
\end{align}
Therefore, if the integral is approximated by $M$ time steps of size $\delta=\tau/M$, then the error in the integral is bounded by
\begin{equation}
M (\tau/M)^2 \|A\| \, \|B\| = \frac{\tau^2}{M} \|A\| \, \|B\|.
\end{equation}

As we are working in the interaction frame of the kinetic operator, $\|A\|$ can be upper bounded by $\lambda_T$, and $\|B\|$ upper bounded by $\lambda_U+\lambda_V$.
We are block encoding the Dyson series for the time-evolution operator over a time interval $\tau = 1/(\lambda_U+\lambda_V)$, and the number of time steps is (by the definition of $n_t$) $M=2^{n_t}$.
As a result, the error in the block encoded operator due to time discretization is no larger than
\begin{equation}
\frac{\tau^2}{M} \|A\| \, \|B\| \le \frac{\lambda_T}{2^{n_t}(\lambda_U+\lambda_V)}.
\end{equation}
This error translates to an error in the estimated eigenvalue of $H/(\lambda_U+\lambda_V)$.
Therefore, if $\epsilon_t$ is the allowable error due to the time discretization, then we need
\begin{equation}
    \lambda_T 2^{-n_t} \le \epsilon_t.
\end{equation}
Therefore, $n_t$ should be chosen as
\begin{equation}
    n_t = \lceil \log(\lambda_T/ \epsilon_t) \rceil.
\end{equation}
Note that, in the above, the approximation of the time-dependent Hamiltonian is made at one side of the interval.
A more accurate approximation can be made by using the middle of the interval, though that only reduces the number of bits for the time by 1 in this analysis.
A more careful analysis can be made where the second-order correction to the Hamiltonian is considered.
That can be used to show that the error in the approximation of the integral over the Hamiltonian is higher order.


Consider $U_1(\delta)$ to be the time evolution operator over the interval $[\tau-\delta/2,\tau+\delta/2)$ for the Hamiltonian $H$, and $U_0(\delta)$ to be the time evolution operator for the Hamiltonian in the center of the interval, $H(\tau)$.  We can find an expression for the derivative of $U_1(\delta)$ by integrating the differential equation for the time-evolution operator.  As appropriate, we will also use the notation that the time evolution operator from times $a$ to $b$ is $U_1(b,a)$.
Then we obtain
\begin{align}
    \partial_\delta U_1(\delta) &= \partial_\delta (U(\tau+\delta/2,\tau)U(\tau,\tau-\delta/2))\nonumber\\
    &= -\frac{\mi}{2}H(\tau+\delta/2)U_1(\delta) + U(\tau+\delta/2,\tau) (\partial_\delta U(\tau,\tau-\delta/2))\nonumber\\
    &=-\frac{\mi}{2}H(\tau+\delta/2)U_1(\delta) + U(\tau+\delta/2,\tau) (\partial_\delta U^\dagger(\tau-\delta/2,\tau))\nonumber\\
    &=-\frac{\mi}{2}H(\tau+\delta/2)U_1(\delta) + U(\tau+\delta/2,\tau) (\partial_\delta U(\tau-\delta/2,\tau))^\dagger\nonumber\\
    &=-\frac{\mi}{2}[H(\tau+\delta/2)U_1(\delta) + U_1(\delta)H(\tau-\delta/2)].
\end{align}
Next, it follows immediately  that the derivative of $U_0$ is
\begin{align}
    \frac {\partial}{\partial \delta} U_0(\delta) &= -\frac {\mi}2 \left[H(\tau)U_0(\delta)+U_0(\delta)H(\tau) \right].
\end{align}
We can then upper bound the rate of change of the norm of the difference of $U_0$ and $U_1$ by
\begin{align}
    \lim_{h\rightarrow 0} &\frac{1}{h}\big| \|U_1(\delta +h) - U_0(\delta +h)\| - \|U_1(\delta) - U_0(\delta)\| \big|\nn
     &\le\left\| \lim_{h\rightarrow 0} \frac{1}{h} \left[U_1(\delta +h) - U_0(\delta +h) - U_1(\delta) - U_0(\delta) \right]\right\|\nn
    &= \left\|\frac{\partial}{\partial \delta} [U_1(\delta) - U_0(\delta)]\right\| \nn
    &= \frac 12 \| U_1(\delta)H(\tau-\delta/2)+H(\tau+\delta/2) U_1(\delta)
    -U_0(\delta)H(\tau)-H(\tau) U_0(\delta)  \| \nn
    &= \frac 12 \Biggr\| U_1(\delta)H(\tau-\delta/2)+U_1(\delta)H(\tau+\delta/2) -2 U_1(\delta) H(\tau)\nn
& \quad - U_1(\delta)H(\tau+\delta/2)  + U_1(\delta) H(\tau) 
+ H(\tau+\delta/2) U_1(\delta)-H(\tau) U_1(\delta)\nn
& \quad  + U_1(\delta) H(\tau)-U_0(\delta)H(\tau)+H(\tau) U_1(\delta)-H(\tau) U_0(\delta) \Biggr\| \nn
&\le \frac 12 \Biggr( \| U_1(\delta) [H(\tau-\delta/2)+H(\tau+\delta/2)-2 H(\tau)] \|\nn
& \quad + \| [H(\tau+\delta/2)-H(\tau),U_1(\delta) ] \| \nn
& \quad +\|(U_1(\delta)-U_0(\delta))H(\tau)\| + \|H(\tau)(U_1(\delta)-U_0(\delta))\|\Biggr) .
\end{align}

Using an upper bound $\Lambda$ on $\|H\|$, we have that $\|U(\delta)-I\|\le \delta \Lambda$.
We can therefore say that
\begin{align}
\| [H(\tau+\delta/2)- H(\tau),U_1(\delta)] \| & = \| [H(\tau+\delta/2)- H(\tau) , U_1(\delta)-I] \| \nn
&\le 2\|U_1(\delta)-I\| \, \|H(\tau+\delta/2)-H(\tau)\| \nn
&\le 2\delta \Lambda \|H(\tau+\delta/2)-H(\tau)\|.
\end{align}
If there is an upper limit $\Gamma_1$ on the norm of the derivative of $H$, then
\begin{equation}
\| [H(\tau+\delta/2)- H(\tau),U_1(\delta)] \| \le \delta^2 \Lambda \Gamma_1.
\end{equation}
If there is a constant upper bound $\Gamma_2$ on the second derivative of $H$, we get
\begin{equation}
\| U_1(\delta) [ H(\tau+\delta/2) + H(\tau-\delta/2) - 2H(\tau)]\| \le \Gamma_2 \delta^2/4 .
\end{equation}
These bounds give us
\begin{equation}
\left\|U_1(\delta)-U_0(\delta) \right\| \le \int_0^\delta \left\{ \frac 12 (\Lambda \Gamma_1+\Gamma_2/4) s^2 + \Lambda \left\|U_1(s)-U_0(s) \right\| \right\}\mathrm{d}s .
\end{equation}
A solution to the above integral equation is
\begin{equation}
\left\|U_1(\delta)-U_0(\delta) \right\| \le \frac{4\Lambda\Gamma_1+\Gamma_2}{(2\Lambda)^3}\left[ 2\left( e^{\Lambda\delta}-1 \right) - \Lambda\delta (2+\Lambda\delta) \right] .
\end{equation}
For the interaction picture, we have 
\begin{align}
H(t) &= e^{\mi tA} B e^{-\mi tA}, \\
H'(t) &= \mi e^{\mi tA} [A,B] e^{-\mi tA}, \\
H''(t) &= - e^{\mi tA} [A,[A,B]] e^{-\mi tA}.
\end{align}
Therefore we can take $\Lambda=\|B\|$, $\Gamma_1=2\|A\| \, \|B\|$ and $\Gamma_2=4\|A\|^2  \|B\|$, so we get
\begin{equation}
\left\|U_1(\delta)-U_0(\delta) \right\| \le \left(\frac{\|A\|}{\|B\|} + \frac{\|A\|^2}{2\|B\|^2}\right) \left[ 2\left( e^{\|B\|\delta}-1 \right) - \|B\|\delta (2+\|B\|\delta) \right] .
\end{equation}
For our application, we are approximating the integral by $2^{n_t}$ time steps of size $\delta=\tau/2^{n_t}$.
Then the error in the time-ordered exponential over time step $\delta$ is bounded by
\begin{equation}\label{eq:timediscer}
2^{-n_t}\left(\frac{2\|A\|}{\|B\|} + \frac{\|A\|^2}{\|B\|^2}\right) \left[ 2^{n_t} \left( e^{\|B\|\tau/2^{n_t}}-1 \right) - \|B\|\tau (1+\|B\|\tau/2^{n_t+1}) \right] .
\end{equation}
Multiplying by $2^{n_t}$ gives the bound on the error for the time-ordered exponential over the entire region
\begin{equation}
\left(\frac{2\|A\|}{\|B\|} + \frac{\|A\|^2}{\|B\|^2}\right) \left[ 2^{n_t} \left( e^{\|B\|\tau/2^{n_t}}-1 \right) - \|B\|\tau (1+\|B\|\tau/2^{n_t+1}) \right] .
\end{equation}

This gives us the error in approximating the time-evolution operator with the centered difference formula and this leads us to the following lemma.
\begin{lemma}
Let $H(t) = e^{\mi tA} B e^{-\mi tA}$ for bounded Hermitian operators $A$ and $B$ and $t\in\mathbb{R}$.  We then have that if $n_t$ is a positive integer and $\delta = \tau/2^{n_t}$ 
then 
\begin{align*}
&\|\mathcal{T}e^{-\mi\int_{t_0 -\delta/2}^{t_0 +\delta/2} H(t) \mathrm{d}t} - e^{-\mi H({t_0})\delta}\|\nonumber\\
&\qquad\le 2^{-n_t}\left(\frac{2\|A\|}{\|B\|} + \frac{\|A\|^2}{\|B\|^2}\right) \left[ 2^{n_t} \left( e^{\|B\|\tau/2^{n_t}}-1 \right) - \|B\|\tau (1+\|B\|\tau/2^{n_t+1}) \right].
\end{align*}
\end{lemma}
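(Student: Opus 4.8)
The plan is to establish the lemma by a Gr\"onwall-type argument comparing the true time-ordered evolution $U_1(\delta)$ over the symmetric interval $[t_0-\delta/2, t_0+\delta/2)$ with the constant-Hamiltonian evolution $U_0(\delta)$ generated by $H(t_0)$. The key structural observation, which I would set up first, is that both $U_1$ and $U_0$ satisfy first-order differential equations in the \emph{width} parameter $\delta$ (keeping the center $t_0$ fixed): splitting $U_1(\delta) = U(t_0+\delta/2, t_0)\,U(t_0, t_0-\delta/2)$ and differentiating each factor via the Schr\"odinger equation gives
\begin{equation}
\partial_\delta U_1(\delta) = -\tfrac{\mi}{2}\bigl[H(t_0+\delta/2)U_1(\delta) + U_1(\delta)H(t_0-\delta/2)\bigr],
\end{equation}
while $\partial_\delta U_0(\delta) = -\tfrac{\mi}{2}[H(t_0)U_0(\delta) + U_0(\delta)H(t_0)]$. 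Subtracting, and then strategically adding and subtracting terms to group the difference into (i) a \emph{second-difference} term $U_1(\delta)[H(t_0+\delta/2)+H(t_0-\delta/2)-2H(t_0)]$, (ii) a \emph{commutator} term $[H(t_0+\delta/2)-H(t_0), U_1(\delta)]$, and (iii) terms proportional to $U_1(\delta)-U_0(\delta)$ itself, is the heart of the estimate. This is exactly the decomposition carried out in Appendix~\ref{app:timedisc}, so I would simply invoke that algebra.

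Next I would bound each of the three pieces. For (iii) I use $\|H\|\le\Lambda$ to get a factor $\Lambda\|U_1(\delta)-U_0(\delta)\|$, which is what makes the inequality self-referential (and hence Gr\"onwall-solvable). For (ii), the trick is that $H(t_0+\delta/2)-H(t_0)$ commutes trivially with the identity, so $[H(t_0+\delta/2)-H(t_0), U_1(\delta)] = [H(t_0+\delta/2)-H(t_0), U_1(\delta)-I]$; combining $\|U_1(\delta)-I\|\le\delta\Lambda$ with a first-derivative bound $\Gamma_1$ on $\|H'\|$ yields $\le \delta^2\Lambda\Gamma_1$. For (i), a Taylor expansion of $H$ about $t_0$ to second order kills the linear term by symmetry, leaving $\le \Gamma_2\delta^2/4$ where $\Gamma_2$ bounds $\|H''\|$. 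Assembling these, I obtain the integral inequality
\begin{equation}
\|U_1(\delta)-U_0(\delta)\| \le \int_0^\delta\!\Bigl\{\tfrac12(\Lambda\Gamma_1+\Gamma_2/4)s^2 + \Lambda\|U_1(s)-U_0(s)\|\Bigr\}\,ds,
\end{equation}
whose solution (checked by differentiation, or by the standard Gr\"onwall formula) is $\|U_1(\delta)-U_0(\delta)\| \le \frac{4\Lambda\Gamma_1+\Gamma_2}{(2\Lambda)^3}[2(e^{\Lambda\delta}-1) - \Lambda\delta(2+\Lambda\delta)]$.

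Finally I would specialize to the interaction-picture Hamiltonian $H(t) = e^{\mi tA}Be^{-\mi tA}$, for which $H'(t) = \mi e^{\mi tA}[A,B]e^{-\mi tA}$ and $H''(t) = -e^{\mi tA}[A,[A,B]]e^{-\mi tA}$, so that by submultiplicativity of the spectral norm one may take $\Lambda = \|B\|$, $\Gamma_1 = 2\|A\|\,\|B\|$, $\Gamma_2 = 4\|A\|^2\|B\|$. Substituting $\delta = \tau/2^{n_t}$ and simplifying $\frac{4\Lambda\Gamma_1+\Gamma_2}{(2\Lambda)^3} = \frac{\|A\|}{\|B\|} + \frac{\|A\|^2}{2\|B\|^2}$ gives the stated bound after pulling out a factor $2^{-n_t}$ and rewriting the bracket. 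I expect the main obstacle to be the bookkeeping in the add-and-subtract decomposition of step (i)--(iii): one must choose the intermediate terms so that everything regroups cleanly into a second difference, a commutator with $U_1-I$, and the self-referential remainder, with no leftover terms — and one must be careful that the estimate $\lim_{h\to0}\frac1h|\,\|X(\delta+h)\| - \|X(\delta)\|\,| \le \|\partial_\delta X(\delta)\|$ (reverse triangle inequality on norms) is legitimate so that the differential inequality integrates to the claimed integral inequality. Verifying that the closed-form expression actually solves the integral equation is routine differentiation and I would not belabor it.
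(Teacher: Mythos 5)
Your proposal reproduces the paper's proof of this lemma in Appendix F essentially step for step: the same differential equation in the width parameter $\delta$, the same three-way decomposition into a centered second difference, a commutator with $U_1(\delta)-I$, and a self-referential term, the same bounds $\Lambda=\|B\|$, $\Gamma_1=2\|A\|\|B\|$, $\Gamma_2=4\|A\|^2\|B\|$, the same integral inequality and closed-form Gr\"onwall solution, and the same algebraic rewriting at $\delta=\tau/2^{n_t}$. The approach and all intermediate quantities agree with the paper's argument.
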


In practice, for the interaction picture one should have $\|A\|\ll \|B\|$, so that it is the second term given in the round brackets above that should be dominant.
Moreover, expanding the exponential gives
\begin{equation}
    2\left( e^{\Lambda\delta}-1 \right) - \Lambda\delta (2+\Lambda\delta) = \frac 13 (\Lambda\delta)^3 + \mathcal{O} ((\Lambda\delta)^4).
\end{equation}
Therefore, we expect that the error should be approximately
\begin{equation}
    \left\|U_1(\delta)-U_0(\delta) \right\| \lesssim \frac {\delta^3 \|A\|^2 \|B\|}{6}
\end{equation}
for a time step, or
\begin{equation}
    \frac {\tau^3 \|A\|^2 \|B\|}{6\times 2^{2n_t}},
\end{equation}
for the entire time interval.

A similar result can be obtained for the approximation of just the integral by the discretized integral.
\begin{equation}
\left\|  H(\tau) + (s-\tau) H'(\tau)  -H(s) \right\| \le \frac {(s-\tau)^2} 2 \max_{v\in [\tau-\delta/2,\tau+\delta/2]} \|H''(v)\| .
\end{equation}
The bound on the difference between the integral over time $\delta$ and the value of $H$ is therefore
\begin{align}
\left\| H(\tau) \delta - \int_{\tau-\delta/2}^{\tau+\delta/2} H(s) \, ds \right\| &\le \int_{\tau-\delta/2}^{\tau+\delta/2}\frac {(s-\tau)^2} 2 4\|A\|^2 \|B\| \, ds \nn
&= \frac{\delta^3}{24} 4\|A\|^2 \|B\| = \frac{\delta^3}{6} \|A\|^2 \|B\|.
\end{align}
In taking the time-ordered exponential of the integral, we cannot simply use this bound because the term $(s-\tau) H'(\tau)$ does not average out, and there are contributions from higher-order terms in the exponential.
The above derivation is able to take account of that.

\section{Interaction picture algorithm for simulating generic Hamiltonians}
\label{app:interaction_picture}

We have given an implementation of the interaction picture algorithm in \sec{interaction_picture} for simulating first quantized quantum chemistry. In this appendix, we describe a related implementation for simulating a generic Hamiltonian. 

As in \sec{interaction_picture}, we assume that the target Hamiltonian takes the form $H=A+B$, where $\norm{B}$ is much smaller than $\norm{A}$. Then, the evolution under $H$ for a short time $\tau$ can be represented by the Dyson series as
\begin{align}
e^{-\mi(A+B)\tau} &= e^{-\mi\tau A} \mathcal{T} \exp \left( -\mi \int_0^\tau ds \, e^{isA} B e^{-\mi sA}\right) \\
&= \sum_{k=0}^\infty \frac{(-\mi)^k}{k!} \int_{0}^\tau d\tau_1 \int_{0}^{\tau} d\tau_2 \cdots \int_{0}^\tau d\tau_k \,  e^{-\mi(\tau-\tau'_k)A} B e^{-\mi(\tau'_k-\tau'_{k-1})A} B \ldots B e^{-\mi(\tau'_2-\tau'_1)A} B e^{-\mi\tau_1 A} \nn
&= \lim_{\substack{K\rightarrow\infty\\ M\rightarrow \infty}}\sum_{k=0}^K \frac{(-\mi\tau)^k}{M^k k!} \sum_{m_1=0}^{M-1} \sum_{m_2=0}^{M-1} \cdots \sum_{m_k=0}^{M-1} e^{-\mi\tau(M-1/2-m'_k)A/M} B e^{-\mi\tau(m'_k-m'_{k-1})A/M} B \ldots \nn& \quad \ldots B e^{-\mi\tau(m'_2-m'_1)A/M} B e^{-\mi\tau(m'_1+1/2) A/M} \nonumber
\end{align}
where $\tau'_1,\ldots,\tau'_k$ are sorted times from $\tau_1,\ldots,\tau_k$, and $m'_1,\ldots,m'_k$ are sorted integers from $m_1,\ldots,m_k$. Here, we assume that the operator $B$ is given by the block encoding
\begin{equation}
    \bra{0}\PREP_B^\dagger\cdot \SEL_B\cdot\PREP_B\ket{0}=B/\lambda_B,
\end{equation}
and $A$ can be directly exponentiated on a quantum computer. This gives the linear combination of unitaries
\begin{align}
&e^{-\mi(A+B)\tau} \approx
\left(\bra{0}\PREP_B^\dagger\right)^{\otimes K}\sum_{k=0}^K \frac{(\lambda_B\tau)^k}{M^k k!} \sum_{m_1,...,m_k=0}^{M-1} \bigg(e^{-\mi\tau(M-1/2-m'_k)A/M} \left(-\mi\SEL_B\right) e^{-\mi\tau(m'_k-m'_{k-1})A/M}  \nn& \quad \times \left(-\mi\SEL_B\right)\ldots\ldots \left(-\mi\SEL_B\right) e^{-\mi\tau(m'_2-m'_1)A/M} \left(-\mi\SEL_B\right) e^{-\mi\tau(m'_1+1/2) A/M}\bigg)\left(\PREP_B\ket{0}\right)^{\otimes K},
\end{align}
where $\PREP_B$ prepares initial states in $K$ different ancilla registers and, for each fixed value of $k$, the $k$ selection operators $\SEL_B$ act on $k$ different registers and the target system register.

To implement this on a quantum computer, we first prepare the state that encodes the coefficients from the Dyson-series expansion
\begin{equation}
    \frac{1}{\sqrt{\beta}}\sum_{k=0}^{K}\sqrt{\frac{(\lambda_B \tau)^k}{k!}}\ket{k},
\end{equation}
where $\beta$ is the normalization factor and $\ket{k}$ is represented in unary as $\ket{1^k0^{K-k}}$. We would like to prepare this state in a similar way as in \sec{dyscost}. To this end, we assume that the time interval $\tau$ can be taken as a rational multiple of $1/\lambda_B$:
\begin{equation}
\label{eq:requirement_reps}
    \tau=\frac{t}{\reps}=\frac{c}{d\lambda_B},
\end{equation}
where $\reps$ is the number of simulation steps and $c$ and $d$ are positive integers that are relatively prime. 
We can see that the squared amplitudes will be integers if we multiply by $\Sigma(0)$, where we redefine $\Sigma(k)$ as
\begin{equation}
     \Sigma(k)=\sum_{\ell=k}^{K}\frac{K!c^\ell d^{K-\ell}}{\ell!}.
\end{equation}
This means that we can perform state preparation using the same inequality testing procedure as in \sec{dyscost} except replacing the definition of $\Sigma(k)$.
One minor change we need to make is that the first two amplitudes ($k=0$ and 1) are no longer equal.
This means we need to perform the inequality test with the number $\Sigma(k+1)$ for $0\le k < K$, with complexity $\ceil{\log\left(\Sigma(k)\right)}-1$.
For $0<k<K$ we also need a Toffoli for the conditioning on the result of the previous inequality test.
Combining this with the cost in \eq{nkprep}, the total Toffoli complexity for this state preparation (including inverse preparation) is
\begin{equation}\label{eq:kprcost}
    2\left(3n_k+2b_r-9\right)
    -1+\sum_{k=0}^{K-1}\ceil{\log\left(\Sigma(k)\right)}.
\end{equation}
where $n_k=\ceil{\log\left(\Sigma(0)\right)}$ is the number of qubits for the equal superposition state.

Next, we consider the preparation of the register:
\begin{align}
    &\frac{1}{\sqrt{\beta}}\sum_{k=0}^{K}\sqrt{\frac{(\lambda_B\tau)^k}{M^kk!}}\ket{k}\sum_{m_1,...,m_k=0}^{M-1}\ket{m_1,...,m_k,M-1,...,M-1}\\
    \mapsto&\frac{1}{\sqrt{\beta}}\sum_{k=0}^{K}\sqrt{\frac{(\lambda_B\tau)^k}{M^kk!}}\ket{k}\sum_{m_1,...,m_k=0}^{M-1}\ket{m_1',...,m_k',M-1,...,M-1}\nonumber\\
    \mapsto&\frac{1}{\sqrt{\beta}}\sum_{k=0}^{K}\sqrt{\frac{(\lambda_B\tau)^k}{M^kk!}}\ket{k}\sum_{m_1,...,m_k=0}^{M-1}\ket{m_1',m_2'-m_1',...,m_k'-m_{k-1}',M-1-m_k',0,...,0}.\nonumber
\end{align}
Specifically, we prepare $k$ equal superposition states for each order $k$ of the Dyson series using Hadamard gates controlled by the $\ket{k}$ register. We initialize the remaining $K-k$ ancilla registers to state $\ket{M-1}$ corresponding to the maximal discrete time. This state preparation has Toffoli complexity $Kn_t$ and will be inverted with the same cost. We then quantumly sort the $K$ registers with cost $2n_t\srt{K}$ and compute the time differences with cost $(K-1)(n_t-1)$. The same complexity will be needed to invert the sorting while no Toffoli gate is required to uncompute the time differences.

At the beginning and the end we also need to apply evolution by time $\tau/2M$.
There are $(K+1)n_t$ bits controlling evolution under $A$.
For simplicity we will combine these bits with the two uncontrolled evolutions, giving a total of $(K+1)n_t+2$.
Therefore we implement exponentials of $A$ controlled by the time registers and perform $\PREP_B$ and $\SEL_B$ to block-encode $B$ with cost
\begin{equation}\label{eq:contrcost}
    [(K+1)n_t+2]\ \textrm{controlled-}e^{isA}+2K\ \PREP_B+K\ \SEL_B.
\end{equation}
Note that this only represents a upper bound on the cost and can often be improved by using specific properties of the target Hamiltonian; see \sec{interaction_picture} for improvements for performing the first quantized quantum chemistry simulation.
This implements the interaction picture algorithm for time $\tau$. 

The success amplitude of this procedure can be made larger than $1/2$. To make the amplitude equal to $1/2$, we introduce an ancilla register with state
\begin{equation}
\label{eq:theta_state}
    \cos\theta\ket{0}+\sin\theta\ket{1},
\end{equation}
so that the success amplitude is now multiplied by $\cos\theta$. Letting $n_\theta$ be the number of bits used to represent $\theta$, we can then prepare this state with $n_\theta-3$ Toffoli gates.

To boost the success amplitude close to unity, we need to perform a single step of oblivious amplitude amplification, which triples the implementation of the truncated Dyson series and introduces two additional reflections. Specifically,
\begin{enumerate}
    \item We implement the pre-amplified truncated Dyson series and prepare the $\theta$ state with a complexity as described above.
    \item We perform a reflection on $n_k + K \left( n_t + n_B \right)+2$ qubits.
    \item We invert the truncated Dyson series and unprepare the $\theta$ state.
    \item We perform a reflection again on $n_k + K \left( n_t + n_B \right)+2$ qubits.
    \item We implement the truncated Dyson series without preparing the $\theta$ state.
\end{enumerate}
The $+2$ accounts for the qubit rotated by $\theta$ here, as well as a qubit rotated for the preparation of the equal superposition state for preparing $k$.
Each reflection can be implemented with Toffoli complexity
\begin{equation}
\label{eq:ref_cost}
        n_k + K \left( n_t + n_B \right),
\end{equation}
where $n_B$ is the number of ancilla qubits for block-encoding operator $B$.

We now analyze the normalization factor introduced by the above steps. Such a factor will come from three sources: (i) the preparation of the Dyson-expansion coefficients; (ii) the preparation of the equal superposition state on the $\ket{k}$ register; and (iii) the implementation of the block encoding of operator $B$. Same as in \sec{dyscost}, we have that the success probability of Item (ii) is $p=\eqprep{\Sigma(0),b_r}$, whereas Items (i) and (iii) together contribute a normalization factor of $\beta=\sum_{k=0}^{K}c^k/(d^kk!)\leq\exp(c/d)$. To boost this close to unity with a single step of oblivious amplitude amplification, we choose coprime integers $c$ and $d$ so that
\begin{equation}
\label{eq:requirement_cd}
    \frac{\eqprep{\Sigma(0),b_r}}{\exp(c/d)}\geq\frac{1}{2},
\end{equation}
which implies
\begin{equation}\label{eq:intreps2}
    \reps =\frac{\lambda_B td}{c}.
\end{equation}

The error of our circuit implementation is due to the truncation of the Dyson-series expansion ($\epsilon_K$) and the finite binary representation of the time steps ($\epsilon_t$), as well as the preparation of $\theta$ state ($\epsilon_\theta$). With $n_\theta$ digits to represent the angle $\theta$, the error in the amplitude $\cos\theta$ is upper bounded by
\begin{equation}\label{eq:intphaer}
    \epsilon_\theta = \frac{\pi}{2^{n_\theta}},
\end{equation}
which causes an overall error of at most $\epsilon_K+\epsilon_\theta$ in the pre-amplified truncated Dyson series. After a single step of oblivious amplitude amplification, the error is bounded as \cite[Lemma G.2]{Childs2017}
\begin{equation}
    (\epsilon_K+\epsilon_\theta)\frac{(\epsilon_K+\epsilon_\theta)^2+3(\epsilon_K+\epsilon_\theta)+4}{2}.
\end{equation}
To ensure that the entire simulation has accuracy $\epsilon$, we require that
\begin{equation}\label{eq:totsimac}
    \reps(\epsilon_K+\epsilon_\theta)\frac{(\epsilon_K+\epsilon_\theta)^2+3(\epsilon_K+\epsilon_\theta)+4}{2}+\reps\epsilon_t\leq\epsilon.
\end{equation}
We summarize the cost of the circuit implementation in the following theorem.

\begin{theorem}
Using the interaction picture algorithm, the time evolution under the Hamiltonian $H=A+B$ can be simulated for time $t$ that is a rational multiple of $1/\lambda_B$ to within error $\epsilon$ by choosing positive integers $K$, $n_t$, $n_\theta$, $b_r$, coprime $c$ and $d$, and
using $3[\reps(K+1)n_t+2]\ \textrm{controlled-}e^{isA}$, $6\reps K\ \PREP_B$, $3\reps K\ \SEL_B$, and a number of Toffoli gates
\begin{align}\label{eq:intpiccost}
    &6\reps(3n_k +2b_r - 9)+3\reps\left(-1+ \sum_{k=0}^{K-1} \lceil \log( \sig (k) ) \rceil\right)+6\reps Kn_t+12\reps(n_t-1)\srt{K}\nonumber\\
    &+6\reps (K-1)(n_t-1)+2\reps(n_\theta-3)+2\reps\left[n_k + K (n_t + n_B )\right],
\end{align}
where $\srt{K}$ is the cost of sorting $K$ items, $\tau = c/(d\lambda_B)$, $\reps=t/\tau$ is an integer, and
\begin{align}\label{eq:intpierr1}
    \frac{\eqprep{\Sigma(0),b_r}}{\exp(c/d)}\geq\frac{1}{2},& \\
    \label{eq:intpierr2}
    \reps(\epsilon_K+\epsilon_\theta)\frac{(\epsilon_K+\epsilon_\theta)^2+3(\epsilon_K+\epsilon_\theta)+4}{2}+\reps\epsilon_t \leq\epsilon,&\\
    \label{eq:intpierr3}
    \epsilon_K=\exp(c/d)-\sum_{k=0}^{K}\frac{c^k}{d^kk!} ,\qquad\qquad
    \epsilon_\theta = \frac{\pi}{2^{n_\theta}}&,\\
    \label{eq:intpierr4}
    \epsilon_t = \left(\frac{2\|A\|}{\lambda_B} + \frac{\|A\|^2}{\lambda_B^2}\right)  \left[ 2^{n_t} \left( e^{\lambda_B\tau/2^{n_t}}-1 \right) - \lambda_B\tau (1+\lambda_B\tau/2^{n_t+1}) \right].&
\end{align}
\end{theorem}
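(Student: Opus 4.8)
I would treat this theorem as a bookkeeping result built on the linear-combination-of-unitaries decomposition of the truncated and discretized Dyson series established just above it. The plan is to (i) identify, for each subroutine appearing in a single step of the pre-amplified Dyson-series block encoding, its Toffoli cost; (ii) multiply each cost by the number of times that subroutine is invoked, accounting for the factor of three introduced by the single step of oblivious amplitude amplification (forward, reverse, forward), the factor of two introduced by uncomputing the state preparations, and finally the number of simulation steps $\reps$; and (iii) bound the error contributed by each of the three approximations (Dyson truncation, time discretization, and finite precision of $\theta$) and combine them so that the total does not exceed $\epsilon$.

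For the cost accounting, I would take the preparation of the superposition over $k$ to proceed by preparing an equal superposition over $\sig(0)$ basis states and performing inequality tests against the integers $\sig(k+1)$; since $\tau$ is a rational multiple of $1/\lambda_B$ with coprime $c,d$, the squared amplitudes become integral after scaling by $\sig(0)$, so this preparation is exact and costs \eqref{eq:kprcost} including its inverse. The superpositions over the discretized times require $Kn_t$ controlled Hadamards, the quantum sort uses $\srt{K}$ comparators, each an inequality test plus a controlled swap on $n_t$-bit registers, the $K-1$ time differences and their controlled absolute values cost $O((K-1)n_t)$ Toffolis, the state $\cos\theta\ket{0}+\sin\theta\ket{1}$ in \eqref{eq:theta_state} costs $n_\theta-3$, and each of the two reflections in the oblivious amplitude amplification costs \eqref{eq:ref_cost}. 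Multiplying these by the appropriate factors of two and three and by $\reps$, and collecting the controlled exponentials of $A$ together with the $\PREP_B$ and $\SEL_B$ calls as in \eqref{eq:contrcost}, should reproduce the stated operation counts and the Toffoli total \eqref{eq:intpiccost}.

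For the error analysis, I would first argue that the pre-amplified operator differs from the exact evolution by at most $\epsilon_K+\epsilon_\theta$, where $\epsilon_K$ is the Dyson truncation error in \eqref{eq:intpierr3} and $\epsilon_\theta=\pi/2^{n_\theta}$ controls the error in the amplitude $\cos\theta$. To this I would add the per-step time-discretization error $\epsilon_t$, bounded using the centered-difference estimate proved in \app{timedisc} with $\|B\|$ in the role of $\Lambda$ and $\|A\|$ in the role of the operator whose norm governs the higher-order terms, giving \eqref{eq:intpierr4}. The requirement \eqref{eq:intpierr1} ensures $\eqprep{\sig(0),b_r}/\exp(c/d)\ge 1/2$, so the success amplitude of a single pre-amplified step is at least $1/2$ and one step of oblivious amplitude amplification brings it to unity; this amplifies $\epsilon_K+\epsilon_\theta$ by the factor from \cite[Lemma G.2]{Childs2017}, while the discretization error enters additively across the three applications in the amplification sequence. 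Summing the resulting per-step error over the $\reps=\lambda_B t d/c$ steps by subadditivity of errors for products of unitaries yields exactly the accuracy condition \eqref{eq:intpierr2}.

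I expect the main obstacle to be twofold. The first delicate point is establishing the \emph{exactness} of the $k$-state preparation, i.e.\ verifying that the rationality hypothesis \eqref{eq:requirement_reps} together with the $c,d$-dependent redefinition of $\sig(k)$ really does make every scaled squared amplitude an integer, so that the inequality-test construction produces the target state without additional error. The second is threading the \emph{nonlinear} oblivious-amplitude-amplification error bound while keeping the linear time-discretization error outside it, and confirming that the combinatorial multipliers (the $6$, $3$, $12$, $2$ in \eqref{eq:intpiccost} and the $3$, $6$, $3$ in front of the controlled exponentials, $\PREP_B$, and $\SEL_B$) are assigned to the right subroutines; the rest of the argument is routine accounting once these two points are pinned down.
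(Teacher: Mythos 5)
Your proposal is correct and follows essentially the same route as the paper's proof: a cost bookkeeping pass that tallies each subroutine in Table V with the appropriate multipliers of two (compute/uncompute), three (oblivious amplitude amplification), and $\reps$, followed by an error analysis combining the Dyson truncation error $\epsilon_K$ and rotation error $\epsilon_\theta$ through the Childs et al.\ Lemma G.2 bound on oblivious amplitude amplification, with the time-discretization error $\epsilon_t$ added linearly and the whole sum taken over $\reps$ steps. The two ``obstacles'' you flag are indeed the places requiring care, and the paper handles them exactly as you anticipate: the rationality hypothesis makes the scaled squared amplitudes $\Sigma(0)/k!\cdot(c/d)^k$ integral so the inequality-test preparation is exact, and $\epsilon_t$ is kept outside the nonlinear OAA bound because it is a coherent per-step error whereas $\epsilon_K+\epsilon_\theta$ is a block-encoding error amplified by the amplitude amplification.
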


\begin{proof}
The complete interaction picture simulation is indicated in \fig{interact}, with the interaction picture operator to be amplified shown in \fig{preamp} and the clock preparation shown in \fig{clock}.
The total costs for each of the parts of the simulation are shown in \tab{intcosts_gen}.
The sum of the costs in entries 1 to 7 gives the cost in \eq{intpiccost}.
The costs in terms of the evolution under $A$ and the block encoding of $B$ are given in entries 8 and 9 of \tab{intcosts_gen}.

\begin{table}
\begin{tabular}{| m{11cm} | m{5cm} |}
\hline
\centering Procedure & \hspace{2cm} Cost \\
 \hline \hline
The preparation of the equal superposition over $\sig (0)$ numbers; see \eq{nkprep}. & $6\reps(3n_k +2b_r - 9)$ \\
\gline
Inequality tests used for the preparation of the superposition over $k$; see \eq{kprcost}. & $3\reps\left(-1+ \sum_{k=0}^{K-1} \lceil \log( \sig (k) ) \rceil\right) $ \\
\gline
Controlled Hadamards for preparing the superpositions over times; see \eq{ctrlhadcosts}. & $6\reps Kn_t$ \\
\gline
The sorting of the time registers; see \eq{srtcosts}. & $12\reps n_t\srt{K}$ \\
\gline
The differences of the times; see \eq{timesubcost}. & $6\reps (K-1)(n_t-1)$ \\
\gline
The preparation of $\theta$ state; see \eq{theta_state}. & $2\reps(n_\theta-3)$\\
\gline
The reflection cost for oblivious amplitude amplification; see \eq{ref_cost}. & $2\reps\left[n_k + K (n_t + n_B )\right]$ \\
\gline
The exponential of $A$ needs to be performed $K+1$ times; see \eq{contrcost}. & $3[\reps(K+1)n_t+2]\ \textrm{controlled-}e^{isA}$ \\
\gline
Perform the controlled block encoding of operator $B$ $K$ times; see \eq{contrcost}. & $6\reps K\ \PREP_B+3\reps K\ \SEL_B$ \\
\hline
\end{tabular}
\caption{The costs of the interaction picture algorithm for time evolution of a general Hamiltonian.
The costs are Toffoli costs unless otherwise specified (the last two entries).}
\label{tab:intcosts_gen}
\end{table}

Provided $\lambda_Bt$ is rational, for any integer $\reps$, $\lambda_Bt/\reps$ must be rational, and we can choose $c$ and $d$ to satisfy $c/d=\lambda_B\tau$ for $\tau=t/\reps$.
Then \eq{intpierr1} comes from \eq{requirement_cd}, and constrains the value of $c/d$.
Since $c/d$ is set by $\reps$, we need a way to select $\reps$ so that $c/d$ satisfies \eq{intpierr1}. 
Choosing some value of $b_r$ we can find a lower bound $P_{\min}$ on $\eqprep{\Sigma(0),b_r}$ (for example as given in \app{sucbnd} below).
Then we can set $\reps = \lceil\lambda_Bt/\ln (2P_{\min})\rceil$, and $\lambda_B\tau=\lambda_Bt/\reps$ is strictly less than $\ln (2P_{\min})$.
Then $c/d=\lambda_B\tau<\ln (2P_{\min})$ implies that \eq{intpierr1} is satisfied.

The error bound in \eq{intpierr2} is from \eq{totsimac}.
The quantities $\epsilon_K$, $\epsilon_\theta$ and $\epsilon_t$ are bounds on the error due to truncating the Dyson series, the rotation on the ancilla qubit for the amplitude amplification, and the time discretisation, respectively.
The norm of each successive term in the Dyson series is bounded as $(\tau\lambda_B)^k/k!$, and $\lambda_B\tau=c/d$.
Therefore the error in truncating the Dyson series to order $K$ can be bounded by the error in truncating the exponential of $c/d$, giving the expression for $\epsilon_K$ in \eq{intpierr3}.
The expression for $\epsilon_\theta$ in \eq{intpierr3} is from \eq{intphaer}.
The bound in \eq{intpierr4} is from \eq{timediscer}, where we have used $\lambda_B$ as the upper bound on $\|B\|$.
\end{proof}

The algorithm will be most efficient when $c/d$ is a close rational approximation of $\ln 2$, so $\reps$ is minimized.
This is because $c/d$ is upper bounded by $\ln[2\eqprep{\Sigma(0),b_r}]$ according to \eq{intpierr1}, $\eqprep{\Sigma(0),b_r}$ is close to 1, and $\reps=\lambda_B t d/c$.
It would be expected that for a real application one can choose the exact time to be simulated, so that $c$ and $d$ are relatively small (for example $9/13$ is less than but close to $\ln 2$).
If evolution for a specific $t$ is needed where $\lambda_Bt$ is not rational, we prepare an ancilla register in the state
\begin{equation}
    \cos\theta_B\ket{0}+\sin\theta_B\ket{1}
\end{equation}
and use the amplitude $\cos\theta_B$ to artificially adjust the normalization factor $\lambda_B$. This completes our costing of the interaction picture algorithm for generic Hamiltonians.

\begin{figure}[!htbp]
    \centering
      \begin{tabular}{c @{\qquad} c }
\includegraphics[width=0.422\textwidth]{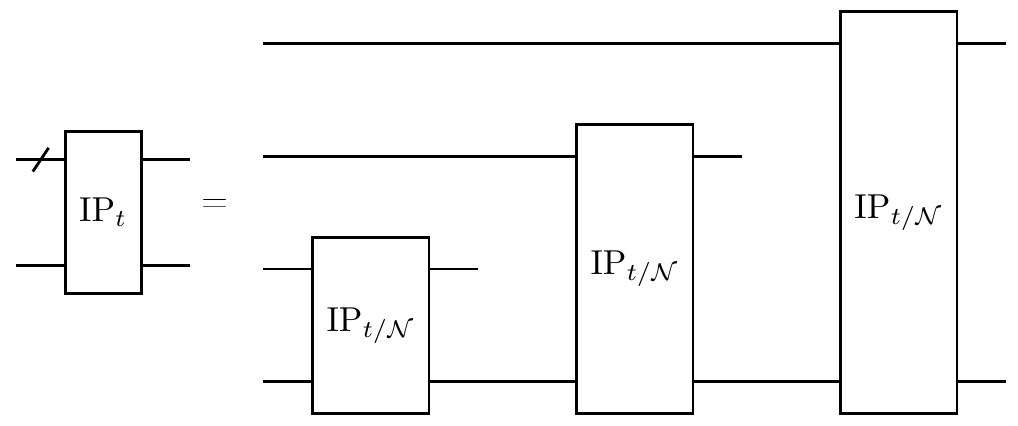} 
    &
\includegraphics[width=0.538\textwidth]{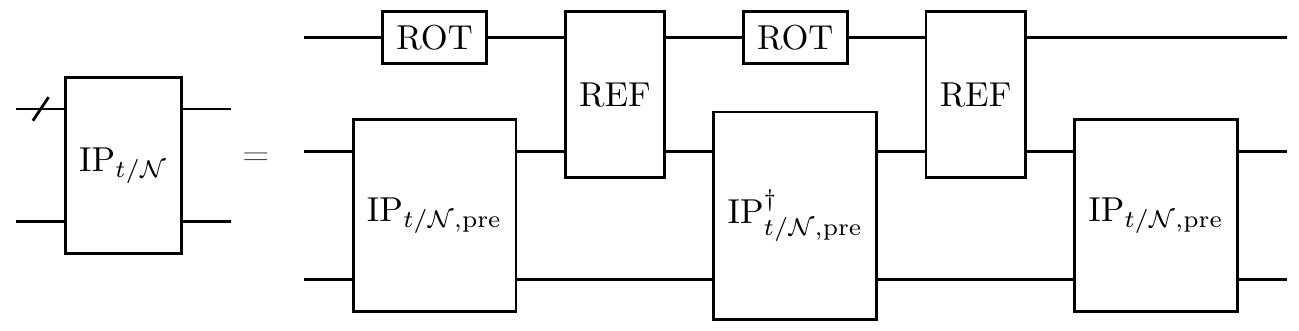} 
        \\
        \small (a) Interaction picture algorithm with 3 steps. & \small (b) Oblivious amplitude amplification.
      \end{tabular}
    \caption{Quantum circuits for the interaction picture algorithm for simulating generic Hamiltonians.}
    \label{fig:interact}
\end{figure}

\begin{figure}[!htbp]
    \centering
    \resizebox{.95\textwidth}{!}{%
        \includegraphics[width=\textwidth]{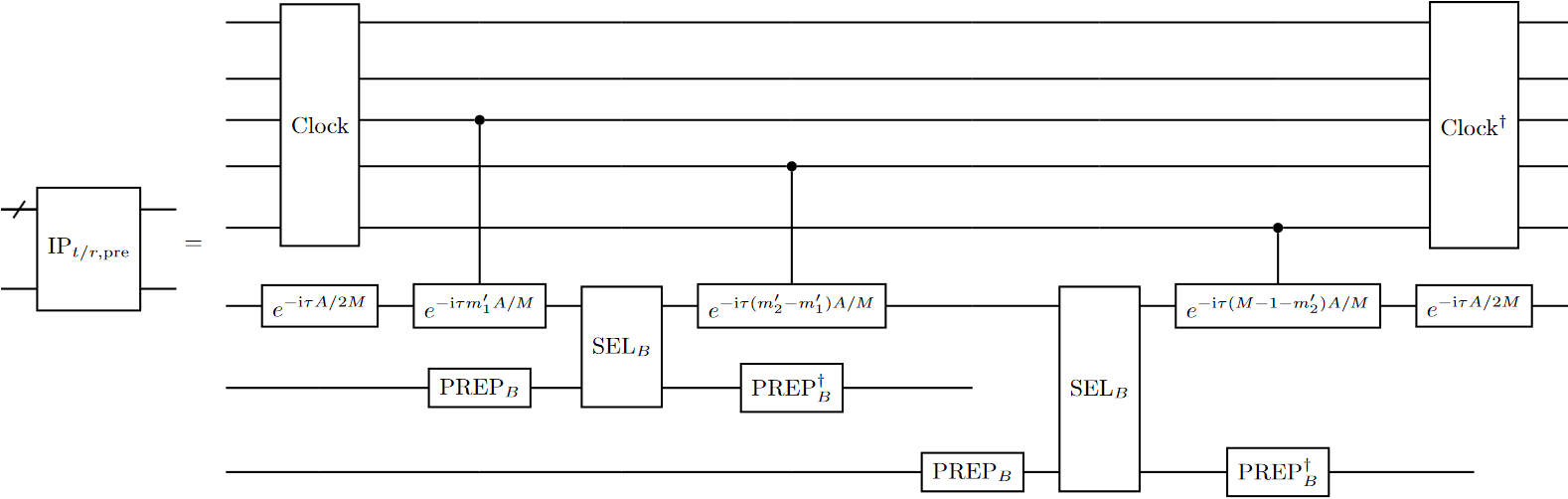}
        }
    \caption{Circuit implementation of the pre-amplified operator ($K=2$).}
    \label{fig:preamp}
\end{figure}

\begin{figure}[!htbp]
    \centering
    \resizebox{.95\textwidth}{!}{%
    \includegraphics[width=\textwidth]{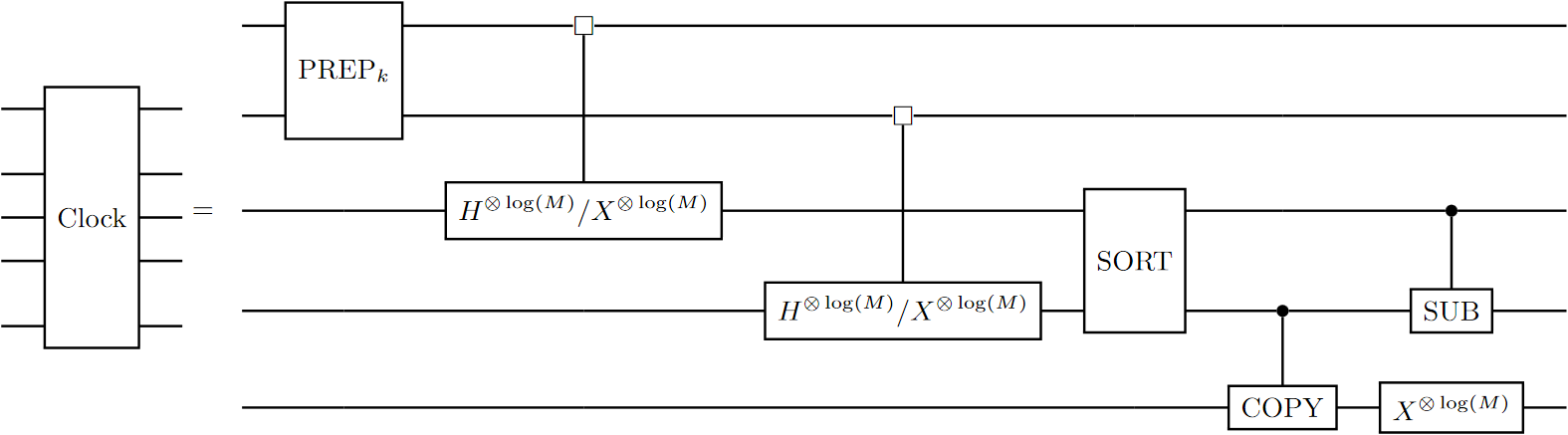}
}
    \caption{Quantum circuit for preparing the clock state ($K=2$). Note that the last time difference $M-1-m_2'$ can be computed with a copying operation followed by flipping the qubits.}
    \label{fig:clock}
\end{figure}

\section{Bound on the success probability for preparing equal superposition}
\label{app:sucbnd}

Here we consider the probability of success for preparation of an equal superposition state and place a lower bound on it for given $b_r$.
The probability for success was given in \cite{Sanders2020_b}, and is
    \begin{equation}
\eqprep{n,b_r} = \frac{n}{2^{\lceil\log n\rceil}} \left[ \left( 1+ \left(2-\frac{4n}{2^{\lceil\log n\rceil}}\right)\sin^2\theta(n,b_r)\right)^2 + \sin^2(2\theta(n,b_r)) \right],
    \end{equation}
    where $\theta(n,b_r)$ is the angle of rotation on an ancilla qubit.
It turns out that there is a simple lower bound for the success probability, though the expression for $\theta$ is adjusted slightly from the formula used before. The result is as in the following theorem.
\begin{theorem}
For all $n,b_r\in \mathbb{N}$, taking
\begin{align}
x&= \frac n {2^{\lceil\log n\rceil}}, \\
\theta_0 &= \arcsin\left(\frac 1{2\sqrt{x}}\right),\\
\theta(n,b_r) &= \frac {2\pi}{2^{b_r}} {\rm round} \left[ \frac {2\pi}{2^{b_r}} \left( \theta_0-\frac{(\pi/2^{b_r})^2 (2 x-1)}{\sqrt{4 x-1}} \right)\right]
\end{align}
gives the success probability for preparation of an equal superposition state lower bounded as
\begin{equation}\label{eq:psucbnd}
\eqprep{n,b_r} \ge 1-\left( \frac 32 \right)^2 \left( \frac \pi {2^{b_r}} \right)^2 .
\end{equation}
\end{theorem}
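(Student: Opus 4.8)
The plan is to analyze the exact expression for $\eqprep{n,b_r}$ as a function of the rotation angle $\theta$, identify the ideal angle $\theta_0$ at which the probability would be exactly $1$, and then bound the loss incurred by rounding $\theta$ to the nearest multiple of $2\pi/2^{b_r}$. First I would write $x = n/2^{\lceil \log n\rceil}$, so that $1/2 < x \le 1$, and substitute into the formula to get $\eqprep{n,b_r} = x[(1 + (2-4x)\sin^2\theta)^2 + \sin^2(2\theta)]$. The key observation is that at $\sin^2\theta_0 = 1/(4x)$ one has $1 + (2-4x)\sin^2\theta_0 = 1 + (2-4x)/(4x) = 1/(2x)$, and $\sin^2(2\theta_0) = 4\sin^2\theta_0\cos^2\theta_0 = (1/x)(1 - 1/(4x))$, so the bracket becomes $1/(4x^2) + 1/x - 1/(4x^2) = 1/x$, giving $\eqprep = 1$ exactly. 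So $\theta_0 = \arcsin(1/(2\sqrt{x}))$ is the ideal angle, confirming the definition in the statement.

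Next I would Taylor-expand $\eqprep{n,b_r}$ around $\theta_0$ to understand how the probability degrades. Since $\theta_0$ is a maximum (value $1$, an absolute maximum of a probability), the first derivative vanishes and $\eqprep{\theta_0 + \delta} = 1 - \tfrac12 |P''(\theta_0)|\,\delta^2 + O(\delta^3)$ for $P(\theta) := \eqprep$. Computing $P''(\theta_0)$ from the explicit trigonometric expression (a routine but slightly tedious differentiation, using $\sin^2\theta_0 = 1/(4x)$) should yield $|P''(\theta_0)| = $ something like $9/2$ or a comparable constant after simplification — this is exactly the source of the $(3/2)^2 = 9/4$ prefactor in the claimed bound, since $\tfrac12 \cdot (9/2) = 9/4$. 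The subtle point the theorem addresses is that $\theta$ is not set to $\theta_0$ but to the best available multiple of $2\pi/2^{b_r}$; the naive rounding error is $\delta \le \pi/2^{b_r}$, which would give loss $\le (9/4)(\pi/2^{b_r})^2$ to leading order, but we must control the higher-order terms. This is why the statement shifts the rounding target by $-\,(\pi/2^{b_r})^2(2x-1)/\sqrt{4x-1}$: this correction absorbs the cubic term in the expansion, so that after rounding the effective displacement from the true optimum of the rounded objective is still bounded by $\pi/2^{b_r}$ while the quadratic model becomes exact enough to give the clean bound. I would verify that the shift term is precisely $-P'''(\theta_0)/(3! \cdot ?)$ scaled appropriately, i.e., it is the Newton-type correction that makes the rounded $\theta$ land within $\pi/2^{b_r}$ of the argmax of the quadratic-plus-cubic truncation.

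The remaining work is to make the error control rigorous rather than asymptotic: having chosen $\theta(n,b_r)$ as a specific multiple of $2\pi/2^{b_r}$, I would write $\theta(n,b_r) = \theta_\ast + \epsilon$ where $\theta_\ast$ is the point maximizing the relevant truncated expansion and $|\epsilon| \le \pi/2^{b_r}$, then bound $1 - P(\theta(n,b_r))$ by combining (i) the quadratic term $\tfrac12|P''|\epsilon^2 \le (9/4)(\pi/2^{b_r})^2$ and (ii) a uniform bound on the remainder $|P - (\text{quadratic model})|$ over the interval, using that $\tfrac12 < x \le 1$ so all the $x$-dependent coefficients ($2-4x$, $\sqrt{4x-1}$, etc.) are bounded. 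The main obstacle I anticipate is precisely this uniform remainder estimate: one needs to confirm that over the whole admissible range of $x$ and for every $b_r$, the cubic and higher corrections do not push the loss above $(9/4)(\pi/2^{b_r})^2$, which likely requires a careful (possibly interval-arithmetic-flavored or monotonicity-based) argument rather than a one-line bound, since the coefficient $(2x-1)/\sqrt{4x-1}$ blows up as $x \to 1/2$ — though $x = 1/2$ only occurs when $n$ is an exact power of two, in which case $2-4x = 0$ and the formula degenerates favorably, so the endpoint case should be handled separately. Once the remainder is controlled, \eqref{eq:psucbnd} follows immediately.
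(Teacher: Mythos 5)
Your high-level plan --- identify $\theta_0 = \arcsin(1/(2\sqrt{x}))$ as the exact maximizer, expand around it, and argue that the extra shift term in the definition of $\theta(n,b_r)$ absorbs the cubic error so that a pure quadratic loss of $(3/2)^2(\pi/2^{b_r})^2$ survives --- matches the spirit of the paper's proof, and your reading of the prefactor as $\tfrac12\max_x |P''(\theta_0)| = \tfrac12\cdot\tfrac92 = \tfrac94$ is correct. But the proposal as written contains both factual errors and a substantial unsubstantiated step.

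First, the errors. Your worry that $(2x-1)/\sqrt{4x-1}$ ``blows up as $x\to 1/2$'' is backwards: as $x\to 1/2^+$ the numerator $\to 0$ while the denominator $\to 1$, so the ratio $\to 0$; in fact the paper proves $(2x-1)/\sqrt{4x-1}\le 1/\sqrt{3}$ uniformly on $(1/2,1)$. Your claim that $x=1/2$ occurs ``when $n$ is an exact power of two'' has the cases inverted: $n = 2^k$ gives $x = 1$ (the trivially favorable case where $1-x=0$ makes $\eqprep{n,b_r}=1$ for every $\theta$), and $x$ approaches but never equals $1/2$ (e.g.\ $n=2^k+1$ as $k\to\infty$).

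Second, the gap. You note correctly that the hard part is a uniform remainder estimate over all $x\in(1/2,1)$ and all $b_r$, but you do not supply it, and the paper's argument is not an abstract Taylor-remainder bound at all. Instead, with $\Delta\theta=\theta-\theta_0$, the paper rewrites the probability \emph{exactly} via trigonometric identities as
\begin{equation}
\eqprep{n,b_r} =1-4(1-x)(4x-1)\sin^2(\Delta\theta)
+ 4(1-x)\bigl[ 2(4x-1-2x^2)\sin^2(\Delta\theta) - (2x-1)\sqrt{4x-1}\sin(2\Delta\theta) \bigr] \sin^2(\Delta\theta),
\end{equation}
and then proves two uniform inequalities for $x\in(1/2,1)$, $\Delta\theta\in[-0.7,0.7]$: (i) $2(4x-1-2x^2)>(2x-1)\sqrt{4x-1}$, and (ii) $\sin^2\Delta\theta\ge \sin(2\Delta\theta)-2\Delta\theta$. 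These give a polynomial lower bound
$\eqprep{n,b_r}\ge 1-4(1-x)\bigl[(4x-1)+2(2x-1)\sqrt{4x-1}\,\Delta\theta\bigr](\Delta\theta)^2$,
which the paper shows is monotone in the right sense on the admissible interval, so one may substitute the endpoints of the shifted rounding interval $\Delta\theta + (\pi/2^{b_r})^2(2x-1)/\sqrt{4x-1}\in[-\pi/2^{b_r},\pi/2^{b_r}]$. After substitution the third-order term in $\pi/2^{b_r}$ cancels by design of the shift, and a further argument shows the fourth-, fifth-, and sixth-order terms sum to something nonnegative, leaving $\eqprep{n,b_r}\ge 1-4(1-x)(4x-1)(\pi/2^{b_r})^2$, minimized over $x$ at $x=5/8$ to give $(3/2)^2$. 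Until you replicate something equivalent to inequalities (i)--(ii), the monotonicity/endpoint reduction, and the sign analysis of the higher-order polynomial terms, the argument is incomplete.
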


\begin{proof}
Now with $b_r=1$ or $b_r=2$ the result is trivial, because the right-hand side is negative.
We therefore aim to prove the result for $b_r \ge 3$.
Because the expression for $\eqprep{n,b_r}$ depends only on the ratio $x=n/2^{\lceil \log n \rceil}$, we can prove the lower bound for all values of $x$ in the range $x\in(1/2,1)$, and that will give us the result for all $n \in \mathbb{N}$.
Then the probability of success can be written as
\begin{equation}
\eqprep{n,b_r} = x \left[ \left( 1+ \left(2-4x\right)\sin^2\theta\right)^2 + \sin^2(2\theta) \right].
\end{equation}

Let us take $\Delta\theta = \theta-\theta_0$.
Using elementary trigonometric identities, the probability of success can be rewritten as
\begin{align}
\eqprep{n,b_r} &=1-4(1-x)(4x-1)\sin^2(\Delta\theta) \nn & \quad 
+ 4(1-x)\left[ 2(4x-1-2x^2)\sin^2(\Delta\theta) - (2x-1)\sqrt{4x-1}\sin(2\Delta\theta) \right] \sin^2(\Delta\theta).
\end{align}
To prove lower bounds for $x\in(1/2,1)$, we use
\begin{align}
(2x-1)(3-x) &\ge 0 \nn
(2x-1)(3-x)+2 &> 0 \nn
4(1-x)(2x-1)[(2x-1)(3-x)+2] &\ge 0 \nn
4(1-x)(2x-1)[(2x-1)(3-x)+2]+1 &> 0 \nn
\left[ 2(4x-1-2x^2) \right]^2 - \left[ (2x-1)\sqrt{4x-1} \right]^2 &> 0.
\end{align}
Next, $4x-1-2x^2=(2 x - 1) (3/2 - x) + 1/2$ is positive for $x\in(1/2,1)$, and $2x-1$ and $4x-1$ are non-negative.
As a result, we have
\begin{equation}
2(4x-1-2x^2) > (2x-1)\sqrt{4x-1}.
\end{equation}

We also find that
\begin{equation}
\sin^2\Delta\theta - \left[ \sin(2\Delta\theta)- 2\Delta\theta \right]= (\Delta\theta)^2 + \mathcal{O}((\Delta\theta)^3),
\end{equation}
so it is non-negative for small $\Delta\theta$.
Moreover this expression has zeros of its derivative for $n\pi$ and $n\pi-\arctan(1/2)$.
This means the only zeros of its derivative for $\Delta\theta\in[-0.7,0.7]$ are at 0 and $-\arctan(1/2)\approx -0.46$.
This means that the function $\sin^2\Delta\theta - \left[ \sin(2\Delta\theta)- 2\Delta\theta \right]$ can turn around towards negative values for $\Delta\theta<-0.46$, but computing it at $\Delta\theta=0.7$ gives about $0.000466$, which is still positive.
This means that for $\Delta\theta\in[-0.7,0.7]$ we have
\begin{equation}
\sin^2\Delta\theta - \left[ \sin(2\Delta\theta)- 2\Delta\theta \right]\ge 0,
\end{equation}
because negative values would require an additional turning point.
That implies
\begin{equation}
2(4x-1-2x^2)\sin^2(\Delta\theta) \ge (2x-1)\sqrt{4x-1}\left[ \sin(2\Delta\theta)- 2\Delta\theta \right],
\end{equation}
which in turn implies
\begin{equation}
2(4x-1-2x^2)\sin^2(\Delta\theta) - (2x-1)\sqrt{4x-1}\sin(2\Delta\theta) \ge - 2(2x-1)\sqrt{4x-1} \Delta\theta 
\end{equation}
for $x\in(1/2,1)$ and $\Delta\theta\in[-0.7,0.7]$.
This means that the probability is lower bounded by
\begin{align}
\eqprep{n,b_r} &\ge 1-4(1-x)(4x-1)\sin^2(\Delta\theta)
- 8(1-x)(2x-1)\sqrt{4x-1} \Delta\theta \, \sin^2(\Delta\theta) \nn
&= 1-4(1-x)\left[ (4x-1) +2(2x-1)\sqrt{4x-1} \Delta\theta \right]\sin^2(\Delta\theta).
\end{align}

Next, we have for $x\in(1/2,1)$ and $\Delta\theta\in[-0.7,0.7]$
\begin{align}
4(1-x)(3x-1) &\ge 0 \nn
4x-1-3(2x-1)^2&\ge 0 \nn
\sqrt{4x-1} &\ge \sqrt{3}(2x-1) \nn
4x-1 &\ge \sqrt{3}(2x-1)\sqrt{4x-1} \nn
4x-1 &\ge -2(2x-1)\sqrt{4x-1}\Delta\theta .
\end{align}
As a result, we can lower bound the success probability by
\begin{equation}
\eqprep{n,b_r} \ge 1-4(1-x)\left[ (4x-1) +2(2x-1)\sqrt{4x-1} \Delta\theta \right](\Delta\theta)^2.
\end{equation}
Note also that the derivative of this expression with respect to $\Delta\theta$ is
\begin{equation}
-8(1-x)\left[ (4x-1) +3(2x-1)\sqrt{4x-1} \Delta\theta \right]\Delta\theta.
\end{equation}
Now $4x-1 \ge \sqrt{3}(2x-1)\sqrt{4x-1}$ implies the expression in square brackets must be non-negative for $\Delta\theta\in[-1/\sqrt 3,1/\sqrt 3]$.
This implies that, provided we are considering $\Delta\theta\in[-1/\sqrt 3,1/\sqrt 3]$, we can lower bound the probability of success in any region by the probabilities of success at the bounds of the region.

Now, using the formula in the theorem for $\theta$, we have
\begin{equation}
\Delta\theta + \frac{(\pi/2^{b_r})^2 (2 x-1)}{\sqrt{4 x-1}} \in [-\pi/2^{b_r}, \pi/2^{b_r}].
\end{equation}
Note that $\sqrt{4x-1} \ge \sqrt{3}(2x-1)$ implies
\begin{equation}
\frac{2x-1}{\sqrt{4x-1}} \le \frac 1{\sqrt 3}.
\end{equation}
Therefore the maximal value of $\Delta\theta$ with ${b_r}\ge 3$ is
\begin{equation}
\frac{(\pi/8)^2}{\sqrt{3}}+ \pi/2^8 \approx 0.481734 < \frac 1{\sqrt{3}}.
\end{equation}
Because $\Delta\theta\in[-1/\sqrt 3,1/\sqrt 3]$, we can lower bound the probability by the lower bounds at the extremal points in the range of $\Delta\theta$, which give
\begin{align}\label{eq:sixbnd}
\eqprep{n,b_r} &\ge
1-4(1-x)(4x-1)\left(\frac\pi{2^{b_r}}\right)^2 \nn
&\quad + (2x-1)^2(1-x)\left(\frac\pi{2^{b_r}}\right)^4
 \pm \frac {24(2x-1)^3(1-x)}{\sqrt{4x-1}} \left(\frac\pi{2^{b_r}}\right)^5
 + \frac{8(2x-1)^4(1-x)}{4x-1} \left(\frac\pi{2^{b_r}}\right)^6 .
\end{align}
Next, we have that $\sqrt{4x-1} \ge \sqrt{3}(2x-1)$ implies
\begin{align}
20(2x-1)^2(1-x)\sqrt{4x-1} & \ge 20\sqrt{3}(2x-1)^3(1-x) \nn
20(2x-1)^2(1-x) & \ge \frac{24(2x-1)^3(1-x)}{\sqrt{4x-1}} \nn
(2x-1)^2(1-x)\left(\frac\pi{2^{b_r}}\right)^4
&\ge \frac {24(2x-1)^3(1-x)}{\sqrt{4x-1}} \left(\frac\pi{2^{b_r}}\right)^5
\end{align}
because $\pi/2^{b_r}<1$.
Hence we have
\begin{equation}
(2x-1)^2(1-x)\left(\frac\pi{2^{b_r}}\right)^4
 \pm \frac {24(2x-1)^3(1-x)}{\sqrt{4x-1}} \left(\frac\pi{2^{b_r}}\right)^5 \ge 0,
\end{equation}
and the sixth-order term in \eq{sixbnd} is non-negative, so we obtain the lower bound
\begin{equation}
\eqprep{n,b_r} \ge 1-4(1-x)(4x-1)\left(\frac\pi{2^{b_r}}\right)^2.
\end{equation}
Now $(1 - x) (4 x - 1)$ has its maximal value of $9/16$ for $x=5/8$, so the lower bound becomes
\begin{equation}
\eqprep{n,b_r} \ge 1-\left(\frac 32 \right)^2\left(\frac\pi{2^{b_r}}\right)^2
\end{equation}
as required.
\end{proof}

If we take $b_r=8$ for example, then the probability of success must be at least $0.999661$.
That is useful for the general interaction picture simulations described in \app{interaction_picture}, because then $\ln[2\eqprep{\Sigma(0),8}]> 0.6928$.
That is larger than $9/13$, so one can use $c/d=9/13$ as a good approximation of $\ln 2$ and still satisfy \eq{intpierr1}.

\section{Simulating first quantized real space grid representations}
\label{app:real_space}

Whether the goal is to perform non-Born-Oppenheimer dynamics or to solve the electronic structure problem, in order to simulate molecular or material systems on a quantum computer one must discretize the wavefunction and map it to qubits. Perhaps the most obvious representation involves projecting the wavefunction onto a real space grid and encoding the positions of each particle explicitly in a quantum register. In this case, the computational basis states that span the wavefunction of interest might be stored in the quantum computer as $\ket{r_1 r_2 \cdots r_\eta R_1 R_2 \cdots R_L}$ where the $r_i$ and $R_\ell$ index the coordinates of the $\eta$ electrons and $L$ nuclei, respectively, using binary registers with $N$ grid points (each requiring $\log N$ qubits). Thus, this encoding would require $(\eta + L)\log N$ qubits.

Representing the potential part of the Hamiltonian is straightforward when using a real space grid (the operators $U$, $V$ and $V_{\rm nuc}$ would take the same diagonal forms as they do in \eq{non-bo}). However, representing the kinetic operators will require some thought. One can see that if basis functions are not overlapping in space (e.g., delta functions or finite elements with disjoint support) the integrals from the Galerkin discretization would suggest that the kinetic operator is also diagonal. To avoid this, one idea would be to take a finite-difference approach and approximate the Laplacian using a $k$-point stencil; this is exactly the approach taken in first quantization in \cite{Kivlichan2016} and described in second quantization in Appendix A of \cite{BabbushLow}. Another idea, similar to using what is referred to as a ``discrete value representation'' (DVR) \cite{dvrreview}, would be to exactly define the kinetic operator in its eigenbasis and approximate a transformation that relates the grid representation to the eigenbasis of the kinetic operator. For example, if using a regularly spaced grid then one might make the approximation that the quantum Fourier transform (QFT) diagonalizes the kinetic operator. After all, the kinetic operator is diagonal in the momentum basis (plane waves) and plane waves are obtained as the continuous Fourier transform of delta functions. However, this is an approximation when using a finite grid; the QFT is a discrete Fourier transform (which is linear) and a linear combination of $N$ plane waves cannot sum to a delta function. This sort of representation is employed by the first paper to consider quantum simulations of molecular systems using a grid representation by Kassal \emph{et al.}~\cite{Kassal2008}. Components of that algorithm were assessed for fault-tolerance in the work of Jones \emph{et al.}~\cite{Jones2012} but that work stopped short of estimating all of the constant factors associated with the algorithm realization.

In addition to the necessity of the approximations already mentioned there are other unfavorable characteristics of real space grid representations for use in quantum simulation. For example, basis errors are no longer variationally bounded when deviating from a Galerkin discretization and extrapolations to the continuum limit might be less reliable. Furthermore, ways of softening the potential and removing core electrons (akin to pseudopotentials for plane wave methods) are less developed for grids. Thus, the majority of work on simulating molecular systems focuses on Galerkin discretizations involving basis functions rather than grids. Nonetheless, in this section we will briefly describe how one can block encode a real space representation of the Hamiltonian with the same asymptotic complexity as the main approaches pursued in this paper. Since these grid representations are not periodic, it is possible that they might converge faster than plane waves when simulating non-periodic systems. Also, since the block encoding algorithms are entirely different, it is possible that this approach might have lower constant factors.

We can express the aforementioned first quantized real space Hamiltonian as
\begin{align}
\label{eq:real_space}
H_{\rm BO} & = T + U + V + \frac{1}{2}\sum_{\ell \neq \kappa=1}^{L} \frac{\zeta_\ell \zeta_\kappa}{\left\|R_\ell - R_\kappa\right\|} 
        \qquad \qquad
    H_{\rm non-BO}  = T + T_{\rm nuc} + U_{\rm non-BO} + V + V_{\rm nuc}\\
T & = \sum_{i=1}^{\eta} {\rm QFT}_i \left(   \sum_{p\in G} \frac{\left \| k_p\right\|^2}{2} \ket{p}\!\bra{p}_{i} \right) {\rm QFT}_i^\dagger
\qquad 
T_{\rm nuc}  =  \sum_{\ell=\eta+1}^{L+\eta} {\rm QFT}_\ell \left(  \sum_{p\in G} \frac{\left \| k_p\right\|^2}{2\, m_\ell} \ket{p}\!\bra{p}_{\ell} \right) {\rm QFT}_\ell^\dagger\\\label{eq:KineticAppendix}
U & = -\sum_{i=1}^\eta\sum_{\ell =1}^{L}  \sum_{p\in G}\frac{\zeta_\ell}{\left\|R_\ell - r_p\right\|} \ket{p}\!\bra{p}_{i}
\qquad \quad
U_{\rm non-BO}  = -\sum_{i=1}^\eta \sum_{\ell=\eta+1}^{L+\eta} \sum_{p,q\in G}\frac{\zeta_\ell}{2\left\|r_p - r_q\right\|} \ket{p}\!\bra{p}_{i} \ket{q}\!\bra{q}_{\ell}\\
V & = \sum_{i\neq j=1}^\eta \sum_{p,q\in G}\frac{1}{2\left\|r_p - r_q\right\|} \ket{p}\!\bra{p}_{i} \ket{q}\!\bra{q}_{j}
\qquad 
V_{\rm nuc} = \sum_{\ell\neq \kappa=\eta+1}^{L+\eta} \sum_{p,q\in G}\frac{\zeta_\ell \zeta_\kappa}{2\left\|r_p - r_q\right\|} \ket{p}\!\bra{p}_{\ell} \ket{q}\!\bra{q}_{\kappa}
\end{align}
where ${\rm QFT}_i$ is the usual quantum Fourier transform applied to register $i$\footnote{Again, we emphasize that $T$ is only approximately given by the expression involving the QFT. This relation is exact in the continuum limit. For finite sized grids, it cannot be the case that the QFT completely diagonalizes the momentum operator.}. As in the main text, we have used atomic units where $\hbar$ as well as the mass and charge of the electron are unity and $\| \cdot \|$ denotes a function taking the 2-norm of its argument. In the above expression $\ell$ and $\kappa$ index nuclear degrees of freedom; thus, $R_\ell$ represents the positions of nuclei and $\zeta_\ell$ the atomic numbers of nuclei. Furthermore, we have the following definition of grid points and their frequencies in the dual space defined by the QFT:
\begin{equation}
\label{eq:r_p}
r_p = \frac{p \, \Omega^{1/3}}{N^{1/3}} \qquad \qquad k_p = \frac{2 \pi p}{\Omega^{1/3}} \qquad \qquad
p \in G \qquad \qquad G = \left[-\frac{N^{1/3}-1}{2},\frac{N^{1/3}-1}{2}\right]^3 \subset \mathbb{Z}^3 \, ,
\end{equation}
where $\Omega$ is the volume of our simulation cell and $N$ is the number of grid points in the cell. Again, this is the same representation used in the first work on quantum simulating chemistry in first quantization, by Kassal \emph{et al.}~\cite{Kassal2008}, well over a decade ago. We note that like with the plane waves, it would be possible to use different grids for the different particles but we avoid doing that here for simplicity.

We now show that one can simulate the Hamiltonian of \eq{real_space} with the same complexity as the approaches of this paper using either qubitization, or interaction picture simulation frameworks. Here we will discuss only the Born-Oppenheimer case for simplicity, but the techniques readily generalize to the non-Born-Oppenheimer Hamiltonian. Focusing first on the qubitization approach, the goal is to block encode the operators $T$, $U$ and $V$. The block encoding of $T$ can be accomplished using almost the same techniques used for plane waves. The only difference comes in the implementation of the $\SEL$ operator. Whenever the ancilla register flags that a term from $T$ should be applied, one must apply the QFT, then phase the register by the appropriate amount, then apply the inverse QFT.
The QFT needs to be performed independently on $\eta$ registers each of $\log N$ qubits, so has gate complexity $\widetilde{\mathcal{O}}(\eta \log N)$, where here the $\widetilde{\mathcal{O}}$ indicates terms double-logarithmic in $N$ are omitted~\cite{cleve2000fast}.
That is complexity in terms of one and two-qubit gates, and requires rotations up to $\mathcal{O}(\log N)$ bits of precision. The complexity in terms of Toffoli gates is $\widetilde{\mathcal{O}}(\eta \log^2 N)$.

Here the block encoding of $U$ and $V$ will deviate from the other strategies of this paper. Our approach will be to break up the terms as a sum of many terms with equal magnitude coefficients that can be computed on-the-fly and applied as phases, mirroring a strategy first introduced in \cite{Berry2013}.
First, we begin by re-expressing the $V$ operator as
\begin{align}
V & = \frac{v_{\rm max}}{M}\sum_{i\neq j=1}^\eta  \sum_{p,q\in G} \sum_{m=1}^M v\left(m,r_p,r_q\right) \ket{p}\!\bra{p}_{i} \ket{q}\!\bra{q}_{j}\\
v\left(m,r_p,r_q\right) & = \begin{cases}
1 & \qquad \frac{m\, v_{\rm max}}{M} < \frac{1}{2\left\|r_p - r_q\right\|}\\
\label{eq:inequality_V}
0 & \qquad {\rm otherwise} \end{cases}\\
 v_{\rm max} & = \textrm{max}_{p\neq q}\left( \frac{1}{2\left\|r_p - r_q\right\|}\right) = \frac{N^{1/3}}{2 \,\Omega^{1/3}} \, ,
\end{align}
where in the inequality above we take advantage of the fact that the repulsive Coulomb kernel is always greater than zero. From this definition we can see that
\begin{equation}
\left | \frac{1}{2\left\|r_p - r_q\right\|} - \frac{v_{\rm max}}{M} \sum_{m=1}^M  v\left(m,r_p,r_q\right) \right | \leq \frac{v_{\rm max}}{M} \, .
\end{equation}
Hence, we can make sure that the error in the Hamiltonian coefficients is less than $\epsilon$ if $M > v_{\rm max} / \epsilon$.
This error can be suppressed exponentially since the cost of the comparison test is quadratic in the number of qubits used to represent the number (because we need to compuare squares)~\cite{Sanders2020_b}, which is logarithmic in $M$, and thus logarithmic $1/\epsilon$.

We can now block encode $V$ in the following way. We will start by applying $\PREP$ as a series of Hadamard gates which realize the equal superposition state over the ancilla registers $\ket{i,j}\ket{m}$ (essentially all of the complexity of our approach will enter through $\SEL$). Then, controlled on register $\ket{i}$ we will copy the momentum held in the $i^{\rm th}$ system register to an ancilla register and translate it into $r_p$ using the relation in \eq{r_p}. We will obtain $r_q$ in the same way, but this time controlled on the register $\ket{j}$. Then, we will compute $v(m,r_p,r_q)$ and record its value in an ancilla register. Because this ancilla register is initialized to $\ket{0}$ by $\PREP$, if it is changed to the $\ket{1}$ state by $\SEL$ then it will be removed from the linear combination of unitaries. Indeed, we wish to remove the term from the linear combination of unitaries whenever the inequality in \eq{inequality_V} is false. Therefore, the final step of our procedure should be to apply a bit flip operator to the ancilla holding $v(m,r_p,r_q)$ and to uncompute the $r_p$ and $r_q$ ancillae. Thus, $\SEL$ will act with the following steps (with $p_i$ and $p_j$ equivalent to $p$ and $q$, respectively):
\begin{align}
\SEL \ket{i,j}\ket{m}\ket{0}^{\otimes {\cal O}(\log \left(N/\epsilon\right)} \underbrace{\ket{p_0 \cdots p_i \cdots p_j \cdots p_\eta}}_{\ket{\psi}} & \mapsto \ket{i,j}\ket{m}\ket{r_{p_i},r_{p_j}}\ket{0} \ket{p_0 \cdots p_i \cdots p_j \cdots p_\eta}\\
& \mapsto \ket{i,j}\ket{m}\ket{r_{p_i},r_{p_j}}\ket{v\left(m,r_{p_i},r_{p_j}\right)} \ket{p_0 \cdots p_i \cdots p_j \cdots p_\eta}\nonumber\\
& \mapsto \ket{i,j}\ket{m}\ket{0}^{\otimes {\cal O}(\log \left(N/\epsilon\right))} \ket{\neg v\left(m,r_{p_i},r_{p_j}\right)} \ket{p_0 \cdots p_i \cdots p_j \cdots p_\eta}\, \nonumber
\end{align}
where $\neg$ implies the negation of the following bit.

One might be concerned that we would need to compute a reciprocal and a square root as part of the 2-norm in the inequality for $v(m,r_p,r_q)$. However, we can much more efficiently compute \eq{inequality_V} by multiplying the norm through to the other side and squaring it; in other words we can evaluate $v(m,r_p,r_q)$ by testing
\begin{equation}
4\, m^2 v_{\rm max}^2\left\|r_p - r_q\right\|^2 < M^2 \, .
\end{equation}
The cost of evaluating this expression is also $\mathcal{O}(\log^2 M) = \mathcal{O}(\log^2(1/\epsilon))$ gates using the result of~\lem{sumsqa}, which again only contributes a polylogarithmic contribution to the gate complexity. Apart from simplicity, this form of a comparison test is much easier (will have lower constant factors) to implement because the squared 2-norm is equivalent to just computing the sum of squares; whereas the reciprocal square root can be a costly function to evaluate directly~\cite{soeken2017hierarchical,Haner2018}. We note that another approach to avoiding the computation of reciprocal square roots in the context of quantum simulating first quantized real space Hamiltonians is discussed in \cite{Poirier2021b}, where it suggested that one take advantage of structure arising from the inverse Laplace transformed version of the Coulomb kernel represented in a Cartesian component-separated approach. While that approach may have some advantages for product formula based methods, it is unclear whether it might also be useful for further reducing the cost of block encoding the Coulomb operator.

The block encoding of $U$ follows a similar pattern. In this case, we will re-express the $U$ operator as
\begin{align}
\label{eq:U_m}
U & = -\frac{u_{\rm max}}{M} \sum_{i=1}^\eta  \sum_{\ell=1}^L \zeta_\ell \sum_{p\in G} \sum_{m=1}^M u\left(m,r_p,R_\ell\right) \ket{p}\!\bra{p}_{i}\\
u\left(m,r_p,R_\ell\right) & = \begin{cases}
1 & \qquad {\rm if} \qquad \frac{m\,u_{\rm max}}{M} < \frac{1}{\left\|r_p - R_\ell\right\|}\\
0 & \qquad {\rm otherwise} \end{cases}\\
 u_{\rm max} & = \textrm{max}_{p,\ell}\left(  \frac{1}{\left\|r_p - R_\ell\right\|}\right) = {\cal O}\left(\frac{N^{1/3}}{\Omega^{1/3}}\right) \, .
\end{align}
where we will require that the $R_\ell$ do not overlap with any of the grid points and we once again take advantage of the fact that the right-hand side of the inequality in $u(m,r_p,R_\ell)$ is always greater than zero. Similar to the strategy employed for $V$, here the $\SEL$ operation will compute $u(m,r_p,R_\ell)$ into an ancilla that is then negated (thus, excluding the cases where $u(m,r_p,R_\ell)=0$ from the linear combination of unitaries). But there are two main differences: (1) the value $R_\ell$ will need to be loaded using a QROM and (2) we have not incorporated the $\zeta_\ell$ into the definition of the inequality function that is computed as part of $\SEL$. 

 Instead, we will include the $\zeta_\ell$ through the $\PREP$ operation. At the same time we will directly load the $\ket{R_\ell}$ values into superposition (rather than computing it from $\ell$). Thus, controlled on a flag ancilla specifying that the $U$ term is to be applied, our $\PREP$ circuit will act as
\begin{equation}
    \PREP \ket{0}^{\otimes {\cal O}\left(\log \left(N/\epsilon\right) \right)} \mapsto \sum_{\ell=1}^L \sum_{i=1}^{\eta}\sum_{m=1}^M \sqrt{\frac{\zeta_\ell}{\eta M}} \ket{R_\ell}\ket{i}\ket{m}\ket{0} \, .
\end{equation}
Using the coherent aliasing strategies developed in \cite{BGBWMPFN18}, the weighted superposition over $R_\ell$ can be prepared with no more than $\epsilon$ error in the amplitudes at Toffoli complexity $L + {\cal O}(\log(1/\epsilon))$. That will be a negligible additive cost compared to the rest of the $\SEL$ operation which will have Toffoli complexity ${\cal O}(\eta)$ for charge neutral systems. Thus, up to exponentially small errors in $\PREP$ we can see that
\begin{equation}
\left |\sum_{\ell=1}^L \frac{\zeta_\ell}{\left\|r_p - R_\ell\right\|} - \sum_{\ell=1}^L\frac{u_{\rm max} \zeta_\ell}{M} \sum_{m=1}^M  u\left(m,r_p, R_\ell\right) \right | \leq \frac{u_{\rm max}\lambda_\zeta}{M} \,
\end{equation}
with $\lambda_\zeta=\sum_\ell \zeta_\ell$,
and again we can make sure that the error in the Hamiltonian coefficients is less than $\epsilon$ if $M > u_{\rm max} \lambda_\zeta/ \epsilon$. Once again, this $\epsilon$ will only enter the gate complexity inside of logarithms.

Therefore, we see that we can implement the block encoding of $(T + U + V) / \lambda$ with gate complexity $\widetilde{\cal O}(\eta)$. Given our scheme, the $\lambda_T$ value is exactly the same as in \eq{lambdas}, and 
\begin{equation}
\lambda_V = \eta^2 v_{\rm max} = {\cal O}\left(\frac{\eta^2 N^{1/3}}{\Omega^{1/3}} \right)
\qquad \qquad
\lambda_U = \eta \left(\sum_{\ell=1}^L \zeta_\ell\right) u_{\rm max} = {\cal O}\left(\frac{\eta^2 N^{1/3}}{\Omega^{1/3}}\right)
\end{equation}
where these values of $\lambda_V$ and $\lambda_U$ involve slightly different constant factors compared to the forms in \eq{lambdas}, but the same asymptotic scaling. Accordingly, we see that the overall asymptotic complexity of the qubitization approach using this Hamiltonian would be
\begin{equation}
        \widetilde{\cal O} \left(\frac{\left(\lambda_T + \lambda_U + \lambda_V\right) \eta}{\epsilon}\right) 
        = \widetilde{\cal O}\left(\frac{\eta^2 N^{2/3}}{\epsilon \, \Omega^{2/3}} + \frac{ \eta^{3}N^{1/3}}{\epsilon \, \Omega^{1/3}}\right)
        = \widetilde{\cal O}\left(\frac{\eta^{4/3} N^{2/3} + \eta^{8/3}N^{1/3}}{\epsilon}\right)\,
\end{equation}
for $\eta\propto\Omega$,
which is the same as in \eq{asymptotic_qubitization}.
It is likely that the QFT will be the dominant cost in this method.
The complexity for the plane waves approach tends to be dominated by the cost of controlled selection of the momentum registers, which has complexity  $\mathcal{O}(\eta \log N)$ and is the only part that is polynomial in $\eta$.
The QFT has scaling as $\widetilde{\mathcal{O}}(\eta \log^2 N)$, which is larger and so should dominate the complexity.

Finally, we discuss how the interaction picture scheme can be combined with the real space first quantized representation in order to suppress the complexity term proportional to $N^{2/3}$. To do this, we follow a similar approach as in the main paper with the block encodings of $U$ and $V$ as described in this appendix. The main difference is that before evolving under the kinetic operator we must apply the QFT. The gate complexity of performing the QFT in each time step is $\widetilde{\cal O}(\eta \log^2 N)$ Toffoli gates.
A second difference is that it is no longer possible to keep the total kinetic energy in an ancilla register and update it when we block-encode $U$ and $V$.
At each step the kinetic energy will need to be recomputed, which has complexity $\mathcal{O}(\eta \log^2 N)$.
That is almost as large as the complexity scaling for the QFT. As a result there is the same asymptotic scaling as the interaction picture algorithm in the main paper up to a logarithmic factor,
\begin{equation}
\widetilde{\cal O}\left(\frac{\left(\lambda_U + \lambda_V\right)\eta }{\epsilon}\right)
\widetilde{\cal O}\left(\frac{\eta^{3} N^{1/3}}{\epsilon \, \Omega^{1/3} }\right) = \widetilde{\cal O}\left(\frac{\eta^{8/3} N^{1/3}}{\epsilon}\right) \, .
\end{equation}
Again, the complexity will be increased over the plane wave approach primarily due to the QFT.

\end{document}